\documentclass{article}

\PassOptionsToPackage{}{natbib}

\usepackage[preprint]{main}
\usepackage[utf8]{inputenc}
\usepackage[T1]{fontenc}
\usepackage{hyperref}
\usepackage{url}
\usepackage{booktabs}
\usepackage{amsfonts}
\usepackage{nicefrac}
\usepackage{microtype}
\usepackage{xcolor}
\usepackage{graphicx}
\usepackage{amsmath}
\usepackage{amssymb}
\usepackage{amsthm}
\usepackage{cleveref}
\usepackage{empheq}
\usepackage{algorithmic}
\usepackage{algorithm}
\usepackage{authblk}
\usepackage{enumitem}
\usepackage{subcaption}

\theoremstyle{plain}
\newtheorem{theorem}{Theorem}[section]
\newtheorem{proposition}[theorem]{Proposition}
\newtheorem{lemma}[theorem]{Lemma}
\newtheorem{corollary}[theorem]{Corollary}
\theoremstyle{definition}

\newtheorem{assumption}[theorem]{Assumption}
\theoremstyle{remark}
\newtheorem{remark}[theorem]{Remark}

\crefname{definition}{Definition}{Definitions}
\crefname{theorem}{Theorem}{Theorems}
\crefname{assumption}{Assumption}{Assumptions}
\crefname{lemma}{Lemma}{Lemmas}
\crefname{proposition}{Proposition}{Propositions}
\crefname{remark}{Remark}{Remarks}

\crefname{section}{Section}{Sections}
\crefname{equation}{Eq.}{Eqs.}

\newcommand{\argmax}{\mathop{\rm arg~max}\limits}

\newcommand{\dd}{\mathop{}\!\mathrm{d}}

\title{
Error-Controlled Borrowing from External Data
\\
Using Wasserstein Ambiguity Sets
}

\author[$\dagger$,1]{Yui Kimura}
\author[$\dagger$,2,3,$\ast$]{Shu Tamano}
\affil[$\dagger$]{Equal contribution.}
\affil[1]{
Novartis Pharma K.K.,
1-23-1 Toranomon,
Minato-ku,
Tokyo 105-6333,
Japan
}
\affil[2]{
Department of Multidisciplinary Sciences,
Graduate School of Arts and Sciences,
The University of Tokyo,
3-8-1 Komaba,
Meguro-ku,
Tokyo 153-8902,
Japan
}
\affil[3]{
Department of Epidemiology,
National Institute of Infectious Diseases,
Japan Institute for Health Security,
1-23-1 Toyama,
Shinjuku-ku,
Tokyo 162-0052,
Japan}
\affil[$\ast$]{
Email: tamano-shu212@g.ecc.u-tokyo.ac.jp
}

\begin{document}

\maketitle

\begin{abstract}
    Incorporating external data can improve the efficiency of clinical trials, but distributional mismatches between current and external populations threaten the validity of inference.
    While numerous dynamic borrowing methods exist, the calibration of their borrowing parameters relies mainly on ad hoc, simulation-based tuning.
    To overcome this, we propose BOND (Borrowing under Optimal Nonparametric Distributional robustness), a framework that formalizes data noncommensurability through Wasserstein ambiguity sets centered at the current-trial distribution.
    By deriving sharp, closed-form bounds on the worst-case mean drift for both continuous and binary outcomes, we construct a distributionally robust, bias-corrected Wald statistic that ensures asymptotic type I error control uniformly over the ambiguity set.
    Importantly, BOND determines the optimal borrowing strength by maximizing a worst-case power proxy, converting heuristic parameter tuning into a transparent, analytically tractable optimization problem.
    Furthermore, we demonstrate that many prominent borrowing methods can be reparameterized via an effective borrowing weight, rendering our calibration framework broadly applicable.
    Simulation studies and a real-world clinical trial application confirm that BOND preserves the nominal size under unmeasured heterogeneity while achieving efficiency gains over standard borrowing methods.
\end{abstract}
\noindent
\textbf{Keywords}:
Clinical trials;
Distributionally robust optimization;
Dynamic borrowing;
Hybrid control;
Information borrowing;
Wasserstein ball;

\section{Introduction}
\label{sec:introduction}

Randomized controlled trials remain the gold standard for evaluating treatment effects, principally because randomization facilitates valid inference under minimal assumptions \citep{ich2000e10}.
However, in many contemporary development settings, recruiting a concurrently controlled arm of adequate size is practically challenging or ethically contentious.
Constraints such as small patient populations in rare diseases, slow accrual rates, and the ethical dilemma of withholding effective therapies have fueled substantial interest in leveraging historical information or external comparators to augment current trials \citep{schmidli2020beyond}.
Regulatory agencies have explicitly discussed the design and conduct of externally controlled trials \citep{fda2023externally_controlled,ema2025external_controls_concept}, and empirical surveys document how external controls have been used in regulatory submissions and decisions \citep{goring2019characteristics,jahanshahi2021use,liu2025design}.
Consequently, there is a pressing need for methodological frameworks that improve trial efficiency via borrowing while rigorously safeguarding operating characteristics.

The fundamental challenge in borrowing from an external control arm is the potential noncommensurability (or lack of exchangeability) between current and historical populations.
Discrepancies in baseline characteristics, eligibility criteria, endpoint definitions, and secular trends in standard of care can introduce distributional shifts that induce bias and inflate type I error rates.
Even modest drifts in the outcome distribution can shift the borrowing-based test statistic, leading to false rejections under the null hypothesis \citep{viele2014use,van2018including}.
While causal inference techniques, such as propensity score weighting, can adjust for observed covariate imbalance, they cannot account for residual biases arising from unmeasured confounding or unobserved outcome drifts \citep{rippin2022review}.
Therefore, a critical statistical problem is how to quantify and control for these distributional mismatches in a way that is transparent and satisfies regulatory rigor regarding error control.

A vast literature addresses adaptive borrowing, broadly categorized into three streams.
First, frequentist approaches employ test-then-pool (TTP) procedures or dynamic pooling strategies based on similarity tests \citep{viele2014use,li2020revisit,okada2024decoupling}.
Second, Bayesian dynamic borrowing methods introduce explicit parameters controlling the extent of borrowing, including power priors and their variants \citep{ibrahim2000power,ibrahim2003optimality,ibrahim2015power,pawel2023normalized}, commensurate priors \citep{hobbs2011hierarchical,hobbs2012commensurate}, and mixture-based constructions such as robust meta-analytic-predictive (MAP) priors and related exchangeability models \citep{neuenschwander2010summarizing,schmidli2014robust,neuenschwander2016robust,kaizer2018bayesian,yang2023sam,alt2024leap}.
Third, hybrid strategies combine borrowing with explicit covariate adjustment or selective borrowing mechanisms, including propensity-score-integrated priors \citep{liu2021propensity,lu2022propensity,wang2022propensity}, case-weighted power priors \citep{kwiatkowski2024case}, and adaptive hybrid control designs \citep{guo2024adaptive}.
Despite their differences, these methods share a common reliance on tunable quantities such as discount factors, precision parameters, and mixture weights that dictate the extent of borrowing.

A persistent limitation in current practice is that the calibration of these borrowing parameters remains largely heuristic.
Typically, practitioners select discount factors based on extensive simulations under a specific set of designer-specified heterogeneity scenarios \citep{pan2017calibrated,psioda2019bayesian,eggleston2021bayesctdesign,ling2021calibrated,khanal2025commensurate,demartino2025eliciting}.
While this approach assesses performance under assumed conditions, it lacks explicit guarantees against deviations outside those scenarios.
If strict type I error control is demanded against arbitrary incompatibility, the borrowing weight must trivially collapse to zero \citep{psioda2019bayesian,kopp2020power,bennett2021novel,lee2024using,gao2025control}.
To overcome this all-or-nothing tension, we require a framework that
(i) explicitly defines a tolerance set of admissible heterogeneity and
(ii) optimizes borrowing performance within that set.

In this paper, we propose BOND (Borrowing under Optimal Nonparametric Distributional robustness), a framework for integrating historical data under distributional uncertainty.
Leveraging tools from optimal transport \citep{villani2009optimal} and distributionally robust optimization (DRO) \citep{mohajerin2018data,blanchet2019quantifying,gao2023distributionally,kuhn2025distributionally}, BOND formalizes noncommensurability by defining the admissible discrepancy as a Wasserstein ball of radius $\rho$ centered at the current-arm distribution.
Under this geometric formulation, we parameterize borrowing strength via an effective borrowing weight (EBW), which unifies a broad class of estimators (see Appendix~\ref{app:sec:incl-ebw} for details), and derive tight, closed-form bounds on the worst-case mean drift for both continuous and binary outcomes.
We then construct a distributionally robust test that explicitly subtracts this worst-case bias, guaranteeing asymptotic type I error control uniformly over the ambiguity set.
Finally, BOND identifies the optimal borrowing strength by solving a minimax problem:
maximizing a worst-case power proxy subject to the robust size constraint.
Through simulations and a real-data application, we demonstrate how the radius $\rho$ replaces ad hoc tuning by acting as a transparent, analytically tractable sensitivity parameter for regulatory decision-making.

The remainder of the paper is organized as follows.
Section~\ref{sec:preliminaries} introduces the notation, the borrowing framework, and the concept of Wasserstein ambiguity sets.
Section~\ref{sec:proposed-method} details the proposed distributionally robust bias correction and the optimal weight calibration.
Section~\ref{sec:numerical-experiments} presents simulation results evaluating operating characteristics.
Section~\ref{sec:real-world-experiments} applies the method to real-world clinical data.
Section~\ref{sec:discussion} concludes with a discussion of implications, limitations, and future directions.

\section{Preliminaries}
\label{sec:preliminaries}

\subsection{Problem Setup}
\label{subsec:problem-setup}

Let $(\mathcal{X}, \mathcal{B}_{\mathcal{X}})$ be a measurable space for baseline covariates, and let $\mathcal{Y}$ denote the outcome space.
We consider two outcome types:
(i) binary outcomes with $\mathcal{Y}=\{0,1\}$, and
(ii) continuous outcomes with $\mathcal{Y}= \mathbb{R}$, assuming $\mathbb{E}[Y^2]< \infty$.

Data are observed from two sources:
a current randomized clinical trial ($j=C$) and a historical trial ($j=H$).
For subject $i$ in trial $j$, we observe the tuple
\begin{equation*}
    Z_{j,i}
    \coloneq
    (A_{j,i}, X_{j,i},Y_{j,i})
    \in
    \{0,1\}\times \mathcal{X}\times \mathcal{Y}
    ,
\end{equation*}
where $A_{j,i} \in \{0,1\}$ denotes the treatment assignment ($1$ for experimental, $0$ for control).
For each arm $a\in\{0,1\}$ and trial $j\in\{C,H\}$, let $P_{j}^a$ denote the conditional probability measure of the covariate-outcome pair $(X,Y)$:
\begin{equation*}
    P_{j}^a
    \coloneq
    \mathcal{L}\bigl(
        (X,Y)
        \bigm| A=a,j
    \bigr)
    ,
\end{equation*}
defined on the product space $\mathcal{Z} \coloneq \mathcal{X} \times \mathcal{Y}$ equipped with $\mathcal{B}_{\mathcal{X}}\otimes \mathcal{B}_{\mathcal{Y}}$.
The marginal mean outcome in arm $a$ of trial $j$ is defined as
\begin{equation*}
    \mu_j^a
    \coloneq
    \mathbb{E}_{P_j^a}[Y]
    .
\end{equation*}
For binary outcomes, this simplifies to the response probability $\mu_j^a=\mathbb{P}(Y=1\mid A=a,j)$.

The average treatment effect in the current trial on the mean-difference scale is
\begin{equation*}
    \theta_C
    \coloneq
    \mu_C^1 - \mu_C^0
    .
\end{equation*}
Similarly, the historical mean difference is $\theta_H \coloneq \mu_H^1-\mu_H^0$.

To formalize between-trial heterogeneity without relying on specific parametric assumptions, we introduce a parameter $\gamma \in \Gamma$ indexing a family of candidate historical laws $\{P_{H}^{a}(\gamma)\}_{\gamma\in\Gamma}$.
Let $\mu_H^a(\gamma) \coloneq \mathbb{E}_{P_H^a(\gamma)}[Y]$ be the mean outcome under a specific heterogeneity level $\gamma$.
The treatment effect in the historical population is denoted by
\begin{equation*}
    \theta_H(\gamma)
    \coloneq
    \mu_H^1(\gamma)-\mu_H^0(\gamma)
    .
\end{equation*}
We define the discrepancy function $\delta: \Gamma \to \mathbb{R}$ such that
\begin{equation*}
    \theta_H(\gamma)=\theta_C+\delta(\gamma)
    .
\end{equation*}
This discrepancy can be decomposed into arm-specific mean shifts.
Define the drift of the historical arm $a$ relative to the current arm as
\begin{equation*}
    \Delta_a(\gamma)
    \coloneq
    \mu_H^a(\gamma) - \mu_C^a
    ,
    \quad
    a\in\{0,1\}
    .
\end{equation*}
It follows that $\delta(\gamma) = \Delta_1(\gamma) - \Delta_0(\gamma)$.
Throughout this paper, the pair $(\Delta_0(\gamma), \Delta_1(\gamma))$ serves as the sufficient statistic for the bias induced by external information.

\begin{remark}[Extensions beyond two arms and a single historical source]
\label{rem:extensions-multiarm-multisource}
    For clarity of exposition, we focus on a binary treatment $A\in\{0,1\}$ and a single historical source.
    However, the proposed framework extends naturally to
    (i) multi-arm trials with finitely many treatment levels and
    (ii) borrowing from multiple historical datasets.
    These extensions are achieved by indexing arms and sources and replacing scalar borrowing weights with vectors.
    A formal generalization to the multi-arm/multi-source setting, including the robust bias correction and robust noncentrality parameter for arbitrary linear contrasts, is provided in Appendix~\ref{app:sec:multiarm-multisource}.
\end{remark}

\subsection{Effective Borrowing Estimators}
\label{subsec:ebw}

Let $n_j$ be the total sample size of trial $j$, with realized arm sizes
\begin{equation*}
    n_{j,a}
    \coloneq
    \sum_{i=1}^{n_j} \boldsymbol{1}\{A_{j,i}=a\}
    ,
    \quad
    a\in\{0,1\},\;
    j\in\{C,H\}
    .
\end{equation*}
We treat the sample sizes $(n_{j,a})$ as fixed or condition on them throughout.
Define the arm-specific sample mean in trial $j$ by
\begin{equation*}
    \bar{Y}_{j,a}
    \coloneq
    \frac{1}{n_{j,a}}\sum_{i:A_{j,i}=a}Y_{j,i}
    ,
    \quad
    (n_{j,a}\ge 1)
    .
\end{equation*}

We introduce a borrowing parameter
\begin{equation*}
    \lambda=
    (\lambda_0,\lambda_1)
    \in
    \Lambda
    \coloneq
    [0,\Lambda_0]\times [0,\Lambda_1]
    ,
\end{equation*}
where $\Lambda_a$ represents a maximal borrowing cap (e.g., $\Lambda_a=1$ corresponds to the weight of the full historical sample).
If only historical controls are available, one may set $n_{H,1}=0$ and fix $\lambda_1 = 0$.

The effective borrowing estimator for the mean of arm $a$ is defined as:
\begin{equation}
    \label{eq:mu-hat}
    \hat{\mu}_a(\lambda_a)
    \coloneq
    \begin{cases}
        \dfrac{n_{C,a}\bar{Y}_{C,a}+\lambda_a\, n_{H,a}\bar{Y}_{H,a}}{n_{C,a}+\lambda_a n_{H,a}}
        ,
        &
        n_{H,a}\ge 1
        ,
        \\
        \bar{Y}_{C,a}
        ,
        &
        n_{H,a} = 0
        .
    \end{cases}
\end{equation}
This formulation encompasses a wide range of Bayesian borrowing methods (see Appendix~\ref{app:sec:incl-ebw} for the details).
We define the EBW as:
\begin{equation*}
    w_a(\lambda_a)
    \coloneq
    \frac{\lambda_a n_{H,a}}{n_{C,a}+\lambda_a n_{H,a}}
    \in[0,1)
    ,
\end{equation*}
which yields the convex combination $\hat{\mu}_a(\lambda_a) = (1-w_a(\lambda_a))\bar{Y}_{C,a} + w_{a}(\lambda_a)\bar{Y}_{H,a}$.
The resulting estimator for the treatment effect is
\begin{equation*}
    \hat{\theta}(\lambda)
    \coloneq
    \hat{\mu}_1(\lambda_1) - \hat{\mu}_0(\lambda_0)
    .
\end{equation*}
Its expectation satisfies:
\begin{equation}
    \label{eq:theta-mean}
    \mathbb{E}\bigl[
        \hat{\theta}(\lambda)
    \bigr]
    =
    \theta_C + w_1(\lambda_1)\Delta_1(\gamma) - w_0(\lambda_0)\Delta_0(\gamma)
    .
\end{equation}
\eqref{eq:theta-mean} highlights that if the historical data are not perfectly commensurate (i.e., $\Delta_a \neq 0$), the borrowing induces a bias of $w_1\Delta_1(\gamma) - w_0\Delta_0(\gamma)$.
Without correction, this shift can inflate the type I error rate.

\subsection{Wasserstein Ambiguity Sets}
\label{subsec:wasserstein}

To quantify distributional differences between current and historical arms without committing to a parametric model, we employ Wasserstein ambiguity sets \citep{villani2009optimal,mohajerin2018data,blanchet2019quantifying,gao2023distributionally}.
Let $(\mathcal{X}, d_{\mathcal{X}})$ be a metric space for covariates, and equip $\mathcal{Y}$ with the Euclidean distance.
On the product space $\mathcal{Z} = \mathcal{X} \times \mathcal{Y}$, we define the ground metric:
\begin{equation}
    \label{eq:ground-metric}
    d\bigl(
        (x,y),(x',y')
    \bigr)
    \coloneq
    d_{\mathcal{X}}(x,x')
    +
    |y-y'|
    .
\end{equation}

\begin{remark}
\label{rem:metric-choice}
    The additive structure of the metric in \eqref{eq:ground-metric} is deliberate.
    First, it ensures that the outcome mapping $(x,y) \mapsto y$ is $1$-Lipschitz, which is essential for deriving sharp bounds on the mean shift.
    Second, it interprets the cost of transport as a sum of covariate mismatch and outcome drift.
    While other metrics (e.g., $L_p$-combinations) are possible, the additive form offers a clear interpretation where a unit shift in outcome $Y$ contributes directly to the transport cost.
\end{remark}

Let $\mathcal{P}_1(\mathcal{Z})$ denote the set of Borel probability measures on $\mathcal{Z}$ with finite first moment with respect to $d$.
The $1$-Wasserstein distance between two measures $P, Q \in \mathcal{P}_1(\mathcal{Z})$ is defined as:
\begin{equation*}
    W_1(P,Q)
    \coloneq
    \inf_{\pi\in\Pi(P,Q)}
    \int_{\mathcal{Z}\times \mathcal{Z}} d(z,z')
    \pi(\dd z, \dd z')
    ,
\end{equation*}
where $\Pi(P,Q)$ denotes the set of couplings with marginals $P$ and $Q$.
For a specified radius $\rho_a\ge 0$, we define the arm-specific Wasserstein ambiguity set (or ball) centered at $P_C^a$ by
\begin{equation}
    \label{eq:Ua}
    \mathcal{U}_a(\rho_a)
    \coloneq
    \bigl\{
        Q\in\mathcal{P}_1(\mathcal{Z})\colon
        W_1(Q,P_C^a)\le \rho_a
    \bigr\}
    .
\end{equation}
The condition $P_H^a \in \mathcal{U}_a(\rho_a)$ formalizes the assumption that the historical distribution drifts from the current distribution by at most $\rho_a$ in terms of the Wasserstein distance.
Within this set, we identify the worst-case mean shifts:
\begin{equation*}
    \Delta_a^+(\rho_a)
    \coloneq
    \sup_{Q\in\mathcal{U}_a(\rho_a)}\bigl(
        \mathbb{E}_Q[Y] - \mu_C^a
    \bigr)
    ,
    \quad
    \Delta_a^{-}(\rho_a)
    \coloneq
    \inf_{Q\in\mathcal{U}_a(\rho_a)}\bigl(
        \mathbb{E}_Q[Y] - \mu_C^a
    \bigr)
    .
\end{equation*}
These bounds, $\Delta_a^+$ and $\Delta_a^-$, represent the maximal positive and negative bias feasible under the constraint that the historical data are $\rho_a$-compatible with the current trial.
\section{Proposed Method}
\label{sec:proposed-method}

We refer to the proposed distributionally robust calibration-and-testing procedure as BOND.

\subsection{Worst-Case Mean Shifts over Wasserstein Balls}
\label{subsec:worst-mean}

The core of our proposal is to robustify the borrowing estimator against distributional shifts.
The first analytical step is to determine the worst-case expectation of the outcome $Y$ within the Wasserstein ambiguity set $\mathcal{U}_{a}(\rho_a)$.
Although this is generally an infinite-dimensional optimization problem, the specific structure of $W_1$-transport cost with the ground metric \eqref{eq:ground-metric} allows us to derive sharp, closed-form bounds \citep{mohajerin2018data,blanchet2019quantifying,kuhn2025distributionally}.

\begin{proposition}[Closed-form worst-case mean shifts]
\label{prop:closed-form-delta}
    Fix an arm $a\in\{0,1\}$ and let $\mathcal{U}_a(\rho_a)$ be defined by \eqref{eq:Ua} under the metric \eqref{eq:ground-metric}.
    Assume $P_C^a\in \mathcal{P}_1(\mathcal{Z})$.

    \begin{enumerate}
        \item[(i)] (Continuous outcome)
        If $\mathcal{Y} = \mathbb{R}$, then
        \begin{equation*}
            \Delta_a^+(\rho_a) = \rho_a
            ,
            \quad
            \Delta_a^-(\rho_a) = -\rho_a
            .
        \end{equation*}

        \item[(ii)] (Binary outcome)
        If $\mathcal{Y} = \{0,1\}$, then
        \begin{equation*}
            \sup_{Q\in\mathcal{U}_a(\rho_a)}\mathbb{E}_Q[Y]
            =
            \min\{\mu_C^a+\rho_a,1\}
            ,
            \quad
            \inf_{Q\in\mathcal{U}_a(\rho_a)}\mathbb{E}_Q[Y]
            =
            \max\{\mu_C^a-\rho_a,0\}
            ,
        \end{equation*}
        equivalently,
        \begin{equation*}
            \Delta_a^+(\rho_a)
            =
            \min\{\rho_a,1-\mu_C^a\}
            ,
            \quad
            \Delta_a^-(\rho_a)
            =
            -\min\{\rho_a,\mu_C^a\}
            .
        \end{equation*}
    \end{enumerate}
\end{proposition}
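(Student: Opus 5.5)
The proof has two halves, an upper bound and an attaining (or approaching) construction, and both outcome types are handled together. The upper bound comes from Kantorovich--Rubinstein duality, though only its elementary direction is needed. Take any $Q\in\mathcal{U}_a(\rho_a)$ and any coupling $\pi\in\Pi(Q,P_C^a)$. The projection $f(x,y)=y$ is $1$-Lipschitz for the ground metric \eqref{eq:ground-metric}, because $|y-y'|\le d_{\mathcal{X}}(x,x')+|y-y'|$. Hence
\begin{equation*}
    \bigl|\mathbb{E}_Q[Y]-\mu_C^a\bigr|
    =
    \Bigl|\int (y-y')\,\pi(\dd z,\dd z')\Bigr|
    \le
    \int d(z,z')\,\pi(\dd z,\dd z')
    .
\end{equation*}
Taking the infimum over couplings gives $|\mathbb{E}_Q[Y]-\mu_C^a|\le W_1(Q,P_C^a)\le\rho_a$. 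Integrability is guaranteed by $P_C^a, Q\in\mathcal{P}_1(\mathcal{Z})$. This yields $\Delta_a^+\le\rho_a$ and $\Delta_a^-\ge-\rho_a$ in both cases. In the binary case we additionally have $\mathbb{E}_Q[Y]\in[0,1]$ trivially, since $Q$ lives on $\mathcal{X}\times\{0,1\}$. This gives the upper bound $\min\{\mu_C^a+\rho_a,1\}$ and the lower bound $\max\{\mu_C^a-\rho_a,0\}$.

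For attainment in the continuous case, I will use the translation map $T(x,y)=(x,y+\rho_a)$ and set $Q=T_\#P_C^a$. The coupling $(\mathrm{id},T)_\#P_C^a$ has cost exactly $\rho_a$, so $W_1(Q,P_C^a)\le\rho_a$. Moreover $Q\in\mathcal{P}_1$, and $\mathbb{E}_Q[Y]=\mu_C^a+\rho_a$. Translating by $-\rho_a$ instead gives the infimum. So both bounds are attained and the supremum and infimum are in fact a maximum and minimum.

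The binary case is the step needing slightly more care, because translation leaves $\{0,1\}$ and mass must instead be flipped. Write $p=\mu_C^a$ and assume $p<1$; if $p=1$ the upper claim is trivial via $Q=P_C^a$. Let $t=\min\{\rho_a/(1-p),1\}\in[0,1]$. I will define $Q$ as the image of the following randomized map: a point $(x,0)$ goes to $(x,1)$ with probability $t$ and stays put otherwise, while points $(x,1)$ stay put. Formally, $Q$ is the second marginal of the coupling
\begin{equation*}
    \pi(\dd z,\dd z')
    =
    P_C^a(\dd z)\,K(z,\dd z')
    ,
\end{equation*}
where the kernel is $K((x,0),\cdot)=(1-t)\delta_{(x,0)}+t\delta_{(x,1)}$ and $K((x,1),\cdot)=\delta_{(x,1)}$.

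The transport cost of this coupling is $t(1-p)=\min\{\rho_a,1-p\}\le\rho_a$, so $Q\in\mathcal{U}_a(\rho_a)$. Its mean is $\mathbb{E}_Q[Y]=p+t(1-p)=\min\{p+\rho_a,1\}$, which matches the upper bound. The infimum follows symmetrically by flipping $1\to0$ with probability $\min\{\rho_a/p,1\}$. Subtracting $\mu_C^a$ then gives the stated forms of $\Delta_a^\pm$.

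The only technical points to check are measurability of the kernel, which holds because it depends measurably on $(x,y)$, and that $Q$ has finite first moment, which it inherits from $P_C^a$ since the $\mathcal{X}$-coordinate is unchanged.
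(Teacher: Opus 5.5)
Your proposal is correct and follows essentially the same route as the paper's proof: the $1$-Lipschitz projection $f(x,y)=y$ with a coupling argument (the paper's Lemma~\ref{lem:lipschitz-bound}) gives the bounds, the translation maps $(x,y)\mapsto(x,y\pm\rho_a)$ give attainment when $\mathcal{Y}=\mathbb{R}$, and a Markov kernel flipping $0\to 1$ (resp.\ $1\to 0$) with probability $\min\{\rho_a,1-p\}/(1-p)$ (resp.\ $\min\{\rho_a,p\}/p$) gives attainment when $\mathcal{Y}=\{0,1\}$. Your $t=\min\{\rho_a/(1-p),1\}$ is exactly the paper's $\eta$, and the degenerate case $p=0$ for the infimum is handled by $Q=P_C^a$, just as you handle $p=1$ for the supremum.
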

See Appendix~\ref{app:pf-prop-closed-form-delta} for the proof.
Proposition~\ref{prop:closed-form-delta} turns an infinite-dimensional DRO problem over a Wasserstein ball into an explicit, closed-form bound on the arm-wise mean drift.
As a result, the robust bias correction can be computed by simple arithmetic (no optimization solver is needed), which enables fast calibration of the borrowing parameters and transparent sensitivity interpretation of the radius $\rho_a$.

\subsection{Robust Bias Correction and Test Definition}
\label{subsec:test}

We consider the one-sided hypothesis
\begin{equation*}
    H_0\colon \theta_C\le 0
    \quad
    \text{vs.}
    \quad
    H_1\colon \theta_C >0
    .
\end{equation*}

Let $z_{1-\alpha}$ denote the $(1-\alpha)$ quantile of the standard normal distribution.
For each $\lambda\in\Lambda$, the worst-case bias in the rejection direction is:
\begin{equation}
    \label{eq:bplus-def}
    b_+(\lambda)
    \coloneq
    \sup_{\substack{Q_1\in\mathcal{U}_1(\rho_1)\\ Q_0\in\mathcal{U}_0(\rho_0)}}\Bigl[
        w_1(\lambda_1)\bigl(
            \mathbb{E}_{Q_1}[Y] - \mu_C^1
        \bigr)
        -
        w_0(\lambda_0)\bigl(
            \mathbb{E}_{Q_0}[Y] - \mu_C^0
        \bigr)
    \Bigr]
    .
\end{equation}

\begin{proposition}[Closed-form of $b_+(\lambda)$]
\label{prop:bplus}
    For any $\lambda\in\Lambda$ with $w_a(\lambda_a) \ge 0$,
    \begin{equation*}
        b_+(\lambda)
        =
        w_1(\lambda_1)\Delta_1^+(\rho_1)
        -
        w_0(\lambda_0)\Delta_0^-(\rho_0)
        .
    \end{equation*}
    In particular, under Proposition~\ref{prop:closed-form-delta},
    \begin{equation*}
        b_+(\lambda)=
        \begin{cases}
            w_{1}(\lambda_1)\rho_1+w_0(\lambda_0)\rho_0
            ,
            &
            \mathcal{Y} = \mathbb{R}
            ,
            \\
            w_1(\lambda_1)\min\{\rho_1,1-\mu_C^1\}+w_0(\lambda_0)\min\{\rho_0, \mu_C^0\},
            &
            \mathcal{Y} = \{0,1\}
            .
        \end{cases}
    \end{equation*}
\end{proposition}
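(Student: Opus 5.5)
The plan is to exploit the fact that the supremum in \eqref{eq:bplus-def} is taken over a product set $\mathcal{U}_1(\rho_1)\times\mathcal{U}_0(\rho_0)$, and the objective is a sum of one term depending only on $Q_1$ and one depending only on $Q_0$. For a separable objective over a product set, the supremum of the sum equals the sum of the suprema. This holds because $\sup_{(x,y)\in A\times B}[f(x)+g(y)]=\sup_A f+\sup_B g$ whenever neither supremum is $-\infty$. Both suprema are finite here: the balls are nonempty, since they contain $P_C^a$, and Proposition~\ref{prop:closed-form-delta} gives finite values. Hence
\begin{equation*}
    b_+(\lambda)
    =
    \sup_{Q_1\in\mathcal{U}_1(\rho_1)} w_1(\lambda_1)\bigl(\mathbb{E}_{Q_1}[Y]-\mu_C^1\bigr)
    +
    \sup_{Q_0\in\mathcal{U}_0(\rho_0)} \bigl[-w_0(\lambda_0)\bigl(\mathbb{E}_{Q_0}[Y]-\mu_C^0\bigr)\bigr].
\end{equation*}

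Next I pull out the scalar weights, and this is where the sign hypothesis $w_a(\lambda_a)\ge 0$ enters. For $c\ge 0$ we have $\sup_Q c\,f(Q)=c\sup_Q f(Q)$, and for $c=0$ both sides are $0$ because the suprema are finite. Therefore the first term equals $w_1(\lambda_1)\Delta_1^+(\rho_1)$. For the second term, $\sup_Q[-c f(Q)]=-c\inf_Q f(Q)$ when $c\ge 0$, which yields $-w_0(\lambda_0)\Delta_0^-(\rho_0)$. Adding the two gives the general identity. This step is also why the upper deviation appears for arm $1$ and the lower deviation for arm $0$.

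The closed forms then follow by substituting Proposition~\ref{prop:closed-form-delta}:
\begin{itemize}
    \item In the continuous case, $\Delta_1^+=\rho_1$ and $-\Delta_0^-=\rho_0$.
    \item In the binary case, $\Delta_1^+=\min\{\rho_1,1-\mu_C^1\}$ and $-\Delta_0^-=\min\{\rho_0,\mu_C^0\}$.
\end{itemize}

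There is no real obstacle here. The only points needing care are that the feasible set is a product set, so the two arms decouple and no joint constraint links $Q_1$ and $Q_0$, and the finiteness and nonemptiness that justify splitting the supremum and handling the zero-weight case.
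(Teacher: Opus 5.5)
Your proposal is correct and follows essentially the same route as the paper's proof. Both split the supremum over the product set $\mathcal{U}_1(\rho_1)\times\mathcal{U}_0(\rho_0)$ arm by arm, pull out the nonnegative weights using $\sup(-cx)=-c\inf x$, and substitute Proposition~\ref{prop:closed-form-delta}; your added finiteness and nonemptiness remarks just make the splitting step explicit.
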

See Appendix~\ref{app:pf-prop-bplus} for the proof.
Proposition~\ref{prop:bplus} shows that the worst-case bias in the rejection direction decomposes arm-by-arm and depends on $\lambda$ only through the EBW $w_a(\lambda_a)$.
Combined with Proposition~\ref{prop:closed-form-delta}, this yields an explicit $b_+(\lambda)$, making the robust test and subsequent $\lambda$-calibration computationally trivial to evaluate over $\Lambda$.

We define the asymptotic variance of $\hat{\theta}(\lambda)$ as:
\begin{equation}
    \label{eq:s-lambda}
    s^2(\lambda)
    \coloneq
    \mathrm{Var}\bigl(
        \hat{\theta}(\lambda)
    \bigr)
    =
    \sum_{a\in\{0,1\}}\Biggl[
        \bigl(1-w_{a}(\lambda_a)\bigr)^2
        \frac{\sigma_{C,a}^2}{n_{C,a}}
        +
        w_a(\lambda_a)^2\frac{\sigma_{H,a}^2}{n_{H,a}}
    \Biggr]
    ,
\end{equation}
where $\sigma_{j,a}^2\coloneq \mathrm{Var}(Y\mid A=a,j)$ and the convention is that the term $w_a(\lambda_a)^2\sigma_{H,a}^2/n_{H,a}$ is set to $0$ if $n_{H,a}=0$.

In practice, we estimate \eqref{eq:s-lambda} via the pooled plug-in estimator.
For $(j,a)$ with $n_{j,a} \ge 2$, let
\begin{equation*}
    \hat{\sigma}_{j,a}^2
    \coloneq
    \frac{1}{n_{j,a} - 1}\sum_{i: A_{j,i}=a}\bigl(
        Y_{j,i} - \bar{Y}_{j,a}
    \bigr)^2
    ,
\end{equation*}
and define
\begin{equation}
    \label{eq:shat}
    \hat{s}^2(\lambda)
    \coloneq
    \sum_{a\in\{0,1\}}
    \Biggl[
        \bigl(1-w_a(\lambda_a)\bigr)^2
        \frac{\hat{\sigma}_{C,a}^2}{n_{C,a}}
        +
        w_a(\lambda_a)^2\frac{\hat{\sigma}_{H,a}^2}{n_{H,a}}
    \Biggr]
    ,
    \quad
    \hat{s}(\lambda)
    \coloneq
    \sqrt{\hat{s}^2(\lambda)}
    .
\end{equation}

For binary outcomes, $b_+(\lambda)$ depends on the unknown $\mu_C^a$.
A natural implementation replaces $\mu_C^a$ by $\bar{Y}_{C,a}$ in Proposition~\ref{prop:bplus};
denote the resulting plug-in bias by $\hat{b}_{+}(\lambda)$.
For continuous outcomes, $b_+(\lambda)$ depends only on $(\rho_0,\rho_1)$ and $(w_0,w_1)$.

We propose the distributionally robust Wald test
\begin{equation}
    \label{eq:robust-test}
    \varphi_{\lambda}
    \coloneq
    \boldsymbol{1}\Biggl\{
        \frac{\hat{\theta}(\lambda)-\tilde{b}_{+}(\lambda)}{\hat{s}(\lambda)}\ge z_{1-\alpha}
    \Biggr\}
    ,
\end{equation}
where $\tilde{b}_+(\lambda) = b_{+}(\lambda)$ if $b_{+}(\lambda)$ is treated as known theoretical benchmark and $\tilde{b}_+(\lambda) = \hat{b}_+(\lambda)$ for the practical plug-in version.

\begin{remark}[Two-sided extension]
\label{rem:two-sided}
    The main text focuses on the one-sided hypothesis $H_0\colon\theta_C\le 0$.
    A two-sided test for
    \begin{equation*}        
        H_0^{\pm}\colon\theta_C=0,
        \quad
        H_1^{\pm}\colon\theta_C\neq 0
    \end{equation*}
    is obtained by introducing the worst-case bias in the negative rejection direction,
    \begin{equation*}        
        b_-(\lambda)
        \coloneq
        \inf_{\substack{Q_1\in\mathcal{U}_1(\rho_1)\\ Q_0\in\mathcal{U}_0(\rho_0)}}
        \Bigl[
            w_1(\lambda_1)\{
                \mathbb{E}_{Q_1}[Y]-\mu_C^1
            \}
            -
            w_0(\lambda_0)\{
                \mathbb{E}_{Q_0}[Y]-\mu_C^0
            \}
        \Bigr]
        ,
    \end{equation*}
    and rejecting when either tail is significant:
    \begin{equation*}        
        \varphi^{\pm}_{\lambda}
        \coloneq
        \boldsymbol{1}\Biggl\{
            \frac{\hat{\theta}(\lambda)-\tilde{b}_{+}(\lambda)}{\hat{s}(\lambda)}
            \ge z_{1-\alpha/2}
            \;\text{ or }\;
            \frac{\hat{\theta}(\lambda)-\tilde{b}_{-}(\lambda)}{\hat{s}(\lambda)}
            \le -z_{1-\alpha/2}
        \Biggr\}
        .
    \end{equation*}
    Here $\tilde{b}_-(\lambda)$ is the benchmark $b_-(\lambda)$ or its plug-in version for binary outcomes.
    The closed-form of $b_-(\lambda)$ and the full proofs of distributionally robust size control are given in Appendix~\ref{app:sec:two-sided}.
\end{remark}

\subsection{Asymptotic Guarantees and Optimal Calibration}
\label{subsec:asymptotics}
We introduce a minimal asymptotic framework.

\begin{assumption}[Sampling and moments]
\label{ass:sampling}
    For each $(j,a)$ with $n_{j,a} \ge 1$, the outcomes $\{Y_{j,i}\colon A_{j,i}=a\}$ are i.i.d.\ with mean $\mu_j^a$ and variance $\sigma_{j,a}^2<\infty$.
    Moreover, the collections from different $(j,a)$ are mutually independent.
\end{assumption}

\begin{assumption}[Asymptotic regime and nondegeneracy]
\label{ass:asymptotic}
    For each arm $a$, $n_{C,a}\to \infty$ and either $n_{H,a}\to \infty$ or $n_{H,a} = 0$.
    Additionally, $\sigma_{C,a}^2 > 0$ for $a\in\{0,1\}$.
\end{assumption}

\begin{proposition}[Asymptotic normality]
\label{prop:clt}
    Under Assumptions~\ref{ass:sampling} and \ref{ass:asymptotic}, for any fixed $\lambda\in\Lambda$,
    \begin{equation*}
        \frac{\hat{\theta}(\lambda) - \bigl(
            \theta_C+w_1(\lambda_1)\Delta_1 - w_0(\lambda_0)\Delta_0
        \bigr)}{s(\lambda)}
        \longrightarrow_d
        N(0,1)
        ,
    \end{equation*}
    where $s(\lambda)$ is defined in \eqref{eq:s-lambda}.
\end{proposition}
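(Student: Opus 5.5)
The plan is to write the centered estimator as a sum of at most four independent, centered sample-mean terms and apply the Lindeberg--Feller central limit theorem to the resulting triangular array. Write $w_a = w_a(\lambda_a)$ and $\mu_H^a = \mu_C^a + \Delta_a$. The convex-combination form of \eqref{eq:mu-hat} gives
\begin{equation*}
    \hat{\theta}(\lambda) - \bigl(\theta_C + w_1\Delta_1 - w_0\Delta_0\bigr)
    =
    \sum_{a\in\{0,1\}} (2a-1)\Bigl[
        (1-w_a)\bigl(\bar{Y}_{C,a}-\mu_C^a\bigr)
        + w_a\bigl(\bar{Y}_{H,a}-\mu_H^a\bigr)
    \Bigr].
\end{equation*}
This identity matches the mean \eqref{eq:theta-mean}. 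Since $\lambda$ is fixed and the sample sizes are deterministic, the $w_a$ are deterministic constants in $[0,1)$. By Assumption~\ref{ass:sampling}, the four collections are independent, so the variance of the sum is exactly $s^2(\lambda)$ in \eqref{eq:s-lambda}. When $n_{H,a}=0$ we have $w_a=0$ and that term is simply dropped.

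Next, expand the sum as a single triangular array of independent summands
\begin{equation*}
    \xi_{i} = \pm c_{j,a}\,(Y_{j,i}-\mu_j^a)/n_{j,a},
    \quad
    c_{C,a}=1-w_a,\; c_{H,a}=w_a,
\end{equation*}
normalized by $s(\lambda)$. Note that $w_a$ may vary with $n$, because $w_a = \lambda_a n_{H,a}/(n_{C,a}+\lambda_a n_{H,a})$ depends on the ratio of sample sizes. For that reason I would use Lindeberg--Feller rather than a fixed-weight Slutsky argument.

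The main step is verifying the Lindeberg condition. For each group $(j,a)$, each summand has variance $c_{j,a}^2\sigma_{j,a}^2/n_{j,a}^2$. That group's contribution to the Lindeberg sum is therefore
\begin{equation*}
    \frac{c_{j,a}^2}{n_{j,a}s^2(\lambda)}\,
    \mathbb{E}\Bigl[
        (Y-\mu_j^a)^2\,
        \boldsymbol{1}\bigl\{|Y-\mu_j^a| > \varepsilon\, n_{j,a}s(\lambda)/c_{j,a}\bigr\}
    \Bigr].
\end{equation*}
The key observation is that the prefactor is bounded: $c_{j,a}^2\sigma_{j,a}^2/n_{j,a}$ is one of the terms of $s^2(\lambda)$, so it is at most $s^2(\lambda)$, and hence $c_{j,a}^2/(n_{j,a}s^2(\lambda)) \le 1/\sigma_{j,a}^2$. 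If $\sigma_{j,a}=0$ the term is identically zero and can be ignored. Thus it suffices to show that the truncation threshold diverges, which is where the obstacle lies.

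The threshold has the form
\begin{equation*}
    \frac{n_{j,a}s(\lambda)}{c_{j,a}}
    \ge
    \frac{n_{j,a}}{c_{j,a}}\cdot\frac{c_{j,a}\sigma_{j,a}}{\sqrt{n_{j,a}}}
    =
    \sigma_{j,a}\sqrt{n_{j,a}}
    \to \infty,
\end{equation*}
because $n_{j,a}\to\infty$ for every group that is present (Assumption~\ref{ass:asymptotic}). Dominated convergence, using the finite second moment, then sends each truncated expectation to zero. Since the number of groups is finite, the Lindeberg sum vanishes.

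Assumption~\ref{ass:asymptotic} also guarantees $s^2(\lambda)\ge(1-\Lambda\text{-bounded})\cdot$ a positive multiple of $\sigma_{C,a}^2/n_{C,a}>0$, because $w_a<1$ and $\sigma_{C,a}^2>0$. This ensures the normalization is well defined, and the conclusion $N(0,1)$ follows.
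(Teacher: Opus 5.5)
Your proposal is correct and follows essentially the same route as the paper: the same triangular array with coefficients $\pm(1-w_a)$ and $\pm w_a$, the same bound $s(\lambda)\ge |c_{j,a}|\sigma_{j,a}/\sqrt{n_{j,a}}$ that turns the truncation level into $\varepsilon\sigma_{j,a}\sqrt{n_{j,a}}$, and dominated convergence over finitely many groups to verify the Lindeberg condition. Your final positivity remark is garbled in notation, but the intended bound $s^2(\lambda)\ge(1-w_a)^2\sigma_{C,a}^2/n_{C,a}>0$ is exactly what is needed.
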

See Appendix~\ref{app:pf-prop-clt} for the proof.
Proposition~\ref{prop:clt} provides a Gaussian approximation for the borrowing estimator $\hat{\theta}(\lambda)$ with an explicit centering term that isolates the external-data bias component.
This normal limit justifies the Wald-type construction in \eqref{eq:robust-test} and is the key step that allows us to derive analytic size and power characterizations for each fixed $\lambda$.

We now state the robust size guarantee, defined with respect to the Wasserstein ambiguity sets.

\begin{theorem}[Asymptotic distributionally robust size control]
\label{thm:robust-size}
    Fix $\lambda\in\Lambda$.
    For any fixed $(\theta_C,P_H^0,P_H^1)$ with $\theta_C\le 0$ and $P_H^a\in\mathcal{U}_a(\rho_a)$, $a\in\{0,1\}$,
    under Assumptions~\ref{ass:sampling} and \ref{ass:asymptotic},
    \begin{equation*}
        \limsup_{\min_a n_{C,a}\to\infty}
        \mathbb{P}(
            \varphi_\lambda = 1
        )
        \le \alpha
        ,
    \end{equation*}
    where the probability is taken under the joint law induced by the i.i.d.\ sampling from $(P_C^0, P_C^1,P_H^0,P_H^1)$.
\end{theorem}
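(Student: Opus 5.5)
The plan is to reduce the rejection event to a standardized Gaussian statistic plus a nonpositive deterministic shift, and then apply Proposition~\ref{prop:clt} together with Slutsky's lemma. Write $m(\lambda)\coloneq\theta_C+w_1(\lambda_1)\Delta_1-w_0(\lambda_0)\Delta_0$, where $\Delta_a=\mu_H^a-\mu_C^a$ is evaluated at the fixed true $P_H^a$. Then
\begin{equation*}
    \frac{\hat\theta(\lambda)-\tilde b_+(\lambda)}{\hat s(\lambda)}
    =
    \frac{s(\lambda)}{\hat s(\lambda)}\biggl[
        \frac{\hat\theta(\lambda)-m(\lambda)}{s(\lambda)}
        +
        \frac{m(\lambda)-b_+(\lambda)}{s(\lambda)}
        +
        \frac{b_+(\lambda)-\tilde b_+(\lambda)}{s(\lambda)}
    \biggr].
\end{equation*}
The first bracketed term converges to $N(0,1)$ by Proposition~\ref{prop:clt}.

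For the second term, note that $P_H^a\in\mathcal U_a(\rho_a)$ gives $\Delta_a\in[\Delta_a^-(\rho_a),\Delta_a^+(\rho_a)]$ by the definitions of $\Delta_a^\pm$. Since $w_a\ge0$, Proposition~\ref{prop:bplus} (or directly definition \eqref{eq:bplus-def}, taking $Q_a=P_H^a$) yields $w_1\Delta_1-w_0\Delta_0\le b_+(\lambda)$. Combined with $\theta_C\le0$, this gives $m(\lambda)-b_+(\lambda)\le0$, so the second term is nonpositive for every $n$. Dropping it can only enlarge the rejection event, and no uniformity in the shift is needed.

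The remaining steps are two consistency statements. First, $\hat s(\lambda)/s(\lambda)\to_p1$. Each $\hat\sigma^2_{j,a}\to_p\sigma^2_{j,a}$ by the weak LLN under Assumption~\ref{ass:sampling}, since $n_{j,a}\to\infty$ whenever $n_{H,a}\neq 0$ by Assumption~\ref{ass:asymptotic}. The ratio $\hat s^2/s^2$ is a weighted average of the ratios $\hat\sigma^2_{j,a}/\sigma^2_{j,a}$, with nonnegative weights $c_{j,a}/s^2$ summing to one. This handles varying $w_a$ and $n$ without requiring the weights to converge, provided $s^2>0$; that holds because $\sigma^2_{C,a}>0$ and $1-w_a>0$. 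One caveat: if some $\sigma^2_{H,a}=0$, I would argue that term separately, since its contribution tends to zero in ratio.

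Second, the plug-in error must satisfy $(b_+-\tilde b_+)/s(\lambda)\to_p0$. In the continuous case, or for the benchmark version, this term is identically zero. For binary outcomes, $\hat b_+-b_+$ involves $\min\{\rho_1,1-\bar Y_{C,1}\}-\min\{\rho_1,1-\mu_C^1\}$ and the analogous term for arm $0$. Because $x\mapsto\min\{\rho,x\}$ is 1-Lipschitz, the error is at most $\sum_a w_a|\bar Y_{C,a}-\mu_C^a|=O_p(\max_a n_{C,a}^{-1/2})$. The scale $s(\lambda)$ is of the same order $n^{-1/2}$, so this ratio is only $O_p(1)$, not $o_p(1)$.

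This is the main obstacle. I would handle it by using the kink structure of the minimum. If $\rho_1\neq 1-\mu_C^1$, then with probability tending to one $\bar Y_{C,1}$ lies on the same side of the kink as $\mu_C^1$. On the branch $\min=\rho_1$ the error vanishes exactly. On the other branch the error equals $w_1(\mu_C^1-\bar Y_{C,1})$, which is not negligible relative to $s$. So I would instead absorb it by redefining the centering: the statistic becomes $(1-w_1)\bar Y_{C,1}+w_1\bar Y_{H,1}-w_1(1-\bar Y_{C,1})$-type linear combinations of sample means. These remain asymptotically normal with mean $\le 0$ under the null, because the branch $b_+=w_1(1-\mu_C^1)$ binds, meaning $w_1\Delta_1\le w_1(1-\mu_C^1)$. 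Their asymptotic variance differs from $s^2(\lambda)$, so I would aim to show this alternative variance is no larger, or else state the plug-in guarantee for the benchmark $\tilde b_+=b_+$ and treat the plug-in version as an approximation.

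Given the steps above, Slutsky's lemma yields $\limsup\mathbb P(\varphi_\lambda=1)\le\limsup\mathbb P(\text{first term}+o_p(1)\ge z_{1-\alpha})=\alpha$, using continuity of the normal cdf.
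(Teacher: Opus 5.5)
For the benchmark correction $\tilde b_+(\lambda)=b_+(\lambda)$, your argument is correct and follows the paper's route: asymptotic normality from Proposition~\ref{prop:clt}, nonpositivity of $m(\lambda)-b_+(\lambda)$ by taking $Q_a=P_H^a$ in \eqref{eq:bplus-def}, and Slutsky. It is more careful than the paper's proof in two places.
\begin{enumerate}
\item You establish $\hat s(\lambda)/s(\lambda)\to_p 1$, which is what the argument actually needs. The paper relies on $\hat s(\lambda)\to_p s(\lambda)$ from Lemma~\ref{lem:consistency}. That statement is uninformative because both sides vanish, and the paper argues summand by summand as if the $n$-dependent weights $w_a(\lambda_a)$ were fixed.
\item You simply discard the nonpositive drift. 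The paper instead asserts convergence to $N(m,1)$ with $m\le 0$, although the drift divided by $\hat s(\lambda)$ diverges to $-\infty$ unless it is exactly zero.
\end{enumerate}

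For the plug-in version your diagnosis is also right, and it identifies an actual error in the paper's proof. The paper claims that $\hat b_+(\lambda)-b_+(\lambda)\to_p 0$ makes the replacement an $o_p(1)$ change of the statistic. In fact the change is $\{b_+(\lambda)-\hat b_+(\lambda)\}/\hat s(\lambda)$, and $\hat s(\lambda)$ is of order $n^{-1/2}$.

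However, the repair you sketch on the binding branch cannot succeed, because the re-centred statistic has strictly \emph{larger} variance than $s^2(\lambda)$. When $\rho_1>1-\mu_C^1$, with probability tending to one the numerator is $\bar Y_{C,1}+w_1(\bar Y_{H,1}-1)-\hat\mu_0(\lambda_0)-w_0\min\{\rho_0,\bar Y_{C,0}\}$. The coefficient of $\bar Y_{C,1}$ has risen from $1-w_1$ to $1$, while $\hat s(\lambda)$ still estimates $s(\lambda)$. The same happens for arm $0$ when $\rho_0>\mu_C^0$, where the coefficient of $\bar Y_{C,0}$ becomes $-1$.

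In fact the plug-in claim is false there. Here is a counterexample:
\begin{itemize}
\item Take $\lambda_0=0$ and $\mu_C^1=\mu_C^0=p\in(0,1)$, so $\theta_C=0$, and take $\rho_1>1-p$.
\item Let $P_H^1$ be $P_C^1$ with every outcome set to $1$ (the kernel in the proof of Proposition~\ref{prop:closed-form-delta}(ii) with $\eta=1$). Then $P_H^1\in\mathcal{U}_1(\rho_1)$ and $\Delta_1=\Delta_1^+(\rho_1)$.
\item With probability tending to one, $\hat\theta(\lambda)-\hat b_+(\lambda)=\bar Y_{C,1}-\bar Y_{C,0}$.
\item Meanwhile $\hat s^2(\lambda)=(1-w_1)^2\hat\sigma^2_{C,1}/n_{C,1}+\hat\sigma^2_{C,0}/n_{C,0}$, since $\hat\sigma^2_{H,1}=0$.
\item With $n_{C,1}=n_{C,0}$ and $w_1(\lambda_1)\to w\in(0,1)$, the rejection probability tends to $1-\Phi\bigl(z_{1-\alpha}\sqrt{\{1+(1-w)^2\}/2}\bigr)>\alpha$. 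This is about $0.061$ for $\alpha=0.025$ and $w=1/2$.
\end{itemize}

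The kink $\rho_1=1-p$, which you excluded, fails in the same configuration. There $\hat b_+(\lambda)=b_+(\lambda)-w_1(\bar Y_{C,1}-\mu_C^1)_+\le b_+(\lambda)$, so the statistic is shifted upward at a least-favourable null.

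So your fallback is the correct conclusion. The theorem holds for $\tilde b_+=b_+$. For the plug-in it holds only in two cases:
\begin{itemize}
\item under the interior condition $\rho_1<1-\mu_C^1$, $\rho_0<\mu_C^0$, which is your first branch, where $\hat b_+(\lambda)=b_+(\lambda)$ with probability tending to one;
\item with the conservative $b_+^{\mathrm{univ}}(\lambda)$ of Remark~\ref{rem:adaptive-safeguards}.
\end{itemize}
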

See Appendix~\ref{app:pf-thm-robust-size} for the proof.
Theorem~\ref{thm:robust-size} guarantees asymptotic type I error control uniformly over all historical-arm distributions lying in the Wasserstein ambiguity sets.
Thus, for any prespecified borrowing rule $\lambda$, the proposed bias correction provides a principled safeguard against false positives induced by lack of commensurability between current and external data.

\begin{proposition}[Tightness and minimality of the robust correction]
\label{prop:tight-min}
    Fix $\lambda\in\Lambda$ and consider the one-sided Wald-type test family
    \begin{equation*}        
        \varphi_{\lambda,c}
        \coloneq
        \boldsymbol{1}\Biggl\{
            \frac{\hat\theta(\lambda)-c}{\hat{s}(\lambda)}\ge z_{1-\alpha}
        \Biggr\}
        ,
        \quad
        c\in\mathbb{R}
        .
    \end{equation*}
    \begin{enumerate}
        \item[(i)] (Minimality) If $c<b_+(\lambda)$, there exists a null configuration $(\theta_C, P_H^0, P_H^1)$ with $\theta_C=0$ and $P_H^a\in\mathcal{U}_a(\rho_a)$
        \begin{equation*}        
            \liminf_{\min_a n_{C,a}\to\infty}
            \mathbb{P}(\varphi_{\lambda,c}=1)
            >
            \alpha
            .
        \end{equation*}
        Consequently, $b_+(\lambda)$ is the minimal constant correction required to guarantee distributionally robust size control over the joint ambiguity set $\mathcal{U}_0(\rho_0)\times\mathcal{U}_1(\rho_1)$.
        
        \item[(ii)] (Tightness) For $c=b_+(\lambda)$, the bound in Theorem~\ref{thm:robust-size} is tight in the minimax sense:
        \begin{equation*}        
            \sup_{\theta_C\le 0,P_H^a\in\mathcal{U}_a(\rho_a)}
            \lim_{\min_a n_{C,a}\to\infty}\mathbb{P}(\varphi_{\lambda}=1)
            =
            \alpha
            .
        \end{equation*}
    \end{enumerate}
\end{proposition}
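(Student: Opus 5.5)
The plan is to reduce both parts to the Gaussian limit of Proposition~\ref{prop:clt}, combined with consistency of $\hat{s}(\lambda)$. Write $m(\lambda)\coloneq\theta_C+w_1(\lambda_1)\Delta_1-w_0(\lambda_0)\Delta_0$. Under Assumptions~\ref{ass:sampling} and \ref{ass:asymptotic}, the variance estimators satisfy $\hat\sigma^2_{j,a}\to_p\sigma^2_{j,a}$ by the weak law of large numbers. Since $\sigma_{C,a}^2>0$, the term $(1-w_a)^2\sigma_{C,a}^2/n_{C,a}$ prevents $s^2(\lambda)$ from vanishing relative to its components. Hence $\hat{s}(\lambda)/s(\lambda)\to_p 1$, and Slutsky gives
\begin{equation*}
    \mathbb{P}(\varphi_{\lambda,c}=1)
    =
    \mathbb{P}\Biggl(
        \frac{\hat\theta(\lambda)-m(\lambda)}{\hat{s}(\lambda)}
        \ge
        z_{1-\alpha}-\frac{m(\lambda)-c}{\hat{s}(\lambda)}
    \Biggr)
    .
\end{equation*}
Because $s(\lambda)\to 0$ as $\min_a n_{C,a}\to\infty$ (for fixed $\lambda$ the weights $w_a$ stay bounded away from $1$), the limiting rejection probability is governed by the sign of $m(\lambda)-c$. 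If $m(\lambda)-c>0$ is a fixed constant, then $(m-c)/s\to+\infty$ and the rejection probability tends to $1$. If $m(\lambda)=c$, it tends to $\alpha$. If $m(\lambda)<c$, it tends to $0$. I expect the main point requiring care to be that the weights $w_a(\lambda_a)$ themselves depend on the sample sizes. I would handle this by passing to subsequences along which $n_{H,a}/n_{C,a}$ converges (in $[0,\infty]$). On each such subsequence $w_a$ has a limit, and the sign argument applies with the limiting $m$. The case where the limit of $m-c$ is $0$ but $m-c$ is nonzero along the sequence needs $\sqrt{n}$-scale bookkeeping, and I would avoid it by constructing configurations with a strictly positive gap.

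For (i), I take $c<b_+(\lambda)$ and exploit the sup definition \eqref{eq:bplus-def} together with Propositions~\ref{prop:closed-form-delta} and \ref{prop:bplus}. The supremum is attained at explicit distributions. In the continuous case, take $Q_1$ to be $P_C^1$ shifted by $+\rho_1$ in $y$ and $Q_0$ to be $P_C^0$ shifted by $-\rho_0$ in $y$. Translation of $y$ costs exactly $\rho_a$ under the additive metric \eqref{eq:ground-metric}, so $Q_a\in\mathcal{U}_a(\rho_a)$ and $\Delta_1=\rho_1$, $\Delta_0=-\rho_0$. In the binary case, move mass $\min\{\rho_1,1-\mu_C^1\}$ from $y=0$ to $y=1$ in arm $1$ at fixed $x$, and move mass $\min\{\rho_0,\mu_C^0\}$ from $y=1$ to $y=0$ in arm $0$. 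Each such move costs at most $\rho_a$. Setting $P_H^a=Q_a$ and choosing any $P_C^a$ with $\theta_C=0$ gives $m(\lambda)=b_+(\lambda)$, the value in Proposition~\ref{prop:bplus}, so $m(\lambda)-c=b_+(\lambda)-c>0$ is fixed. By the reduction above, the rejection probability tends to $1>\alpha$, which proves (i). If the weights vary with $n$, the gap equals $w_1\Delta_1^+-w_0\Delta_0^- - c$. To keep it bounded away from $0$, I would interpret $b_+(\lambda)$ along the chosen subsequence limit, or simply assume the ratios $n_{H,a}/n_{C,a}$ converge, as is implicit in the statement.

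For (ii), the inequality $\le\alpha$ is Theorem~\ref{thm:robust-size}. For attainment, I use the same extremal $P_H^a$ with $\theta_C=0$ and $c=b_+(\lambda)$, which gives $m(\lambda)=c$ exactly. The studentized statistic is then asymptotically $N(0,1)$, so the rejection probability converges to $\alpha$. The supremum therefore equals $\alpha$, and it is attained.
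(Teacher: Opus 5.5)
Your proposal is correct and follows essentially the same route as the paper. You build the least-favorable null with $\theta_C=0$ using $y$-translations by $\pm\rho_a$ (continuous case) or the mass-moving kernels (binary case), so that the centering equals $b_+(\lambda)$. You then split the statistic into an asymptotically $N(0,1)$ term plus $(b_+(\lambda)-c)/\hat s(\lambda)$, which diverges when $c<b_+(\lambda)$ and vanishes when $c=b_+(\lambda)$, and you invoke Theorem~\ref{thm:robust-size} for the upper bound in (ii).

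Your extra care about the $n$-dependence of $w_a(\lambda_a)$, keeping the gap $b_+(\lambda)-c$ bounded away from zero, addresses a point the paper glosses over by treating the weights as fixed constants. It does not change the structure of the argument.
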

See Appendix~\ref{app:pf-prop-tight-min} for the proof.
Proposition~\ref{prop:tight-min} shows that the proposed robust correction $b_+(\lambda)$ is not only sufficient but also necessary (within the class of constant-shift Wald tests) for distributionally robust size control over $\mathcal{U}_0(\rho_0)\times\mathcal{U}_1(\rho_1)$.
In particular, any smaller correction $c<b_+(\lambda)$ leads to asymptotic type I error inflation under some admissible historical drift, while the choice $c=b_+(\lambda)$ attains the nominal level $\alpha$ under an explicit least-favorable configuration, implying that Theorem~\ref{thm:robust-size} is minimax-sharp and cannot be further tightened without strengthening assumptions or enlarging the test class.

To characterize power, fix a target alternative effect $\theta_1 > 0$ and define the robust power
\begin{equation*}
    \mathrm{Pow}_{\mathrm{rob}}(\lambda;\theta_1)
    \coloneq
    \inf_{P_H^1\in\mathcal{U}_1(\rho_1), P_H^0\in\mathcal{U}_0(\rho_0)}
    \mathbb{P}_{\theta_C=\theta_1}\bigl(
        \varphi_{\lambda} = 1
    \bigr)
    .
\end{equation*}

\begin{theorem}[Asymptotic robust power and the robust noncentrality parameter]
\label{thm:robust-power}
    Fix $\theta_1 > 0$ and $\lambda\in \Lambda$.
    Under Assumptions~\ref{ass:sampling} and \ref{ass:asymptotic},
    \begin{equation*}
        \lim_{\min_{a}n_{C,a}\to \infty} \mathrm{Pow}_{\mathrm{rob}}(\lambda;\theta_1)
        =
        1-\Phi\bigl(
            z_{1-\alpha} - \kappa(\lambda)
        \bigr)
        ,
    \end{equation*}
    where $\Phi$ is the standard normal cumulative distribution function and
    \begin{equation}
    \label{eq:kappa}
        \kappa(\lambda)
        \coloneq
        \frac{\theta_1 - w_1(\lambda_1)\bigl(
            \Delta_1^+(\rho_1) - \Delta_1^-(\rho_1)
        \bigr)
        -
        w_0(\lambda_0)\bigl(
            \Delta_0^+(\rho_0) - \Delta_0^-(\rho_0)
        \bigr)}{s(\lambda)}
        .
    \end{equation}
\end{theorem}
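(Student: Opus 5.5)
The plan is to reduce the robust power to a one-dimensional optimization over the arm-wise mean drifts, using Proposition~\ref{prop:clt} for the Gaussian approximation and Propositions~\ref{prop:closed-form-delta} and~\ref{prop:bplus} for the extremal drifts. Fix $P_H^a\in\mathcal{U}_a(\rho_a)$ with drifts $\Delta_a=\mathbb{E}_{P_H^a}[Y]-\mu_C^a$, and set $\theta_C=\theta_1$. First I show that $\hat{s}(\lambda)/s(\lambda)\to_p 1$. This follows from consistency of $\hat\sigma_{j,a}^2$ under Assumption~\ref{ass:sampling}, together with $\sigma_{C,a}^2>0$ in Assumption~\ref{ass:asymptotic}, which bounds $s^2(\lambda)$ below by a multiple of $\min_a (1-w_a)^2/n_{C,a}$. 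For binary outcomes I also show $\hat b_+(\lambda)-b_+(\lambda)=o_p(s(\lambda))$. This holds because $\mu\mapsto\min\{\rho,1-\mu\}$ is $1$-Lipschitz and $\bar Y_{C,a}-\mu_C^a=O_p(n_{C,a}^{-1/2})$, which is of the order of $s(\lambda)$; it may need a slightly finer argument at the kink $\mu_C^a=1-\rho_a$, where I would check whether it suffices to work with the benchmark $\tilde b_+=b_+$.

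With these facts, Slutsky and Proposition~\ref{prop:clt} give
\begin{equation*}
\mathbb{P}(\varphi_\lambda=1)=1-\Phi\Bigl(z_{1-\alpha}-\frac{\theta_1+w_1\Delta_1-w_0\Delta_0-b_+(\lambda)}{s(\lambda)}\Bigr)+o(1).
\end{equation*}
The map $(\Delta_0,\Delta_1)\mapsto$ this expression is increasing in $w_1\Delta_1-w_0\Delta_0$ because $w_a\ge 0$. Taking the infimum over the product of balls, and using that the constraints decouple across arms, gives the least favorable value $w_1\Delta_1^-(\rho_1)-w_0\Delta_0^+(\rho_0)$. Subtracting $b_+(\lambda)=w_1\Delta_1^+-w_0\Delta_0^-$ from Proposition~\ref{prop:bplus} produces exactly the numerator of $\kappa(\lambda)$ in \eqref{eq:kappa}.

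For the matching bounds I argue in two directions. For the lower bound, every admissible configuration has $w_1\Delta_1-w_0\Delta_0\ge w_1\Delta_1^--w_0\Delta_0^+$ by the definition of $\Delta_a^\pm$. For the upper bound, I exhibit least-favorable laws attaining the extremes. In the continuous case, the translated law $P_C^a$ shifted by $\mp\rho_a$ in $y$ lies in the ball, since it has $W_1=\rho_a$ under \eqref{eq:ground-metric}. In the binary case, moving mass $\min\{\rho_a,\mu_C^a\}$ from $y=1$ to $y=0$ (or the reverse) has cost exactly that amount. Both constructions are the ones underlying Proposition~\ref{prop:closed-form-delta}.

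The main obstacle is interchanging $\inf_{P_H}$ with $\lim_{n}$, together with the fact that $s(\lambda)$ depends on $\sigma_{H,a}^2$, which a $W_1$ ball does not control. I would handle this as follows. For the upper bound, evaluating at the explicit least-favorable laws suffices, and their variances are fixed: they equal $\sigma_{C,a}^2$ for translations, and are explicit for Bernoulli laws. I therefore read $s(\lambda)$ in \eqref{eq:kappa} as the variance under that configuration. For the lower bound, I would use a Berry--Esseen or Lindeberg argument uniform over $P_H$ with bounded second moments, or restrict the infimum to such a class, and flag this in the proof. I would also note that, as written, $\kappa(\lambda)\to\pm\infty$ as $s(\lambda)\to 0$. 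The identity is therefore best read as: the difference between the two sides tends to zero, with $\kappa$ evaluated at the current sample sizes; equivalently, it holds under local alternatives $\theta_1,\rho_a\propto n^{-1/2}$. I would state the proof in that form.
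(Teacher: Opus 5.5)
Your argument is essentially the paper's: it too fixes an admissible configuration and applies Proposition~\ref{prop:clt} with Slutsky to the benchmark statistic $(\hat\theta(\lambda)-b_+(\lambda))/\hat s(\lambda)$ to get rejection probability $1-\Phi(z_{1-\alpha}-u)$, then minimizes $u$ using $w_a\ge 0$ at $\Delta_1=\Delta_1^-(\rho_1)$, $\Delta_0=\Delta_0^+(\rho_0)$; the paper simply takes the infimum of these pointwise limits, so your caveats (the $\inf$/$\lim$ interchange, the $n$-dependence of $s(\lambda)$, the uncontrolled $\sigma_{H,a}^2$) go beyond what it addresses (for your uniform lower bound you would need uniform integrability of $(Y-\mu_H^a)^2$ over the class, not just bounded second moments). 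One correction: the $1$-Lipschitz argument gives only $\hat b_+(\lambda)-b_+(\lambda)=O_p(s(\lambda))$, and the failure is not confined to the kink --- whenever $\rho_1\ge 1-\mu_C^1$ or $\rho_0\ge\mu_C^0$ with the corresponding $w_a(\lambda_a)>0$, the plug-in error (e.g.\ $w_1(\lambda_1)(\mu_C^1-\bar Y_{C,1})$) is not $o_p(s(\lambda))$ and changes the limiting distribution of the studentized statistic, so work with the benchmark $b_+(\lambda)$ as the paper does (the plug-in is harmless only when $\rho_a<\min\{\mu_C^a,1-\mu_C^a\}$, where it equals $b_+(\lambda)$ with probability tending to one).
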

See Appendix~\ref{app:pf-thm-robust-power} for the proof.
Theorem~\ref{thm:robust-power} yields an explicit asymptotic lower bound on power in terms of the robust noncentrality parameter $\kappa(\lambda)$.
This converts the choice of the borrowing parameter from ad-hoc scenario-based simulation to a direct, analytically tractable optimization problem: maximize $\kappa(\lambda)$ (equivalently robust power) while preserving worst-case size control.

The test \eqref{eq:robust-test} controls worst-case type I error asymptotically for each fixed $\lambda$.
Therefore, it is natural to select $\lambda$ by maximizing the robust power lower bound.

\begin{corollary}[Robust-power optimal borrowing weight exists]
\label{cor:lambda-opt}
    Assume $\Lambda=[0,\Lambda_0]\times [0,\Lambda_1]$ is compact.
    Under Assumptions~\ref{ass:sampling} and \ref{ass:asymptotic}, the map $\lambda\mapsto \kappa(\lambda)$ in \eqref{eq:kappa} is continuous on $\Lambda$.
    Hence, there exists at least one maximizer
    \begin{equation*}
        \lambda^\ast
        \in
        \argmax_{\lambda\in\Lambda} \mathrm{Pow}_{\mathrm{rob}}(\lambda;\theta_1)
        =
        \argmax_{\lambda\in\Lambda}\kappa(\lambda)
        .
    \end{equation*}
\end{corollary}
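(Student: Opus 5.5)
The plan is to prove continuity of $\kappa$ on $\Lambda$, invoke the extreme value theorem on the compact rectangle $\Lambda$, and then transfer the maximizer to the robust power through Theorem~\ref{thm:robust-power} and the monotonicity of $\Phi$.

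\emph{Continuity of the EBW.} First I would show that each map $\lambda_a\mapsto w_a(\lambda_a)$ is continuous on $[0,\Lambda_a]$.
\begin{itemize}
\item If $n_{H,a}\ge 1$, then $w_a(\lambda_a)=\lambda_a n_{H,a}/(n_{C,a}+\lambda_a n_{H,a})$ is a rational function. Its denominator is bounded below by $n_{C,a}\ge 1>0$, so it is continuous (indeed smooth).
\item If $n_{H,a}=0$, then $w_a\equiv 0$, which is trivially continuous.
\end{itemize}

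\emph{Continuity of the numerator.} The numerator of $\kappa(\lambda)$ in \eqref{eq:kappa} is
\begin{equation*}
\theta_1-w_1(\lambda_1)\bigl(\Delta_1^+-\Delta_1^-\bigr)-w_0(\lambda_0)\bigl(\Delta_0^+-\Delta_0^-\bigr).
\end{equation*}
It is an affine function of $(w_0,w_1)$. The coefficients do not depend on $\lambda$: by Proposition~\ref{prop:closed-form-delta} they are fixed constants, either $2\rho_a$ or $\min\{\rho_a,1-\mu_C^a\}+\min\{\rho_a,\mu_C^a\}$. Hence the numerator is continuous on $\Lambda$.

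\emph{Continuity and positivity of the denominator.} By \eqref{eq:s-lambda}, $s^2(\lambda)$ is a polynomial in $(w_0,w_1)$ with fixed coefficients $\sigma_{j,a}^2/n_{j,a}$, using the stated convention when $n_{H,a}=0$. So it is continuous in $\lambda$. The key point is a uniform lower bound. Since $w_a<1$ on $\Lambda$ and $\sigma_{C,a}^2>0$ by Assumption~\ref{ass:asymptotic}, we have
\begin{equation*}
s^2(\lambda)\ge \sum_a\bigl(1-w_a(\lambda_a)\bigr)^2\sigma_{C,a}^2/n_{C,a}>0.
\end{equation*}
Because $w_a(\Lambda_a)<1$ and $w_a$ is nondecreasing, this bound is in fact bounded away from zero on $\Lambda$. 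Therefore $s(\lambda)=\sqrt{s^2(\lambda)}$ is continuous and strictly positive, and $\kappa$, as a quotient of continuous functions with nonvanishing denominator, is continuous on $\Lambda$.

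\emph{Existence of a maximizer.} Since $\Lambda$ is a closed and bounded rectangle in $\mathbb{R}^2$, it is compact, and the Weierstrass extreme value theorem gives some $\lambda^\ast\in\argmax_{\lambda\in\Lambda}\kappa(\lambda)$.

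\emph{Identification of the argmax sets.} By Theorem~\ref{thm:robust-power}, the (asymptotic) robust power equals $1-\Phi(z_{1-\alpha}-\kappa(\lambda))$. This is a strictly increasing function of $\kappa(\lambda)$, because $\Phi$ is strictly increasing. A strictly increasing transformation preserves argmax sets, so the two argmax sets coincide.

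The main obstacle is interpretive rather than technical. $\mathrm{Pow}_{\mathrm{rob}}$ at finite samples is not literally $1-\Phi(z_{1-\alpha}-\kappa)$, so the equality of argmax sets must be read for the asymptotic robust power given by the limit in Theorem~\ref{thm:robust-power}. In that limit the sample sizes enter $\kappa$ only as fixed design constants, and the whole argument is carried out at those fixed $(n_{j,a})$.
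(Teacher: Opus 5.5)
Your proposal is correct and follows the paper's proof. Both establish continuity of $w_a$, of the numerator (whose drift ranges do not depend on $\lambda$), and of $s(\lambda)$ (strictly positive since $\sigma_{C,a}^2>0$), and then apply the extreme value theorem on the compact rectangle $\Lambda$. You also spell out two points the paper leaves implicit: the $n_{H,a}=0$ case, and the equality of the two argmax sets via strict monotonicity of $\Phi$, read at fixed sample sizes as you note.
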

See Appendix~\ref{app:pf-cor-lambda-opt} for the proof.
Corollary~\ref{cor:lambda-opt} ensures that the proposed DRO-based calibration problem is well-posed: an optimal borrowing parameter $\lambda^\ast$ exists on the prespecified feasible set $\Lambda$.
This guarantees that the method produces a concrete, implementable recommendation rather than only a conceptual criterion.

\begin{remark}
    The optimization in Corollary~\ref{cor:lambda-opt} formalizes DRO-based calibration of the borrowing degree:
    we maximize a worst-case (over Wasserstein ambiguity sets) power functional while maintaining worst-case type I error control (Theorem~\ref{thm:robust-size}).
    This contrasts with common ad-hoc calibrations of discounting parameters by simulation for specific scenarios, as in many implementations of power priors, commensurate priors, and MAP/robust MAP priors.
\end{remark}

\subsection{Implementation and Algorithm}
\label{subsec:algorithm}

This subsection summarizes the fully implementable BOND procedure, which calibrates the borrowing parameter $\lambda$ by maximizing a plug-in version of the robust noncentrality parameter, followed by the robust Wald test at the selected $\hat{\lambda}$.
Importantly, the procedure requires only arm-level summary statistics $(n_{j,a},\bar{Y}_{j,a},\hat{\sigma}^2_{j,a})$, making it directly applicable to aggregate historical data.

\subsubsection{Plug-In Robust Power Criterion}
\label{subsubsec:plug-in}
Recall the robust noncentrality parameter $\kappa(\lambda)$ in \eqref{eq:kappa}.
Let $D_a(\rho_a) \coloneq \Delta_a^+(\rho_a)-\Delta_a^-(\rho_a)$ denote the arm-wise worst-case mean-drift range.
By Proposition~\ref{prop:closed-form-delta}, this simplifies to
\begin{equation*}
    D_a(\rho_a)
    =
    \begin{cases}
        2\rho_a,
        & \mathcal{Y}=\mathbb{R}
        ,\\
        \min\{\rho_a,1-\mu_C^a\}+\min\{\rho_a,\mu_C^a\},
        & \mathcal{Y}=\{0,1\}
        .
    \end{cases}
\end{equation*}
For the latter, we use the plug-in estimator $\hat{D}_a(\rho_a)$ obtained by replacing $\mu_C^a$ with $\bar{Y}_{C,a}$.

We define the plug-in robust noncentrality parameter as
\begin{equation}
\label{eq:kappa-hat}
    \hat{\kappa}(\lambda)
    \coloneq
    \frac{
        \theta_1
        - w_1(\lambda_1)\hat{D}_1(\rho_1)
        - w_0(\lambda_0)\hat{D}_0(\rho_0)
    }{\hat{s}(\lambda)}
    ,
\end{equation}
where $\hat{s}(\lambda)$ is given in \eqref{eq:shat}.
The corresponding robust power proxy,
\(
\widehat{\mathrm{Pow}}_{\mathrm{rob}}(\lambda;\theta_1)
\coloneq
1-\Phi(z_{1-\alpha}-\hat{\kappa}(\lambda))
\)
, is strictly monotone in $\hat{\kappa}(\lambda)$.

Because $\Lambda$ is compact and $\hat{\kappa}(\lambda)$ is continuous whenever $\hat{s}(\lambda)>0$, the maximizer set is non-empty.
To ensure a single-valued selection, we adopt a deterministic tie-breaking rule $\mathrm{Sel}(\cdot)$ that selects the maximizer with the smallest Euclidean norm, breaking remaining ties lexicographically.
The data-driven borrowing parameter is then:
\begin{equation}
\label{eq:lambda-hat-def}
    \hat{\lambda}
    \coloneq
    \mathrm{Sel}\Bigl(
        \argmax_{\lambda\in\Lambda}\hat{\kappa}(\lambda)
    \Bigr)
    .
\end{equation}
If only historical controls are available ($n_{H,1}=0$), we fix $\lambda_1=0$ and optimize over $\lambda_0\in[0,\Lambda_0]$.
Algorithm~\ref{alg:dro-calibration} details the complete end-to-end procedure.

\begin{algorithm}[t]
\caption{BOND: Borrowing under Optimal Nonparametric Distributional robustness}
\label{alg:dro-calibration}
\begin{algorithmic}[1]
    \REQUIRE
    Significance level $\alpha\in(0,1)$;
    target alternative effect $\theta_1>0$;
    radii $(\rho_0,\rho_1)$;
    feasible set $\Lambda=[0,\Lambda_0]\times[0,\Lambda_1]$;
    arm-level summaries $\{(n_{j,a},\bar{Y}_{j,a},\hat{\sigma}^2_{j,a})\}_{j\in\{C,H\},a\in\{0,1\}}$ (convention: $n_{H,a}=0$ if unavailable).

    \ENSURE
    Selected parameter $\hat{\lambda}$, test statistic $T(\hat{\lambda})$, and decision $\varphi_{\hat{\lambda}}\in\{0,1\}$.
    
    \STATE
    Compute EBWs $w_a(\lambda_a)=\lambda_a n_{H,a}/(n_{C,a}+\lambda_a n_{H,a})$ for $a\in\{0,1\}$ (set $w_a(\lambda_a)=0$ if $n_{H,a}=0$).
    
    \STATE
    For each $\lambda\in\Lambda$, compute $\hat{\mu}_a(\lambda_a)$ via \eqref{eq:mu-hat}, $\hat{\theta}(\lambda)=\hat{\mu}_1(\lambda_1)-\hat{\mu}_0(\lambda_0)$, and $\hat{s}(\lambda)$ via \eqref{eq:shat}.

    \STATE For each $\lambda\in\Lambda$, compute the bias correction $\tilde{b}_+(\lambda)$:
    \begin{equation*}        
        \tilde{b}_+(\lambda)=
        \begin{cases}
            w_1(\lambda_1)\rho_1+w_0(\lambda_0)\rho_0,
            & \mathcal{Y}=\mathbb{R}
            ,\\
            w_1(\lambda_1)\min\{\rho_1,1-\bar{Y}_{C,1}\}
            +w_0(\lambda_0)\min\{\rho_0,\bar{Y}_{C,0}\},
            & \mathcal{Y}=\{0,1\}
            .
        \end{cases}
    \end{equation*}

    \STATE Compute $\hat{\kappa}(\lambda)$ via \eqref{eq:kappa-hat} and select $\hat{\lambda}$ via \eqref{eq:lambda-hat-def}.
    
    \STATE Compute the robust Wald statistic and output the decision:
    \begin{equation*}        
        T(\hat{\lambda})
        \coloneq
        \frac{\hat{\theta}(\hat{\lambda})-\tilde{b}_+(\hat{\lambda})}{\hat{s}(\hat{\lambda})}
        ,
        \quad
        \varphi_{\hat{\lambda}}
        \coloneq
        \boldsymbol{1}\{T(\hat{\lambda})\ge z_{1-\alpha}\}
        .
    \end{equation*}
\end{algorithmic}
\end{algorithm}

\subsubsection{Asymptotic Validity of Data-Driven Calibration}
\label{subsubsec:adaptive-validity}

While Theorem~\ref{thm:robust-size} establishes size control for a fixed $\lambda$, Algorithm~\ref{alg:dro-calibration} selects $\hat{\lambda}$ adaptively.
We now formally justify that this data-driven test $\varphi_{\hat{\lambda}}$ retains asymptotic type I error control.

Let $a_n \coloneq \sqrt{n_C+n_H} \to \infty$.
Maximizing $\hat{\kappa}(\lambda)$ is equivalent to maximizing the rescaled objective $\hat{\bar{\kappa}}(\lambda) \coloneq a_n^{-1}\hat{\kappa}(\lambda)$.
Let $\bar{\kappa}(\lambda)$ denote its population counterpart, and define the theoretical optimal parameter $\lambda^\ast \coloneq \mathrm{Sel}(\argmax_{\lambda\in\Lambda}\bar{\kappa}(\lambda))$.

\begin{assumption}[Well-separated maximizer]
\label{ass:adaptive-identification}
    The population optimizer $\lambda^\ast$ is well-separated:
    for every $\varepsilon>0$, there exists $\delta_\varepsilon>0$ such that for all sufficiently large $n$, $\sup_{\|\lambda-\lambda^\ast\|_2\ge \varepsilon} \bar{\kappa}(\lambda) \le \bar{\kappa}(\lambda^\ast)-\delta_\varepsilon$.
\end{assumption}

\begin{proposition}[Robust size control with adaptive $\hat{\lambda}$]
\label{prop:adaptive-robust-size}
    Suppose Assumptions~\ref{ass:sampling}--\ref{ass:asymptotic} and \ref{ass:adaptive-identification} hold.
    Let $\hat{\lambda}$ be selected via \eqref{eq:lambda-hat-def}.
    Then, for any null configuration $(\theta_C,P_H^0,P_H^1)$ with $\theta_C\le 0$ and $P_H^a\in\mathcal{U}_a(\rho_a)$, the adaptive test
    \(
    \varphi_{\hat{\lambda}}
    \coloneq
    \boldsymbol{1}\{
        (\hat{\theta}(\hat{\lambda})-b_{+}(\hat{\lambda}))/\hat{s}(\hat{\lambda})\ge z_{1-\alpha}
    \}
    \) satisfies:
    \begin{equation*}
        \limsup_{\min_a n_{C,a}\to\infty}
        \mathbb{P}\bigl(
            \varphi_{\hat{\lambda}}=1
        \bigr)
        \le
        \alpha
        .
    \end{equation*}
\end{proposition}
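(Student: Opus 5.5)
Our plan is to reduce the adaptive test to the fixed-$\lambda$ test at the population optimizer $\lambda^\ast$ through consistency of $\hat\lambda$, and then invoke Theorem~\ref{thm:robust-size}.

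\textbf{Step 1: uniform convergence of the objective.} We first show $\sup_{\lambda\in\Lambda}|\hat{\bar\kappa}(\lambda)-\bar\kappa(\lambda)|\to_p 0$. Both functions are explicit in the EBWs $w_a(\lambda_a)$, which are smooth and bounded on the compact set $\Lambda$. The sample quantities enter only through $\hat\sigma^2_{j,a}$, and also through $\bar Y_{C,a}$ inside $\hat D_a$ in the binary case. Each of these is consistent by the law of large numbers under Assumption~\ref{ass:sampling}; the map $(\mu,\sigma^2)\mapsto\min\{\rho,1-\mu\}+\min\{\rho,\mu\}$ is Lipschitz. After rescaling by $a_n$, the rescaled variance $a_n^2\hat s^2(\lambda)$ is bounded away from zero uniformly in $\lambda$, because $\sigma_{C,a}^2>0$ (Assumption~\ref{ass:asymptotic}) and $(1-w_a)^2$ and $w_a^2$ cannot both be small. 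Hence the ratio converges uniformly.

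\textbf{Step 2: consistency of $\hat\lambda$.} A standard argmax argument applies. If $\|\hat\lambda-\lambda^\ast\|\ge\varepsilon$, then Assumption~\ref{ass:adaptive-identification} gives
\[
\hat{\bar\kappa}(\hat\lambda)\le\bar\kappa(\lambda^\ast)-\delta_\varepsilon+o_p(1).
\]
On the other hand, $\hat{\bar\kappa}(\hat\lambda)\ge\hat{\bar\kappa}(\lambda^\ast)=\bar\kappa(\lambda^\ast)+o_p(1)$. These are incompatible, so $\hat\lambda\to_p\lambda^\ast$. The tie-breaking rule $\mathrm{Sel}$ is irrelevant here, since every near-maximizer lies in an $\varepsilon$-neighbourhood of $\lambda^\ast$.

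\textbf{Step 3: transferring size control.} Write $T(\lambda)=(\hat\theta(\lambda)-b_+(\lambda))/\hat s(\lambda)$. Center it by $\theta_C+w_1\Delta_1-w_0\Delta_0$ and decompose
\[
T(\lambda)=\frac{\hat\theta(\lambda)-\mathbb{E}\hat\theta(\lambda)}{\hat s(\lambda)}+\frac{\theta_C+w_1\Delta_1-w_0\Delta_0-b_+(\lambda)}{\hat s(\lambda)}.
\]
By Proposition~\ref{prop:bplus} and $P_H^a\in\mathcal U_a(\rho_a)$, the second term is $\le 0$ for every $\lambda$ under $\theta_C\le0$. It therefore suffices to bound the first term, $G_n(\lambda)$, at $\lambda=\hat\lambda$. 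Since $\hat\theta(\lambda)-\mathbb{E}\hat\theta(\lambda)=\sum_a[(1-w_a)(\bar Y_{C,a}-\mu_C^a)+w_a(\bar Y_{H,a}-\mu_H^a)]$ is linear in the smooth weights, $G_n(\cdot)$ is a continuous process in $\lambda$. Its increments are controlled by $|w_a(\hat\lambda_a)-w_a(\lambda^\ast_a)|\cdot a_n|\bar Y_{j,a}-\mu_j^a|$, which is $o_p(1)\cdot O_p(1)$. Together with ratio-consistency of $\hat s(\hat\lambda)/s(\lambda^\ast)$, this yields $G_n(\hat\lambda)-G_n(\lambda^\ast)=o_p(1)$. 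Proposition~\ref{prop:clt} gives $G_n(\lambda^\ast)\to_d N(0,1)$, and Slutsky then gives
\[
\limsup\mathbb P(T(\hat\lambda)\ge z_{1-\alpha})\le 1-\Phi(z_{1-\alpha})=\alpha.
\]

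\textbf{Main obstacle.} We expect the hardest part to be the stochastic equicontinuity in Step 3, especially the $n$-dependence of $w_a$. The EBW depends on $n_{H,a}/n_{C,a}$, so $\lambda^\ast$ may drift with $n$. We would handle this by working along subsequences on which $n_{H,a}/n_{C,a}$ converges in $[0,\infty]$; the well-separation in Assumption~\ref{ass:adaptive-identification} is stated uniformly in $n$, so the argument goes through along each subsequence. A degenerate limit ($w_a\to0$ or $\to1$) is harmless, because the variance lower bound from $\sigma_{C,a}^2>0$ still holds.
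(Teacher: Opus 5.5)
Your proposal is correct and follows the paper's proof essentially step for step. The steps are uniform convergence of $\hat{\bar\kappa}$ through a uniform lower bound on $a_n s(\lambda)$, then consistency of $\hat\lambda$ from the well-separation assumption. Finally, the statistic at $\hat\lambda$ splits into a centered studentized term, transferred from $\lambda^\ast$ by continuity in the weights, plus a drift term that is non-positive for every $\lambda$ because $b_+$ is a supremum over the ambiguity sets. Your remark on the $n$-dependence of $w_a$ and $\lambda^\ast$ goes beyond what the paper spells out, but it overclaims: as in the paper, $a_n|\bar Y_{C,a}-\mu_C^a|=O_p(1)$ needs $n_H/n_C$ to stay bounded, and $\inf_\lambda a_n s(\lambda)>0$ needs either that or $\sigma^2_{H,a}>0$, so your subsequence argument does not by itself handle the degenerate case $n_{H,a}/n_{C,a}\to\infty$.
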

See Appendix~\ref{app:pf-prop-adaptive-robust-size} for the proof.
Proposition~\ref{prop:adaptive-robust-size} formally justifies the data-driven calibration step, guaranteeing that practitioners can adaptively optimize borrowing weights to maximize power without inflating the worst-case type I error.

\begin{remark}[Safeguards for adaptive implementation]
\label{rem:adaptive-safeguards}
    Assumption~\ref{ass:adaptive-identification} generically holds when $\bar{\kappa}(\lambda)$ has a unique maximizer.
    If one wishes to avoid relying on uniqueness, a deterministic vanishing regularizer (e.g., subtracting $\eta_n\|\lambda\|_2^2$ where $\eta_n\to\infty$ but $\eta_n/a_n\to 0$) can enforce stability without asymptotically altering the objective.
    Furthermore, for binary outcomes, Proposition~\ref{prop:adaptive-robust-size} utilizes the oracle correction $b_+(\lambda)$ dependent on true $\mu_C^a$.
    A conservative, fully data-driven implementation can rely on the universal bounds $\Delta_a^+(\rho_a)\le \rho_a$ and $\Delta_a^-(\rho_a)\ge -\rho_a$, yielding 
    \(
    b_+^{\mathrm{univ}}(\lambda) \coloneq w_1(\lambda_1)\rho_1+w_0(\lambda_0)\rho_0 \ge b_+(\lambda)
    \).
    Alternatively, the plug-in correction in Algorithm~\ref{alg:dro-calibration} is exact with probability tending to one under the interior condition $\rho_a<\min\{\mu_C^a,1-\mu_C^a\}$.
\end{remark}

\section{Numerical Experiments}
\label{sec:numerical-experiments}

\subsection{Experimental Setup}
\label{subsec:experimental-setup}

We conducted extensive numerical experiments to evaluate the finite-sample operating characteristics of BOND against a suite of standard borrowing rules.
We simulated a current randomized trial ($n_C=200$) and a historical dataset ($n_H=500$) with baseline covariates $X \in \mathbb{R}^2$ for both continuous and binary outcomes.
To assess robustness against between-trial noncommensurability, we varied a scalar heterogeneity index $\gamma$ ranging from $0$ to $2$ in increments of $0.1$.

We considered three data-generating cases representing common mechanisms of incompatibility:
(i) Commensurate (no covariate shift, no drift; historical data include both arms),
(ii) Covariate shift + effect modification (shifted historical covariates with treatment-covariate interaction inducing $\gamma$-dependent marginal effect differences),
and (iii) Control drift (historical control-only) (historical controls only with a control-arm mean drift of magnitude $\gamma$).

We compared BOND against representative frequentist and Bayesian borrowing paradigms implemented in the accompanying code, including Current-only (no borrowing), Naive pooling (full borrowing), fixed-weight EBW rules, the power prior \citep{ibrahim2000power}, the commensurate prior \citep{hobbs2012commensurate}, and robust MAP priors \citep{schmidli2014robust}.
For all methods, we tested the one-sided hypothesis $H_0\colon \theta_C \le 0$ at level $\alpha=0.025$ using standardized Wald-type statistics.

For BOND, we evaluated two radius specifications:
an oracle radius (true $\rho_a = |\mu_H^a - \mu_C^a|$) and a data-driven proxy $\hat{\rho}_a = c\widehat{W}_1(\widehat{P}_{C}^a, \widehat{P}_{H}^a)$ with an inflation multiplier $c=1.5$.
Complete details of the data-generating mechanisms, hyperparameter settings, and comprehensive operating characteristic curves for all methods, including TTP and other borrowing methods, are provided in Appendix~\ref{app:sec:detailed-numerical-experiments}.

\subsection{Results}
\label{subsec:num-results}

\subsubsection{Adaptive Borrowing Behavior}
\label{subsub:adaptive-borrowing-result}

The proposed DRO calibration transparently modulates borrowing based on the specified tolerance for discrepancy.
Figure~\ref{fig:sim-dro-lambda} visualizes the borrowing levels selected by BOND under the data-driven radii for continuous outcomes.
When true noncommensurability is introduced (under the Covariate shift + effect modification and Control drift scenarios), BOND exhibits a sharp, adaptive switching behavior:
it permits near-full borrowing at $\gamma=0$, but rapidly attenuates the historical weight to zero as $\gamma$ increases beyond $0.1$, successfully guarding against bias.

\begin{figure}[tb]
\vskip 0.1in
\begin{center}
    \centerline{
    \includegraphics[width=\columnwidth]{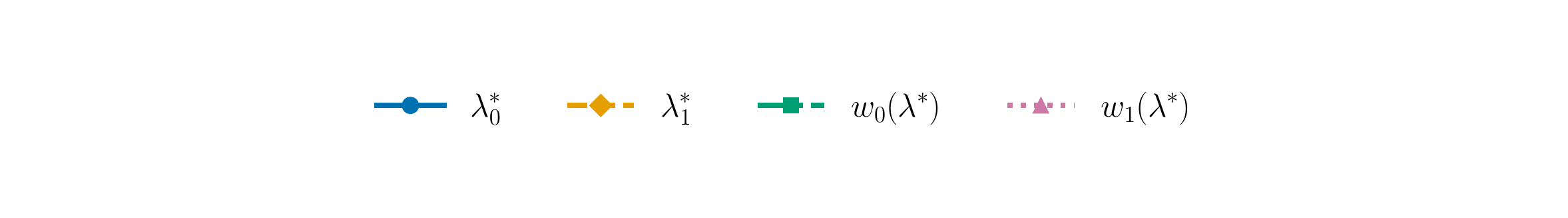}
    }
    \centerline{
    \includegraphics[width=0.49\columnwidth]{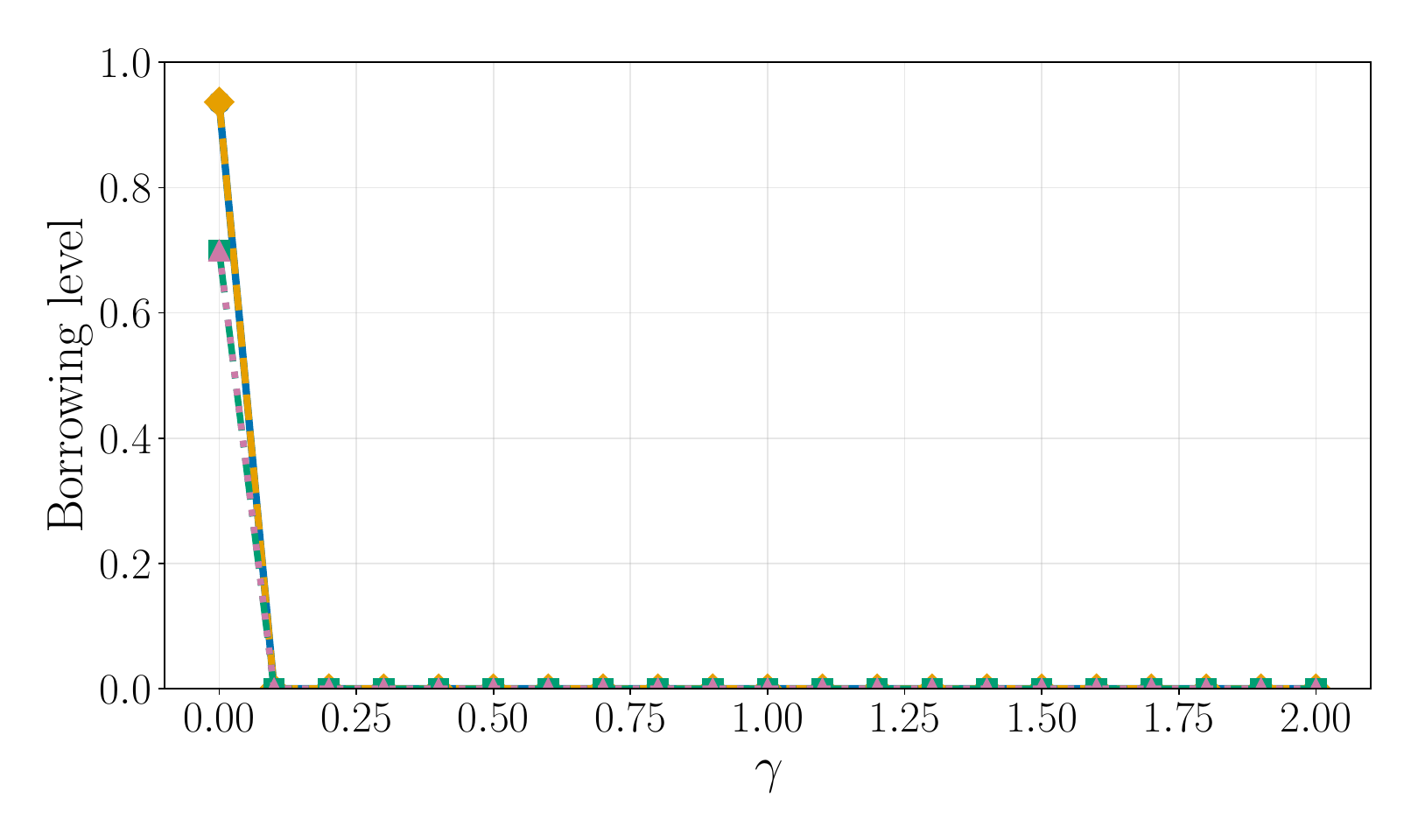}
    \hfill
    \includegraphics[width=0.49\columnwidth]{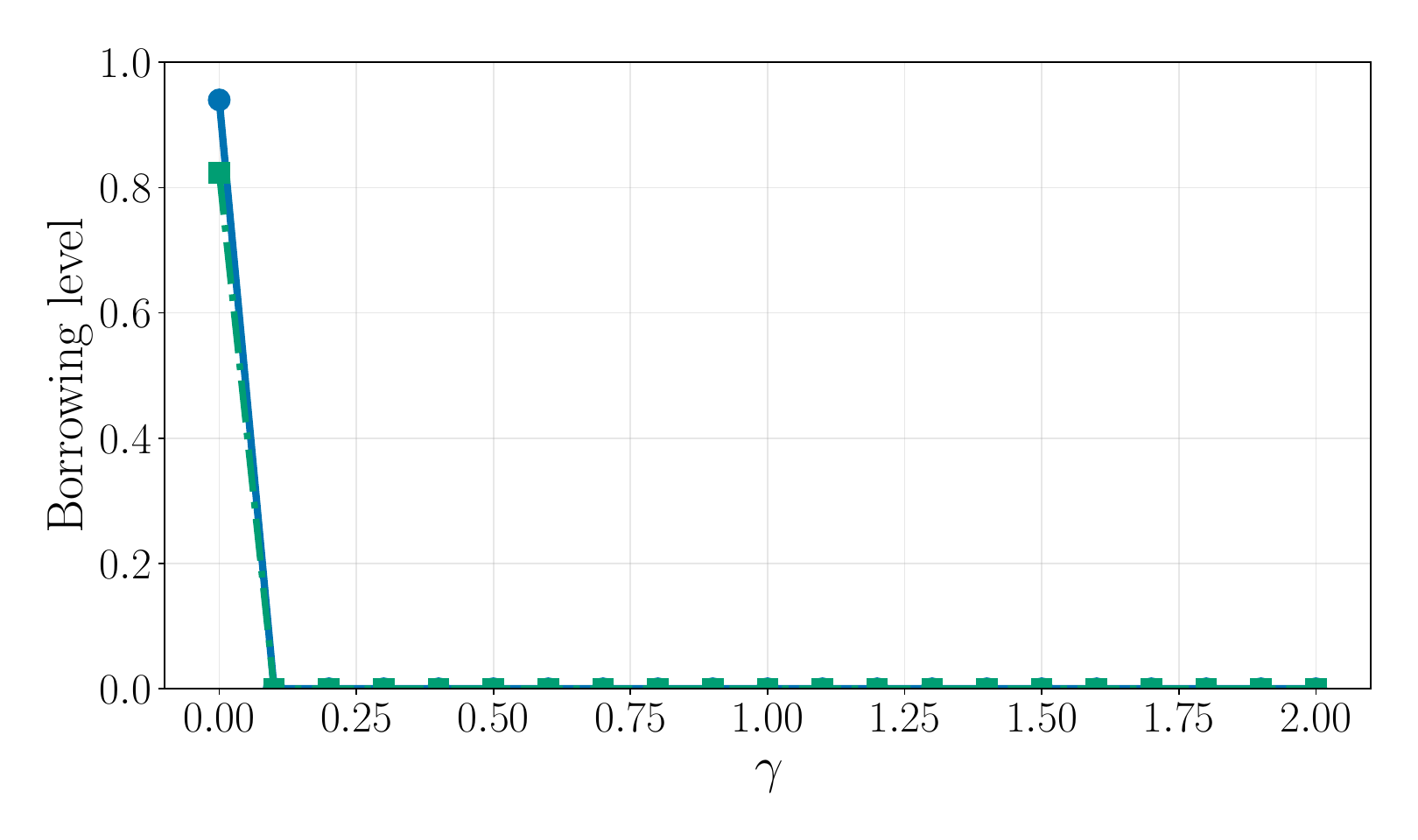}
    }
    \caption{
    BOND-calibrated borrowing levels versus $\gamma$ for continuous outcomes under data-driven radii ($c=1.5$).
    Each panel plots the calibrated discount factors $(\lambda_0^\ast,\lambda_1^\ast)$ and the implied EBW $(w_0(\lambda_0^\ast),w_1(\lambda_1^\ast))$.
    Left: covariate shift + effect modification. Right: control drift with historical controls only (so $\lambda_1^\ast\equiv 0$).
    }
\label{fig:sim-dro-lambda}
\end{center}
\vskip -0.1in
\end{figure}

\subsubsection{Type I Error and Power Trade-offs}
\label{subsub:type-i-result}

Table~\ref{tab:sim-worst-case} summarizes the worst-case operating characteristics over the $\gamma$ grid for continuous outcomes, and Figure~\ref{fig:sim-type1-power-cont} illustrates these continuous dynamics across the full range of heterogeneity.
Aggressive strategies (e.g., Naive pooling, fixed $\lambda=0.5$, or standard power priors) achieve high power under the Commensurate scenario but suffer catastrophic type I error inflation ($\approx 1.000$) as heterogeneity grows under the Covariate shift + effect modification scenario, rendering them unacceptable for regulatory purposes.
In contrast, Figure~\ref{fig:sim-type1-power-cont} (left panels) demonstrates that BOND, robust MAP, and the commensurate prior successfully maintain strict type I error control even as $\gamma$ increases.

Importantly, BOND achieves this robustness without sacrificing utility.
In the Commensurate scenario, BOND boosts the worst-case power to $0.773$, nearly doubling the efficiency of the Current-only design ($0.402$).
Furthermore, under the Control drift scenario (Figure~\ref{fig:sim-type1-power-cont}, right panels), several dynamic priors become somewhat conservative and suffer a severe power collapse due to borrowing historical controls in the wrong direction.
However, BOND consistently controls the type I error near the nominal level while preserving substantially higher power than these competing robust methods, reverting to the Current-only baseline power ($0.400$) by adaptively setting $\lambda_0^\ast \approx 0$.

\begin{table}[tb]
\centering
\small
\caption{
Worst-case operating characteristics over $\gamma$ ranging from $0$ to $2$ in increments of $0.1$ for continuous outcomes under data-driven radii ($c=1.5$).
We report the maximum empirical type I error and the minimum empirical power for a focused subset of methods.
}
\label{tab:sim-worst-case}
\begin{tabular}{lcc cc cc}
\hline
& \multicolumn{2}{c}{Commensurate} & \multicolumn{2}{c}{Covariate shift + Mod.} & \multicolumn{2}{c}{Control drift (historical control-only)} \\
Method
& $\max_\gamma \widehat{\mathrm{TypeI}}$ & $\min_\gamma \widehat{\mathrm{Power}}$
& $\max_\gamma \widehat{\mathrm{TypeI}}$ & $\min_\gamma \widehat{\mathrm{Power}}$
& $\max_\gamma \widehat{\mathrm{TypeI}}$ & $\min_\gamma \widehat{\mathrm{Power}}$ \\
\hline
Current-only
& $0.029$ & $0.402$ & $0.027$ & $0.326$ & $0.028$ & $0.400$ \\

Naive pooling
& $0.028$ & $0.892$ & $1.000$ & $0.826$ & $0.026$ & $0.000$ \\

Fixed $\lambda=0.5$
& $0.028$ & $0.854$ & $1.000$ & $0.777$ & $0.026$ & $0.000$ \\

Power prior ($\lambda=0.5$)
& $0.013$ & $0.765$ & $0.984$ & $0.669$ & $0.022$ & $0.000$ \\

Commensurate prior ($\tau=1$)
& $0.028$ & $0.406$ & $0.042$ & $0.352$ & $0.027$ & $0.348$ \\

Robust MAP ($\epsilon=0.2$)
& $0.012$ & $0.736$ & $0.081$ & $0.144$ & $0.028$ & $0.026$ \\

BOND (data-driven)
& $0.025$ & $0.773$ & $0.027$ & $0.326$ & $0.028$ & $0.400$ \\
\hline
\end{tabular}
\end{table}

\begin{figure}[tb]
\vskip 0.1in
\begin{center}
    \centerline{
    \includegraphics[width=\columnwidth]{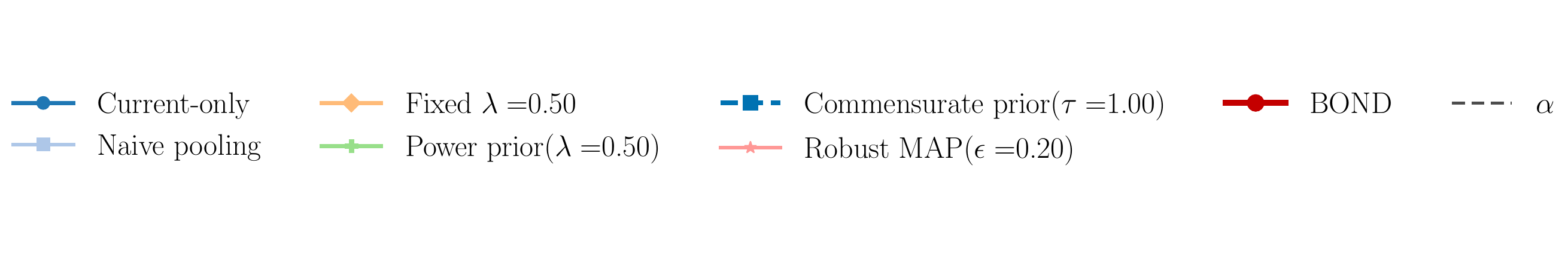}
    }
    \centerline{
    \includegraphics[width=0.49\columnwidth]{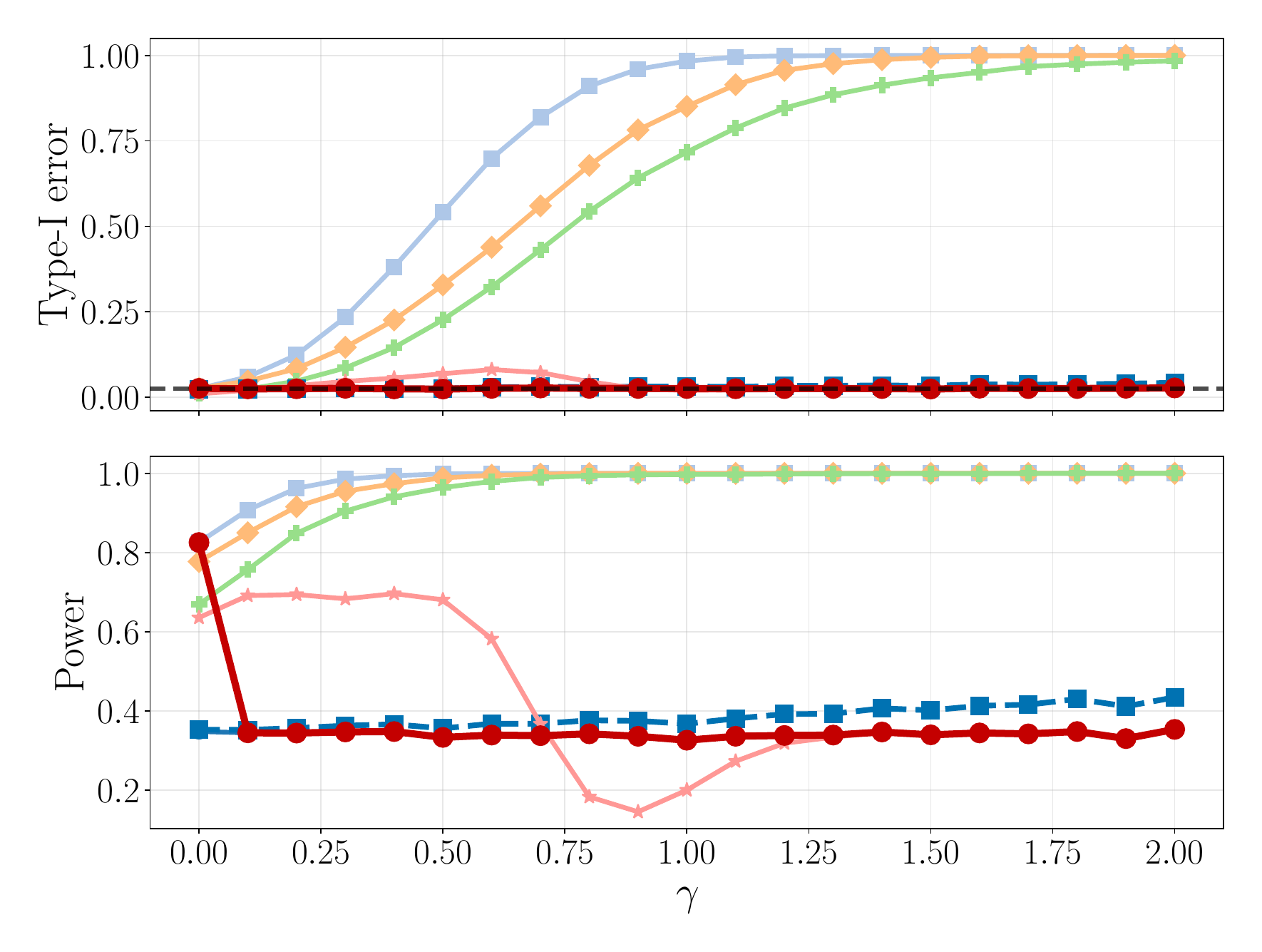}
    \hfill
    \includegraphics[width=0.49\columnwidth]{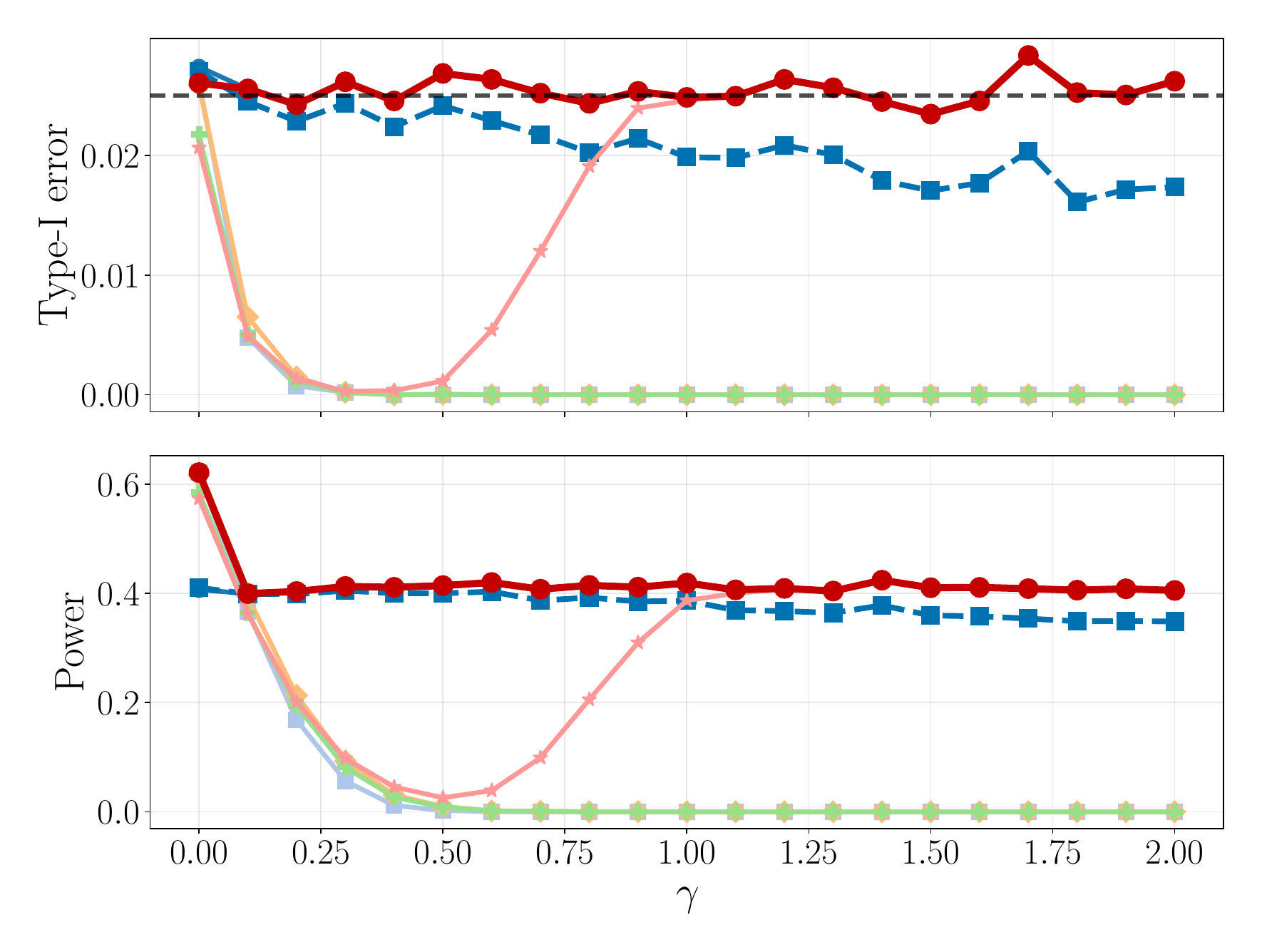}
    }
    \caption{
    Empirical type I error (top) and power (bottom) versus $\gamma$ for continuous outcomes under data-driven radii ($c=1.5$).
    Left: covariate shift + effect modification. Right: control drift with historical controls only (so $\lambda_1^\ast\equiv 0$).
    }
\label{fig:sim-type1-power-cont}
\end{center}
\vskip -0.1in
\end{figure}

Exhaustive graphical profiles and tables for all scenarios, methods, endpoints, and radius specifications (including oracle radii) are provided in Appendix~\ref{app:subsec:detailed-numerical-results}.
\section{Real-World Data Experiments}
\label{sec:real-world-experiments}

\subsection{Real-World Dataset}
\label{subsec:rw-dataset}

We illustrate the proposed method using aggregate data from two randomized studies in previously treated metastatic colorectal cancer (mCRC).
The goal is to evaluate the treatment effect of panitumumab plus FOLFIRI versus FOLFIRI alone on the binary objective response rate (ORR), utilizing the panitumumab trial \citep{peeters2010randomized} as the current study ($j=C$).
We borrow external control information from the placebo plus FOLFIRI arm of the VELOUR study \citep{tabernero2014aflibercept} ($j=H$).
Because the experimental regimens differ, we calibrate borrowing exclusively for the control arm ($\lambda_1=0$, $\lambda_0 \in [0, \Lambda_0]$).

This dataset presents a severe stress test for dynamic borrowing:
the observed ORR in the historical control is $0.367$, markedly higher than the $0.128$ observed in the current control.
This absolute gap of $0.239$ implies substantial noncommensurability (e.g., secular trends or unmeasured prognostic differences).

\subsection{Results and Interpretation}
\label{subsec:realworld-results}

Table~\ref{tab:realworld-results-focus} compares BOND (at $\rho_0=0$) against standard baselines.
The Current-only analysis demonstrates a highly significant treatment effect ($\hat{\theta}=0.156,\;p < 0.001$).
However, Naive pooling forcefully incorporates the highly discordant historical controls, artificially inflating the estimated control ORR to $0.263$.
This severely attenuates the estimated treatment effect to $\hat{\theta}=0.022$, completely destroying statistical significance ($p=0.186$) despite yielding a narrower confidence interval.
Fixed-weight priors ($\lambda=0.5$) similarly dilute the effect.
Conversely, adaptive robust priors (Commensurate, Robust MAP) detect the massive conflict, effectively shutting off borrowing ($n_{\mathrm{hist}}^{\mathrm{eff}} \approx 0$) and recovering the Current-only result, but failing to provide any efficiency gain.
Unlike these all-or-nothing adaptive methods, BOND evaluated at $\rho_0=0$ proactively optimizes for maximum power by borrowing aggressively ($n_{\mathrm{hist}}^{\mathrm{eff}}=294$).
As shown in Table~\ref{tab:realworld-results-focus}, although this inevitably attenuates the point estimate ($\hat{\theta}=0.065$), BOND uniquely secures the narrowest confidence interval (width ratio $0.930$) while successfully preserving statistical significance ($p=0.004$), thus demonstrating an optimal balance between efficiency and validity.

\begin{table}[t]
\centering
\caption{
    Real-world ORR analysis (representative methods).
    We report the estimated control response $\hat{\mu}_0$, estimated treatment effect $\hat{\theta}=\hat{\mu}_1-\hat{\mu}_0$, the 95\% interval width relative to Current-only, the effective borrowed historical control sample size $n_{\mathrm{hist}}^{\mathrm{eff}}$, and the one-sided $p$-value for $H_0\colon\theta_C\le 0$.
}
\label{tab:realworld-results-focus}
\setlength{\tabcolsep}{6pt}
\begin{tabular}{lrrrrr}
\toprule
Method & $\hat{\mu}_0$ & $\hat{\theta}$ & Width ratio & $n_{\mathrm{hist}}^{\mathrm{eff}}$ & $p$ \\
\midrule
Current-only
& $0.128$ & $0.156$ & $1.000$ & $0$ & $7.7\times 10^{-10}$ \\

Naive pooling
& $0.263$ & $0.022$ & $0.946$ & $610$ & $0.186$ \\

Fixed $\lambda=0.5$
& $0.220$ & $0.063$ & $0.930$ & $305$ & $0.005$ \\

Power prior ($\lambda=0.5$)
& $0.220$ & $0.062$ & $0.989$ & $305$ & $0.008$ \\

Commensurate prior ($\tau=1$)
& $0.132$ & $0.152$ & $1.004$ & $4$ & $2.2\times 10^{-9}$ \\

Robust MAP ($\epsilon=0.2$)
& $0.130$ & $0.155$ & $1.001$ & $0$ & $7.8\times 10^{-10}$ \\

BOND ($\rho_0=0$)
& $0.220$ & $0.065$ & $0.930$ & $294$ & $0.004$ \\
\bottomrule
\end{tabular}
\end{table}

\begin{figure*}[t]
\centering
    \includegraphics[width=0.95\textwidth]{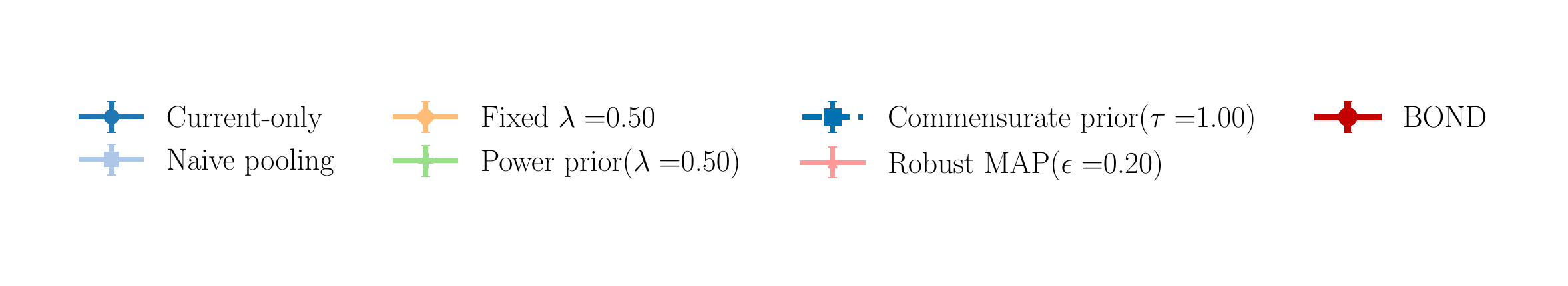}
    \begin{minipage}{0.49\textwidth}
    \centering
    \includegraphics[width=\linewidth]{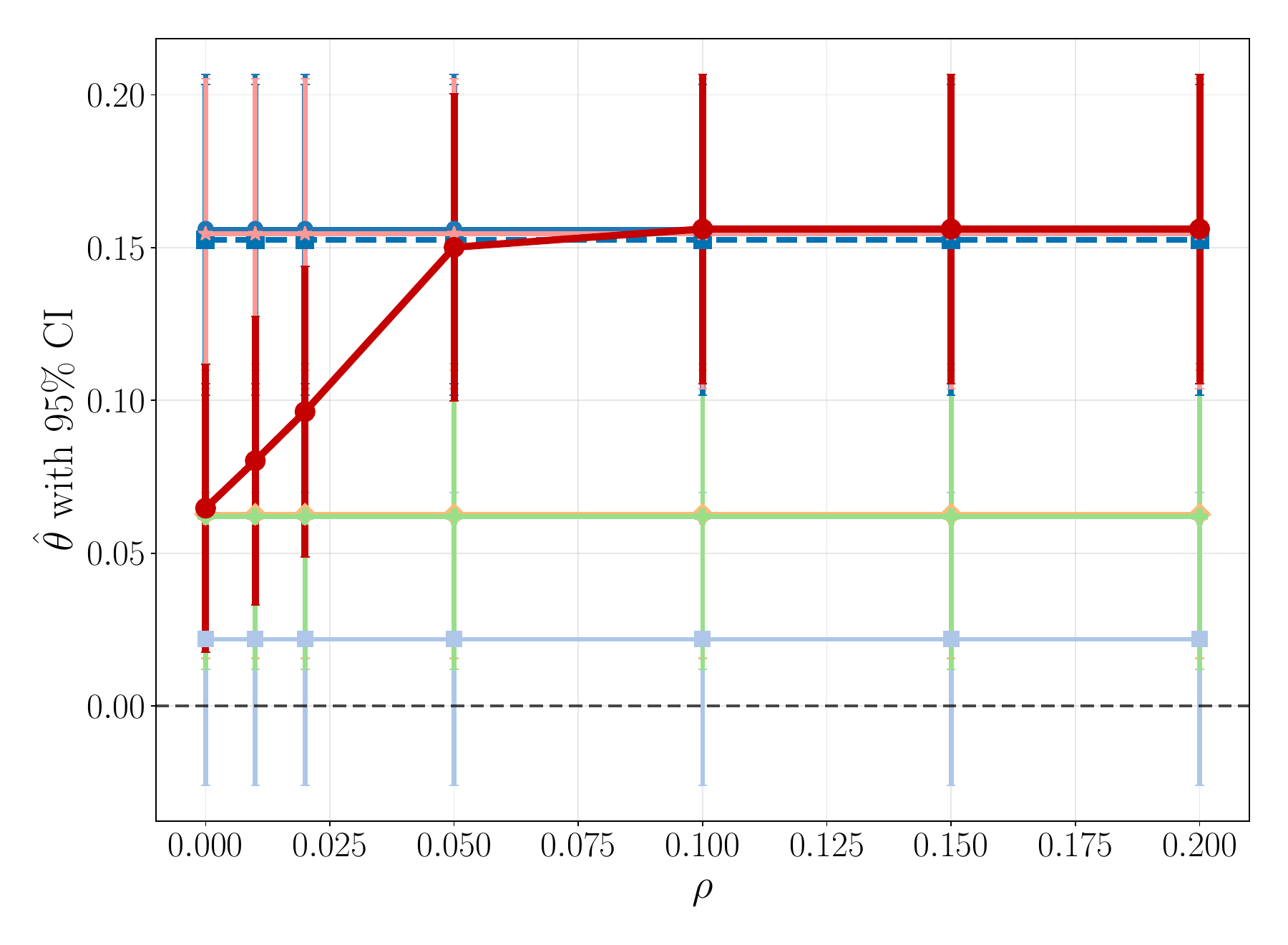}
    \end{minipage}\hfill
    \begin{minipage}{0.49\textwidth}
    \centering
    \includegraphics[width=\linewidth]{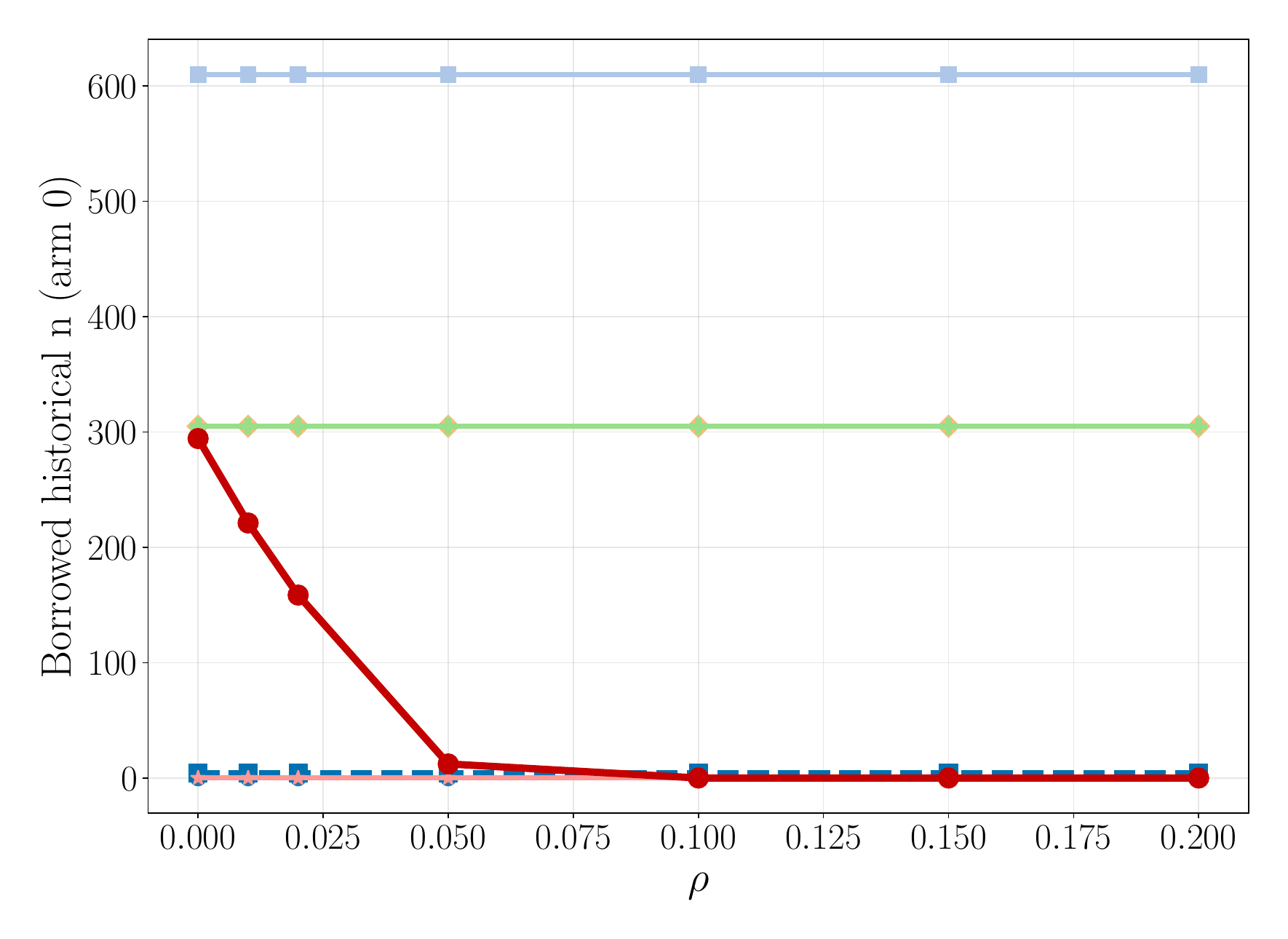}
    \end{minipage}
\\[-0.6em]
\caption{
    Real-world sensitivity to $\rho_0$ for BOND vs. baselines.
    Left: Estimated treatment effect $\hat{\theta}$ with 95\% robust confidence intervals (CIs) versus the tolerance radius $\rho_0$. 
    Right: Effective borrowed historical control sample size ($n_{\mathrm{hist}}^{\mathrm{eff}}$) versus $\rho_0$.
}
\label{fig:realworld-focus-rho}
\end{figure*}

Figure~\ref{fig:realworld-focus-rho} illustrates the unique value of the BOND framework via sensitivity analysis over $\rho_0$.
At $\rho_0=0$, BOND optimizes for maximum power, borrowing aggressively ($n_{\mathrm{hist}}^{\mathrm{eff}}=294$) and yielding a tight, but attenuated, estimate ($\hat{\theta}=0.065$).
As the tolerance for unmeasured drift increases (e.g., $\rho_0 \to 0.05$), BOND mathematically acknowledges the inflated worst-case bias and automatically attenuates the borrowing weight to protect the type I error ($n_{\mathrm{hist}}^{\mathrm{eff}}$ drops to $12$).
The effect estimate correspondingly climbs back to $0.150$.
For $\rho_0 \ge 0.10$, BOND recognizes that the potential bias overwhelms any variance reduction, setting $\lambda_0^\ast=0$ and perfectly reverting to the Current-only analysis. 

This analysis demonstrates that BOND provides a principled, continuous mechanism to negotiate the trade-off between efficiency gains from historical data and rigorous protection against population drift, translating abstract prior tuning into a clinically meaningful discussion over $\rho_0$.
Complete numerical results for this application are detailed in Appendix~\ref{app:sec:additional-real-world}.
\section{Discussion}
\label{sec:discussion}

In this paper, we proposed BOND, a distributionally robust framework for calibrating the borrowing of information from external arms.
The central motivation of this work addresses a persistent gap in the design of clinical trials that borrow external data:
while numerous Bayesian and frequentist methods exist to facilitate borrowing, the determination of the borrowing intensity (e.g., power prior exponents or mixture weights) has largely relied on ad hoc, scenario-based simulations.
BOND replaces this heuristic tuning with a principled optimization procedure.
By modeling the noncommensurability between current and historical data as a drift within a Wasserstein ambiguity set, we derived a sharp, closed-form bias correction that guarantees asymptotic type I error control.
Furthermore, by maximizing a robust noncentrality parameter subject to this error constraint, BOND uniquely identifies an optimal borrowing weight, thereby converting the calibration problem into a transparent trade-off between robustness and efficiency.

A distinguishing feature of our approach is its generality through the EBW representation.
As demonstrated in Appendix~\ref{app:sec:incl-ebw}, a wide class of dynamic borrowing methods, including power priors, commensurate priors, and robust MAP priors, can be characterized by an implied arm-specific weight.
Consequently, BOND acts not merely as a standalone estimator but as a universal robustness wrapper.
Practitioners can utilize their preferred approach for estimation while employing our DRO-based criterion to calibrate the hyperparameters (e.g., the precision parameter in commensurate priors).
This bridges the gap between flexible modeling and the rigorous error rate control required for regulatory decision-making.

Our framework shifts the focus of sensitivity analysis from the borrowing parameter $\lambda$, which lacks a direct physical interpretation, to the ambiguity radius $\rho$.
In standard practice, selecting $\lambda=0.5$ implies a specific sample size discount, but it says little about the assumed population differences.
In contrast, $\rho$ directly quantifies the tolerance for distributional divergence (e.g., the maximum admissible difference in response rates or mean outcomes).
This geometric perspective aligns well with clinical reasoning; regulators and sponsors can debate the plausible magnitude of outcome drift (the radius $\rho$) rather than the abstract mathematical weight of historical data.
The closed-form solutions we derived for both continuous and binary outcomes ensure that these operating characteristics can be computed instantaneously without complex numerical solvers, facilitating real-time sensitivity checks across a range of radii.

Several directions for future research emerge from this framework.
First, while we focused on binary and continuous outcomes, extending BOND to time-to-event endpoints is of high practical value, particularly for oncology trials.
This would require extending the Wasserstein bound derivation to hazard rates or survival functions, potentially utilizing martingale-based concentration inequalities.
Second, the current framework utilizes summary statistics to bound the mean shift.
When individual patient data are available, incorporating covariate-adjusted estimators (such as distributionally robust inverse probability weighting) would allow for a more granular handling of observed covariate shifts versus unobserved residual bias.
Finally, while we treated the radius $\rho$ as a fixed sensitivity parameter, data-driven methods to estimate the ambiguity set size, for example by using empirical Wasserstein distances with appropriate confidence inflation, warrant further theoretical development to ensure valid post-selection inference.

In conclusion, borrowing information from external sources invariably incurs a risk of bias due to distributional heterogeneity.
Rather than ignoring this risk or managing it through arbitrary discounting, BOND formalizes it via DRO.
By providing a mathematically rigorous way to borrow under uncertainty, this method offers a viable path toward more efficient and defensible clinical trial designs in the era of real-world evidence.

\section*{Code Availability}
\label{sec:code-availability}
The Python implementation of the proposed method and simulation experiments in this study are available at \href{https://github.com/shutech2001/bond-experiments}{\texttt{https://github.com/shutech2001/bond-experiments}}.
\section*{Acknowledgements}
This article is based on research using information obtained from \href{https://data.projectdatasphere.org/projectdatasphere/html/home}{\texttt{www.projectdatasphere.org}}, which is maintained by Project Data Sphere, LLC. Neither Project Data Sphere, LLC nor the owner(s) of any information from the website have contributed to, approved, or are in any way responsible for the contents of this article.
Shu Tamano was supported by JSPS KAKENHI Grant Numbers 25K24203.
\section*{Conflict of Interest}
Yui Kimura is an employee of Novartis Pharma K.K.
This work was conducted independently and outside the scope of the author’s employment.
No financial support or other funding was received from the company for this study.

\bibliographystyle{apalike}
\bibliography{bibliography}

@article{ibrahim2000power,
    author    = {Ibrahim, J.~G. and Chen, M.-H.},
    journal   = {Statistical Science},
    number    = {1},
    pages     = {46--60},
    publisher = {JSTOR},
    title     = {Power prior distributions for regression models},
    volume    = {15},
    year      = {2000}
}

@article{ibrahim2003optimality,
    author    = {Ibrahim, J.~G. and Chen, M.-H. and Sinha, D.},
    journal   = {Journal of the American Statistical Association},
    number    = {461},
    pages     = {204--213},
    publisher = {Taylor \& Francis},
    title     = {On optimality properties of the power prior},
    volume    = {98},
    year      = {2003}
}

@article{ibrahim2015power,
    author    = {Ibrahim, J.~G. and Chen, M.-H. and Gwon, Y. and Chen, F.},
    journal   = {Statistics in Medicine},
    number    = {28},
    pages     = {3724--3749},
    publisher = {Wiley Online Library},
    title     = {The power prior: Theory and applications},
    volume    = {34},
    year      = {2015}
}

@article{pawel2023normalized,
    author    = {Pawel, S. and Aust, F. and Held, L. and Wagenmakers, E.-J.},
    journal   = {Stat},
    number    = {1},
    pages     = {e591},
    publisher = {Wiley Online Library},
    title     = {Normalized power priors always discount historical data},
    volume    = {12},
    year      = {2023}
}

@article{lu2022propensity,
    author    = {Lu, N. and Wang, C. and Chen, W.-C. and Li, H. and Song, C. and Tiwari, R. and Xu, Y. and Yue, L.~Q.},
    journal   = {Journal of Biopharmaceutical Statistics},
    number    = {1},
    pages     = {158--169},
    publisher = {Taylor \& Francis},
    title     = {Propensity score-integrated power prior approach for augmenting the control arm of a randomized controlled trial by incorporating multiple external data sources},
    volume    = {32},
    year      = {2022}
}

@article{kwiatkowski2024case,
    author    = {Kwiatkowski, E. and Zhu, J. and Li, X. and Pang, H. and Lieberman, G. and Psioda, M.~A.},
    journal   = {Biometrics},
    number    = {2},
    pages     = {ujae019},
    publisher = {Oxford University Press},
    title     = {Case weighted power priors for hybrid control analyses with time-to-event data},
    volume    = {80},
    year      = {2024}
}

@article{duan2006evaluating,
    author    = {Duan, Y. and Ye, K. and Smith, E.~P.},
    journal   = {Environmetrics: The Official Journal of the International Environmetrics Society},
    number    = {1},
    pages     = {95--106},
    publisher = {Wiley Online Library},
    title     = {Evaluating water quality using power priors to incorporate historical information},
    volume    = {17},
    year      = {2006}
}

@article{hupf2021bayesian,
    author    = {Hupf, B. and Bunn, V. and Lin, J. and Dong, C.},
    journal   = {Statistics in Medicine},
    number    = {14},
    pages     = {3385--3399},
    publisher = {Wiley Online Library},
    title     = {{Bayesian} semiparametric meta-analytic-predictive prior for historical control borrowing in clinical trials},
    volume    = {40},
    year      = {2021}
}

@article{schmidli2014robust,
    author    = {Schmidli, H. and Gsteiger, S. and Roychoudhury, S. and O'Hagan, A. and Spiegelhalter, D. and Neuenschwander, B.},
    journal   = {Biometrics},
    number    = {4},
    pages     = {1023--1032},
    publisher = {Oxford University Press},
    title     = {Robust meta-analytic-predictive priors in clinical trials with historical control information},
    volume    = {70},
    year      = {2014}
}

@article{schmidli2020beyond,
    author    = {Schmidli, H. and H{\"a}ring, D.~A. and Thomas, M. and Cassidy, A. and Weber, S. and Bretz, F.},
    journal   = {Clinical Pharmacology \& Therapeutics},
    number    = {4},
    pages     = {806--816},
    publisher = {Wiley Online Library},
    title     = {Beyond randomized clinical trials: Use of external controls},
    volume    = {107},
    year      = {2020}
}

@article{neuenschwander2010summarizing,
    author    = {Neuenschwander, B. and Capkun-Niggli, G. and Branson, M. and Spiegelhalter, D.~J.},
    journal   = {Clinical Trials},
    number    = {1},
    pages     = {5--18},
    publisher = {SAGE Publications Sage UK: London, England},
    title     = {Summarizing historical information on controls in clinical trials},
    volume    = {7},
    year      = {2010}
}

@article{yang2023sam,
    author    = {Yang, P. and Zhao, Y. and Nie, L. and Vallejo, J. and Yuan, Y.},
    journal   = {Biometrics},
    number    = {4},
    pages     = {2857--2868},
    publisher = {Oxford University Press},
    title     = {{SAM}: Self-adapting mixture prior to dynamically borrow information from historical data in clinical trials},
    volume    = {79},
    year      = {2023}
}

@article{wang2022propensity,
    author    = {Wang, X. and Suttner, L. and Jemielita, T. and Li, X.},
    journal   = {Journal of Biopharmaceutical Statistics},
    number    = {1},
    pages     = {170--190},
    publisher = {Taylor \& Francis},
    title     = {Propensity score-integrated {Bayesian} prior approaches for augmented control designs: A simulation study},
    volume    = {32},
    year      = {2022}
}

@article{liu2021propensity,
    author    = {Liu, M. and Bunn, V. and Hupf, B. and Lin, J. and Lin, J.},
    journal   = {Statistics in Medicine},
    number    = {22},
    pages     = {4794--4808},
    publisher = {Wiley Online Library},
    title     = {Propensity-score-based meta-analytic predictive prior for incorporating real-world and historical data},
    volume    = {40},
    year      = {2021}
}

@article{jiang2023elastic,
    author    = {Jiang, L. and Nie, L. and Yuan, Y.},
    journal   = {Biometrics},
    number    = {1},
    pages     = {49--60},
    publisher = {Wiley Online Library},
    title     = {Elastic priors to dynamically borrow information from historical data in clinical trials},
    volume    = {79},
    year      = {2023}
}

@article{jin2021unit,
    author    = {Jin, H. and Yin, G.},
    journal   = {Statistics in Medicine},
    number    = {25},
    pages     = {5657--5672},
    publisher = {Wiley Online Library},
    title     = {Unit information prior for adaptive information borrowing from multiple historical datasets},
    volume    = {40},
    year      = {2021}
}

@article{viele2014use,
    author    = {Viele, K. and Berry, S. and Neuenschwander, B. and Amzal, B. and Chen, F. and Enas, N. and Hobbs, B. and Ibrahim, J.~G. and Kinnersley, N. and Lindborg, S. and Micallef, S. and Roychoudhury, S. and Thompson, L.},
    journal   = {Pharmaceutical Statistics},
    number    = {1},
    pages     = {41--54},
    publisher = {Wiley Online Library},
    title     = {Use of historical control data for assessing treatment effects in clinical trials},
    volume    = {13},
    year      = {2014}
}

@article{van2018including,
    author    = {van Rosmalen, J. and Dejardin, D. and van Norden, Y. and L{\"o}wenberg, B. and Lesaffre, E.},
    journal   = {Statistical Methods in Medical Research},
    number    = {10},
    pages     = {3167--3182},
    publisher = {SAGE Publications Sage UK: London, England},
    title     = {Including historical data in the analysis of clinical trials: Is it worth the effort?},
    volume    = {27},
    year      = {2018}
}

@article{li2020revisit,
    author    = {Li, W. and Liu, F. and Snavely, D.},
    journal   = {Pharmaceutical Statistics},
    number    = {5},
    pages     = {498--517},
    publisher = {Wiley Online Library},
    title     = {Revisit of test-then-pool methods and some practical considerations},
    volume    = {19},
    year      = {2020}
}

@article{okada2024decoupling,
    author    = {Okada, K. and Tanaka, S. and Matsubayashi, J. and Takahashi, K. and Yokota, I.},
    journal   = {Biometrical Journal},
    number    = {1},
    pages     = {2200312},
    publisher = {Wiley Online Library},
    title     = {Decoupling power and type {I} error rate considerations when incorporating historical control data using a test-then-pool approach},
    volume    = {66},
    year      = {2024}
}

@article{guo2024adaptive,
    author    = {Guo, B. and Laird, G. and Song, Y. and Chen, J. and Yuan, Y.},
    journal   = {Journal of the Royal Statistical Society Series C: Applied Statistics},
    number    = {2},
    pages     = {444--459},
    publisher = {Oxford University Press US},
    title     = {Adaptive hybrid control design for comparative clinical trials with historical control data},
    volume    = {73},
    year      = {2024}
}

@article{hobbs2011hierarchical,
    author    = {Hobbs, B.~P. and Carlin, B.~P. and Mandrekar, S.~J. and Sargent, D.~J.},
    journal   = {Biometrics},
    number    = {3},
    pages     = {1047--1056},
    publisher = {Oxford University Press},
    title     = {Hierarchical commensurate and power prior models for adaptive incorporation of historical information in clinical trials},
    volume    = {67},
    year      = {2011}
}

@article{hobbs2012commensurate,
    author    = {Hobbs, B.~P. and Sargent, D.~J. and Carlin, B.~P.},
    journal   = {Bayesian Analysis},
    number    = {3},
    pages     = {639--674},
    title     = {Commensurate priors for incorporating historical information in clinical trials using general and generalized linear models},
    volume    = {7},
    year      = {2012}
}

@article{khanal2025commensurate,
    author    = {Khanal, M. and Logan, B.~R. and Banerjee, A. and Fang, X. and Ahn, K.~W.},
    journal   = {Pharmaceutical Statistics},
    number    = {1},
    pages     = {e2464},
    publisher = {Wiley Online Library},
    title     = {A commensurate prior model with random effects for survival and competing risk outcomes to accommodate historical controls},
    volume    = {24},
    year      = {2025}
}

@article{kuhn2025distributionally,
    author    = {Kuhn, D. and Shafiee, S. and Wiesemann, W.},
    journal   = {Acta Numerica},
    pages     = {579--804},
    publisher = {Cambridge University Press},
    title     = {Distributionally robust optimization},
    volume    = {34},
    year      = {2025}
}

@article{mohajerin2018data,
    author    = {Mohajerin Esfahani, P. and Kuhn, D.},
    journal   = {Mathematical Programming},
    number    = {1},
    pages     = {115--166},
    publisher = {Springer},
    title     = {Data-driven distributionally robust optimization using the {Wasserstein} metric: Performance guarantees and tractable reformulations},
    volume    = {171},
    year      = {2018}
}

@article{blanchet2019quantifying,
    author    = {Blanchet, J. and Murthy, K.},
    journal   = {Mathematics of Operations Research},
    number    = {2},
    pages     = {565--600},
    publisher = {INFORMS},
    title     = {Quantifying distributional model risk via optimal transport},
    volume    = {44},
    year      = {2019}
}

@article{gao2023distributionally,
    author    = {Gao, R. and Kleywegt, A.},
    journal   = {Mathematics of Operations Research},
    number    = {2},
    pages     = {603--655},
    publisher = {INFORMS},
    title     = {Distributionally robust stochastic optimization with {Wasserstein} distance},
    volume    = {48},
    year      = {2023}
}

@article{alt2024leap,
    author    = {Alt, E.~M. and Chang, X. and Jiang, X. and Liu, Q. and Mo, M. and Xia, H.~A. and Ibrahim, J.~G.},
    journal   = {Biometrics},
    number    = {3},
    pages     = {ujae083},
    publisher = {Oxford University Press},
    title     = {{LEAP}: The latent exchangeability prior for borrowing information from historical data},
    volume    = {80},
    year      = {2024}
}

@article{kaizer2018bayesian,
    author    = {Kaizer, A.~M. and Koopmeiners, J.~S. and Hobbs, B.~P.},
    journal   = {Biostatistics},
    number    = {2},
    pages     = {169--184},
    publisher = {Oxford University Press},
    title     = {{Bayesian} hierarchical modeling based on multisource exchangeability},
    volume    = {19},
    year      = {2018}
}

@article{lu2025overlapping,
    author    = {Lu, X. and Lee, J.~J.},
    journal   = {Journal of Computational and Graphical Statistics},
    pages     = {1--15},
    publisher = {Taylor \& Francis},
    title     = {Overlapping indices for dynamic information borrowing in {Bayesian} hierarchical modeling},
    year      = {2025}
}

@article{ohigashi2025nonparametric,
    author    = {Ohigashi, T. and Maruo, K. and Sozu, T. and Gosho, M.},
    journal   = {Biometrics},
    number    = {3},
    pages     = {ujaf118},
    publisher = {Oxford University Press},
    title     = {Nonparametric {Bayesian} approach for dynamic borrowing of historical control data},
    volume    = {81},
    year      = {2025}
}

@article{neuenschwander2016robust,
    author    = {Neuenschwander, B. and Wandel, S. and Roychoudhury, S. and Bailey, S.},
    journal   = {Pharmaceutical Statistics},
    number    = {2},
    pages     = {123--134},
    publisher = {Wiley Online Library},
    title     = {Robust exchangeability designs for early phase clinical trials with multiple strata},
    volume    = {15},
    year      = {2016}
}

@book{villani2009optimal,
    author    = {Villani, C.},
    publisher = {Springer},
    title     = {Optimal Transport: Old and New},
    volume    = {338},
    year      = {2009}
}

@article{jahanshahi2021use,
    author    = {Jahanshahi, M. and Gregg, K. and Davis, G. and Ndu, A. and Miller, V. and Vockley, J. and Ollivier, C. and Franolic, T. and Sakai, S.},
    journal   = {Therapeutic Innovation \& Regulatory Science},
    number    = {5},
    pages     = {1019--1035},
    publisher = {Springer},
    title     = {The use of external controls in {FDA} regulatory decision making},
    volume    = {55},
    year      = {2021}
}

@article{goring2019characteristics,
    author    = {Goring, S. and Taylor, A. and M{\"u}ller, K. and Li, T.~J.~J. and Korol, E.~E. and Levy, A.~R. and Freemantle, N.},
    journal   = {{BMJ} Open},
    number    = {2},
    pages     = {e024895},
    publisher = {British Medical Journal Publishing Group},
    title     = {Characteristics of non-randomised studies using comparisons with external controls submitted for regulatory approval in the {USA} and {Europe}: A systematic review},
    volume    = {9},
    year      = {2019}
}

@article{rippin2022review,
    author    = {Rippin, G. and Ballarini, N. and Sanz, H. and Largent, J. and Quinten, C. and Pignatti, F.},
    journal   = {Drug Safety},
    number    = {8},
    pages     = {815--837},
    publisher = {Springer},
    title     = {A review of causal inference for external comparator arm studies},
    volume    = {45},
    year      = {2022}
}

@article{liu2025design,
    author    = {Liu, J. and Yao, M. and Wang, M. and Jie, W. and Liu, Y. and Luo, X. and Huan, J. and Deng, K. and Deng, K. and Zou, K. and Zhang, Y. and Li, L. and Sun, X.},
    journal   = {{JAMA} Network Open},
    number    = {9},
    pages     = {e2530277},
    title     = {Design, conduct, and analysis of externally controlled trials},
    volume    = {8},
    year      = {2025}
}

@article{bennett2021novel,
    author    = {Bennett, M. and White, S. and Best, N. and Mander, A.},
    journal   = {Pharmaceutical Statistics},
    number    = {3},
    pages     = {462--484},
    publisher = {Wiley Online Library},
    title     = {A novel equivalence probability weighted power prior for using historical control data in an adaptive clinical trial design: A comparison to standard methods},
    volume    = {20},
    year      = {2021}
}

@misc{fda2023externally_controlled,
    author    = {{U.S. Food and Drug Administration}},
    title     = {Considerations for the design and conduct of externally controlled trials for drug and biological products: Guidance for industry (draft guidance)},
    year      = {2023},
    month     = feb,
    note      = {Draft Guidance},
    url       = {https://www.fda.gov/media/164960/download}
}

@misc{ich2000e10,
    title     = {{ICH} harmonised tripartite guideline {E10}: Choice of control group and related issues in clinical trials},
    author    = {{International Conference on Harmonisation (ICH)}},
    year      = {2000},
    month     = jul,
    note      = {Step 4 version dated 20 July 2000},
    url       = {https://database.ich.org/sites/default/files/E10_Guideline.pdf}
}

@misc{ema2025external_controls_concept,
    author    = {{European Medicines Agency}},
    title     = {Draft concept paper on the development of a reflection paper on the use of external controls for evidence generation in regulatory decision-making},
    year      = {2025},
    note      = {Reference: EMA/CHMP/225255/2025},
    url       = {https://www.ema.europa.eu/en/development-reflection-paper-use-external-controls-evidence-generation-regulatory-decision-making-scientific-guideline}
}

@article{gao2025control,
    author    = {Gao, P. and Ni, X. and Li, J. and Chu, R.},
    journal   = {Pharmaceutical Statistics},
    number    = {3},
    pages     = {e70011},
    publisher = {Wiley Online Library},
    title     = {Control of unconditional type {I} error in clinical trials with external control borrowing—a two-stage adaptive design perspective},
    volume    = {24},
    year      = {2025}
}

@article{psioda2019bayesian,
    author    = {Psioda, M.~A. and Ibrahim, J.~G.},
    journal   = {Biostatistics},
    number    = {3},
    pages     = {400--415},
    publisher = {Oxford University Press},
    title     = {{Bayesian} clinical trial design using historical data that inform the treatment effect},
    volume    = {20},
    year      = {2019}
}

@article{kopp2020power,
    author    = {Kopp-Schneider, A. and Calderazzo, S. and Wiesenfarth, M.},
    journal   = {Biometrical Journal},
    number    = {2},
    pages     = {361--374},
    publisher = {Wiley Online Library},
    title     = {Power gains by using external information in clinical trials are typically not possible when requiring strict type {I} error control},
    volume    = {62},
    year      = {2020}
}

@article{lee2024using,
    author    = {Lee, S.~Y.},
    journal   = {{BMC} Medical Research Methodology},
    pages     = {110},
    publisher = {Springer},
    title     = {Using {Bayesian} statistics in confirmatory clinical trials in the regulatory setting: A tutorial review},
    volume    = {24},
    year      = {2024}
}

@article{pan2017calibrated,
    author    = {Pan, H. and Yuan, Y. and Xia, J.},
    journal   = {Journal of the Royal Statistical Society Series C: Applied Statistics},
    number    = {5},
    pages     = {979--996},
    publisher = {Oxford University Press},
    title     = {A calibrated power prior approach to borrow information from historical data with application to biosimilar clinical trials},
    volume    = {66},
    year      = {2017}
}

@article{eggleston2021bayesctdesign,
    author    = {Eggleston, B.~S. and Ibrahim, J.~G. and McNeil, B. and Catellier, D.},
    journal   = {Journal of Statistical Software},
    number    = {21},
    pages     = {1--51},
    title     = {{BayesCTDesign}: An {R} package for {Bayesian} trial design using historical control data},
    volume    = {100},
    year      = {2021}
}

@article{ling2021calibrated,
    author    = {Ling, S.~X. and Hobbs, B.~P. and Kaizer, A.~M. and Koopmeiners, J.~S.},
    journal   = {Journal of Biopharmaceutical Statistics},
    number    = {6},
    pages     = {852--867},
    publisher = {Taylor \& Francis},
    title     = {Calibrated dynamic borrowing using capping priors},
    volume    = {31},
    year      = {2021}
}

@article{demartino2025eliciting,
    author    = {Demartino, R.~M. and Egidi, L. and Torelli, N. and Ntzoufras, I.},
    journal   = {Computational Statistics \& Data Analysis},
    pages     = {108180},
    publisher = {Elsevier},
    title     = {Eliciting prior information from clinical trials via calibrated {Bayes} factor},
    volume    = {209},
    year      = {2025}
}

@article{peeters2010randomized,
    author    = {Peeters, M. and Price, T.~J. and Cervantes, A. and Sobrero, A.~F. and Ducreux, M. and Hotko, Y. and Andr{\'e}, T. and Chan, E. and Lordick, F. and Punt, C.~J. and Strickland, A.~H. and Wilson, G. and Ciuleanu, T.~E. and Roman, L. and Van Cutsem, E. and Tzekova, V. and Collins, S. and Oliner, K.~S. and Rong, A. and Gansert, J.},
    journal   = {Journal of Clinical Oncology},
    number    = {31},
    pages     = {4706--4713},
    publisher = {American Society of Clinical Oncology},
    title     = {Randomized phase {III} study of panitumumab with fluorouracil, leucovorin, and irinotecan ({FOLFIRI}) compared with folfiri alone as second-line treatment in patients with metastatic colorectal cancer},
    volume    = {28},
    year      = {2010}
}

@article{tabernero2014aflibercept,
    author    = {Tabernero, J. and Van Cutsem, E. and Lakom{\'y}, R. and Prausov{\'a}, J. and Ruff, P. and van Hazel, G.~A. and Moiseyenko, V.~M. and Ferry, D.~R. and McKendrick, J.~J. and Soussan-Lazard, K. and Chevalier, S. and Allegra, C.~J.},
    journal   = {European Journal of Cancer},
    number    = {2},
    pages     = {320--331},
    publisher = {Elsevier},
    title     = {Aflibercept versus placebo in combination with fluorouracil, leucovorin and irinotecan in the treatment of previously treated metastatic colorectal cancer: Prespecified subgroup analyses from the {VELOUR} trial},
    volume    = {50},
    year      = {2014}
}

\newpage
\appendix
\section{Connections to Existing Adaptive Borrowing Priors}
\label{app:sec:incl-ebw}

This section establishes a structural equivalence between a wide array of existing Bayesian and frequentist adaptive borrowing methods and the EBW class defined in \eqref{eq:mu-hat}.
The core analytical insight is that, under standard conjugate modeling for a single arm, the posterior mean derived from many prominent dynamic borrowing procedures collapses exactly to an affine combination of the current and historical sample means.
By systematically identifying the implied effective weight $w$ (or equivalently, the borrowing parameter $\lambda$) in each framework, we demonstrate that the proposed DRO calibration method (BOND) is not merely a standalone estimator.
Rather, it serves as a universal robustness wrapper:
practitioners can retain their preferred Bayesian modeling machinery while employing our DRO-based criterion to rigorously calibrate its associated hyperparameters.

\subsection{Effective Borrowing Form for a Single Arm}
\label{app:subsec:ebw-setup}

We first focus on a single arm, suppressing the arm index $a$ for clarity.
Let $n_j\coloneq n_{j,a}$ and $\bar{Y}_j\coloneq \bar{Y}_{j,a}$ for $j \in \{C, H\}$.
For any borrowing parameter $\lambda\ge 0$, recall the EBW estimator:
\begin{equation*}
    \hat{\mu}(\lambda)
    \coloneq
    \frac{n_C\bar{Y}_C+\lambda n_H\bar{Y}_H}{n_C + \lambda n_H}
    =
    (1-w(\lambda))\bar{Y}_C + w(\lambda)\bar{Y}_H
    ,
\end{equation*}
where the EBW is defined as
\begin{equation*}
    w(\lambda)
    \coloneq
    \frac{\lambda n_H}{n_C + \lambda n_H}
    \in
    [0,1)
    .
\end{equation*}

\begin{lemma}[One-to-one mapping]
\label{app:lem:w-lambda}
    Assume $n_C > 0$ and $n_H > 0$.
    The mapping $w(\lambda) = \lambda n_H/(n_C + \lambda n_H)$ is a bijection from $[0,\infty)$ to $[0,1)$.
    Its inverse is given by
    \begin{equation*}
        \lambda(w)
        =
        \frac{n_C}{n_H} \cdot \frac{w}{1-w}
        ,
        \quad
        w\in[0,1)
        .
    \end{equation*}
\end{lemma}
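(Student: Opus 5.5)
The plan is to treat $w(\lambda)=\lambda n_H/(n_C+\lambda n_H)$ as an explicit rational function on $[0,\infty)$ and check three things in turn: it maps into $[0,1)$, it is injective, and it is surjective onto $[0,1)$. Surjectivity will come from solving $w(\lambda)=w$ for $\lambda$ in closed form, and that same computation also delivers the inverse formula.

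\textbf{Range.} Since $n_C>0$ and $n_H>0$, the denominator satisfies $n_C+\lambda n_H\ge n_C>0$ for every $\lambda\ge 0$, so $w$ is well defined and continuous. The numerator is nonnegative, so $w(\lambda)\ge 0$. Moreover $w(\lambda)<1$ is equivalent to $\lambda n_H<n_C+\lambda n_H$, that is, to $n_C>0$. Hence $w([0,\infty))\subseteq[0,1)$.

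\textbf{Injectivity.} Writing $w(\lambda)=1-n_C/(n_C+\lambda n_H)$ shows that $w$ is strictly increasing in $\lambda$, because $n_C+\lambda n_H$ is strictly increasing. Alternatively,
\begin{equation*}
    w'(\lambda)=\frac{n_C n_H}{(n_C+\lambda n_H)^2}>0 .
\end{equation*}
Either way, $w$ is injective on $[0,\infty)$.

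\textbf{Surjectivity and the inverse.} Fix $w\in[0,1)$ and solve $w(n_C+\lambda n_H)=\lambda n_H$. This is linear in $\lambda$ and rearranges to $\lambda n_H(1-w)=w\,n_C$. Since $1-w>0$, we may divide to get $\lambda=(n_C/n_H)\,w/(1-w)$. This value is nonnegative, and substituting it back recovers $w$, so every $w\in[0,1)$ is attained. The map $w\mapsto (n_C/n_H)\,w/(1-w)$ is therefore the two-sided inverse.

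\textbf{Main obstacle.} There is no real difficulty here. The only care needed is to use $w<1$ when dividing by $1-w$, and to note that $w\to 1$ only as $\lambda\to\infty$. This is why $1$ is excluded from the range and why the domain must be the unbounded interval $[0,\infty)$.
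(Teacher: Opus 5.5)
Your proof is correct and follows essentially the same route as the paper: strict monotonicity gives injectivity, and solving $w\,n_C=\lambda n_H(1-w)$ gives the inverse formula. The only minor difference is that the paper gets surjectivity from continuity together with $w(0)=0$ and $\lim_{\lambda\to\infty}w(\lambda)=1$, whereas you get it directly by exhibiting the nonnegative preimage $\lambda=(n_C/n_H)\,w/(1-w)$ and substituting it back, which is slightly more self-contained.
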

See Appendix~\ref{app:pf-lem-w-lambda} for the proof.
Lemma~\ref{app:lem:w-lambda} guarantees that any procedure producing an estimator of the form $(1-w)\bar{Y}_C + w\bar{Y}_H$ for some $w\in[0,1)$ can be exactly parameterized as $\hat{\mu}(\lambda(w))$ within our framework.

\subsection{Test-Then-Pool (TTP) Procedures}
\label{app:subsec:test-then-pool}

TTP procedures are frequentist dynamic borrowing rules that determine whether to incorporate historical information via a preliminary commensurability screen, subsequently performing estimation using either pooled or current-only data \citep{viele2014use,li2020revisit}.
In the canonical externally controlled setting, if the screen declares the two sources sufficiently similar, the controls are pooled;
otherwise, the historical data are discarded.

Maintaining the single-arm notation from Appendix~\ref{app:subsec:ebw-setup}, let $D_C=(Y_{C,1},\dots,Y_{C,n_C})$ and $D_H=(Y_{H,1},\dots,Y_{H,n_H})$, and define the filtration $\mathcal{F}\coloneq\sigma(D_C,D_H)$.
A generic TTP rule is governed by an $\mathcal{F}$-measurable pooling indicator
\begin{equation*}
    \hat{\eta}
    =
    \hat{\eta}(D_C,D_H)
    \in
    \{0,1\}
    ,
\end{equation*}
where $\hat{\eta}=1$ dictates pooling and $\hat{\eta}=0$ dictates no borrowing.
Given a nominal pooling intensity $\lambda_{\mathrm{pool}}\ge 0$ (with $\lambda_{\mathrm{pool}}=1$ corresponding to full pooling of the historical sample), the TTP estimator of the arm mean is the dichotomous rule:
\begin{equation}
\label{eq:ttp-mu-def}
    \hat{\mu}_{\mathrm{TTP}}
    \coloneq
    \hat{\eta}\hat{\mu}(\lambda_{\mathrm{pool}})
    +
    (1-\hat{\eta})\hat{\mu}(0)
    =
    \begin{cases}
        \hat{\mu}(\lambda_{\mathrm{pool}}), & \hat{\eta}=1,\\
        \bar{Y}_C, & \hat{\eta}=0.
    \end{cases}
\end{equation}

\begin{lemma}[Test-then-pool implies a data-adaptive EBW]
\label{app:lem:ttp-ebw}
    For any $\mathcal{F}$-measurable $\hat{\eta}\in\{0,1\}$ and fixed $\lambda_{\mathrm{pool}}\ge 0$, the estimator $\hat{\mu}_{\mathrm{TTP}}$ in \eqref{eq:ttp-mu-def} is exactly an EBW estimator governed by the data-adaptive borrowing parameter
    \begin{equation*}
        \hat{\lambda}
        \coloneq
        \hat{\eta}\lambda_{\mathrm{pool}}
        .
    \end{equation*}
    Consequently,
    \begin{equation*}
        \hat{\mu}_{\mathrm{TTP}}
        =
        \hat{\mu}(\hat{\lambda})
        =
        (1-\hat{w})\bar{Y}_C+\hat{w}\bar{Y}_H,
        \quad
        \hat{w}\coloneq w(\hat{\lambda}).
    \end{equation*}
    Under full pooling ($\lambda_{\mathrm{pool}}=1$), this simplifies to $\hat{w}=\hat{\eta} n_H/(n_C+n_H)$.
\end{lemma}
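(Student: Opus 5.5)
The argument is a pointwise case analysis on the value of the indicator $\hat{\eta}$. Because $\hat{\eta}\in\{0,1\}$ is $\mathcal{F}$-measurable, $\hat{\lambda}=\hat{\eta}\lambda_{\mathrm{pool}}$ is an $\mathcal{F}$-measurable random variable taking values in $\{0,\lambda_{\mathrm{pool}}\}\subset[0,\infty)$. Hence $\hat{\mu}(\hat{\lambda})$ and $w(\hat{\lambda})$ are well defined, since $w$ is defined on all of $[0,\infty)$ by Lemma~\ref{app:lem:w-lambda}. Measurability of $\hat{\mu}(\hat{\lambda})$ follows because $\lambda\mapsto\hat{\mu}(\lambda)$ is continuous for fixed data, and $\hat{\lambda}$ is a measurable function of $(D_C,D_H)$ with only two values.

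Next, fix a realization of $(D_C,D_H)$ and split into the two cases.
\begin{itemize}
\item If $\hat{\eta}=1$, then $\hat{\lambda}=\lambda_{\mathrm{pool}}$. Both sides of the claimed identity equal $\hat{\mu}(\lambda_{\mathrm{pool}})$ by \eqref{eq:ttp-mu-def}.
\item If $\hat{\eta}=0$, then $\hat{\lambda}=0$. In this case
\begin{equation*}
\hat{\mu}(0)=\frac{n_C\bar{Y}_C}{n_C}=\bar{Y}_C,
\end{equation*}
which is the second branch of \eqref{eq:ttp-mu-def}; this uses $n_C>0$.
\end{itemize}
So $\hat{\mu}_{\mathrm{TTP}}=\hat{\mu}(\hat{\lambda})$ holds identically. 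The convex-combination form $(1-\hat{w})\bar{Y}_C+\hat{w}\bar{Y}_H$ with $\hat{w}=w(\hat{\lambda})$ is then just the algebraic identity already recorded for $\hat{\mu}(\lambda)$ at every deterministic $\lambda\ge 0$, applied at the random value $\lambda=\hat{\lambda}$.

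Finally, for the full-pooling case $\lambda_{\mathrm{pool}}=1$, compute
\begin{equation*}
w(\hat{\eta})=\frac{\hat{\eta}n_H}{n_C+\hat{\eta}n_H}.
\end{equation*}
This equals $n_H/(n_C+n_H)$ when $\hat{\eta}=1$ and $0$ when $\hat{\eta}=0$, so it coincides with $\hat{\eta}\,n_H/(n_C+n_H)$ in both cases. The general formula has $\hat{\eta}$ in the denominator, and the simplification holds only because $\hat{\eta}$ is binary.

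There is no real obstacle. The only point needing care is that substituting a random $\hat{\lambda}$ into the deterministic identities is legitimate because those identities hold for every $\lambda\ge0$ pointwise. It should also be noted that $n_H\ge1$ is assumed, so that the first branch of \eqref{eq:mu-hat} applies.
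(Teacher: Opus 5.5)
Your proof is correct and follows the same route as the paper: a pointwise case split on $\hat{\eta}\in\{0,1\}$, with the convex-combination form read off from the definition of $w(\cdot)$. Your additional remarks are accurate additions that the paper leaves implicit: the measurability of $\hat{\lambda}$, the explicit use of $n_C>0$ and $n_H\ge 1$, and the check that $w(\hat{\eta})=\hat{\eta}\,n_H/(n_C+n_H)$ holds only because $\hat{\eta}$ is binary.
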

See Appendix~\ref{app:pf-lem-ttp-ebw} for the proof.

\begin{remark}[Screening tests and post-selection inference]
\label{app:rem:ttp-postselection}
    The EBW representation above concerns the point estimator.
    In the full TTP testing procedure, the stage-2 test statistic and its critical value are applied after a data-dependent pooling decision.
    Consequently, unconditional operating characteristics need not coincide with those of either always pool or never pool rules.
    For example, \citet{li2020revisit} show that, without calibration, the nominal stage-2 level can yield inflated type I error even under perfect commensurability, and propose adjusting the stage-2 level.
    This motivates calibration frameworks such as BOND that determine borrowing (or its tuning parameter) by optimizing power subject to explicit size control under a prespecified heterogeneity tolerance set.
\end{remark}

\begin{remark}[A concrete screening test]
\label{app:rem:ttp-test-example}
    For continuous outcomes, a common original TTP screen uses the two-sample Wald statistic
    \begin{equation*}        
        T_{\mathrm{pool}}
        =
        \frac{\bar{Y}_H-\bar{Y}_C}{\sqrt{\hat{\sigma}_H^2/n_H+\hat{\sigma}_C^2/n_C}}
        ,
        \quad
        \hat{\eta}
        =
        \boldsymbol{1}\bigl\{
            |T_{\mathrm{pool}}|\le z_{1-\alpha_{\mathrm{pool}}/2}
        \bigr\}
        ,
    \end{equation*}
    i.e., pooling occurs upon fail-to-reject at level $\alpha_{\mathrm{pool}}$.
    Equivalence-based TTP instead pools upon rejection of nonequivalence within a margin; see \citet{li2020revisit} for details and analogues for binary outcomes.
\end{remark}

\subsection{Power Prior}
\label{app:subsec:power-prior}

The power prior \citep{ibrahim2000power,ibrahim2015power} modulates the influence of historical information by raising the historical likelihood to a power parameter.
To maintain notation, we denote this power parameter as $\lambda \in [0,1]$.
For a single-arm parameter $\mu$ and historical dataset $D_H$, the conditional power prior is defined as:
\begin{equation*}
    \pi(\mu\mid D_H,\lambda)
    \propto
    L_H(\mu)^{\lambda}\pi_0(\mu)
    ,
\end{equation*}
where $L_H(\mu)$ is the historical likelihood and $\pi_0(\mu)$ is a baseline prior.
After observing current data $D_C$, the posterior updates to:
\begin{equation}
\label{eq:pp-post}
    \pi(\mu\mid D_C,D_H,\lambda)
    \propto
    L_C(\mu)L_H(\mu)^{\lambda}\pi_0(\mu)
    .
\end{equation}
Under standard conjugate models with weakly informative baseline priors, the posterior mean resolves precisely to the EBW form $\mathbb{E}[\mu\mid D_C,D_H,\lambda]=\hat{\mu}(\lambda)$.

\subsubsection{Bernoulli Likelihood with Beta Base Prior}
\label{app:subsubsec:pp-ber}

Assume $Y_{j,i}\mid \mu \overset{\mathrm{i.i.d.}}{\sim} \mathrm{Bernoulli}(\mu)$.
Let $S_j\coloneq\sum_{i=1}^{n_j}Y_{j,i}$ and $\bar{Y}_j=S_j/n_j$.
With baseline prior $\mu\sim\mathrm{Beta}(\alpha_0,\beta_0)$, the posterior becomes:
\begin{equation*}
    \mu\mid D_C,D_H,\lambda
    \sim
    \mathrm{Beta}\Bigl(
        \alpha_0+S_C+\lambda S_H
        ,
        \;
        \beta_0+(n_C-S_C)+\lambda(n_H-S_H)
    \Bigr)
    .
\end{equation*}
In the weakly informative limit ($\alpha_0,\beta_0\to 0$), the posterior mean simplifies exactly to the EBW estimator:
\begin{equation*}
    \mathbb{E}[\mu\mid D_C,D_H,\lambda]
    =
    \frac{S_C+\lambda S_H}{n_C+\lambda n_H}
    =
    \frac{n_C\bar{Y}_C+\lambda n_H\bar{Y}_H}{n_C+\lambda n_H}
    =\hat{\mu}(\lambda)
    .
\end{equation*}

\subsubsection{Normal Likelihood with Flat Prior and Known Variance}
\label{app:subsubsec:normal-flat}

Assume $Y_{j,i}\mid \mu \overset{\mathrm{i.i.d.}}{\sim} N(\mu,\sigma^2)$ with a known common variance $\sigma^2$ and flat baseline prior $\pi_0(\mu)\propto 1$.
Applying the power $\lambda$ effectively scales the historical precision by $\lambda$.
The combined posterior is:
\begin{equation*}
    \mu\mid D_C,D_H,\lambda
    \sim
    N\Biggl(
        \frac{n_C\bar{Y}_C+\lambda n_H\bar{Y}_H}{n_C+\lambda n_H}
        ,\;
        \frac{\sigma^2}{n_C+\lambda n_H}
    \Biggr)
    ,
\end{equation*}
which once again yields $\mathbb{E}[\mu\mid D_C,D_H,\lambda]=\hat{\mu}(\lambda)$.

\subsection{Modified and Normalized Power Priors}
\label{app:subsec:mod-power}

A common hierarchical extension treats the power parameter $\lambda$ as unknown and assigns it a prior $\pi_0(\lambda)$. 
There are two standard formulations when $\lambda$ is random \citep{ibrahim2015power,duan2006evaluating,pawel2023normalized}.

\subsubsection{Joint Power Prior}
\label{app:subsubsec:joint-pp}
The joint power prior specifies a joint prior on $(\mu,\lambda)$ directly:
\begin{equation}
    \label{eq:jpp}
    \pi(\mu,\lambda\mid D_H)
    \propto
    L_H(\mu)^{\lambda}\pi_0(\mu)\pi_0(\lambda)
    ,
    \quad
    \lambda
    \in
    [0,1]
    .
\end{equation}

\subsubsection{Normalized Power Prior}
\label{app:subsubsec:normalized-pp}
The normalized power prior explicitly normalizes the conditional prior for $\mu$ given $\lambda$:
\begin{equation*}
    \pi(\mu,\lambda\mid D_H)
    =
    \pi(\mu\mid D_H,\lambda)\pi_0(\lambda)
    =
    \frac{L_H(\mu)^{\lambda}\pi_0(\mu)}{m^\ast(\lambda)}\pi_0(\lambda)
    ,
\end{equation*}
where $m^\ast(\lambda) \coloneq \int L_H(u)^{\lambda}\pi_0(u)\dd u < \infty$ for $\lambda\in(0,1]$.
Compared to \eqref{eq:jpp}, the critical difference is the $\lambda$-dependent normalizing constant $m^\ast(\lambda)$, which modulates the induced marginal posterior for $\lambda$ without altering the conditional form $\pi(\mu\mid D_H,\lambda)$.

After observing current data $D_C$, both formulations yield a joint posterior of the form:
\begin{equation*}
    \pi(\mu,\lambda\mid D_C,D_H)
    \propto
    L_C(\mu)L_H(\mu)^{\lambda}\pi_0(\mu)\pi_0(\lambda)\times
    \begin{cases}
        1, & \text{(joint power prior)}
        ,\\
        m^\ast(\lambda)^{-1}, & \text{(normalized power prior)}
        .
    \end{cases}
\end{equation*}
Crucially, conditional on $\lambda$, the posterior for $\mu$ coincides exactly with the fixed-$\lambda$ power-prior update in \eqref{eq:pp-post}. 
Therefore, under the conjugate setups in Section~\ref{app:subsec:power-prior}, the conditional expectation remains $\mathbb{E}[\mu\mid D_C,D_H,\lambda]=\hat{\mu}(\lambda)$.

By the law of total expectation, the marginal posterior mean under either formulation satisfies:
\begin{equation*}
    \mathbb{E}[\mu\mid D_C,D_H]
    =
    \mathbb{E}\bigl[
        \hat{\mu}(\lambda)
        \bigm| D_C,D_H
    \bigr]
    =
    \bar{Y}_C+\mathbb{E}\bigl[
        w(\lambda)
        \bigm| D_C,D_H
    \bigr](\bar{Y}_H-\bar{Y}_C)
    .
\end{equation*}
Thus, hierarchical power priors remain strictly within the EBW class. They are simply characterized by a data-adaptive EBW $w_{\mathrm{eff}} \coloneq \mathbb{E}[w(\lambda)\mid D_C,D_H] \in [0,1)$. By Lemma~\ref{app:lem:w-lambda}, this implicitly defines an adaptive tuning parameter $\lambda_{\mathrm{eff}} = \lambda(w_{\mathrm{eff}})$, demonstrating that the distinction between joint and normalized formulations merely shifts the realized value of $w_{\mathrm{eff}}$, rather than fundamentally changing the EBW structure.

\subsection{Commensurate Prior}
\label{app:subsec:commensurate}

Commensurate priors formalize the degree of agreement between current and historical parameters via a precision parameter $\tau$ \citep{hobbs2012commensurate}.
To explicitly connect this to the EBW class, consider the conjugate normal-mean setting with common variance $\sigma^2$.
A location commensurate prior is specified as $\mu_C \mid \mu_H,\tau \sim N(\mu_H,\tau^{-1})$ with a flat base prior $\pi(\mu_H) \propto 1$.

\begin{lemma}[Commensurate prior implies an EBW representation]
\label{app:lem:commensurate-ebw}
    Fix $\tau>0$.
    The posterior mean of $\mu_C$ resolves to an affine combination:
    \begin{equation*}
        \mathbb{E}[\mu_C\mid D_C,D_H,\tau]
        =
        (1-w(\tau))\bar{Y}_C + w(\tau)\bar{Y}_H
        ,
    \end{equation*}
    where the implied effective weight in EBW form is governed by the effective historical sample size $m_{\mathrm{eff}}(\tau)$:
    \begin{equation*}
        w(\tau)
        =
        \frac{\lambda(\tau)n_H}{n_C+\lambda(\tau)n_H}
        ,
        \quad\text{with}\quad
        \lambda(\tau)
        \coloneq
        \frac{m_{\mathrm{eff}}(\tau)}{n_H}
        =
        \frac{\sigma^2\tau}{\sigma^2\tau+n_H}
        \in
        (0,1)
        .
    \end{equation*}
\end{lemma}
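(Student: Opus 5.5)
We compute the posterior of $\mu_C$ directly by Gaussian conjugacy. The plan is to integrate out $\mu_H$ and then read off the weight on $\bar Y_H$. The data enter only through the sufficient statistics $\bar Y_C\mid\mu_C\sim N(\mu_C,\sigma^2/n_C)$ and $\bar Y_H\mid\mu_H\sim N(\mu_H,\sigma^2/n_H)$.

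\textbf{Step 1: marginal prior for $\mu_C$.} With the flat prior $\pi(\mu_H)\propto 1$, the historical data give $\mu_H\mid D_H\sim N(\bar Y_H,\sigma^2/n_H)$. Combining this with $\mu_C\mid\mu_H,\tau\sim N(\mu_H,\tau^{-1})$ and integrating out $\mu_H$ (a convolution of Gaussians) yields the induced prior
\begin{equation*}
    \mu_C\mid D_H,\tau\sim N\bigl(\bar Y_H,\;\sigma^2/n_H+\tau^{-1}\bigr).
\end{equation*}
This step is legitimate because $D_C$ depends on $\mu_H$ only through $\mu_C$. Formally, the joint density factorizes as $L_C(\mu_C)\,N(\mu_C;\mu_H,\tau^{-1})\,L_H(\mu_H)$, so integrating over $\mu_H$ first is valid.

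\textbf{Step 2: combine with the current data.} Updating the prior from Step 1 with the current likelihood $N(\mu_C;\bar Y_C,\sigma^2/n_C)$ gives a normal posterior. Its mean is the precision-weighted average
\begin{equation*}
    \mathbb{E}[\mu_C\mid D_C,D_H,\tau]=\frac{(n_C/\sigma^2)\bar Y_C+v^{-1}\bar Y_H}{n_C/\sigma^2+v^{-1}},\qquad v\coloneq \frac{\sigma^2}{n_H}+\frac{1}{\tau}.
\end{equation*}
Write $v^{-1}=m_{\mathrm{eff}}/\sigma^2$, where
\begin{equation*}
    m_{\mathrm{eff}}(\tau)=\frac{\sigma^2}{v}=\frac{\sigma^2\tau n_H}{\sigma^2\tau+n_H}.
\end{equation*}
The posterior mean then becomes $(n_C\bar Y_C+m_{\mathrm{eff}}\bar Y_H)/(n_C+m_{\mathrm{eff}})$.

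\textbf{Step 3: identify the EBW.} Set $\lambda(\tau)=m_{\mathrm{eff}}(\tau)/n_H=\sigma^2\tau/(\sigma^2\tau+n_H)$. This lies in $(0,1)$ because $\tau>0$ and $n_H>0$. The posterior mean is then exactly $\hat\mu(\lambda(\tau))$ in the form \eqref{eq:mu-hat}, that is, $(1-w(\tau))\bar Y_C+w(\tau)\bar Y_H$ with $w(\tau)=w(\lambda(\tau))$.

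\textbf{Main obstacle.} The only delicate point is justifying the flat improper prior on $\mu_H$. The historical likelihood alone already yields a proper posterior for $\mu_H$, so everything downstream is a proper Gaussian computation. Beyond this, Step 1 is the one step that requires care: the sum of variances must be obtained correctly from the convolution. The remaining steps are algebra.
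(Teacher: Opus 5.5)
Your proposal is correct and follows essentially the same route as the paper's proof. Both integrate out $\mu_H$ to obtain the predictive prior $N(\bar Y_H,\sigma^2/n_H+\tau^{-1})$, rewrite its precision as $m_{\mathrm{eff}}(\tau)/\sigma^2$, perform the conjugate update with $\bar Y_C$, and set $\lambda(\tau)=m_{\mathrm{eff}}(\tau)/n_H$; your explicit factorization $L_C(\mu_C)\,N(\mu_C;\mu_H,\tau^{-1})\,L_H(\mu_H)$, which justifies integrating out $\mu_H$ first, is a small rigor addition that the paper leaves implicit.
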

See Appendix~\ref{app:pf-lem-commensurate-ebw} for the proof.
Lemma~\ref{app:lem:commensurate-ebw} demonstrates that tuning the commensurability precision $\tau$ is mathematically equivalent to calibrating an EBW discount factor $\lambda$.
Consequently, the proposed DRO framework can systematically optimize the effective borrowing behavior governed by $\tau$.

\begin{remark}[Beyond the conjugate normal-mean case]
\label{app:rem:commensurate-general}
    Although the exact EBW derivation above relies on the conjugate normal-mean model, this affine shrinkage structure extends naturally to broader settings. 
    For generalized linear models, the conditional posterior mean under a commensurate prior asymptotically approximates a precision-weighted average of the current and historical maximum likelihood estimators. In this regime, raw sample sizes are effectively replaced by Fisher information matrices \citep{hobbs2012commensurate}. 
    This asymptotic behavior formally justifies interpreting the commensurate prior's precision $\tau$ through the lens of an EBW---even in nonconjugate settings via standard Gaussian or Laplace approximations.
\end{remark}

\subsection{MAP and Robust MAP Priors}
\label{app:subsec:map}

MAP priors provide predictive distributions for parameters in a new study, often approximated by finite mixtures of conjugate distributions \citep{neuenschwander2010summarizing,schmidli2014robust}.

\subsubsection{MAP as a Finite Mixture of Conjugates}
\label{app:subsubsec:map-finite-mixture}

In the single-arm notation of Appendix~\ref{app:subsec:ebw-setup}, assume the MAP prior given historical data admits the mixture representation:
\begin{equation}
\label{app:eq:map-mixture-prior}
    \pi_{\mathrm{MAP}}(\mu\mid D_H)
    =
    \sum_{k=1}^K \omega_k\pi_k(\mu)
    ,
    \quad
    \omega_k\ge 0
    ,\;
    \sum_{k=1}^K \omega_k=1
    ,
\end{equation}
where each $\pi_k$ is conjugate.
Under conjugacy, the $k$th component posterior mean exhibits affine shrinkage: $\mathbb{E}_{\pi_k}[\mu\mid D_C] = (1-w_k)\bar{Y}_C + w_k m_k$, where $m_k = \mathbb{E}_{\pi_k}[\mu]$.
Let $f_k(D_C) \coloneq \int L_C(\mu)\pi_k(\mu)\dd\mu$ denote the marginal likelihood.
The posterior mixture weights are updated via Bayes' rule:
\begin{equation}
\label{app:eq:map-post-weights}
    \omega_k^{\mathrm{post}}
    =
    \frac{\omega_k f_k(D_C)}{\sum_{\ell=1}^K \omega_\ell f_\ell(D_C)}
    .
\end{equation}

\begin{lemma}[Mixture-of-conjugates implies an EBW representation]
\label{app:lem:map-ebw}
    Under \eqref{app:eq:map-mixture-prior} and \eqref{app:eq:map-post-weights}, the MAP marginal posterior mean satisfies:
    \begin{equation*}
        \mathbb{E}[\mu\mid D_C,D_H]
        =
        (1-w_{\mathrm{eff}})\bar{Y}_C + w_{\mathrm{eff}}m_{\mathrm{eff}}
        ,
    \end{equation*}
    where $w_{\mathrm{eff}} \coloneq \sum_{k=1}^K \omega_k^{\mathrm{post}}w_k \in [0,1)$, and $m_{\mathrm{eff}}$ is the posterior-weighted aggregate of the historical component means.
\end{lemma}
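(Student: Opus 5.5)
The plan is to compute the component posterior means, average them with the posterior mixture weights, and then reorganize the result into the affine form with one effective weight and one effective mean.

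\emph{Step 1 (component posteriors).} Conditional on $D_C$, the posterior under the mixture prior \eqref{app:eq:map-mixture-prior} is itself a mixture $\sum_k \omega_k^{\mathrm{post}}\pi_k(\mu\mid D_C)$. Here $\pi_k(\mu\mid D_C)\propto L_C(\mu)\pi_k(\mu)$, and the weights $\omega_k^{\mathrm{post}}$ are exactly those in \eqref{app:eq:map-post-weights}. This is just Bayes' rule applied to the mixture: the normalizing constant of component $k$ is $f_k(D_C)$.

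\emph{Step 2 (linearity).} Taking expectations gives
\[
\mathbb{E}[\mu\mid D_C,D_H]=\sum_k \omega_k^{\mathrm{post}}\bigl[(1-w_k)\bar{Y}_C+w_k m_k\bigr].
\]
Since $\sum_k\omega_k^{\mathrm{post}}=1$, this equals
\[
\bar{Y}_C-\Bigl(\sum_k\omega_k^{\mathrm{post}}w_k\Bigr)\bar{Y}_C+\sum_k\omega_k^{\mathrm{post}}w_k m_k .
\]

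\emph{Step 3 (define $m_{\mathrm{eff}}$).} If $w_{\mathrm{eff}}>0$, set
\[
m_{\mathrm{eff}}\coloneq \frac{\sum_k\omega_k^{\mathrm{post}}w_k m_k}{w_{\mathrm{eff}}},
\]
which is a convex combination of the $m_k$ with weights $\omega_k^{\mathrm{post}}w_k/w_{\mathrm{eff}}$. This is the posterior-weighted aggregate of the historical component means, and the display from Step 2 becomes $(1-w_{\mathrm{eff}})\bar{Y}_C+w_{\mathrm{eff}}m_{\mathrm{eff}}$. If $w_{\mathrm{eff}}=0$, every $\omega_k^{\mathrm{post}}w_k$ vanishes, the posterior mean is $\bar{Y}_C$, and $m_{\mathrm{eff}}$ may be chosen arbitrarily, for example $\sum_k\omega_k^{\mathrm{post}}m_k$.

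\emph{Step 4 (range of $w_{\mathrm{eff}}$).} It remains to show $w_{\mathrm{eff}}\in[0,1)$, and this is the only point needing care. I would argue that in the conjugate models each $w_k=\nu_k/(n_C+\nu_k)\in[0,1)$, where $\nu_k$ is the prior effective sample size of component $k$. This holds in the Beta case with $\nu_k=\alpha_k+\beta_k$, and in the normal case with $\nu_k=\sigma^2/v_k$, where $v_k$ is the component prior variance. Because $n_C\ge1$, each $w_k<1$. A convex combination of numbers in $[0,1)$, over finitely many components, is again in $[0,1)$.

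I would also note that $m_{\mathrm{eff}}$ generally differs from $\bar{Y}_H$. Consequently the representation is of EBW type with center $m_{\mathrm{eff}}$, and through Lemma~\ref{app:lem:w-lambda} it yields an implied $\lambda(w_{\mathrm{eff}})$.
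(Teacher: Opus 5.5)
Your proposal is correct and follows the paper's proof: write the posterior as the $\omega_k^{\mathrm{post}}$-mixture of component posteriors, substitute the conjugate affine form $(1-w_k)\bar{Y}_C+w_k m_k$, collect the $\bar{Y}_C$ terms, and define $m_{\mathrm{eff}}$ as the $\omega_k^{\mathrm{post}}w_k$-weighted average of the $m_k$. Your handling of the degenerate case $w_{\mathrm{eff}}=0$ and your check that $w_{\mathrm{eff}}\in[0,1)$, via $w_k=\nu_k/(n_C+\nu_k)$ for the Beta and normal conjugate components, make explicit what the paper leaves implicit.
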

See Appendix~\ref{app:pf-lem-map-ebw} for the proof.
When all mixture components share a common prior mean (e.g., mixing only over precision parameters) such that $m_{\mathrm{eff}}=\bar{Y}_H$, Lemma~\ref{app:lem:w-lambda} yields an exactly equivalent $\lambda = \lambda(w_{\mathrm{eff}})$.

\subsubsection{Robust MAP}
\label{app:subsubsec:r-map}
Robust MAP priors explicitly protect against prior-data conflict by mixing the MAP prior with a vague (weakly informative) conjugate prior $\pi_V$ \citep{schmidli2014robust}:
\begin{equation*}
    \pi_{\mathrm{RMAP}}(\mu\mid D_H)
    =
    (1-\epsilon)\pi_{\mathrm{MAP}}(\mu\mid D_H)
    +
    \epsilon\pi_V(\mu)
    ,
    \quad
    \epsilon\in(0,1)
    .
\end{equation*}
Let $f_V(D_C)$ denote the vague-component marginal likelihood.
The posterior probability assigned to the vague component becomes:
\begin{equation}
\label{app:eq:rmap-vague-post}
    \epsilon^{\mathrm{post}}
    =
    \frac{\epsilon f_V(D_C)}{\epsilon f_V(D_C) + (1-\epsilon)\sum_{k=1}^K \omega_k f_k(D_C)}
    .
\end{equation}
Following the same logic as Lemma~\ref{app:lem:map-ebw}, the Robust MAP posterior mean resolves to an affine form with an overall effective weight $w_{\mathrm{eff}}^{\mathrm{RMAP}} = (1-\epsilon^{\mathrm{post}})\sum_{k=1}^K \omega_k^{\mathrm{post}} w_k + \epsilon^{\mathrm{post}} w_V$.

Crucially, the portion of borrowing attributable to the historical data is $(1-\epsilon^{\mathrm{post}})w_{\mathrm{eff}}$. Under severe prior-data conflict, the informative marginal likelihood $\sum \omega_k f_k(D_C)$ diminishes relative to $f_V(D_C)$.
Consequently, \eqref{app:eq:rmap-vague-post} mathematically drives $\epsilon^{\mathrm{post}}$ toward $1$, heavily shrinking the effective historical borrowing weight toward $0$.
Assuming $w_V$ is negligible, the robust MAP estimator collapses to an EBW estimator, meaning its inherent robustness behavior can be directly audited and tuned using the proposed BOND framework.

\subsection{Elastic Prior}
\label{app:subsec:elastic}

The elastic prior proposed by \citet{jiang2023elastic} functions as an empirical-Bayes dynamic borrowing mechanism that modulates the integration of historical information via an elastic function $g(T) \in [0,1]$.
Here, $T=T(D_C,D_H)$ denotes a prespecified congruence statistic evaluating the compatibility between the current dataset $D_C$ and the historical dataset $D_H$ (e.g., a two-sample $t$-statistic for Gaussian outcomes or a chi-square statistic for binary outcomes).
The function $g(\cdot)$ is a monotonically decreasing map calibrated to yield $g(T) \approx 1$ under near-commensurability and $g(T) \approx 0$ under severe prior-data conflict.
Operationally, the method first derives the historical posterior under a vague base prior and subsequently inflates its variance by a factor of $g(T)^{-1}$ (equivalently, scaling its precision by $g(T)$).
This smoothly interpolates between full pooling ($g(T)=1$) and essentially no borrowing ($g(T)=0$).
Under standard conjugate models, we show that the resulting posterior mean naturally admits an EBW representation.

\subsubsection{Bernoulli Likelihood with Beta Base Prior}
\label{app:subsubsec-elastic-ber}

Assume $Y_{H,i}\mid \mu_H \overset{\text{i.i.d.}}{\sim}\mathrm{Bernoulli}(\mu_H)$ and $Y_{C,i}\mid \mu \overset{\text{i.i.d.}}{\sim}\mathrm{Bernoulli}(\mu)$.
Let $S_H=\sum_{i=1}^{n_H}Y_{H,i}$ and $S_C=\sum_{i=1}^{n_C}Y_{C,i}$.
Under a Beta base prior $\mu_H\sim \mathrm{Beta}(\alpha_0,\beta_0)$, the historical posterior is $\mathrm{Beta}(\alpha_0+S_H,\beta_0+n_H-S_H)$.
The elastic prior for $\mu$ is then constructed by discounting the posterior information by $g(T)$:
\begin{equation*}
    \mu\mid D_H,T
    \sim
    \mathrm{Beta}\bigl(
        g(T)(\alpha_0+S_H)
        ,\;
        g(T)(\beta_0+n_H-S_H)
    \bigr)
    .
\end{equation*}
Updating this prior with the current data $D_C$ yields the posterior:
\begin{equation*}
    \mu\mid D_C,D_H,T
    \sim
    \mathrm{Beta}\bigl(
        S_C+g(T)(\alpha_0+S_H)
        ,\;
        (n_C-S_C)+g(T)(\beta_0+n_H-S_H)
    \bigr)
    .
\end{equation*}
Evaluating the posterior mean under the noninformative limit $\alpha_0, \beta_0 \to 0$ gives:
\begin{equation*}
    \mathbb{E}[\mu\mid D_C,D_H,T]
    =
    \frac{S_C+g(T)S_H}{n_C+g(T)n_H}
    =
    \frac{n_C\bar{Y}_C+g(T)n_H\bar{Y}_H}{n_C+g(T)n_H}
    =
    \hat{\mu}\bigl(g(T)\bigr)
    ,
\end{equation*}
which aligns exactly with the EBW estimator utilizing the borrowing parameter $\lambda=g(T)$.

\subsubsection{Gaussian Likelihood}
\label{app:subsubsec:elastic-gaussian}

Assume $Y_{C,i}\mid \mu \sim N(\mu,\sigma_C^2)$ and $Y_{H,i}\mid \mu_H \sim N(\mu_H,\sigma_H^2)$ with known variances (or consistent plug-in estimates).
Under a flat base prior $\pi_0(\mu_H) \propto 1$, the historical posterior is $\mu_H\mid D_H \sim N(\bar{Y}_H,\sigma_H^2/n_H)$.
The elastic prior incorporates the congruence statistic by inflating this variance by $g(T)^{-1}$:
\begin{equation*}
    \mu\mid D_H,T \sim N\Bigl(
        \bar{Y}_H,\frac{\sigma_H^2}{g(T)n_H}
    \Bigr)
    .
\end{equation*}
Combining this with the current likelihood yields the posterior mean:
\begin{equation*}
    \mathbb{E}[\mu\mid D_C,D_H,T]
    =
    \frac{\frac{n_C}{\sigma_C^2}\bar{Y}_C+\frac{g(T)n_H}{\sigma_H^2}\bar{Y}_H}
    {\frac{n_C}{\sigma_C^2}+\frac{g(T)n_H}{\sigma_H^2}}
    .
\end{equation*}
Multiplying the numerator and the denominator by $\sigma_C^2$ reformulates this expectation into the EBW structure:
\begin{equation*}
    \mathbb{E}[\mu\mid D_C,D_H,T]
    =
    \frac{n_C\bar{Y}_C+\lambda n_H\bar{Y}_H}{n_C+\lambda n_H}
    =
    \hat{\mu}(\lambda),
    \quad
    \lambda
    \coloneq
    g(T)\frac{\sigma_C^2}{\sigma_H^2}
    .
\end{equation*}
In the homoskedastic setting where $\sigma_C^2=\sigma_H^2$, the borrowing parameter simplifies directly to $\lambda=g(T)$.

\subsubsection{Implication for EBW Calibration}
\label{app:subsubsec:elastic-implication}
Consequently, conditional on the realized congruence statistic $T$, the elastic prior consistently yields an EBW estimator governed by an implied borrowing parameter $\lambda$, mapped directly from $g(T)$ (subject to a variance-ratio rescaling for Gaussian outcomes).
Because $T$ is purely a function of the observed data $(D_C,D_H)$, the elastic prior fundamentally operates as a data-adaptive EBW rule.
This structural equivalence places it within the same unified robust calibration framework as test-then-pool procedures and modified power priors, allowing BOND to systematically evaluate its worst-case properties.

\subsection{Unit-Information Prior}
\label{app:subsec_uip}

The unit-information prior (UIP) proposed by \citet{jin2021unit} constructs an informative prior by explicitly calibrating the amount of Fisher information contributed by multiple historical datasets.
Let $D_C$ denote the current data and let $D_{H,1},\dots,D_{H,K}$ denote $K$ independent historical datasets.
The UIP framework introduces two key components: (i) dataset relevance weights $\omega=(\omega_1,\dots,\omega_K)^\top$ on the simplex ($\omega_k\ge 0$ and $\sum_{k=1}^K \omega_k=1$), and (ii) a scalar precision parameter $M\ge 0$ interpreted as the total number of units of information to be borrowed from the aggregated historical sources.

Let $\hat\theta_k$ be a consistent point estimate from $D_{H,k}$ (typically the maximum likelihood estimate).
We define the unit information (the observed Fisher information per observation) as
\begin{equation*}
    I_U(\hat\theta_k)
    \coloneq
    -\frac{1}{n_k}
    \frac{\partial^2}{\partial\theta^2}\log L^{(k)}(\theta\mid D_{H,k})
    \bigg|_{\theta=\hat\theta_k}
    ,
\end{equation*}
where $n_k$ is the sample size and $L^{(k)}$ is the likelihood associated with $D_{H,k}$.
Leveraging a normal approximation, the UIP specifies a prior for the target parameter $\theta$ in the current study matching the following moments:
\begin{equation}
\label{eq:uip-moments}
    \mathbb{E}(\theta\mid M,\omega,D_{H,1:K})
    =
    \sum_{k=1}^K \omega_k \hat\theta_k,
    \quad
    \mathrm{Var}(\theta\mid M,\omega,D_{H,1:K})
    =
    \Bigl\{
        M\sum_{k=1}^K \omega_k I_U(\hat\theta_k)
    \Bigr\}^{-1}
    .
\end{equation}
This construction separates the prior location (accuracy via the weighted historical mean) from its scale (informativeness via the Fisher-information budget $M$).

\subsubsection{Continuous Outcomes}
\label{app:subsubsec:uip-cont}
Consider the conjugate normal-mean setting $Y_{C,i}\mid \mu \overset{\mathrm{i.i.d.}}{\sim} N(\mu,\sigma^2)$ with a known variance $\sigma^2$, and let the historical estimates be the sample means $\hat\mu_k=\bar{Y}_{H,k}$.
For a Gaussian mean, the unit information is simply $I_U(\hat\mu_k)=1/\sigma_k^2$ (where $\sigma_k^2$ is replaced by the sample variance $\hat\sigma_k^2$ in practice).
Consequently, the UIP is exactly Gaussian:
\begin{equation*}
    \mu \mid (M,\omega,D_{H,1:K})
    \sim
    N\bigl(
        \mu_{\omega},
        \eta^2
    \bigr),
    \quad
    \mu_{\omega}
    \coloneq
    \sum_{k=1}^K \omega_k \bar{Y}_{H,k}
    ,
    \quad
    \eta^2
    =
    \Bigl(
        M\sum_{k=1}^K \omega_k/\sigma_k^2
    \Bigr)^{-1}
    .
\end{equation*}
Updating this prior with the current likelihood yields a posterior mean that is strictly an affine combination of the current sample mean and the UIP historical center:
\begin{equation}
\label{eq:uip-post-mean-normal}
    \mathbb{E}[\mu\mid D_C,D_{H,1:K},M,\omega]
    =
    (1-w_{\mathrm{UIP}})\bar{Y}_C
    +
    w_{\mathrm{UIP}}\mu_{\omega}
    ,
    \quad
    w_{\mathrm{UIP}}
    =
    \frac{\sigma^2/\eta^2}{n_C+\sigma^2/\eta^2}
    .
\end{equation}
In the homoscedastic special case where $\sigma_k^2=\sigma^2$ (which implies $\eta^2=\sigma^2/M$), \eqref{eq:uip-post-mean-normal} simplifies directly to
\begin{equation*}
    \mathbb{E}[\mu\mid D_C,D_{H,1:K},M,\omega]
    =
    \frac{n_C\bar{Y}_C + M\mu_{\omega}}{n_C+M},
    \quad
    w_{\mathrm{UIP}}
    =
    \frac{M}{n_C+M}
    .
\end{equation*}
Therefore, conditional on $(M,\omega)$, the UIP perfectly induces an EBW estimator where the effective historical center is $\mu_{\omega}$ and the effective historical sample size is $\sigma^2/\eta^2$ (which equals the information budget $M$ under homoscedasticity).

\subsubsection{Binary Outcomes}
\label{app:subsubsec:uip-binary}
For a binary outcome $Y\in\{0,1\}$ with success probability $p$, the unit information evaluated at the historical maximum likelihood estimate $\hat p_k$ is $I_U(\hat p_k)=\{ \hat p_k(1-\hat p_k)\}^{-1}$.
To satisfy the UIP moments in \eqref{eq:uip-moments} within a conjugate framework, a Beta prior $p\mid (M,\omega,D_{H,1:K}) \sim \mathrm{Beta}(\alpha,\beta)$ is utilized.
Defining the target moments as $\mu_{\omega}\coloneq \sum_{k=1}^K \omega_k \hat p_k$ and $\eta^2 \coloneq \{M\sum_{k=1}^K \omega_k I_U(\hat p_k)\}^{-1}$, we match the mean and variance of the Beta distribution to obtain:
\begin{equation*}
    \alpha+\beta
    =
    \frac{\mu_{\omega}(1-\mu_{\omega})}{\eta^2}-1,
    \quad
    \alpha = \mu_{\omega}(\alpha+\beta),
    \quad
    \beta=(1-\mu_{\omega})(\alpha+\beta)
    .
\end{equation*}
After observing the current data of size $n_C$, the posterior mean again resolves into an EBW form:
\begin{equation}
\label{eq:uip-post-mean-bin}
    \mathbb{E}[p\mid D_C,D_{H,1:K},M,\omega]
    =
    (1-w_{\mathrm{UIP}})\bar{Y}_C
    +
    w_{\mathrm{UIP}}\mu_{\omega}
    ,
    \quad
    w_{\mathrm{UIP}}
    =
    \frac{\alpha+\beta}{n_C+\alpha+\beta}
    .
\end{equation}
Thus, for binary endpoints, the UIP functions as an EBW estimator where the effective historical sample size corresponds to the prior effective sample size $\alpha+\beta$, which is deterministic given the information budget $M$ and the historical unit information.

\subsubsection{Implication for EBW Calibration.}
\label{app:subsubsec:uip-implication}

\eqref{eq:uip-post-mean-normal} and \eqref{eq:uip-post-mean-bin} demonstrate that, conditional on the UIP hyperparameters $(M,\omega)$, the posterior mean fundamentally resides within the EBW class defined in Appendix~\ref{app:subsec:ebw-setup}, shrinking the current estimate toward a composite historical center.
In fully Bayesian implementations where $(M,\omega)$ are assigned hyperpriors (such as the UIP-Dirichlet or UIP-JS specifications in \citet{jin2021unit}), marginalizing over these parameters yields a posterior mean that is an integral of conditional affine forms.
This maintains its identity as an EBW estimator with a data-adaptive effective weight, thereby allowing BOND to provide distributionally robust operating characteristics for the UIP framework.

\subsection{Multisource Exchangeability Models}
\label{app:subsec:mem}

Multisource exchangeability models (MEMs), introduced by \citet{kaizer2018bayesian}, generalize the EXNEX paradigm \citep{neuenschwander2016robust} by performing Bayesian model averaging over all possible exchangeability patterns across multiple historical sources.
Consider a single trial arm, letting $D_C$ denote the current dataset and $\{D_h\}_{h=1}^H$ denote $H$ independent historical datasets, with corresponding sample means $\bar{Y}_C$ and $\bar{Y}_h$.
We define exchangeability indicators $S=(S_1,\ldots,S_H)\in\{0,1\}^H$, where $S_h=1$ indicates that the historical source $h$ is entirely exchangeable with the current population (thus sharing a common mean parameter $\mu$), whereas $S_h=0$ assigns source $h$ a fully independent mean parameter $\mu_h$.
Each binary configuration $s\in\{0,1\}^H$ identifies a distinct model $\Omega_s$, yielding a discrete model space of size $2^H$.

Assuming standard conjugate specifications alongside a diffuse base prior for $\mu$, the model-specific posterior for $\mu$ remains conjugate.
Crucially, its expectation simply pools the current data with the subset of historical sources declared exchangeable under $\Omega_s$.
For instance, in a Gaussian setting with known sampling variances for the sample means (matching the setup in \citet{kaizer2018bayesian}), let
$v_C \equiv \mathrm{Var}(\bar{Y}_C \mid \mu)$ and $v_h \equiv \mathrm{Var}(\bar{Y}_h \mid \mu_h)$.
The model-specific posterior mean then takes the precision-weighted form:
\begin{equation}
\label{eq:mem-conditional-mean}
    m(s)
    \coloneq
    \mathbb{E}[\mu \mid D_C, D_{1:H}, \Omega_s]
    =
    \frac{\bar{Y}_C/v_C + \sum_{h=1}^H s_h\bar{Y}_h/v_h}{1/v_C + \sum_{h=1}^H s_h/v_h}
    =
    (1-w(s))\bar{Y}_C + w(s)\bar{Y}_H(s)
    ,
\end{equation}
where the model-specific effective weight and effective historical mean are defined as
\begin{equation*}
    w(s)
    \coloneq
    \frac{\sum_{h=1}^H s_h/v_h}{1/v_C + \sum_{h=1}^H s_h/v_h}
    \in
    [0,1)
    ,
    \quad
    \bar{Y}_H(s)
    \coloneq
    \frac{\sum_{h=1}^H s_h\bar{Y}_h/v_h}{\sum_{h=1}^H s_h/v_h}
    .
\end{equation*}
We adopt the convention $w(s)=0$ (and consequently $m(s)=\bar{Y}_C$) when $\sum_{h=1}^H s_h/v_h=0$ (i.e., the entirely non-exchangeable model).
In the homoscedastic special case where $v_C=\sigma^2/n_C$ and $v_h=\sigma^2/n_h$, \eqref{eq:mem-conditional-mean} simplifies to standard sample-size pooling strictly over the exchangeable subset.

Let $\omega(s)\coloneq \mathbb{P}(\Omega_s\mid D_C,D_{1:H})$ represent the posterior model probability, derived via Bayes' theorem from the marginal likelihood and the prior $\pi(\Omega_s)$ (In the standard MEM framework, $\pi(\Omega_s)$ is typically formulated through independent inclusion probabilities on each $\{S_h\}_{h=1}^H$;
see \citet{kaizer2018bayesian}).
The marginal MEM posterior is constructed via Bayesian model averaging:
\begin{equation*}
    p(\mu\mid D_C,D_{1:H})
    =
    \sum_{s\in\{0,1\}^H}\omega(s)p(\mu\mid D_C,D_{1:H},\Omega_s)
    .
\end{equation*}
Consequently, the unconditional posterior mean is simply the weighted average of the model-specific posterior means:
\begin{equation}
\label{eq:mem-ebw}
    \mathbb{E}[\mu\mid D_C,D_{1:H}]
    =
    \sum_{s\in\{0,1\}^H}\omega(s)m(s)
    =
    (1-w_{\mathrm{MEM}})\bar{Y}_C + w_{\mathrm{MEM}}m_{\mathrm{MEM}}
    ,
\end{equation}
where the aggregated EBW is $w_{\mathrm{MEM}}\coloneq \sum_{s}\omega(s)w(s)=\mathbb{E}[w(S)\mid D_C,D_{1:H}]$, and the composite historical center is
\begin{equation*}
    m_{\mathrm{MEM}}
    \coloneq
    \frac{\sum_{s}\omega(s)w(s)\bar{Y}_H(s)}{w_{\mathrm{MEM}}}
    \quad
    \text{when } w_{\mathrm{MEM}}>0
    \quad
    (\text{otherwise $m_{\mathrm{MEM}}$ is arbitrary})
    .
\end{equation*}

\eqref{eq:mem-ebw} explicitly demonstrates that MEMs fundamentally operate as data-adaptive EBW estimators.
The current sample mean $\bar{Y}_C$ is systematically shrunk toward an effective historical composite mean, governed by the data-driven total weight $w_{\mathrm{MEM}} \in [0,1)$.
This exact structural correspondence confirms that MEMs, despite their complex model-averaging mechanics, fall seamlessly within the analytical scope of the proposed EBW robust calibration framework.

\subsection{Latent Exchangeability Prior}
\label{app:subsec:leap}

The latent exchangeability prior (LEAP) \citep{alt2024leap} enables individual-level dynamic borrowing by introducing latent class indicators for each historical observation.
In its general formulation, LEAP models the historical data $D_H=\{Y_{H,i}\}_{i=1}^{n_H}$ via a $K$-component mixture:
\begin{equation*}
    Y_{H,i}\mid (c_i=k,\theta_k)\sim f(\cdot\mid \theta_k),
    \quad
    \mathbb{P}(c_i=k\mid \gamma)=\gamma_k
    ,
    \quad
    k=1,\dots,K
    ,
\end{equation*}
where $c_i\in\{1,\dots,K\}$ are i.i.d.\ latent allocations and $\gamma=(\gamma_1,\dots,\gamma_K)$ lies on the simplex.
Following \citet{alt2024leap}, the first mixture component aligns with the current-data sampling model, such that the current data $D_C=\{Y_{C,i}\}_{i=1}^{n_C}$ satisfy $Y_{C,i}\sim f(\cdot\mid \theta_1)$.
Consequently, $\gamma_1$ represents the marginal probability that a historical individual is exchangeable with the current population (i.e., belongs to the component sharing $\theta_1$).

Let $c=(c_1,\dots,c_{n_H})$ denote the allocation vector and define the exchangeable historical subset as
\begin{equation*}
    \mathcal{I}_{\mathrm{ex}}(c)\coloneq\{i: c_i=1\}
    ,
    \quad
    n_H^{\mathrm{ex}}(c)
    \coloneq
    |\mathcal{I}_{\mathrm{ex}}(c)|
    .
\end{equation*}
As noted by \citet{alt2024leap}, $n_H^{\mathrm{ex}}(c)$ represents the sample size contribution (SSC) of the historical data to the posterior of $\theta_1$.
Let $\bar{Y}_C$ be the current sample mean, and let $\bar{Y}_H^{\mathrm{ex}}(c)$ be the empirical mean over the exchangeable subset:
\begin{equation*}
    \bar{Y}_H^{\mathrm{ex}}(c)\coloneq
    \begin{cases}
        \frac{1}{n_H^{\mathrm{ex}}(c)}\sum_{i\in\mathcal{I}_{\mathrm{ex}}(c)} Y_{H,i},
        & n_H^{\mathrm{ex}}(c)>0,
        \\
        0,
        & n_H^{\mathrm{ex}}(c)=0.
    \end{cases}
\end{equation*}

Conditional on $c$, the likelihood contribution for the current parameter $\theta_1$ involves only the current data and the exchangeable historical observations.
Therefore, letting the arm mean be denoted by $\mu$ (corresponding to $\theta_1$), under a conjugate one-parameter setting with a flat base prior (e.g., a normal mean with known variance, or a Bernoulli outcome with an improper $\mathrm{Beta}(0,0)$ limit), the conditional posterior mean takes the pooled form:
\begin{equation*}
    \mathbb{E}[\mu\mid D_C,D_H,c]
    =
    \frac{n_C\bar{Y}_C+n_H^{\mathrm{ex}}(c)\bar{Y}_H^{\mathrm{ex}}(c)}{n_C+n_H^{\mathrm{ex}}(c)}
    =
    (1-w(c))\bar{Y}_C+w(c)\bar{Y}_H^{\mathrm{ex}}(c),
    \quad
    w(c)
    \coloneq
    \frac{n_H^{\mathrm{ex}}(c)}{n_C+n_H^{\mathrm{ex}}(c)}
    .
\end{equation*}
Thus, for each realized allocation $c$, LEAP yields an EBW estimator whose EBW is determined by the latent SSC.

Marginalizing over the posterior distribution of $c$ (and any hyperparameters such as $\gamma$) yields a partition-averaged EBW representation for the marginal posterior mean:
\begin{equation}
\label{eq:leap-marg-ebw}
    \mathbb{E}[\mu\mid D_C,D_H]
    =
    \mathbb{E}\Bigl[
        (1-w(c))\bar{Y}_C+w(c)\bar{Y}_H^{\mathrm{ex}}(c)
        \Bigm|D_C,D_H
    \Bigr]
    =
    (1-w_{\mathrm{LEAP}})\bar{Y}_C+w_{\mathrm{LEAP}}\bar{Y}_{H}^{\mathrm{LEAP}}
    ,
\end{equation}
where
\begin{equation*}
    w_{\mathrm{LEAP}}
    \coloneq
    \mathbb{E}\bigl[
        w(c)\mid D_C,D_H
    \bigr],
    \quad
    \bar{Y}_{H}^{\mathrm{LEAP}}
    \coloneq
    \frac{\mathbb{E}\bigl[
        w(c)\bar{Y}_H^{\mathrm{ex}}(c)\mid D_C,D_H
    \bigr]}{w_{\mathrm{LEAP}}}
\end{equation*}
with the convention $\bar{Y}_{H}^{\mathrm{LEAP}}=\bar{Y}_C$ if $w_{\mathrm{LEAP}}=0$.
\eqref{eq:leap-marg-ebw} demonstrates that LEAP operates fundamentally as a data-adaptive EBW estimator.
Its effective historical mean is driven by the posterior-weighted exchangeable subset, seamlessly integrating it into the EBW-based calibration perspective adopted in this paper.

\subsection{Bayesian Hierarchical Modeling with Overlapping Indices}
\label{app:subsec:bhmoi}

\citet{lu2025overlapping} propose Bayesian Hierarchical Modeling with Overlapping Indices (BHMOI), a two-stage dynamic borrowing framework designed for settings with multiple related cohorts (e.g., basket trials).
First, BHMOI performs distribution clustering. Letting $f_i$ denote a subgroup-specific reference distribution (typically a noninformative posterior proxy for the subgroup parameter $\theta_i$), the method selects a partition $S^{(\mathrm{oci})*}=\{S_1,\dots,S_K\}$ that maximizes the Overlapping Clustering Index (OCI), a metric derived from the overlap coefficient $OVL(\cdot,\cdot)$ between distributions.
Given this selected partition, BHMOI computes the Overlapping Borrowing Index (OBI) for each cluster $S_m$, $m=1,\dots,K$, which quantifies within-cluster homogeneity on the standardized $[0,1]$ scale.
The core modeling step then links the borrowing-strength hyperparameters to this homogeneity measure. In the notation of \citet{lu2025overlapping}, the cluster-dependent prior $p(\eta_{mb}\mid S^{(\mathrm{oci})*})$ is replaced by $p(\eta_{mb}\mid s(\mathrm{OBI}_m))$ using a user-specified mapping $s(\cdot)$, thereby calibrating the degree of within-cluster borrowing according to the observed $\mathrm{OBI}_m$.

To connect BHMOI to the effective-borrowing-weight (EBW) perspective, consider the conjugate normal-endpoint specification used in \citet{lu2025overlapping}.
For a subgroup $i\in S_m$ with sample mean $\bar{Y}_i$ based on $n_i$ observations and known sampling variance $\sigma^2$, the model specifies:
\begin{equation*}
    \bar{Y}_i \mid \theta_i
    \sim
    N(\theta_i,\sigma^2/n_i),
    \quad
    \theta_i \mid \mu_m,\tau_m
    \sim
    N(\mu_m,\tau_m^{-1})
    ,
\end{equation*}
where $\mu_m$ is the cluster mean and $\tau_m$ is a cluster-specific precision parameter governing the extent of borrowing.
Conditional on $(\mu_m,\tau_m)$, the posterior mean of $\theta_i$ takes the standard affine shrinkage form:
\begin{equation}
\label{eq:bhmoi-shrinkage}
    \mathbb{E}[\theta_i\mid \bar{Y}_i,\mu_m,\tau_m]
    =
    (1-w_{i,m})\bar{Y}_i + w_{i,m}\mu_m,
    \quad
    w_{i,m}
    =
    \frac{\tau_m}{\tau_m+n_i/\sigma^2}
    .
\end{equation}
This demonstrates that BHMOI induces an EBW $w_{i,m}$ directed toward the cluster mean $\mu_m$, with borrowing effectively restricted to members of the selected cluster $S_m$.

Ultimately, the overlapping indices regulate the borrowing behavior through two distinct channels:
(i) the discrete partition $S^{(\mathrm{oci})*}$ determines from whom information is borrowed, and
(ii) the prior on $\tau_m$ conditionally depends on $s(\mathrm{OBI}_m)$, dictating how much borrowing occurs.
For instance, in their simulation study, $\tau_m$ is assigned a Gamma prior with a shape parameter $\alpha_m=s(\mathrm{OBI}_m)$;
consequently, higher within-cluster overlap yields a stochastically larger $\tau_m$, leading to stronger shrinkage in \eqref{eq:bhmoi-shrinkage}.
While a closed-form posterior mean is generally unavailable for nonconjugate endpoints (e.g., the binomial-logit model in \citealp{lu2025overlapping}), the underlying hierarchical structure functionally preserves this cluster-restricted shrinkage.
Therefore, BHMOI can be characterized by a data-adaptive EBW, conceptually aligning with the EBW formulation utilized in our DRO-based calibration.

\subsection{Nonparametric Bayesian Borrowing via Dirichlet Process Mixtures}
\label{app:subsec:npb}

Dirichlet process mixture (DPM) models offer a flexible, nonparametric mechanism for adaptive borrowing across multiple historical sources.
They achieve this by clustering study-specific parameters, effectively restricting information borrowing to historical studies that are empirically commensurate with the current study \citep{hupf2021bayesian,ohigashi2025nonparametric}.
Consider a single arm with one current dataset $D_C$ and $K$ historical datasets $D_1,\dots,D_K$.
Let $\theta_j$ denote the arm-level parameter in dataset $j \in \{C, 1, \dots, K\}$ (e.g., a mean for continuous outcomes or a response probability for binary outcomes), and write $(n_j,\bar{Y}_j)$ for the corresponding sample size and sample mean (or sufficient statistic) in dataset $j$.

A canonical DPM borrowing specification takes the form:
\begin{equation*}
    Y_{j,i}\mid \theta_j
    \sim
    f(\cdot \mid \theta_j),
    \quad
    \theta_j \mid G \stackrel{\mathrm{i.i.d.}}{\sim} G,
    \quad
    G \sim \mathrm{DP}(M,G_0),
    \quad
    j=0,1,\dots,K
    .
\end{equation*}
Because realizations of $G$ from a Dirichlet process are almost surely discrete, this formulation induces a random partition of the study indices $\{C, 1, \dots, K\}$.
Equivalently, there exist latent cluster labels $c_j$ and unique atoms $\{\theta_c^\star\}$ such that $\theta_j=\theta_{c_j}^\star$, meaning that datasets assigned to the same cluster share a common parameter.

Let $\Pi$ denote the induced partition, and define the subset of historical studies that are assigned to the same cluster as the current study by
\begin{equation*}
    S(\Pi)
    \coloneq
    \{k\in\{1,\dots,K\}\colon c_k=c_0\}
    .
\end{equation*}
Conditional on $\Pi$, borrowing is strictly selective:
only datasets within $S(\Pi)$ are pooled with the current dataset.
Under conjugate exponential-family sampling, the conditional posterior mean of $\theta_C$ is the standard conjugate update based exclusively on the data in the cluster containing $C$.
Specifically, when the base measure $G_0$ is specified as diffuse (or its contribution is subsumed into pseudo-counts), this conditional mean reduces to the pooled estimator:
\begin{equation}
\label{eq:dpm-conditional-mean}
    \mathbb{E}[\theta_0 \mid D_{0:K},\Pi]
    =
    \frac{
    n_0\bar{Y}_0+\sum_{k\in S(\Pi)} n_k \bar{Y}_k
    }{
    n_0+\sum_{k\in S(\Pi)} n_k
    }
    =
    (1-w(\Pi))\bar{Y}_0 + w(\Pi)\bar{Y}_{S(\Pi)}
    ,
\end{equation}
where $n_{S(\Pi)}\coloneq \sum_{k\in S(\Pi)} n_k$,
$w(\Pi)\coloneq n_{S(\Pi)}/(n_C+n_{S(\Pi)})$, and
$\bar{Y}_{S(\Pi)}\coloneq \sum_{k\in S(\Pi)} n_k\bar{Y}_k/n_{S(\Pi)}$ (with the convention $w(\Pi)=0$ and $\bar{Y}_{S(\Pi)}$ arbitrary when $S(\Pi)=\emptyset$).

Marginalizing over the posterior distribution of partitions yields a mixture over all possible subsets $S\subseteq\{1,\dots,K\}$ that may cluster with the current study:
\begin{equation*}
    \mathbb{E}[\theta_0 \mid D_{0:K}]
    =
    \sum_{S\subseteq\{1,\dots,K\}}
    \pi_S^{\mathrm{post}}
    \hat{\mu}_{\mathrm{pool}}(S)
    ,
    \quad
    \pi_S^{\mathrm{post}}
    \coloneq
    \mathbb{P}\{S(\Pi)=S \mid D_{0:K}\}
    ,
\end{equation*}
where $\hat{\mu}_{\mathrm{pool}}(S)$ is the pooled estimator in \eqref{eq:dpm-conditional-mean} with $S(\Pi)$ replaced by $S$.
Mirroring the MEM representation in \eqref{eq:mem-ebw}, this mixture expectation admits an affine formulation:
\begin{equation*}
    \mathbb{E}[\theta_0 \mid D_{0:K}]
    =
    (1-w_{\mathrm{eff}})\bar{Y}_0
    +
    w_{\mathrm{eff}} m_{\mathrm{eff}}
    ,
    \quad
    w_{\mathrm{eff}}
    \coloneq
    \sum_S \pi_S^{\mathrm{post}} w(S)
    ,
\end{equation*}
with
\(
    m_{\mathrm{eff}}
    \coloneq
    \{\sum_S \pi_S^{\mathrm{post}} w(S)\bar{Y}_S\}/w_{\mathrm{eff}}
\)
when $w_{\mathrm{eff}}>0$.
Consequently, DPM-based borrowing can be elegantly summarized by a single data-adaptive EBW $w_{\mathrm{eff}}$.
This establishes structural compatibility with the EBW formulation, enabling the direct application of the DRO-based calibration framework proposed in this paper.

\subsection{Extension to Two Arms and Control-Only Borrowing}
\label{app:subsec:two-arms}

The reductions outlined throughout this section operate independently within each specific trial arm.
Consequently, returning to the multi-arm target parameter $\hat{\theta}$, any borrowing procedure characterizing an arm-specific posterior mean as $(1-w_a)\bar{Y}_{C,a} + w_{a}\tilde{m}_{H,a}$ naturally aligns with the formulation $\hat{\theta}(\lambda) = \hat{\mu}_1(\lambda_1) - \hat{\mu}_0(\lambda_0)$ via Lemma~\ref{app:lem:w-lambda}.

This unified structure readily accommodates the frequent scenario of control-only borrowing.
When historical data is exclusively utilized for the control arm ($n_{H,1}=0$), we simply fix the experimental arm's effective weight to zero ($\lambda_1 = 0$) and apply the DRO calibration strictly to $\lambda_0$.
This demonstrates that BOND acts as a universal robust calibrator, capable of computing tight, worst-case uniform bounds regardless of whether the underlying dynamic borrowing architecture is symmetric, asymmetric, or single-armed.
\section{Extensions: Multi-Arm Trials, Multiple Historical Sources, and General Treatment Indices}
\label{app:sec:multiarm-multisource}

This section generalizes the core framework developed in Sections~\ref{sec:preliminaries} and \ref{sec:proposed-method} to accommodate 
(i) multi-arm trials with finitely many treatment levels,
(ii) dynamic borrowing from multiple independent historical sources, and
(iii) general treatment indices via measurable coarsening.
A key analytical insight is that the robust bias correction, driven by arm-wise worst-case mean shifts over Wasserstein balls, preserves its separable structure across multiple arms and data sources, keeping the optimization computationally trivial.

\subsection{Unified Data Structure: Arms and Sources}
\label{app:subsec:multiarm-data}

Let $\mathcal{A}$ be a finite set of treatment levels with $|\mathcal{A}|=K\ge 2$ (e.g., $\mathcal{A}=\{0,1,\dots,K-1\}$).
We define the set of available data sources as
\begin{equation*}
    \mathcal{J}
    \coloneq
    \{C\}\cup\{H_1,\dots,H_J\}
    ,
\end{equation*}
where $C$ denotes the current randomized trial and $H_1,\dots,H_J$ denote distinct historical sources.
For subject $i$ in source $j\in\mathcal{J}$, we observe
\begin{equation*}
    Z_{j,i}
    \coloneq
    (A_{j,i},X_{j,i},Y_{j,i})
    \in
    \mathcal{A}\times\mathcal{X}\times\mathcal{Y}
    .
\end{equation*}
For each arm $a\in\mathcal{A}$ and source $j\in\mathcal{J}$, let the arm-specific conditional law of $(X,Y)$ be
\begin{equation*}
    P_{j}^a
    \coloneq
    \mathcal{L}\bigl(
        (X,Y)\mid A=a,j
    \bigr)
    ,
\end{equation*}
with its corresponding mean outcome denoted by $\mu_j^a\coloneq \mathbb{E}_{P_j^a}[Y]$.
To capture potential noncommensurability, we define the mean shift of each historical source $H_k$ relative to the current trial for arm $a$ as:
\begin{equation*}
    \Delta_{k,a}
    \coloneq
    \mu_{H_k}^a-\mu_C^a
    .
\end{equation*}

For any prespecified contrast vector $c=(c_a)_{a\in\mathcal{A}}\in\mathbb{R}^{K}$, the target parameter in the current population is defined as
\begin{equation}
\label{app:eq:theta-contrast}
    \theta_C(c)
    \coloneq
    \sum_{a\in\mathcal{A}} c_a \mu_C^a
    .
\end{equation}
This general formulation subsumes standard two-arm comparisons;
for example, comparing an experimental arm $t$ against a control arm $0$ corresponds to setting $c_t=1$, $c_0=-1$, and $c_a=0$ otherwise.
We consider the one-sided hypothesis:
\begin{equation*}
    H_0(c)\colon \theta_C(c)\le 0
    \quad\text{vs.}\quad
    H_1(c)\colon \theta_C(c)>0
    .
\end{equation*}

\subsection{Multi-Source Effective Borrowing Estimators}
\label{app:subsec:multisource-ebw}

For each source $j\in\mathcal{J}$ and arm $a\in\mathcal{A}$, let $n_{j,a} \coloneq \sum_{i=1}^{n_j}\boldsymbol{1}\{A_{j,i}=a\}$ denote the sample size, and let $\bar{Y}_{j,a}$ be the sample mean (defined when $n_{j,a}\ge 1$).
By convention, if a specific arm is not included in a historical source, we set $n_{j,a}=0$.

We introduce a matrix of borrowing parameters:
\begin{equation*}
    \lambda
    \coloneq
    (\lambda_{k,a})_{k=1,\dots,J,\;a\in\mathcal{A}}
    \in
    \Lambda
    \coloneq
    \prod_{k=1}^J\prod_{a\in\mathcal{A}} [0,\Lambda_{k,a}]
    ,
\end{equation*}
where $\Lambda_{k,a}\in(0,\infty)$ are prespecified upper bounds.

For each arm $a\in\mathcal{A}$ with $n_{C,a}\ge 1$, the multi-source EBW estimator for the mean is:
\begin{equation*}
    \hat{\mu}_a(\lambda)
    \coloneq
    \frac{
        n_{C,a}\bar{Y}_{C,a}
        +
        \sum_{k=1}^J \lambda_{k,a} n_{H_k,a}\bar{Y}_{H_k,a}
    }{
        n_{C,a}
        +
        \sum_{k=1}^J \lambda_{k,a} n_{H_k,a}
    }
    .
\end{equation*}
This can be rewritten as a convex combination $\hat{\mu}_a(\lambda) = w_{C,a}(\lambda)\bar{Y}_{C,a} + \sum_{k=1}^J w_{k,a}(\lambda)\bar{Y}_{H_k,a}$, where the source-specific EBWs are:
\begin{equation*}
    w_{k,a}(\lambda)
    \coloneq
    \frac{\lambda_{k,a} n_{H_k,a}}{
        n_{C,a}+\sum_{\ell=1}^J \lambda_{\ell,a}n_{H_\ell,a}
    }
    \in[0,1)
    ,
    \quad
    k=1,\dots,J
    ,
\end{equation*}
and the implied weight for the current data is $w_{C,a}(\lambda) \coloneq 1-\sum_{k=1}^J w_{k,a}(\lambda) \in (0,1]$.

The induced estimator for the contrast is $\hat{\theta}(\lambda;c) \coloneq \sum_{a\in\mathcal{A}} c_a \hat{\mu}_a(\lambda)$.
A straightforward calculation reveals its expectation under heterogeneity:
\begin{equation*}
    \mathbb{E}\bigl[
        \hat{\theta}(\lambda;c)
    \bigr]
    =
    \theta_C(c)
    +
    \sum_{a\in\mathcal{A}}\sum_{k=1}^J c_a w_{k,a}(\lambda) \Delta_{k,a}
    .
\end{equation*}
This decomposition highlights that biases from multiple sources accumulate linearly according to the contrast vector and the EBWs.

\subsection{Wasserstein Ambiguity Sets for Multiple Sources}
\label{app:subsec:multisource-wass}

We utilize the same additive ground metric $d$ on $\mathcal{Z}=\mathcal{X}\times\mathcal{Y}$ as defined in \eqref{eq:ground-metric}.
For each historical source $k$ and arm $a$, we specify a tolerance radius $\rho_{k,a}\ge 0$ and define the Wasserstein ambiguity set centered at the corresponding current arm's distribution:
\begin{equation}
\label{app:eq:U-ka}
    \mathcal{U}_{k,a}(\rho_{k,a})
    \coloneq
    \bigl\{
        Q\in\mathcal{P}_1(\mathcal{Z})
        \colon
        W_1(Q,P_C^a)\le \rho_{k,a}
    \bigr\}
    .
\end{equation}
The foundational admissibility assumption is that $P_{H_k}^a \in \mathcal{U}_{k,a}(\rho_{k,a})$ for all $k=1,\dots,J$ and $a\in\mathcal{A}$.
The single-source formulation from the main text naturally emerges when $J=1$.

We define the corresponding worst-case mean shifts over \eqref{app:eq:U-ka}:
\begin{equation*}
    \Delta_{k,a}^{+}(\rho_{k,a})
    \coloneq
    \sup_{Q\in\mathcal{U}_{k,a}(\rho_{k,a})}
    \bigl(
        \mathbb{E}_Q[Y]-\mu_C^a
    \bigr)
    ,
    \quad
    \Delta_{k,a}^{-}(\rho_{k,a})
    \coloneq
    \inf_{Q\in\mathcal{U}_{k,a}(\rho_{k,a})}
    \bigl(
        \mathbb{E}_Q[Y]-\mu_C^a
    \bigr)
    .
\end{equation*}
Because the outcome space and metric remain identical, Proposition~\ref{prop:closed-form-delta} applies directly to each pair $(k,a)$, yielding explicit bounds in terms of $\rho_{k,a}$.

\subsection{Robust Bias Correction for a General Contrast}
\label{app:subsec:multisource-bias}

To ensure valid hypothesis testing, we must correct for the worst-case mean shift in the rejection direction.
For a fixed contrast $c$ and borrowing parameter $\lambda$, this is defined as:
\begin{equation}
\label{app:eq:bplus-multisource-def}
    b_{+}(\lambda;c)
    \coloneq
    \sup_{\substack{
        Q_{k,a}\in\mathcal{U}_{k,a}(\rho_{k,a})\\
        k=1,\dots,J,
        \;
        a\in\mathcal{A}
    }}
    \sum_{a\in\mathcal{A}}\sum_{k=1}^J
    c_a w_{k,a}(\lambda)
    \bigl(
        \mathbb{E}_{Q_{k,a}}[Y]-\mu_C^a
    \bigr)
    .
\end{equation}

\begin{proposition}[Closed-form of $b_+(\lambda;c)$ for multi-arm/multi-source]
\label{app:prop:bplus-multisource}
    For any $\lambda\in\Lambda$ and any contrast $c\in\mathbb{R}^K$, the worst-case bias is analytically tractable:
    \begin{equation*}
        b_{+}(\lambda;c)
        =
        \sum_{a\in\mathcal{A}}\sum_{k=1}^J
        c_a w_{k,a}(\lambda)
        \Delta_{k,a}^{\mathrm{sgn}(c_a)}(\rho_{k,a})
        ,
    \end{equation*}
    where $\Delta_{k,a}^{\mathrm{sgn}(c_a)}(\rho_{k,a}) = \Delta_{k,a}^{+}(\rho_{k,a})$ if $c_a\ge 0$, and $\Delta_{k,a}^{\mathrm{sgn}(c_a)}(\rho_{k,a}) = \Delta_{k,a}^{-}(\rho_{k,a})$ if $c_a<0$.
\end{proposition}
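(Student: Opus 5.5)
The supremum in \eqref{app:eq:bplus-multisource-def} is taken over a product set $\prod_{k,a}\mathcal{U}_{k,a}(\rho_{k,a})$, and the objective is a sum of terms each depending on a single coordinate $Q_{k,a}$. So I would first show that the supremum separates into a sum of coordinatewise suprema, and then evaluate each one using Proposition~\ref{prop:closed-form-delta}. The argument mirrors the proof of Proposition~\ref{prop:bplus}, now with arbitrary signs of the coefficients.

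\emph{Step 1 (separability).} Write $f_{k,a}(Q)\coloneq c_a w_{k,a}(\lambda)\bigl(\mathbb{E}_Q[Y]-\mu_C^a\bigr)$. Because $\lambda$ is fixed, every $w_{k,a}(\lambda)$ is a fixed nonnegative number that does not depend on the $Q$'s. Hence
\begin{equation*}
    \sup_{(Q_{k,a})\in\prod\mathcal{U}_{k,a}}\sum_{k,a} f_{k,a}(Q_{k,a})
    =
    \sum_{k,a}\sup_{Q\in\mathcal{U}_{k,a}(\rho_{k,a})} f_{k,a}(Q).
\end{equation*}
The inequality ``$\le$'' is immediate. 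For ``$\ge$'', fix $\varepsilon>0$ and choose $\varepsilon/(JK)$-optimizers coordinatewise; the product structure means no coupling constraint links them. Each coordinatewise supremum is finite: by Kantorovich--Rubinstein duality, $y$ is $1$-Lipschitz under $d$ (Remark~\ref{rem:metric-choice}), so $|\mathbb{E}_Q[Y]-\mu_C^a|\le\rho_{k,a}$. Finiteness rules out any $\infty-\infty$ issue in the sum.

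\emph{Step 2 (sign split).} Let $\beta\coloneq c_a w_{k,a}(\lambda)$ and $g(Q)\coloneq\mathbb{E}_Q[Y]-\mu_C^a$. Use $\sup_Q \beta g(Q)=\beta\sup_Q g(Q)$ when $\beta\ge0$, and $\sup_Q \beta g(Q)=\beta\inf_Q g(Q)$ when $\beta<0$. Since $w_{k,a}\ge0$, the sign of $\beta$ is the sign of $c_a$, and the two cases give $\Delta^+_{k,a}(\rho_{k,a})$ and $\Delta^-_{k,a}(\rho_{k,a})$ respectively.

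Two edge cases need a remark. If $c_a=0$ or $w_{k,a}=0$, the term is $0$, which agrees with the convention $\Delta^{\mathrm{sgn}(0)}=\Delta^+$. If $n_{H_k,a}=0$, then $w_{k,a}=0$ and the same applies.

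\emph{Step 3 (closed form).} The ball $\mathcal{U}_{k,a}$ has the same outcome space and metric as in the main text, so Proposition~\ref{prop:closed-form-delta} gives $\Delta^\pm_{k,a}$ explicitly. Its attainment also shows that the suprema are achieved, although attainment is not needed for the identity.

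I expect no real obstacle. The one point needing care is justifying the interchange of supremum and sum in Step 1, i.e.\ making sure the feasible set is genuinely a Cartesian product. Even though several sources share the same center $P_C^a$, no joint constraint ties $Q_{k,a}$ across $k$, so the interchange is valid.
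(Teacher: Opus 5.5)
Your proposal is correct and follows essentially the same route as the paper's proof. You split the supremum over the Cartesian product $\prod_{k,a}\mathcal{U}_{k,a}(\rho_{k,a})$ into coordinatewise suprema, then pick $\Delta^{+}_{k,a}$ or $\Delta^{-}_{k,a}$ according to the sign of $c_a w_{k,a}(\lambda)$, which is governed by $c_a$. Your extra details (finiteness of each coordinate supremum, where the elementary Lemma~\ref{lem:lipschitz-bound} suffices in place of full Kantorovich--Rubinstein duality, and the zero-coefficient edge cases) make explicit what the paper leaves implicit.
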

See Appendix~\ref{app:pf-prop-bplus-multisource} for the proof.
The logic mirrors Proposition~\ref{prop:bplus}.
Because the ambiguity sets are defined independently for each source and arm (a product ambiguity set), the global supremum decomposes into a sum of independent suprema, with the sign of the contrast coefficient $c_a$ dictating whether the positive or negative maximal drift is selected.

\subsection{Asymptotic Variance, Robust Size, and Robust Power}
\label{app:subsec:multisource-asymptotics}

The asymptotic variance of the contrast estimator $\hat{\theta}(\lambda;c)$ is given by:
\begin{equation*}
    s^2(\lambda;c)
    \coloneq
    \mathrm{Var}\bigl(\hat{\theta}(\lambda;c)\bigr)
    =
    \sum_{a\in\mathcal{A}} c_a^2
    \Biggl[
        w_{C,a}(\lambda)^2 \frac{\sigma_{C,a}^2}{n_{C,a}}
        +
        \sum_{k=1}^J
        w_{k,a}(\lambda)^2 \frac{\sigma_{H_k,a}^2}{n_{H_k,a}}
    \Biggr]
    ,
\end{equation*}
with the convention that a term with $n_{H_k,a}=0$ is set to $0$.
We denote its plug-in estimator by $\hat{s}(\lambda;c)$, utilizing sample variances $\hat{\sigma}_{j,a}^2$.

\begin{assumption}[Multi-source sampling and moments]
\label{app:ass:multisource-sampling}
    For each $(j,a)$ with $n_{j,a}\ge 1$, the outcomes $\{Y_{j,i}:A_{j,i}=a\}$ are i.i.d.\ with mean $\mu_j^a$ and variance $\sigma_{j,a}^2<\infty$.
    Data from different sources or arms are mutually independent.
\end{assumption}

\begin{assumption}[Multi-source asymptotic regime and nondegeneracy]
\label{app:ass:multisource-asymptotic}
    For each $a\in\mathcal{A}$, $n_{C,a}\to\infty$, and for each historical source $k$, either $n_{H_k,a}\to\infty$ or $n_{H_k,a}=0$.
    Furthermore, $\sigma_{C,a}^2>0$ for all $a$ where $c_a\neq 0$.
\end{assumption}

\begin{theorem}[Asymptotic distributionally robust size control: multi-arm/multi-source]
\label{app:thm:multisource-size}
    Fix $c\in\mathbb{R}^K$ and $\lambda\in\Lambda$.
    Define the robust Wald test:
    \begin{equation*}
        \varphi_{\lambda,c}
        \coloneq
        \boldsymbol{1}\Biggl\{
            \frac{\hat{\theta}(\lambda;c)-\tilde{b}_{+}(\lambda;c)}{\hat{s}(\lambda;c)}
            \ge
            z_{1-\alpha}
        \Biggr\}
        ,
    \end{equation*}
    where $\tilde{b}_{+}(\lambda;c)$ represents the benchmark bias bound $b_+(\lambda;c)$ or its valid plug-in estimate for binary outcomes.
    If the null $H_0(c)\colon\theta_C(c)\le 0$ holds and $P_{H_k}^a\in\mathcal{U}_{k,a}(\rho_{k,a})$ for all $(k,a)$,
    then under Assumptions~\ref{app:ass:multisource-sampling} and~\ref{app:ass:multisource-asymptotic},
    \begin{equation*}
        \limsup_{\min_{a}n_{C,a}\to\infty}
        \mathbb{P}(\varphi_{\lambda,c}=1)
        \le
        \alpha
        .
    \end{equation*}
\end{theorem}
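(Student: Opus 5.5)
We mirror the two-arm argument of Theorem~\ref{thm:robust-size}, in three steps: a central limit theorem for the contrast estimator, a bound on its centering via Proposition~\ref{app:prop:bplus-multisource}, and Slutsky's lemma for the studentized statistic.

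\textbf{Step 1 (asymptotic normality).} Write
\begin{equation*}
    \hat{\theta}(\lambda;c)-\mathbb{E}[\hat{\theta}(\lambda;c)]
    =
    \sum_{a}c_a\Bigl[
        w_{C,a}(\bar{Y}_{C,a}-\mu_C^a)
        +
        \sum_k w_{k,a}(\bar{Y}_{H_k,a}-\mu_{H_k}^a)
    \Bigr].
\end{equation*}
This is a finite sum of independent, centered sample means. The weights are deterministic for fixed $\lambda$, and each lies in $[0,1]$. Under Assumptions~\ref{app:ass:multisource-sampling} and~\ref{app:ass:multisource-asymptotic}, we apply the Lindeberg--Feller CLT to the triangular array of all observations, each scaled by its weight over its sample size, and divide by $s(\lambda;c)$.

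Lindeberg's condition needs care because the weights vary with $n$. Each summand is bounded in absolute value by $|c_a|\,|Y-\mu|/n_{j,a}$. Nondegeneracy gives $s^2(\lambda;c)\ge \sum_{a:c_a\neq0}c_a^2 w_{C,a}^2\sigma_{C,a}^2/n_{C,a}$. Moreover, $w_{C,a}$ is bounded below by $n_{C,a}/(n_{C,a}+\sum_k\Lambda_{k,a}n_{H_k,a})$.

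This lower bound does not by itself control the historical terms relative to $s$. To avoid that, I would instead prove normality group by group. Each standardized group mean is asymptotically $N(0,1)$ by the classical i.i.d.\ CLT, and independence gives joint normality. The studentized linear combination with deterministic coefficients then has the exact variance $s^2$. By the standard fact that a normalized linear combination of independent asymptotically standard normal components is asymptotically normal (via characteristic functions, since the coefficient vector has unit norm), we obtain
\begin{equation*}
    \frac{\hat{\theta}(\lambda;c)-\mathbb{E}\hat{\theta}}{s(\lambda;c)}\to_d N(0,1).
\end{equation*}
This route avoids Lindeberg altogether.

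\textbf{Step 2 (centering bound).} By the bias decomposition of Section~\ref{app:subsec:multisource-ebw},
\begin{equation*}
    \mathbb{E}\hat{\theta}=\theta_C(c)+\sum_{a,k}c_aw_{k,a}\Delta_{k,a}.
\end{equation*}
Since each $P_{H_k}^a$ lies in $\mathcal{U}_{k,a}(\rho_{k,a})$, the vector of historical laws is feasible in \eqref{app:eq:bplus-multisource-def}. Hence $\sum c_aw_{k,a}\Delta_{k,a}\le b_+(\lambda;c)$, with the closed form given by Proposition~\ref{app:prop:bplus-multisource}. Under $H_0(c)$ this yields $\mathbb{E}\hat{\theta}-b_+\le 0$.

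\textbf{Step 3 (studentization).} We need $\hat{s}/s\to_p1$. Each $\hat{\sigma}^2_{j,a}$ is consistent by the weak law of large numbers. The ratio $\hat{s}^2/s^2$ is a weighted average of the ratios $\hat{\sigma}^2_{j,a}/\sigma^2_{j,a}$ over terms with $\sigma_{j,a}>0$. Terms with $\sigma_{H_k,a}=0$ have $\hat{\sigma}=0$ almost surely and vanish. A weighted average of ratios each converging to one also converges to one.

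For the plug-in binary version, the error $\hat{b}_+-b_+$ is $O_p(n^{-1/2})$ times weights, because $\min\{\rho,\cdot\}$ is $1$-Lipschitz. The issue is that $\sqrt{n}$ times this error need not vanish at kinks, which is exactly where $\rho_{k,a}=\mu_C^a$ or $\rho_{k,a}=1-\mu_C^a$. I would handle this as in the two-arm proof (Appendix~\ref{app:pf-thm-robust-size}). At the interior condition the plug-in equals $b_+$ with probability tending to one. Otherwise, one can use the interpretation ``valid plug-in'' in the statement, i.e.\ any $\tilde{b}_+$ satisfying $\tilde{b}_+\ge b_+-o_p(s)$. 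I expect this step to be the main obstacle and would restrict to that hypothesis.

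\textbf{Conclusion.} The rejection event is
\begin{equation*}
    \frac{\hat{\theta}-\mathbb{E}\hat{\theta}}{s}\cdot\frac{s}{\hat{s}}
    \ge
    z_{1-\alpha}+\frac{b_+-\mathbb{E}\hat{\theta}}{\hat{s}}+\frac{\tilde{b}_+-b_+}{\hat{s}}.
\end{equation*}
The middle term on the right is nonnegative by Step~2, and the last term is $o_p(1)$ in the benchmark case or under the valid plug-in hypothesis. Slutsky's lemma then gives $\limsup\mathbb{P}(\varphi_{\lambda,c}=1)\le 1-\Phi(z_{1-\alpha})=\alpha$.
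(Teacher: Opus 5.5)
Your proof is correct and follows the paper's skeleton: a CLT for the centered contrast, the bound $\mathbb{E}[\hat\theta(\lambda;c)]-b_+(\lambda;c)\le\theta_C(c)\le 0$ from feasibility of the true laws in \eqref{app:eq:bplus-multisource-def}, then Slutsky. You justify the two limit steps differently.

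\textbf{Normality.} The paper reruns the Lindeberg--Feller argument of Proposition~\ref{prop:clt} on the pooled array $\xi_{j,a,i}=\beta_{j,a}(Y_{j,a,i}-\mu_j^a)/n_{j,a}$. Your worry that Lindeberg cannot control the historical terms is unfounded. The paper bounds each group's contribution using that group's own summand, $s^2\ge\beta_{j,a}^2\sigma_{j,a}^2/n_{j,a}$. This yields the bound $\sigma_{j,a}^{-2}\mathbb{E}[(Y-\mu_j^a)^2\boldsymbol{1}\{|Y-\mu_j^a|>\varepsilon\sigma_{j,a}\sqrt{n_{j,a}}\}]\to 0$ for current and historical groups alike. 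Your replacement is equally valid and avoids triangular arrays: a group-wise i.i.d.\ CLT, independence, a unit-norm coefficient vector, and uniform convergence of characteristic functions on compacts.

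\textbf{Studentization.} Here you are more careful than the paper. It asserts $\hat s\to_p s$ with $s>0$ via the continuous mapping theorem, but $s(\lambda;c)\to 0$ in this regime. What Slutsky actually needs is $\hat s/s\to_p 1$, and your convex-combination-of-ratios argument provides exactly that.

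\textbf{Binary plug-in.} The paper only appeals to the mechanism of Theorem~\ref{thm:robust-size}, where $\hat b_+\to_p b_+$ is said to change the statistic by $o_p(1)$. You correctly see this is insufficient, since the change is $(\hat b_+-b_+)/\hat s$ with $\hat s\asymp n^{-1/2}$. However, localizing the problem to the kinks is wrong.
\begin{itemize}
\item Whenever the $\mu$-dependent branch is active ($\rho_{k,a}>1-\mu_C^a$ with $c_a>0$, or $\rho_{k,a}>\mu_C^a$ with $c_a<0$), one has $\hat b_+-b_+=c_aw_{k,a}(\mu_C^a-\bar Y_{C,a})$ with probability tending to one. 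This is of exact order $n^{-1/2}$, not only at equality.
\item Size control then genuinely fails. Take control-only borrowing with $\lambda_0>0$, $\rho_0>\mu_C^0>0$, and $\theta_C=0$. Let $P_H^0$ be obtained from $P_C^0$ by setting every outcome to $0$, at transport cost $\mu_C^0\le\rho_0$.
\item Then $\hat\theta-\hat b_+=\bar Y_{C,1}-\bar Y_{C,0}$ with probability tending to one. Meanwhile $\hat s^2$ tracks the smaller quantity $\sigma_{C,1}^2/n_{C,1}+(1-w_0)^2\sigma_{C,0}^2/n_{C,0}$.
\item With, e.g., equal current arm sizes and $n_{H,0}\asymp n_{C,0}$, the rejection probability therefore stays asymptotically above $\alpha$.
\end{itemize}
So restricting to the interior condition, or to plug-ins with $\tilde b_+\ge b_+-o_p(s)$, is necessary rather than a convenience. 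State it as covering the entire non-interior region, not just the kinks. With that reading your argument is complete, whereas the paper's one-line treatment of the plug-in is not.
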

See Appendix~\ref{app:pf-thm-multisource-size} for the proof.

Crucially, extending this framework to multiple sources does not change the core mechanism for selecting $\lambda$.
Fix a target alternative $\theta_1>0$ for $\theta_C(c)$.
Let $D_{k,a}(\rho_{k,a}) \coloneq \Delta_{k,a}^{+}(\rho_{k,a})-\Delta_{k,a}^{-}(\rho_{k,a})$ denote the drift range.
The robust noncentrality parameter becomes:
\begin{equation*}
    \kappa(\lambda;c)
    \coloneq
    \frac{
        \theta_1
        -
        \sum_{a\in\mathcal{A}} |c_a|
        \sum_{k=1}^J w_{k,a}(\lambda) D_{k,a}(\rho_{k,a})
    }{
        s(\lambda;c)
    }.
\end{equation*}
In line with our primary motivation, selecting the optimal borrowing parameters entails maximizing $\kappa(\lambda;c)$ over the compact set $\Lambda$.
This objective seamlessly manages the trade-off among various historical sources, dynamically down-weighting sources with large specified discrepancies $\rho_{k,a}$ or high variances, while retaining those that offer meaningful efficiency gains.

\subsection{Categorical and Continuous Treatment Indices}
\label{app:subsec:treatment-index}

The main text assumes a binary arm label $A\in\{0,1\}$;
the multi-arm extension above covers any finite categorical treatment.
If the original treatment or exposure $\tilde{A}$ takes values in a continuous or complex measurable space, the BOND methodology can still be applied through measurable coarsening.

\begin{remark}[Measurable coarsening of a general treatment index]
\label{app:rem:coarsening}
    Let $(\tilde{\mathcal{A}},\mathcal{B}_{\tilde{\mathcal{A}}})$ be a measurable space and let $\tilde{A}\in\tilde{\mathcal{A}}$ be the original continuous exposure (e.g., dosage).
    Fix a finite set $\mathcal{A}$ and a measurable function $g\colon\tilde{\mathcal{A}}\to\mathcal{A}$ mapping exposures to discrete bins or strata.
    By defining the induced categorical treatment $A\coloneq g(\tilde{A})\in\mathcal{A}$, all theoretical guarantees established in Appendix~\ref{app:sec:multiarm-multisource} remain valid.
    The target parameter \eqref{app:eq:theta-contrast} subsequently represents a contrast between these coarsened exposure strata.
\end{remark}
This coarsening strategy provides a straightforward pathway to accommodate continuous doses without necessitating complex functional extensions of the EBW principle, preserving the robustness guarantees.

\section{Two-Sided Distributionally Robust Inference}
\label{app:sec:two-sided}

This section formalizes the extension of Section~\ref{subsec:test} to handle the two-sided hypothesis test $H_0^{\pm}\colon\theta_C=0$ versus $H_1^{\pm}\colon\theta_C\neq 0$, and provides the construction of distributionally robust confidence intervals.

\subsection{Worst-Case Bias in the Negative Rejection Direction}
\label{app:subsec:two-side-worst-case}

Recall the positive-direction worst-case bias $b_+(\lambda)$ defined in \eqref{eq:bplus-def}.
By symmetry, we define the negative-direction worst-case bias as:
\begin{equation*}
    b_-(\lambda)
    \coloneq
    \inf_{\substack{Q_1\in\mathcal{U}_1(\rho_1)\\ Q_0\in\mathcal{U}_0(\rho_0)}}\Bigl[
        w_1(\lambda_1)\bigl(
            \mathbb{E}_{Q_1}[Y] - \mu_C^1
        \bigr)
        -
        w_0(\lambda_0)\bigl(
            \mathbb{E}_{Q_0}[Y] - \mu_C^0
        \bigr)
    \Bigr]
    .
\end{equation*}
This quantity represents the worst-case bias for rejections in the lower tail.

\begin{proposition}[Closed-form of $b_-(\lambda)$]
\label{app:prop:bminus}
    For any $\lambda\in\Lambda$ with $w_a(\lambda_a)\ge 0$,
    \begin{equation*}
        b_-(\lambda)
        =
        w_1(\lambda_1)\Delta_1^-(\rho_1)
        -
        w_0(\lambda_0)\Delta_0^+(\rho_0)
        .
    \end{equation*}
    Specifically, applying the bounds from Proposition~\ref{prop:closed-form-delta} yields:
    \begin{equation*}        
        b_-(\lambda)=
        \begin{cases}
            -\{w_{1}(\lambda_1)\rho_1+w_0(\lambda_0)\rho_0\},
            & \mathcal{Y}=\mathbb{R},
            \\[4pt]
            -\Bigl[
                w_1(\lambda_1)\min\{\rho_1,\mu_C^1\}
                +
                w_0(\lambda_0)\min\{\rho_0,1-\mu_C^0\}
            \Bigr],
            & \mathcal{Y}=\{0,1\}.
        \end{cases}
    \end{equation*}
\end{proposition}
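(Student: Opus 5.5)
The plan mirrors the argument for Proposition~\ref{prop:bplus}, with infima in place of suprema. The key structural fact is that the feasible set of the optimization defining $b_-(\lambda)$ is the product $\mathcal{U}_1(\rho_1)\times\mathcal{U}_0(\rho_0)$. The objective is also additively separable: its first term depends only on $Q_1$ and its second only on $Q_0$. Hence the joint infimum splits into a sum of two marginal infima,
\begin{equation*}
    b_-(\lambda)
    =
    \inf_{Q_1\in\mathcal{U}_1(\rho_1)} w_1(\lambda_1)\bigl(\mathbb{E}_{Q_1}[Y]-\mu_C^1\bigr)
    +
    \inf_{Q_0\in\mathcal{U}_0(\rho_0)} \Bigl[-w_0(\lambda_0)\bigl(\mathbb{E}_{Q_0}[Y]-\mu_C^0\bigr)\Bigr].
\end{equation*}
To justify this splitting, I would prove the two inequalities separately. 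The bound ``$\ge$'' holds termwise for any feasible pair. For ``$\le$'', take $\varepsilon$-optimal $Q_1$ and $Q_0$ chosen independently, which is allowed precisely because the feasible set is a product; then let $\varepsilon\downarrow 0$.

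Next I would pull the nonnegative scalars $w_a(\lambda_a)\ge 0$ out of each infimum. For $w\ge 0$ we have $\inf_Q w\,f(Q)=w\inf_Q f(Q)$, where the case $w=0$ is trivial because the infimum of a finite quantity is finite by Proposition~\ref{prop:closed-form-delta}. We also have $\inf_Q(-w f(Q))=-w\sup_Q f(Q)$. Applying these gives
\begin{equation*}
    b_-(\lambda)=w_1(\lambda_1)\Delta_1^-(\rho_1)-w_0(\lambda_0)\Delta_0^+(\rho_0).
\end{equation*}
This sign bookkeeping is the only place the direction differs from Proposition~\ref{prop:bplus}: the treatment arm now takes its most negative drift and the control arm its most positive drift.

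Finally, I would substitute the closed forms from Proposition~\ref{prop:closed-form-delta}. In the continuous case, $\Delta_1^-=-\rho_1$ and $\Delta_0^+=\rho_0$, which gives $-(w_1\rho_1+w_0\rho_0)$. In the binary case, $\Delta_1^-=-\min\{\rho_1,\mu_C^1\}$ and $\Delta_0^+=\min\{\rho_0,1-\mu_C^0\}$, which yields the stated expression.

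I expect no real obstacle. The only point needing care is the interchange of the infimum over a product set with a separable sum, together with finiteness of each piece. That finiteness is guaranteed by Proposition~\ref{prop:closed-form-delta}: both $\Delta_a^\pm$ are bounded in absolute value by $\rho_a$.
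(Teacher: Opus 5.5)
Your proposal is correct and follows essentially the same route as the paper. Both arguments decouple the infimum over the product ambiguity set into arm-wise infima, use $w_1(\lambda_1)\ge 0$ to get $w_1(\lambda_1)\Delta_1^-(\rho_1)$ and $\inf(-cx)=-c\sup x$ to get $-w_0(\lambda_0)\Delta_0^+(\rho_0)$, and then substitute the closed forms from Proposition~\ref{prop:closed-form-delta}; your $\varepsilon$-optimal justification of the splitting and your finiteness check for the $w_a(\lambda_a)=0$ case supply rigor that the paper's proof leaves implicit.
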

See Appendix~\ref{app:pf-prop-bminus} for the proof.

\subsection{Two-Sided Robust Wald Test and Confidence Intervals}
\label{app:subsec:two-side-test}

Let $z_{1-\alpha/2}$ denote the $(1-\alpha/2)$ quantile of the standard normal distribution.
For a fixed $\lambda\in\Lambda$, the two-sided robust Wald test is defined as rejecting $H_0^\pm$ if:
\begin{equation}
\label{app:eq:robust-test-two-sided}
    \varphi^{\pm}_{\lambda}
    \coloneq
    \boldsymbol{1}\Biggl\{
        \frac{\hat{\theta}(\lambda)-\tilde{b}_{+}(\lambda)}{\hat{s}(\lambda)}
        \ge
        z_{1-\alpha/2}
        \;\text{ or }\;
        \frac{\hat{\theta}(\lambda)-\tilde{b}_{-}(\lambda)}{\hat{s}(\lambda)}
        \le
        -z_{1-\alpha/2}
    \Biggr\}
    ,
\end{equation}
where $\tilde{b}_+(\lambda)=b_+(\lambda)$ and $\tilde{b}_-(\lambda)=b_-(\lambda)$ for the benchmark formulations.
For binary outcomes, the practical implementations replace the unknown parameters $\mu_C^a$ with their sample analogues $\bar{Y}_{C,a}$ within the closed-form bounds.

Equivalently, this test allows us to construct a distributionally robust two-sided $(1-\alpha)$ confidence interval for $\theta_C$:
\begin{equation}
\label{app:eq:robust-ci-two-sided}
    \mathrm{CI}^{\pm}_{\lambda}
    \coloneq
    \Bigl[
        \hat{\theta}(\lambda)-\tilde{b}_+(\lambda)-z_{1-\alpha/2}\hat{s}(\lambda),
        \;
        \hat{\theta}(\lambda)-\tilde{b}_-(\lambda)+z_{1-\alpha/2}\hat{s}(\lambda)
    \Bigr]
    .
\end{equation}
The duality between testing and interval estimation holds exactly:
we reject $H_0^\pm\colon \theta_C=0$ if and only if $0\notin \mathrm{CI}^{\pm}_{\lambda}$. 
Because $\tilde{b}_+(\lambda) \ge 0$ and $\tilde{b}_-(\lambda) \le 0$, the interval is strictly wider than a naive, uncorrected Wald interval, explicitly reflecting the epistemic uncertainty originating from the ambiguous external data.

\subsection{Asymptotic Distributionally Robust Size Control}
\label{app:subsec:two-side-worst-size-control}

To ensure the validity of the testing procedure utilizing plug-in parameters for binary outcomes, we first verify consistency.

\begin{lemma}[Consistency of plug-in $b_-(\lambda)$ for binary outcomes]
\label{app:lem:consistency-bminus}
    Assume $\mathcal{Y}=\{0,1\}$ and fix $\lambda\in\Lambda$.
    Let $\hat b_-(\lambda)$ be the plug-in version of $b_-(\lambda)$ obtained by replacing $\mu_C^a$ with $\bar{Y}_{C,a}$ in the expressions of Proposition~\ref{app:prop:bminus}.
    Under Assumptions~\ref{ass:sampling} and \ref{ass:asymptotic},
    \begin{equation*}        
        \hat b_-(\lambda)
        \longrightarrow_p
        b_-(\lambda)
        .
    \end{equation*}
\end{lemma}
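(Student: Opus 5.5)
The argument is a continuous-mapping one. By Proposition~\ref{app:prop:bminus}, for binary outcomes
\begin{equation*}
    b_-(\lambda)=g\bigl(\mu_C^0,\mu_C^1\bigr),
    \quad
    g(u_0,u_1)\coloneq-\Bigl[w_1(\lambda_1)\min\{\rho_1,u_1\}+w_0(\lambda_0)\min\{\rho_0,1-u_0\}\Bigr],
\end{equation*}
and the plug-in is $\hat b_-(\lambda)=g(\bar{Y}_{C,0},\bar{Y}_{C,1})$. I will show that the arguments converge in probability, that the weights are deterministic and bounded, and that $g$ is continuous, and then invoke the continuous mapping theorem.

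First, the weights. With $\lambda$ fixed, each $w_a(\lambda_a)$ is a deterministic number in $[0,1)$. When $n_{H,a}=0$ it equals $0$. When $n_{H,a}\to\infty$ it may vary along the sequence $n\to\infty$, but it remains bounded in $[0,1)$. So the target $b_-(\lambda)$ is itself a possibly $n$-dependent deterministic sequence, and the claim should be read as $\hat b_-(\lambda)-b_-(\lambda)\to_p 0$.

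Second, consistency of the arguments. Under Assumption~\ref{ass:sampling}, the current-arm outcomes are i.i.d.\ Bernoulli$(\mu_C^a)$. Assumption~\ref{ass:asymptotic} gives $n_{C,a}\to\infty$, so the weak law of large numbers (or Chebyshev, since $\mathrm{Var}(\bar{Y}_{C,a})=\mu_C^a(1-\mu_C^a)/n_{C,a}\to0$) yields $\bar{Y}_{C,a}\to_p\mu_C^a$ for $a\in\{0,1\}$.

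Third, the bound itself. The map $u\mapsto\min\{\rho,u\}$ is $1$-Lipschitz, and so is $u\mapsto\min\{\rho,1-u\}$. Combined with $0\le w_a<1$, this gives the deterministic inequality
\begin{equation*}
    \bigl|\hat b_-(\lambda)-b_-(\lambda)\bigr|
    \le
    w_1(\lambda_1)\,\bigl|\bar{Y}_{C,1}-\mu_C^1\bigr|
    +
    w_0(\lambda_0)\,\bigl|\bar{Y}_{C,0}-\mu_C^0\bigr|
    \le
    \sum_{a\in\{0,1\}}\bigl|\bar{Y}_{C,a}-\mu_C^a\bigr|.
\end{equation*}
The right-hand side converges to $0$ in probability by the second step, which gives the conclusion.

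There is no real obstacle. The only subtle point is the kink of $\min$ at $u=\rho_a$ (respectively $1-u=\rho_0$). A differentiability-based (delta-method) argument would fail there, but the Lipschitz bound above handles the kink directly and requires no interior condition such as the one in Remark~\ref{rem:adaptive-safeguards}. The same argument also gives consistency of $\hat b_+(\lambda)$.
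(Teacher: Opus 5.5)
Your proof is correct and follows essentially the same route as the paper, which also combines the weak law of large numbers for $\bar{Y}_{C,a}$ with continuity of $\mu\mapsto\min\{\rho,\mu\}$ and $\mu\mapsto\min\{\rho,1-\mu\}$ and then applies the continuous mapping theorem. Your explicit $1$-Lipschitz bound is a quantitative form of that step, and it has a small advantage: it stays valid when the weights $w_a(\lambda_a)$ vary with the sample sizes, whereas the paper's argument implicitly treats those weights as fixed constants.
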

See Appendix~\ref{app:pf-lem-consistency-bminus} for the proof.

This consistency allows us to formalize the robust control of the type I error rate in the two-sided paradigm.

\begin{theorem}[Asymptotic distributionally robust size control: two-sided]
\label{app:thm:robust-size-two-sided}
    Fix $\lambda\in\Lambda$ and let $\varphi_\lambda^{\pm}$ be defined as in \eqref{app:eq:robust-test-two-sided}.
    For any valid underlying configuration satisfying $\theta_C=0$ and $P_H^a\in\mathcal{U}_a(\rho_a)$ for $a\in\{0,1\}$, under Assumptions~\ref{ass:sampling} and \ref{ass:asymptotic},
    \begin{equation*}        
        \limsup_{\min_a n_{C,a}\to\infty}
        \mathbb{P}\bigl(
            \varphi_\lambda^{\pm}=1
        \bigr)
        \le
        \alpha
        .
    \end{equation*}
\end{theorem}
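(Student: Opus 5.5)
The plan is to reduce the two-sided size to two one-sided tail probabilities, each handled like Theorem~\ref{thm:robust-size}. Write $B \coloneq w_1(\lambda_1)\Delta_1 - w_0(\lambda_0)\Delta_0$ for the true bias, where $\Delta_a=\mu_H^a-\mu_C^a$. Since $P_H^a\in\mathcal{U}_a(\rho_a)$ and $w_a(\lambda_a)\ge 0$, Proposition~\ref{prop:bplus} and Proposition~\ref{app:prop:bminus} give
\begin{equation*}
    b_-(\lambda)\le B\le b_+(\lambda).
\end{equation*}
By the union bound,
\begin{equation*}
    \mathbb{P}(\varphi^\pm_\lambda=1)\le \mathbb{P}\bigl(T_+\ge z_{1-\alpha/2}\bigr)+\mathbb{P}\bigl(T_-\le -z_{1-\alpha/2}\bigr),
\end{equation*}
where $T_\pm\coloneq(\hat\theta(\lambda)-\tilde b_\pm(\lambda))/\hat s(\lambda)$. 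It therefore suffices to show that each limsup is at most $\alpha/2$.

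For the upper tail, decompose
\begin{equation*}
    T_+=\frac{s(\lambda)}{\hat s(\lambda)}\cdot\frac{\hat\theta(\lambda)-B}{s(\lambda)}+\frac{B-b_+(\lambda)}{\hat s(\lambda)}+\frac{b_+(\lambda)-\tilde b_+(\lambda)}{\hat s(\lambda)},
\end{equation*}
using $\theta_C=0$. The ingredients are as follows.
\begin{enumerate}
    \item[(i)] By Proposition~\ref{prop:clt}, $(\hat\theta(\lambda)-B)/s(\lambda)\to_d N(0,1)$.
    \item[(ii)] $\hat s(\lambda)/s(\lambda)\to_p 1$. This follows because each $\hat\sigma^2_{j,a}\to_p\sigma^2_{j,a}$ by the WLLN, the weights are deterministic, and $s^2(\lambda)\ge(1-w_a)^2\sigma^2_{C,a}/n_{C,a}$ with $\sigma^2_{C,a}>0$. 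I would handle the ratio termwise, bounding each estimated term by its relative error times its population term.
    \item[(iii)] The middle term is $\le 0$ by the sandwich above.
    \item[(iv)] In the benchmark case the last term vanishes. In the plug-in case it is $o_p(1)$ after scaling, as argued below.
\end{enumerate}
Slutsky then gives $\limsup\mathbb{P}(T_+\ge z_{1-\alpha/2})\le 1-\Phi(z_{1-\alpha/2})=\alpha/2$. The lower tail is symmetric: the middle term $(B-b_-)/\hat s\ge 0$ only pushes $T_-$ upward, away from the rejection region.

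The main obstacle is the plug-in term for binary outcomes. Consistency of $\hat b_\pm$ (Lemma~\ref{app:lem:consistency-bminus} and its $b_+$ analogue) is not enough by itself, because the error is divided by $\hat s(\lambda)\asymp n^{-1/2}$. I need $\sqrt{n}$-rate control instead. Since $x\mapsto\min\{\rho,x\}$ and $x\mapsto\min\{\rho,1-x\}$ are 1-Lipschitz,
\begin{equation*}
    |\hat b_+-b_+|\le \sum_a w_a(\lambda_a)\,|\bar Y_{C,a}-\mu_C^a| = \sum_a w_a(\lambda_a)\,O_p\bigl(n_{C,a}^{-1/2}\bigr).
\end{equation*}
Divided by $\hat s$, this is $O_p(1)$ but not necessarily $o_p(1)$. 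I would treat it in two cases. In the interior case $\rho_a\ne\mu_C^a$ and $\rho_a\ne 1-\mu_C^a$, the min is locally constant, so $\hat b_\pm=b_\pm$ with probability tending to one and the term vanishes. In the boundary (kink) case the min is locally linear on one side, and the error term is a mean-zero Gaussian-type fluctuation. There I would either argue that the kink only makes the correction conservative in the relevant direction, or fall back, as Remark~\ref{rem:adaptive-safeguards} suggests, on the universal bound $\rho_a$, which dominates $b_+$ and makes the term nonpositive. I expect this boundary case to be the only genuinely delicate point. For continuous outcomes the correction is exactly known and no such issue arises.
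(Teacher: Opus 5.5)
For the benchmark test ($\tilde b_\pm=b_\pm$), your argument is the paper's. The paper writes the two statistics as $Z_n+\delta_n$ and $Z_n+\eta_n$, with $Z_n=(\hat\theta(\lambda)-B)/\hat s(\lambda)\to_d N(0,1)$. It signs $\delta_n\le 0\le\eta_n$ using $b_-(\lambda)\le B\le b_+(\lambda)$ and adds the tails to get $\alpha/2+\alpha/2$. Your step (ii), stated as $\hat s(\lambda)/s(\lambda)\to_p 1$ via termwise relative errors, is the correct form of the variance step, since $s(\lambda)\to 0$.

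The plug-in part is where your proposal has a genuine gap. You are right that consistency of $\hat b_\pm$ is not enough. The paper's proof simply asserts that the plug-in perturbs the statistic by $o_p(1)$, and that does not follow, for exactly the reason you give.

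However, your repair is wrong. The map $x\mapsto\min\{\rho_0,x\}$ is locally constant at $\mu_C^0$ only when $\rho_0<\mu_C^0$. When $\rho_0>\mu_C^0$ it is locally the identity, so with probability tending to one $\hat b_+(\lambda)-b_+(\lambda)=w_0(\lambda_0)(\bar Y_{C,0}-\mu_C^0)$. The same happens for $\min\{\rho_1,1-x\}$ when $\rho_1>1-\mu_C^1$, and for the two terms of $b_-$. Your ``interior case'' therefore contains a regime where the extra term is a nondegenerate $O_p(1)$ Gaussian term. It is not harmless noise: it changes the coefficient of $\bar Y_{C,0}$ in the numerator from $-(1-w_0)$ to $-1$. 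So when $w_0>0$ the numerator's variance strictly exceeds $s^2(\lambda)$, while you still divide by $\hat s(\lambda)$.

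At the kink $\rho_0=\mu_C^0$, you have $\hat b_+\le b_+$ identically. The plug-in is therefore anti-conservative, not conservative, and its error is not mean zero. Falling back on $w_1\rho_1+w_0\rho_0$ proves validity of a different test, not of $\varphi_\lambda^{\pm}$ with plug-in corrections.

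No argument can close this, because the plug-in claim is false in that regime. Here is a counterexample.
\begin{enumerate}
\item[(1)] Take binary outcomes, $n_{H,1}=0$, $\lambda_1=0$, $\mu_C^1=\mu_C^0=\mu$ and $\mu<\rho_0<1-\mu$. Let $P_H^0$ be $P_C^0$ with $Y$ set to $0$. Then $W_1(P_H^0,P_C^0)=\mu<\rho_0$ and $B=w_0\mu=b_+(\lambda)$.
\item[(2)] With probability tending to one, $\hat b_+(\lambda)=w_0\bar Y_{C,0}$ and $\hat b_-(\lambda)=-w_0\rho_0$. Hence $\hat\theta(\lambda)-\hat b_+(\lambda)=\bar Y_{C,1}-\bar Y_{C,0}$, and $T_-\to+\infty$ in probability.
\item[(3)] Meanwhile $\hat s^2(\lambda)\approx\mu(1-\mu)\{1/n_{C,1}+(1-w_0)^2/n_{C,0}\}$.
\item[(4)] With $n_{C,1}=n_{C,0}=n_{H,0}$ and $\lambda_0=1$ (so $w_0=1/2$), $T_+\to_d N(0,8/5)$. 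At $\alpha=0.05$ the rejection probability tends to $1-\Phi(1.96\sqrt{5/8})\approx 0.061>0.05$.
\end{enumerate}

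What you can actually prove is the theorem for the benchmark corrections. For the plug-in corrections it holds only when $\rho_a<\min\{\mu_C^a,1-\mu_C^a\}$ for every arm with $w_a(\lambda_a)>0$. There your locally-constant argument is correct; this is the interior condition of Remark~\ref{rem:adaptive-safeguards}.
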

See Appendix~\ref{app:pf-thm-robust-size-two-sided} for the proof.
\section{Proofs}
\label{app:sec:proofs}

\subsection{Technical Lemmas}
\label{app:subsec:technical-lemmas}

\begin{lemma}[Lipschitz bound for expectation differences]
\label{lem:lipschitz-bound}
    Let $(\mathcal{Z},d)$ be a metric space and let $P,Q\in\mathcal{P}_1(\mathcal{Z})$.
    If $f\colon \mathcal{Z}\to\mathbb{R}$ is $L$-Lipschitz, then
    \begin{equation*}
        \bigl|
            \mathbb{E}_P[f] - \mathbb{E}_Q[f]
        \bigr|
        \le
        LW_1(P,Q)
        .
    \end{equation*}
\end{lemma}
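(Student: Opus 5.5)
The plan is the standard coupling argument. Fix any coupling $\pi\in\Pi(P,Q)$; the only constraint is that $\mathcal{Z}$ is a metric space with $P,Q\in\mathcal{P}_1(\mathcal{Z})$.

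First I check integrability. Fix a base point $z_0\in\mathcal{Z}$. The Lipschitz property gives $|f(z)|\le |f(z_0)|+L\,d(z,z_0)$, so $f\in L^1(P)\cap L^1(Q)$ by the finite-first-moment assumption. Hence both expectations are finite and their difference is well defined.

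Next I rewrite the difference through the coupling. Since $\pi$ has marginals $P$ and $Q$,
\begin{equation*}
    \mathbb{E}_P[f]-\mathbb{E}_Q[f]
    =
    \int_{\mathcal{Z}\times\mathcal{Z}}\bigl(f(z)-f(z')\bigr)\,\pi(\dd z,\dd z')
    .
\end{equation*}
This identity is legitimate because each term is $\pi$-integrable by the previous step, so the integral can be split. Then I apply $|f(z)-f(z')|\le L\,d(z,z')$ under the integral to obtain
\begin{equation*}
    \bigl|\mathbb{E}_P[f]-\mathbb{E}_Q[f]\bigr|\le L\int d(z,z')\,\pi(\dd z,\dd z')
    .
\end{equation*}

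Finally, the left-hand side does not depend on $\pi$. Taking the infimum over $\pi\in\Pi(P,Q)$ on the right therefore yields $L\,W_1(P,Q)$ directly from the definition of $W_1$; neither Kantorovich--Rubinstein duality nor attainment of the optimal coupling is needed.

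The only delicate point is the integrability check that justifies splitting the integral over the coupling, and the base-point bound handles it. With the ground metric \eqref{eq:ground-metric}, this lemma is what will later be applied to $f(x,y)=y$, which is $1$-Lipschitz, to obtain the upper bounds in Proposition~\ref{prop:closed-form-delta}.
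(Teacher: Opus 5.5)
Your proposal is correct and follows the same coupling argument as the paper: write the difference as an integral of $f(z)-f(z')$ against an arbitrary coupling, bound it by $L\int d(z,z')\,\pi(\mathrm{d}z,\mathrm{d}z')$ via the Lipschitz property, and take the infimum over couplings. Your explicit integrability check via the base-point bound $|f(z)|\le |f(z_0)|+L\,d(z,z_0)$ is a small addition that the paper leaves implicit.
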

Lemma~\ref{lem:lipschitz-bound} connects distributional discrepancy, measured by the 1-Wasserstein distance, to worst-case perturbations of expectations for Lipschitz functionals.
This provides the fundamental technical device for deriving explicit, nonparametric bias bounds over Wasserstein ambiguity sets.

\begin{proof}[Proof of Lemma~\ref{lem:lipschitz-bound}]
    Let $\pi\in\Pi(P,Q)$ be an arbitrary coupling with marginals $P$ and $Q$, and let $(Z,Z')\sim\pi$.
    By definition, $\mathbb{E}_P[f]-\mathbb{E}_Q[f] = \mathbb{E}_\pi[f(Z)-f(Z')]$.
    Applying Jensen's inequality and the $L$-Lipschitz property of $f$, we obtain
    \begin{equation*}        
        \bigl|
            \mathbb{E}_P[f]-\mathbb{E}_Q[f]
        \bigr|
        \le
        \mathbb{E}_\pi\bigl[
            |f(Z)-f(Z')|
        \bigr]
        \le
        L\mathbb{E}_\pi\bigl[
            d(Z,Z')
        \bigr]
        .
    \end{equation*}
    Taking the infimum over all valid couplings $\pi\in\Pi(P,Q)$ yields the desired bound $\bigl|\mathbb{E}_P[f]-\mathbb{E}_Q[f]\bigr|\le LW_1(P,Q)$.
\end{proof}

\begin{lemma}[Consistency]
\label{lem:consistency}
    Under Assumptions~\ref{ass:sampling} and \ref{ass:asymptotic}, for any fixed $\lambda\in\Lambda$,
    \begin{equation*}
        \hat{s}(\lambda)
        \longrightarrow_p
        s(\lambda)
        .
    \end{equation*}
    Furthermore, if $\mathcal{Y} = \{0,1\}$ and $\hat{b}_+(\lambda)$ is the plug-in bias correction obtained by replacing $\mu_C^a$ with $\bar{Y}_{C,a}$ in Proposition~\ref{prop:bplus}, then
    \begin{equation*}
        \hat{b}_+(\lambda)
        \longrightarrow_p
        b_+(\lambda)
        .
    \end{equation*}
\end{lemma}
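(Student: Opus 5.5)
The plan is to reduce both claims to the weak law of large numbers for each arm-and-source sample and then apply the continuous mapping theorem. Fix $\lambda\in\Lambda$. The weights $w_a(\lambda_a)$ are deterministic given the (fixed or conditioned-on) arm sizes. The only random inputs to $\hat{s}^2(\lambda)$ and $\hat{b}_+(\lambda)$ are the sample variances $\hat{\sigma}_{j,a}^2$ and the current-arm means $\bar{Y}_{C,a}$.

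The first step is to establish $\hat{\sigma}_{j,a}^2\to_p\sigma_{j,a}^2$ for every pair $(j,a)$ with $n_{j,a}\to\infty$, using Assumption~\ref{ass:sampling}. Write
\begin{equation*}
    \hat{\sigma}_{j,a}^2=\frac{n_{j,a}}{n_{j,a}-1}\Bigl(\frac{1}{n_{j,a}}\sum Y_{j,i}^2-\bar{Y}_{j,a}^2\Bigr).
\end{equation*}
The WLLN applied to $Y$ and to $Y^2$, whose first moment is finite because $\sigma_{j,a}^2<\infty$, gives convergence of both averages, and the continuous mapping theorem yields the limit. By Assumption~\ref{ass:asymptotic}, any historical pair $(H,a)$ has either $n_{H,a}\to\infty$, in which case the same argument applies, or $n_{H,a}=0$, in which case the term is identically zero under the stated convention.

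The main obstacle is that $s^2(\lambda)$ itself tends to zero as the sample sizes grow. Convergence of the difference $\hat{s}-s$ alone is therefore trivial and not the useful statement. What the later Wald arguments need is the ratio form $\hat{s}(\lambda)/s(\lambda)\to_p1$, and I would prove that. Write the ratio as
\begin{equation*}
    \frac{\hat{s}^2(\lambda)}{s^2(\lambda)}=\sum_{(j,a)}\pi_{j,a}\,\frac{\hat{\sigma}_{j,a}^2}{\sigma_{j,a}^2},
\end{equation*}
where each weight $\pi_{j,a}\in[0,1]$ is the share of the corresponding term in $s^2(\lambda)$ and the weights sum to one. If some historical $\sigma_{H,a}^2=0$, that term contributes zero to both numerator and denominator and is dropped. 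Since each ratio $\hat{\sigma}^2_{j,a}/\sigma^2_{j,a}\to_p1$ and the weights are bounded convex weights, the whole sum converges to $1$. For this the denominator must be positive: because $w_a<1$, the current-arm term $(1-w_a)^2\sigma_{C,a}^2/n_{C,a}$ is strictly positive by the nondegeneracy in Assumption~\ref{ass:asymptotic}. A square root then gives the ratio statement, and the stated statement $\hat{s}(\lambda)\to_p s(\lambda)$ follows a fortiori.

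For the binary plug-in, the WLLN gives $\bar{Y}_{C,a}\to_p\mu_C^a$. The maps $m\mapsto\min\{\rho_1,1-m\}$ and $m\mapsto\min\{\rho_0,m\}$ are continuous, indeed $1$-Lipschitz. Combined with the closed form in Proposition~\ref{prop:bplus} and the bound $w_a\in[0,1)$, this gives
\begin{equation*}
    |\hat{b}_+(\lambda)-b_+(\lambda)|\le\sum_a|\bar{Y}_{C,a}-\mu_C^a|\to_p0.
\end{equation*}
The same bound also gives a $\sqrt{n}$-rate, $O_p(n_{C,a}^{-1/2})$, which I would record because the size proof needs $(\hat{b}_+-b_+)/s$ to be controlled.
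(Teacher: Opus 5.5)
Your proof is correct. For the variance part it takes a different and sharper route than the paper's.

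\textbf{The paper's route.} It uses the same WLLN step for $\hat\sigma^2_{j,a}$ and then argues summand by summand that $\hat s^2(\lambda)\to_p s^2(\lambda)$. It finishes with continuity of the square root at $s^2(\lambda)>0$, which treats $s(\lambda)$ as a fixed positive limit. The binary part is a one-line continuous-mapping argument.

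\textbf{Your route.} As you note, $s^2(\lambda)$ depends on $n$ and vanishes; the paper itself uses $\hat s(\lambda)\to_p 0$ in the proof of Proposition~\ref{prop:tight-min}. So the absolute statement is nearly vacuous, and the step ``continuity at a positive limit'' does not literally apply. Your convex-weight representation of $\hat s^2(\lambda)/s^2(\lambda)$ instead gives the self-normalized statement $\hat s(\lambda)/s(\lambda)\to_p 1$. This has several advantages:
\begin{itemize}
\item It is what Slutsky's theorem actually requires when combined with Proposition~\ref{prop:clt} in Theorem~\ref{thm:robust-size}, and it implies the lemma as stated.
\item It handles a degenerate historical variance cleanly.
\item It needs no assumption on the relative growth of $n_{C,a}$ and $n_{H,a}$, because the weights $\pi_{j,a}$ may vary with $n$ arbitrarily. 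The paper's argument becomes rigorous only after rescaling by $\sqrt{n}$ under proportional sample sizes, which is implicitly what its Proposition~\ref{prop:adaptive-robust-size} assumes via $a_n s(\lambda)$.
\end{itemize}
Your Lipschitz bound for $\hat b_+(\lambda)$ is equivalent to the paper's continuous-mapping argument, just quantitative.

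\textbf{A caution on your closing remark.} $s(\lambda)$ is typically of order $n^{-1/2}$, so an $O_p(n_{C,a}^{-1/2})$ error gives only $(\hat b_+(\lambda)-b_+(\lambda))/s(\lambda)=O_p(1)$. This term is generally non-degenerate, for example when $\rho_0>\mu_C^0$ and $w_0(\lambda_0)$ stays bounded away from $0$. So the rate alone does not make the plug-in negligible in the size proof. That requires either $\hat b_+=b_+$ with probability tending to one, under the interior condition of Remark~\ref{rem:adaptive-safeguards}, or the conservative bound $b_+^{\mathrm{univ}}(\lambda)$.
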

Lemma~\ref{lem:consistency} ensures that the feasible, fully data-driven implementation is asymptotically equivalent to the oracle version.

\begin{proof}[Proof of Lemma~\ref{lem:consistency}]
    Consider any observed group $(j,a)$ such that $n_{j,a}\to\infty$.
    By Assumption~\ref{ass:sampling}, $\mathbb{E}[Y^2\mid A=a,j]<\infty$.
    The sample variance can be expressed as
    \begin{equation*}        
        \hat{\sigma}_{j,a}^2
        =
        \frac{n_{j,a}}{n_{j,a}-1}\Biggl(
            \frac{1}{n_{j,a}}\sum_{i:A_{j,i}=a}Y_{j,i}^2-\bar{Y}_{j,a}^2
        \Biggr)
        .
    \end{equation*}
    By the Weak Law of Large Numbers (WLLN), $(1/n_{j,a})\sum Y_{j,i}^2 \to_{p}\mathbb{E}[Y^2\mid A=a,j]$ and $\bar{Y}_{j,a}\to_p\mu_j^a$.
    Because $n_{j,a}/(n_{j,a}-1)\to 1$, Slutsky's theorem dictates that
    \begin{equation*}
        \hat{\sigma}_{j,a}^2
        \longrightarrow_{p}
        \mathbb{E}[Y^2\mid A=a,j]-(\mu_j^a)^2
        =
        \sigma_{j,a}^2
        .
    \end{equation*}
    
    Fixing $\lambda\in\Lambda$, the weights $w_a(\lambda_a)$ are deterministic constants.
    Thus, each summand in $\hat{s}^2(\lambda)$ converges in probability to the corresponding summand in $s^2(\lambda)$.
    Since the sum is finite, $\hat{s}^2(\lambda)\to_{p}s^2(\lambda)$.
    Assumption~\ref{ass:asymptotic} ensures $s^2(\lambda)>0$, making the square-root function continuous at $s^2(\lambda)$.
    The continuous mapping theorem (CMT) therefore yields $\hat{s}(\lambda)\to_{p}s(\lambda)$.
    
    For binary outcomes, $\bar{Y}_{C,a}\to_{p}\mu_C^a$.
    The functions $\mu\mapsto \min\{\rho,1-\mu\}$ and $\mu\mapsto \min\{\rho,\mu\}$ are continuous on $[0,1]$.
    By the CMT, the plug-in estimator $\hat{b}_+(\lambda)$ evaluated via Proposition~\ref{prop:bplus} converges in probability to the theoretical bound $b_+(\lambda)$.
\end{proof}

\subsection{Proof of Proposition~\ref{prop:closed-form-delta}}
\label{app:pf-prop-closed-form-delta}
\begin{proof}[Proof of Proposition~\ref{prop:closed-form-delta}]
    The analysis proceeds arm-by-arm;
    we suppress the arm index $a$ for brevity.
    
    \textit{Part (i): $\mathcal{Y}=\mathbb{R}$.}
    
    Under the ground metric \eqref{eq:ground-metric}, the projection $f(x,y)=y$ is 1-Lipschitz because $|y-y'|\le d_{\mathcal{X}}(x,x')+|y-y'|=d((x,y),(x',y'))$.
    By Lemma~\ref{lem:lipschitz-bound}, for any $Q\in\mathcal{U}(\rho)$,
    \begin{equation*}
        \mathbb{E}_Q[Y]-\mu_C
        \le
        W_1(Q,P_C)
        \le
        \rho
        ,        
    \end{equation*}
    yielding $\Delta^+(\rho)\le \rho$.
    Applying the same logic to $-f$ yields $\Delta^-(\rho)\ge -\rho$.
    
    To demonstrate attainability, define the translation map $T_+(x,y)\coloneq (x,y+\rho)$ and construct $Q_+\coloneqq P_C\circ T_+^{-1}$.
    Using the deterministic coupling $(Z,Z')$ where $Z=(X,Y)\sim P_C$ and $Z'=T_+(Z)$, we observe
    \begin{equation*}        
        d(Z,Z')
        =
        d_{\mathcal{X}}(X,X)+|Y-(Y+\rho)|
        =
        \rho
        \quad
        \text{a.s.}
    \end{equation*}
    This implies $W_1(Q_+,P_C)\le \rho$, verifying $Q_+\in\mathcal{U}(\rho)$.
    Furthermore, $\mathbb{E}_{Q_+}[Y] = \mathbb{E}_{P_C}[Y+\rho] = \mu_C+\rho$, establishing $\Delta^+(\rho)\ge \rho$.
    Consequently, $\Delta^+(\rho)=\rho$.
    An analogous argument using $T_-(x,y)=(x,y-\rho)$ confirms $\Delta^-(\rho)=-\rho$.
    
    \textit{Part (ii): $\mathcal{Y}=\{0,1\}$.}

    Let $p \coloneq \mu_C = P_C(\mathcal{X}\times\{1\}) \in[0,1]$.
    For any $Q\in\mathcal{U}(\rho)$, the 1-Lipschitz property of $f(x,y)=y$ guarantees $\mathbb{E}_Q[Y]-p \le \rho$.
    Because $Y$ is binary, $\mathbb{E}_Q[Y]\le 1$ trivially holds.
    Thus,
    \begin{equation*}
        \sup_{Q\in\mathcal{U}(\rho)}\mathbb{E}_Q[Y]
        \le
        \min\{p+\rho,1\}
        .
    \end{equation*}
    If $p=1$, the bound is trivially $1$, achieved by $Q=P_C$.
    Suppose $p<1$.
    Define $t \coloneq \min\{\rho,1-p\} \in [0,1-p]$ and the transition probability $\eta \coloneq t/(1-p) \in [0,1]$.
    Construct a Markov kernel $K$ on $\mathcal{Z}$ that deterministically maps $(x,1)$ to $(x,1)$, and maps $(x,0)$ to $(x,1)$ with probability $\eta$ and to $(x,0)$ with probability $1-\eta$.
    Let $Q(E)\coloneq \int K(z,E)P_C(\dd z)$ be the induced probability measure.
    By construction, the marginal distribution of $X$ is preserved, and
    \begin{equation*}
        Q(\mathcal{X}\times\{1\})
        =
        P_C(\mathcal{X}\times\{1\})
        +
        \eta P_C(\mathcal{X}\times\{0\})
        =
        p+\eta(1-p)
        =
        p+t
        =
        \min\{p+\rho,1\}
        .
    \end{equation*}
    
    To verify $Q\in\mathcal{U}(\rho)$, consider the standard coupling $\pi(\dd z,\dd z') \coloneq P_C(\dd z)K(z,\dd z')$.
    Because $K$ perturbs only the outcome from $0$ to $1$ with probability $\eta$, the transportation cost is strictly determined by this mass shift:
    \begin{equation*}
        \int_{\mathcal{Z}\times\mathcal{Z}} d(z,z')\pi(\dd z,\dd z')
        =
        \eta P_C(\mathcal{X}\times\{0\})
        =
        \eta(1-p)
        =
        t
        \le
        \rho
        .
    \end{equation*}
    Thus $W_1(Q,P_C)\le \rho$, placing $Q\in\mathcal{U}(\rho)$ and confirming that the supremum evaluates to $\min\{p+\rho,1\}$.

    By symmetric reasoning applied to $-f$, we have $\mathbb{E}_Q[Y]\ge p-\rho$ alongside the inherent bound $\mathbb{E}_Q[Y]\ge 0$.
    Constructing a kernel $K_-$ that shifts mass from $Y=1$ to $Y=0$ with probability $\eta_- \coloneq \min\{\rho,p\}/p$ yields a valid $Q_-\in\mathcal{U}(\rho)$ attaining $\max\{p-\rho,0\}$, thereby completing the proof.
\end{proof}

\subsection{Proof of Proposition~\ref{prop:bplus}}
\label{app:pf-prop-bplus}
\begin{proof}[Proof of Proposition~\ref{prop:bplus}]
    By definition \eqref{eq:bplus-def}, we evaluate a supremum over the product ambiguity set $\mathcal{U}_1(\rho_1)\times \mathcal{U}_0(\rho_0)$.
    Because the objective is completely separable into arm-specific terms, we have:
    \begin{equation*}        
        b_+(\lambda)
        =
        \sup_{Q_1\in\mathcal{U}_1(\rho_1)} \Bigl[
            w_1(\lambda_1)\bigl(
                \mathbb{E}_{Q_1}[Y]-\mu_C^1
            \bigr) 
        \Bigr]
        +
        \sup_{Q_0\in\mathcal{U}_0(\rho_0)} \Bigl[
            -w_0(\lambda_0)\bigl(
                \mathbb{E}_{Q_0}[Y]-\mu_C^0
            \bigr)
        \Bigr]
        .
    \end{equation*}
    Given that the effective weights are non-negative ($w_a(\lambda_a) \ge 0$), the first supremum trivially evaluates to $w_1(\lambda_1) \Delta_1^+(\rho_1)$.
    For the second term, we apply the elementary identity $\sup_{x \in S} (-cx) = -c \inf_{x \in S} x$ for $c \ge 0$, yielding $-w_0(\lambda_0) \Delta_0^{-}(\rho_0)$.
    Substituting the analytic bounds derived in Proposition~\ref{prop:closed-form-delta} directly provides the stated forms.
\end{proof}

\subsection{Proof of Proposition~\ref{prop:clt}}
\label{app:pf-prop-clt}
\begin{proof}[Proof of Proposition~\ref{prop:clt}]
    Fix $\lambda\in\Lambda$.
    Define deterministic coefficients $c_{C,1} \coloneq 1-w_1(\lambda_1)$, $c_{H,1} \coloneq w_1(\lambda_1)$, $c_{C,0} \coloneq -(1-w_0(\lambda_0))$, and $c_{H,0} \coloneq -w_0(\lambda_0)$.
    Let $\mathcal{I} \coloneq \{(C,0),(C,1)\} \cup \{(H,a)\colon n_{H,a}\ge 1\}$ index the observed trial arms.
    The treatment effect estimator is the linear combination $\hat{\theta}(\lambda) = \sum_{(j,a)\in\mathcal{I}} c_{j,a}\bar{Y}_{j,a}$.
    Its expectation resolves straightforwardly to $\mathbb{E}[\hat{\theta}(\lambda)] = \theta_C + w_1(\lambda_1)\Delta_1 - w_0(\lambda_0)\Delta_0$, aligning with \eqref{eq:theta-mean}.
    
    To establish asymptotic normality, we represent the centered statistic as a sum of independent random variables:
    \begin{equation*}
        \hat{\theta}(\lambda)
        -
        \mathbb{E}[\hat{\theta}(\lambda)]
        =
        \sum_{(j,a)\in\mathcal{I}}\sum_{i=1}^{n_{j,a}} \xi_{j,a,i}
        ,
    \end{equation*}
    where
    \begin{equation*}
        \xi_{j,a,i}
        \coloneq
        \frac{c_{j,a}}{n_{j,a}} \bigl(
            Y_{j,a,i}-\mu_j^a
        \bigr)
        .
    \end{equation*}
    The terms $\xi_{j,a,i}$ are independent, mean-zero, and possess variance $\mathrm{Var}(\xi_{j,a,i}) = c_{j,a}^2\sigma_{j,a}^2/n_{j,a}^2$.
    Summing these variances confirms that $\mathrm{Var}(\sum \xi_{j,a,i}) = s^2(\lambda)$, which is strictly positive under Assumption~\ref{ass:asymptotic}.
    
    We verify the Lindeberg condition for this triangular array.
    Fix $\varepsilon>0$ and define the Lindeberg sum:
    \begin{equation*}
        L_n
        \coloneq
        \frac{1}{s^2(\lambda)}
        \sum_{(j,a)\in\mathcal{I}}\sum_{i=1}^{n_{j,a}}
        \mathbb{E}\Bigl[
            \xi_{j,a,i}^2 \boldsymbol{1}\bigl\{
                |\xi_{j,a,i}|>\varepsilon s(\lambda)
            \bigr\}
        \Bigr]
        .
    \end{equation*}
    Consider any $(j,a) \in \mathcal{I}$ where $\sigma_{j,a}^2 > 0$.
    Because $s^2(\lambda) \ge c_{j,a}^2\sigma_{j,a}^2/n_{j,a}$, we have $s(\lambda) \ge |c_{j,a}|\sigma_{j,a}/\sqrt{n_{j,a}}$.
    Consequently, the indicator condition implies $|Y_{j,a,i}-\mu_j^a| > \varepsilon s(\lambda) n_{j,a} / |c_{j,a}| \ge \varepsilon \sigma_{j,a} \sqrt{n_{j,a}}$.
    Exploiting the identically distributed nature of the observations within each group, the group's contribution to $L_n$ is bounded by:
    \begin{equation*}
        \frac{n_{j,a}}{s^2(\lambda)}
        \mathbb{E}\Bigl[
            \xi_{j,a,1}^2 \boldsymbol{1}\bigl\{
                |\xi_{j,a,1}| > \varepsilon s(\lambda)
            \bigr\}
        \Bigr]
        \le
        \frac{1}{\sigma_{j,a}^2}
        \mathbb{E}\Bigl[
            (Y_{j,a,1}-\mu_j^a)^2 \boldsymbol{1}\bigl\{
                |Y_{j,a,1}-\mu_j^a| > \varepsilon \sigma_{j,a}\sqrt{n_{j,a}}
            \bigr\}
        \Bigr]
        .
    \end{equation*}
    Since $\mathbb{E}[(Y_{j,a,1}-\mu_j^a)^2]=\sigma_{j,a}^2<\infty$ (Assumption~\ref{ass:sampling}) and $\sqrt{n_{j,a}} \to \infty$ (Assumption~\ref{ass:asymptotic}), the Dominated Convergence Theorem ensures this upper bound vanishes as $n \to \infty$.
    Because $\mathcal{I}$ is finite, $L_n \to 0$.
    The Lindeberg-Feller Central Limit Theorem therefore dictates that
    \begin{equation*}
        \frac{\hat{\theta}(\lambda)-\mathbb{E}[\hat{\theta}(\lambda)]}{s(\lambda)}
        \longrightarrow_d
        N(0,1)
        ,
    \end{equation*}
    completing the proof.
\end{proof}

\subsection{Proof of Theorem~\ref{thm:robust-size}}
\label{app:pf-thm-robust-size}
\begin{proof}[Proof of Theorem~\ref{thm:robust-size}]
    Fix $\lambda\in\Lambda$, and assume an arbitrary null configuration where $\theta_C\le 0$ and $P_H^a\in\mathcal{U}_a(\rho_a)$.
    Let $\Delta_a=\mu_H^a-\mu_C^a$ denote the resulting drifts.
    By Proposition~\ref{prop:clt} and Lemma~\ref{lem:consistency}, Slutsky's theorem implies:
    \begin{equation*}        
        \frac{\hat{\theta}(\lambda)-\bigl(\theta_C+w_1\Delta_1-w_0\Delta_0\bigr)}{\hat{s}(\lambda)}
        \longrightarrow_d
        N(0,1)
        .
    \end{equation*}
    We analyze the benchmark bias-corrected statistic:
    \begin{equation*}        
        \frac{\hat{\theta}(\lambda)-b_+(\lambda)}{\hat{s}(\lambda)}
        =
        \frac{\hat{\theta}(\lambda)-\bigl(\theta_C+w_1\Delta_1-w_0\Delta_0\bigr)}{\hat{s}(\lambda)}
        +
        \frac{\theta_C+w_1\Delta_1-w_0\Delta_0-b_+(\lambda)}{\hat{s}(\lambda)}
        .
    \end{equation*}
    By the supremum definition of $b_+(\lambda)$ in \eqref{eq:bplus-def} and the hypothesis $\theta_C \le 0$, the drift term $\theta_C+w_1\Delta_1-w_0\Delta_0-b_+(\lambda)$ is strictly non-positive.
    Coupled with $\hat{s}(\lambda) \to_p s(\lambda) > 0$, the statistic converges in distribution to $N(m,1)$ for some $m \le 0$.
    Since the upper-tail probability $\mathbb{P}(N(m,1)\ge z_{1-\alpha})$ is non-increasing in $m$, we have:
    \begin{equation*}        
        \limsup_{\min n_{C,a} \to \infty}
        \mathbb{P}\Biggl(
            \frac{\hat{\theta}(\lambda)-b_+(\lambda)}{\hat{s}(\lambda)} \ge z_{1-\alpha}
        \Biggr)
        \le
        \alpha
        .
    \end{equation*}

    For the practical implementation utilizing the plug-in $\hat{b}_+(\lambda)$ (binary outcomes), Lemma~\ref{lem:consistency} yields $\hat{b}_+(\lambda)-b_+(\lambda)\xrightarrow{p}0$.
    Applying Slutsky's theorem, replacing $b_+(\lambda)$ with $\hat{b}_+(\lambda)$ alters the test statistic by an $o_p(1)$ term, leaving the asymptotic upper-bound strictly preserved.
\end{proof}

\subsection{Proof of Proposition~\ref{prop:tight-min}}
\label{app:pf-prop-tight-min}
\begin{proof}[Proof of Proposition~\ref{prop:tight-min}]
    Fix $\lambda\in\Lambda$.
    To demonstrate both tightness and minimality, we explicitly construct a least-favorable null configuration satisfying $\theta_C=0$ and $P_H^a\in\mathcal{U}_a(\rho_a)$ that exactly attains the worst-case bias $w_1\Delta_1-w_0\Delta_0=b_+(\lambda)$.

    Fix any valid marginals $P_C^a \in \mathcal{P}_1(\mathcal{Z})$ ensuring $\mu_C^1 = \mu_C^0$ ($\theta_C=0$).
    If $\mathcal{Y}=\mathbb{R}$, we construct $P_H^1$ and $P_H^0$ using the translation maps $T_{1,+}(x,y)=(x,y+\rho_1)$ and $T_{0,-}(x,y)=(x,y-\rho_0)$ respectively.
    As verified in Proposition~\ref{prop:closed-form-delta} (i), this guarantees $P_H^a\in\mathcal{U}_a(\rho_a)$ and induces shifts $\Delta_1 = \rho_1 = \Delta_1^+(\rho_1)$ and $\Delta_0 = -\rho_0 = \Delta_0^-(\rho_0)$.

    If $\mathcal{Y}=\{0,1\}$, we apply the specific Markov kernels $K_{1,+}$ and $K_{0,-}$ detailed in the proof of Proposition~\ref{prop:closed-form-delta} (ii).
    This rigorously constructs $P_H^1, P_H^0$ within the Wasserstein balls that explicitly attain $\Delta_1 = \Delta_1^+(\rho_1)$ and $\Delta_0 = \Delta_0^-(\rho_0)$.

    In both scenarios, $\mathbb{E}[\hat{\theta}(\lambda)] = w_1\Delta_1^+(\rho_1) - w_0\Delta_0^-(\rho_0) = b_+(\lambda)$.
    Note also that $s^2(\lambda) \to 0$ as sample sizes diverge, ensuring $\hat{s}(\lambda) \to_p 0$.

    For part (i) (minimality), consider any constant $c < b_+(\lambda)$.
    The test statistic decomposes as:
    \begin{equation*}
        \frac{\hat{\theta}(\lambda)-c}{\hat{s}(\lambda)}
        =
        \frac{\hat{\theta}(\lambda)-b_+(\lambda)}{\hat{s}(\lambda)}
        +
        \frac{b_+(\lambda)-c}{\hat{s}(\lambda)}
        .
    \end{equation*}
    Under the constructed configuration, the first term converges in distribution to $N(0,1)$.
    The second term features a strictly positive numerator ($b_+(\lambda)-c > 0$) and a denominator vanishing in probability ($\hat{s}(\lambda) \to_p 0$), driving the ratio to $+\infty$ in probability.
    Consequently, the test statistic diverges to $+\infty$, leading to a rejection probability of 1.
    Thus, $\liminf \mathbb{P}(\varphi_{\lambda,c}=1) = 1 > \alpha$.

    For part (ii) (tightness), utilizing $c = b_+(\lambda)$ under the identically constructed configuration zeroes out the drift fraction.
    The statistic converges strictly to $N(0,1)$, meaning $\lim \mathbb{P}(\varphi_{\lambda}=1) = \alpha$.
    Combining this with the uniform upper bound established in Theorem~\ref{thm:robust-size} confirms that the supremum equals precisely $\alpha$.
\end{proof}

\subsection{Proof of Theorem~\ref{thm:robust-power}}
\label{app:pf-thm-robust-power}
\begin{proof}[Proof of Theorem~\ref{thm:robust-power}]
    Fix a true alternative $\theta_C=\theta_1>0$, and consider any admissible historical data configuration $P_H^a\in\mathcal{U}_a(\rho_a)$.
    Following the familiar asymptotic framework (Proposition~\ref{prop:clt} and Lemma~\ref{lem:consistency}):
    \begin{equation*}        
        \frac{\hat{\theta}(\lambda)-b_+(\lambda)}{\hat{s}(\lambda)}
        \longrightarrow_d
        N\Biggl(
            \frac{\theta_1+w_1\Delta_1-w_0\Delta_0-b_+(\lambda)}{s(\lambda)}
            ,
            1
        \Biggr)
        .
    \end{equation*}
    The probability of rejection thus limits to $1-\Phi\bigl(z_{1-\alpha} - u(\Delta_0,\Delta_1)\bigr)$, where $u(\Delta_0,\Delta_1)$ is the noncentrality parameter:
    \begin{equation*}        
        u(\Delta_0,\Delta_1)
        \coloneq
        \frac{\theta_1+w_1\bigl(\Delta_1-\Delta_1^+(\rho_1)\bigr)-w_0\bigl(\Delta_0-\Delta_0^-(\rho_0)\bigr)}{s(\lambda)}
        .
    \end{equation*}
    Because $\Phi$ is strictly monotone, the robust power is defined by the infimum of this parameter over the ambiguity sets.
    Since $w_a \ge 0$, $u(\Delta_0,\Delta_1)$ is minimized by selecting the smallest possible $\Delta_1$ and the largest possible $\Delta_0$ within their permissible bounds.
    Evaluating at $\Delta_1 = \Delta_1^-(\rho_1)$ and $\Delta_0 = \Delta_0^+(\rho_0)$ dictates the minimum robust noncentrality parameter:
    \begin{equation*}        
        \kappa(\lambda)
        =
        \frac{\theta_1
        - w_1\bigl(\Delta_1^+(\rho_1)-\Delta_1^-(\rho_1)\bigr)
        - w_0\bigl(\Delta_0^+(\rho_0)-\Delta_0^-(\rho_0)\bigr)}{s(\lambda)}
        .
    \end{equation*}
    This establishes the limiting robust power.
\end{proof}

\subsection{Proof of Corollary~\ref{cor:lambda-opt}}
\label{app:pf-cor-lambda-opt}
\begin{proof}[Proof of Corollary~\ref{cor:lambda-opt}]
    The objective function $\kappa(\lambda)$ is composed of several mappings.
    The effective weights $w_a(\lambda_a) = \lambda_a n_{H,a}/(n_{C,a}+\lambda_a n_{H,a})$ are continuous on $[0,\Lambda_a]$.
    The variance function $s^2(\lambda)$ is a finite polynomial of continuous functions, thus continuous on $\Lambda$, and strictly positive (Assumption~\ref{ass:asymptotic}), ensuring $s(\lambda)$ is continuous.
    Since the drift bounds $\Delta_a^\pm(\rho_a)$ are independent of $\lambda$, $\kappa(\lambda)$ is continuous over the parameter space $\Lambda$.
    Because $\Lambda$ is a compact interval, the Extreme Value Theorem guarantees that $\kappa(\lambda)$ attains a global maximum on $\Lambda$.
\end{proof}

\subsection{Proof of Proposition~\ref{prop:adaptive-robust-size}}
\label{app:pf-prop-adaptive-robust-size}
\begin{proof}[Proof of Proposition~\ref{prop:adaptive-robust-size}]    
    We fix an arbitrary null configuration where $\theta_C\le 0$ and $P_H^a\in\mathcal{U}_a(\rho_a)$.
    First, we establish the uniform convergence of the empirical objective function.
    We claim that
    \begin{equation}
    \label{eq:uniform-kappa-bar}
        \sup_{\lambda\in\Lambda}\bigl|
            \hat{\bar{\kappa}}(\lambda)-\bar{\kappa}(\lambda)
        \bigr|
        \longrightarrow_p
        0
        .
    \end{equation}
    This follows from three facts:
    (i) $w_a(\lambda_a)$ is continuous and bounded on $[0,\Lambda_a]$,
    (ii) sample variances $\hat{\sigma}_{j,a}^2$ consistently estimate $\sigma_{j,a}^2$, and
    (iii) the scaled variance $a_n s(\lambda)$ (where $a_n \coloneq \sqrt{n_C+n_H}$) is bounded away from $0$ uniformly over $\Lambda$.
    Thus, $\sup_{\lambda\in\Lambda}| \{a_n\hat{s}(\lambda)\}^{-1} - \{a_n s(\lambda)\}^{-1} | \to_p 0$.
    Coupled with the consistency of the plug-in bounds $\hat{D}_a(\rho_a)$ (for binary outcomes via CMT), \eqref{eq:uniform-kappa-bar} is verified.

    Given the well-separated maximizer condition (Assumption~\ref{ass:adaptive-identification}), standard $M$-estimation consistency arguments dictate that if $\|\hat{\lambda}-\lambda^\ast\|_2 \ge \varepsilon$, then $\bar{\kappa}(\lambda^\ast) - \bar{\kappa}(\hat{\lambda}) \ge \delta_\varepsilon$.
    This implies $\delta_\varepsilon \le 2\sup_{\lambda\in\Lambda} | \hat{\bar{\kappa}}(\lambda)-\bar{\kappa}(\lambda) |$.
    The uniform convergence from \eqref{eq:uniform-kappa-bar} guarantees that the probability of this event vanishes, proving $\hat{\lambda}\to_p \lambda^\ast$.
    
    Now, define the centered studentized process $U_n(\lambda) \coloneq (\hat{\theta}(\lambda)-\mathbb{E}[\hat{\theta}(\lambda)])/\hat{s}(\lambda)$.
    Because $\hat{\lambda}\to_p \lambda^\ast$, the continuous mapping ensures $w_a(\hat{\lambda}_a) \to_p w_a(\lambda_a^\ast)$.
    Algebraic decomposition shows that the difference between the centered means at $\hat{\lambda}$ and $\lambda^\ast$ is $o_p(a_n^{-1})$.
    Since $\hat{s}(\lambda^\ast) = \mathcal{O}(a_n^{-1})$, we secure $U_n(\hat{\lambda})-U_n(\lambda^\ast)\to_p 0$, which yields $U_n(\hat{\lambda}) \to_d N(0,1)$.

    The adaptive test statistic expands as:
    \begin{equation*}        
        T_n(\hat{\lambda})
        \coloneq
        \frac{\hat{\theta}(\hat{\lambda})-b_+(\hat{\lambda})}{\hat{s}(\hat{\lambda})}
        =
        U_n(\hat{\lambda})
        +
        \frac{\mathbb{E}[\hat{\theta}(\hat{\lambda})]-b_+(\hat{\lambda})}{\hat{s}(\hat{\lambda})}
        .
    \end{equation*}
    By the supremum definition of $b_+(\lambda)$, $\mathbb{E}[\hat{\theta}(\lambda)]-b_+(\lambda) \le \theta_C \le 0$ holds deterministically for all $\lambda$.
    Thus, the second term is non-positive almost surely, implying $\{T_n(\hat{\lambda})\ge z_{1-\alpha}\} \subseteq \{U_n(\hat{\lambda})\ge z_{1-\alpha}\}$.
    Taking limits bounds the rejection probability cleanly by $\alpha$.
\end{proof}

\subsection{Proof of Lemma~\ref{app:lem:w-lambda}}
\label{app:pf-lem-w-lambda}
\begin{proof}[Proof of Lemma~\ref{app:lem:w-lambda}]
    For $\lambda\ge 0$, the function $w(\lambda)=\lambda n_H/(n_C+\lambda n_H)$ is strictly increasing and continuous, with boundary values $w(0)=0$ and $\lim_{\lambda\to\infty}w(\lambda)=1$.
    Thus, it is a bijection onto $[0,1)$.
    Isolating $\lambda$ from $w = \lambda n_H/(n_C+\lambda n_H)$ yields $w n_C = \lambda n_H(1-w)$, and algebraic rearrangement immediately gives $\lambda = (n_C/n_H) [w/(1-w)]$.
\end{proof}

\subsection{Proof of Lemma~\ref{app:lem:ttp-ebw}}
\label{app:pf-lem-ttp-ebw}
\begin{proof}[Proof of Lemma~\ref{app:lem:ttp-ebw}]
    If $\hat{\eta}=0$, then $\hat{\lambda}=0$, which forces $\hat{\mu}(\hat{\lambda})=\hat{\mu}(0)=\bar{Y}_C=\hat{\mu}_{\mathrm{TTP}}$.
    If $\hat{\eta}=1$, then $\hat{\lambda}=\lambda_{\mathrm{pool}}$, verifying $\hat{\mu}(\hat{\lambda})=\hat{\mu}(\lambda_{\mathrm{pool}})=\hat{\mu}_{\mathrm{TTP}}$.
    The affine formulation strictly follows the definition of the effective weight $w(\cdot)$.
\end{proof}

\subsection{Proof of Lemma~\ref{app:lem:commensurate-ebw}}
\label{app:pf-lem-commensurate-ebw}
\begin{proof}[Proof of Lemma~\ref{app:lem:commensurate-ebw}]
    Fix $\tau>0$ and assume $n_C\ge 1$ and $n_H\ge 1$.
    Since the normal likelihood admits $\bar{Y}_j$ as a sufficient statistic for $\mu_j$, we work with the arm-level summaries
    \begin{equation*}        
        \bar{Y}_H \mid \mu_H
        \sim
        N\Biggl(
            \mu_H,\frac{\sigma^2}{n_H}
        \Biggr)
        ,
        \quad
        \bar{Y}_C \mid \mu_C
        \sim
        N\Biggl(
            \mu_C,\frac{\sigma^2}{n_C}
        \Biggr)
        ,
    \end{equation*}
    and the two summaries are independent across trials.
    
    Under the flat prior $\pi(\mu_H)\propto 1$, conjugacy yields the historical posterior
    \begin{equation*}
        \mu_H \mid D_H
        \sim
        N\Biggl(
            \bar{Y}_H,\frac{\sigma^2}{n_H}
        \Biggr)
        .
    \end{equation*}
    The commensurate link is $\mu_C\mid \mu_H,\tau \sim N(\mu_H,\tau^{-1})$.
    Integrating out $\mu_H$ under its posterior distribution given $D_H$ therefore gives the induced (predictive) prior for $\mu_C$
    conditional on $(D_H,\tau)$:
    \begin{equation*}
        \mu_C \mid D_H,\tau
        \sim
        N\Biggl(
            \bar{Y}_H
            ,\;
            \frac{\sigma^2}{n_H}+\tau^{-1}
        \Biggr)
        ,
    \end{equation*}
    using the fact that a normal location mixture with normal mixing distribution remains normal, with variances adding.
    
    Write the prior variance as $\sigma^2/m_{\mathrm{eff}}(\tau)$, where
    \begin{equation*}
        m_{\mathrm{eff}}(\tau)
        \coloneq
        \frac{\sigma^2}{\sigma^2/n_H+\tau^{-1}}
        =
        \frac{n_H}{1+n_H/(\sigma^2\tau)}
        .
    \end{equation*}
    Then we can equivalently express the induced prior as
    \begin{equation*}
        \mu_C \mid D_H,\tau
        \sim
        N\Biggl(
            \bar{Y}_H
            ,
            \frac{\sigma^2}{m_{\mathrm{eff}}(\tau)}
        \Biggr)
        ,
    \end{equation*}
    so that the corresponding prior precision is $m_{\mathrm{eff}}(\tau)/\sigma^2$.
    
    Updating this prior with the current likelihood (equivalently, with $\bar{Y}_C$) yields the posterior
    \begin{equation*}
        \mu_C \mid D_C,D_H,\tau
        \sim
        N\Biggl(
            \frac{\frac{n_C}{\sigma^2}\bar{Y}_C+\frac{m_{\mathrm{eff}}(\tau)}{\sigma^2}\bar{Y}_H}
            {\frac{n_C}{\sigma^2}+\frac{m_{\mathrm{eff}}(\tau)}{\sigma^2}}
            ,\;
            \frac{1}{\frac{n_C}{\sigma^2}+\frac{m_{\mathrm{eff}}(\tau)}{\sigma^2}}
        \Biggr)
        .
    \end{equation*}
    Therefore, the posterior mean is
    \begin{equation*}
        \mathbb{E}[\mu_C\mid D_C,D_H,\tau]
        =
        \frac{n_C\bar{Y}_C+m_{\mathrm{eff}}(\tau)\bar{Y}_H}{n_C+m_{\mathrm{eff}}(\tau)}
        =
        (1-w(\tau))\bar{Y}_C+w(\tau)\bar{Y}_H,
        \quad
        w(\tau)=\frac{m_{\mathrm{eff}}(\tau)}{n_C+m_{\mathrm{eff}}(\tau)}
        .
    \end{equation*}
    
    Finally, defining $\lambda(\tau)\coloneq m_{\mathrm{eff}}(\tau)/n_H$ gives
    \begin{equation*}
        w(\tau)
        =
        \frac{\lambda(\tau)n_H}{n_C+\lambda(\tau)n_H}
        ,
    \end{equation*}
    which is exactly the EBW form.
    Moreover,
    \begin{equation*}
        \lambda(\tau)
        =
        \frac{m_{\mathrm{eff}}(\tau)}{n_H}
        =
        \frac{\sigma^2\tau}{\sigma^2\tau+n_H}
        \in
        (0,1)
        ,
    \end{equation*}
    as claimed.
\end{proof}

\subsection{Proof of Lemma~\ref{app:lem:map-ebw}}
\label{app:pf-lem-map-ebw}
\begin{proof}[Proof of Lemma~\ref{app:lem:map-ebw}]
    The marginal posterior under a mixture prior is a mixture of the component-wise posteriors.
    Taking the expectation yields $\mathbb{E}[\mu\mid D_C,D_H] = \sum_{k=1}^K \omega_k^{\mathrm{post}} \mathbb{E}_k[\mu\mid D_C]$.
    Substituting the conjugate affine form $\mathbb{E}_k[\mu\mid D_C] = (1-w_k)\bar{Y}_C+w_k m_k$ and aggregating terms over $\bar{Y}_C$ gives:
    \begin{equation*}        
        \mathbb{E}[\mu\mid D_C,D_H]
        =
        \Bigl(
            1-\sum_{k=1}^K \omega_k^{\mathrm{post}}w_k
        \Bigr)\bar{Y}_C
        +
        \sum_{k=1}^K \omega_k^{\mathrm{post}}w_k m_k
        ,
    \end{equation*}
    which structurally matches $(1-w_{\mathrm{eff}})\bar{Y}_C + w_{\mathrm{eff}}m_{\mathrm{eff}}$ when the aggregate parameters are defined as stated.
\end{proof}

\subsection{Proof of Proposition~\ref{app:prop:bplus-multisource}}
\label{app:pf-prop-bplus-multisource}
\begin{proof}[Proof of Proposition~\ref{app:prop:bplus-multisource}]
    This expands the logic of Proposition~\ref{prop:bplus}.
    We define deterministic coefficients $\alpha_{k,a} \coloneq c_aw_{k,a}(\lambda)$ and functional components $g_{a}(Q) \coloneq \mathbb{E}_{Q}[Y]-\mu_C^a$.
    The multi-source worst-case bias \eqref{app:eq:bplus-multisource-def} demands evaluating:
    \begin{equation*}
        b_{+}(\lambda;c)
        =
        \sup_{(Q_{k,a})\in\prod \mathcal{U}_{k,a}} \sum_{k=1}^J\sum_{a\in\mathcal{A}} \alpha_{k,a}g_a(Q_{k,a})
        .
    \end{equation*}
    Because the joint ambiguity space is constructed as a Cartesian product of independent sets, the global supremum of the sum cleanly decomposes into the sum of individual suprema over each specific arm and source constraint.
    For each index $(k,a)$, optimizing $\alpha_{k,a}g_a(Q)$ isolates $\Delta_{k,a}^{+}(\rho_{k,a})$ when $\alpha_{k,a} \ge 0$, and isolates $\Delta_{k,a}^{-}(\rho_{k,a})$ when $\alpha_{k,a} < 0$, precisely determined by the sign of the user-defined contrast coefficient $c_a$.
\end{proof}

\subsection{Proof of Theorem~\ref{app:thm:multisource-size}}
\label{app:pf-thm-multisource-size}
\begin{proof}[Proof of Theorem~\ref{app:thm:multisource-size}]
    We outline the necessary extensions to Theorem~\ref{thm:robust-size} and Proposition~\ref{prop:clt} to address the multi-source topology.
    Let $\mathcal{I}$ index all valid observed cohorts across the current and $J$ historical trials.
    We embed the contrast coefficients and effective weights into aggregate parameters $\beta_{j,a} \coloneq c_aw_{j,a}(\lambda)$, allowing the estimator to be written as $\hat{\theta}(\lambda;c) = \sum_{\mathcal{I}} \beta_{j,a}\bar{Y}_{j,a}$.

    The centered statistic mirrors the triangular array $\xi_{j,a,i} \coloneq \frac{\beta_{j,a}}{n_{j,a}}(Y_{j,a,i}-\mu_j^a)$.
    Because observations remain mutually independent across cohorts (Assumption~\ref{app:ass:multisource-sampling}) and minimum sample sizes diverge (Assumption~\ref{app:ass:multisource-asymptotic}), the Lindeberg condition verification proceeds exactly as in Appendix~\ref{app:pf-prop-clt}.
    The Dominated Convergence Theorem neutralizes the variance tails, securing CLT normality.
    Slutsky's theorem alongside the consistent sample variances (via WLLN) confirms:
    \begin{equation*}
        \frac{\hat{\theta}(\lambda;c)-\mathbb{E}\bigl[
            \hat{\theta}(\lambda;c)
        \bigr]}{\hat{s}(\lambda;c)}
        \longrightarrow_d
        N(0,1)
        .
    \end{equation*}
    Under the null $H_0(c)\colon\theta_C(c)\le 0$, the exact true drift $\sum \sum c_a w_{k,a}(\lambda)\Delta_{k,a}$ is inherently bounded above by the supremum optimization defining $b_+(\lambda;c)$.
    Thus, $\mathbb{E}[\hat{\theta}(\lambda;c)] - b_+(\lambda;c) \le 0$.
    Subtracting this strictly non-positive drift ensures the resulting distribution is stochastically bounded by the standard normal upper tail, restricting asymptotic rejections tightly to $\alpha$.
    The inclusion of plug-in estimates for binary boundaries operates on the exact probabilistic equivalence mechanism defined in Theorem~\ref{thm:robust-size}.
\end{proof}

\subsection{Proof of Proposition~\ref{app:prop:bminus}}
\label{app:pf-prop-bminus}
\begin{proof}[Proof of Proposition~\ref{app:prop:bminus}]
    By defining the worst-case negative bias as an infimum over separable product sets, the optimization functionally decouples:
    \begin{equation*}        
        b_-(\lambda)
        =
        \inf_{Q_1\in\mathcal{U}_1(\rho_1)} \Bigl[
            w_1(\lambda_1)\bigl(
                \mathbb{E}_{Q_1}[Y]-\mu_C^1
            \bigr)
        \Bigr]
        +
        \inf_{Q_0\in\mathcal{U}_0(\rho_0)} \Bigl[
            -w_0(\lambda_0)\bigl(
                \mathbb{E}_{Q_0}[Y]-\mu_C^0
            \bigr)
        \Bigr]
        .
    \end{equation*}
    Because $w_1(\lambda_1) \ge 0$, minimizing the first component naturally retrieves $\Delta_1^-(\rho_1)$.
    For the second term, applying the fundamental equivalence $\inf (-cx) = -c \sup (x)$ translates the target to $-w_0(\lambda_0)\Delta_0^+(\rho_0)$.
    Inserting the distinct explicit bounds identified in Proposition~\ref{prop:closed-form-delta} generates the definitive analytical constraints shown.
\end{proof}

\subsection{Proof of Lemma~\ref{app:lem:consistency-bminus}}
\label{app:pf-lem-consistency-bminus}
\begin{proof}[Proof of Lemma~\ref{app:lem:consistency-bminus}]
    By the WLLN (under Assumptions~\ref{ass:sampling} and \ref{ass:asymptotic}), the empirical rates consistently reflect the true means: $\bar{Y}_{C,a}\to_p \mu_C^a$.
    For any specific scalar boundary $\rho\ge 0$, the mathematical projections $\mu \mapsto \min\{\rho,\mu\}$ and $\mu \mapsto \min\{\rho,1-\mu\}$ are demonstrably continuous globally over $[0,1]$.
    Therefore, the CMT dictates that the empirical bounds forming $\hat{b}_-(\lambda)$ converge uniformly in probability to the deterministic true bound $b_-(\lambda)$.
\end{proof}

\subsection{Proof of Theorem~\ref{app:thm:robust-size-two-sided}}
\label{app:pf-thm-robust-size-two-sided}
\begin{proof}[Proof of Theorem~\ref{app:thm:robust-size-two-sided}]
    Fix $\lambda\in\Lambda$ under an explicit true null scenario where $\theta_C=0$ and bounds $P_H^a\in\mathcal{U}_a(\rho_a)$ hold.
    Identifying true integrated bias as $B \coloneq w_1(\lambda_1)\Delta_1 - w_0(\lambda_0)\Delta_0$, standard convergence (Proposition~\ref{prop:clt} and Lemma~\ref{lem:consistency}) guarantees that the variable $Z_n \coloneq (\hat{\theta}(\lambda)-B)/\hat{s}(\lambda)$ converges precisely to $N(0,1)$.

    For the upper bound analysis, substituting $b_+(\lambda)$ constructs a statistic driven by $Z_n + \delta_n$ where the deterministic deviation is $\delta_n \coloneq (B-b_+(\lambda))/\hat{s}(\lambda)$.
    The mathematical supremum guarantees $B \le b_+(\lambda)$, rendering $\delta_n \le 0$.
    Consequently, limiting upper tail significance remains rigidly confined to $\alpha/2$.

    Symmetrically, evaluating the lower bound boundary constructs the equivalent statistic $Z_n + \eta_n$ utilizing the differential limit $\eta_n \coloneq (B-b_-(\lambda))/\hat{s}(\lambda)$.
    Since $b_-(\lambda)$ functions as the established minimum infimum limit, $B \ge b_-(\lambda)$ enforcing $\eta_n \ge 0$.
    Evaluating against the lower quantile threshold predictably confines statistical impact to the complementary $\alpha/2$.

    Aggregating probabilities over the disjoint limits firmly restrains total statistical variation mapping to $\alpha/2 + \alpha/2 = \alpha$.
    Incorporating binary plug-in estimators retains validity solely by diverging asymptotically by functionally negligible $o_p(1)$ margins.
\end{proof}
\section{Detailed Numerical Experiments}
\label{app:sec:detailed-numerical-experiments}

\subsection{Detailed Experimental Setup}
\label{app:subsec:detailed-experimental-setup}

\subsubsection{Data-Generating Mechanism and Heterogeneity Scenarios}
\label{app:subsubsec:dgp}

We simulated a current randomized trial ($j=C$) and a historical dataset ($j=H$) with baseline covariates $X\in\mathbb{R}^p$ ($p=2$).
Treatment assignment $A\in\{0,1\}$ followed $\mathbb{P}(A=1)=\pi$ in trials that included both arms.
Covariates were generated as $X_{C,i}\sim N(0,I_p)$ and $X_{H,i}\sim N(\gamma m, I_p)$, where the vector $m\in\{0,1\}^p$ dictates the presence of covariate shift and the scalar $\gamma\ge 0$ controls the magnitude of heterogeneity.
The linear predictor was formulated as
\begin{equation*}    
    \eta_{j,i}
    =
    \beta_0 + X_{j,i}^\top \beta
    + \theta A_{j,i}
    + A_{j,i} X_{j,i}^\top \eta
    + u_{j,0} + (u_{j,1}-u_{j,0})A_{j,i}
    ,
\end{equation*}
where $\theta$ is the main treatment effect, $\eta$ encodes treatment effect modification, and $(u_{j,0},u_{j,1})$ dictate arm-specific drifts.
Outcomes were generated conditionally: $Y_{j,i} = \eta_{j,i} + \varepsilon_{j,i}$ with $\varepsilon_{j,i}\sim N(0,\sigma^2)$ for continuous endpoints, and $Y_{j,i} \sim \mathrm{Bernoulli}(\mathrm{expit}(\eta_{j,i}))$ for binary endpoints.
We fixed $(\beta_0,\beta,\sigma,\pi)=(0,(0.5,0.5)^\top,1,0.5)$ for continuous outcomes and $(\beta_0,\beta,\pi)=(-1,(0.5,0.5)^\top,0.5)$ for binary outcomes.
Effect modification, when present, was set to $\eta=(0.3,0.3)^\top$; otherwise $\eta=0$.

We evaluated three noncommensurability cases across a grid of $\gamma$ ranging from $0$ to $2$ in increments of $0.1$:
\begin{itemize}
    \item Commensurate: $m=0$, $\eta=0$, $(u_{H,0},u_{H,1})=(0,0)$; historical data include both arms.
    
    \item Covariate shift + effect modification: $m=\mathbf{1}_p$, $\eta=(0.3,0.3)^\top$, $(u_{H,0},u_{H,1})=(0,0)$; historical data include both arms.
    
    \item Control drift (historical control-only): $m=0$, $\eta=0$, $(u_{H,0},u_{H,1})=(\gamma,0)$; historical data include only control subjects ($A_{H,i}\equiv 0$).
\end{itemize}

\subsubsection{Trial Parameters and Calibration}
\label{app:subsubsec:parameter-calibration}

The current trial size was fixed at $n_C=200$ with $1:1$ allocation.
The historical sample size was $n_H=500$, allocated either equally across arms when historical treatment data were available, or entirely to the control arm in the historical control-only case.
We evaluated the one-sided test $H_0\colon\theta_C\le 0$ at the nominal level $\alpha=0.025$.
The target alternative was set to $\theta_1=0.3$ on the mean-difference scale.
For continuous outcomes, we set $\theta=0$ under the null and $\theta=\theta_1$ under the alternative.
For binary outcomes, we calibrated $\theta$ via Monte Carlo root-finding to exactly match the marginal risk difference $\theta \in \{0, \theta_1\}$.
Empirical type I error and power were estimated using $20{,}000$ and $10{,}000$ independent Monte Carlo replications, respectively.

\subsubsection{Implementation Details and Baselines}
\label{app:subsubsec:implementation}

We evaluated BOND under two Wasserstein radius configurations:
(i) an oracle radius $\rho_a=|\mu_H^a-\mu_C^a|$ derived from the true data-generating distributions under the null, and
(ii) a data-driven proxy $\hat{\rho}_a=c\widehat{W}_1(\widehat{P}_{C}^a,\widehat{P}_{H}^a)$ with an inflation multiplier $c=1.5$, utilizing the empirical 1-Wasserstein distance between the observed arm-specific outcome distributions.
BOND optimized $(\lambda_0,\lambda_1) \in [0,1]^2$ over a uniform grid to maximize the robust noncentrality parameter (with $\lambda_1$ fixed to $0$ when historical treatment data were unavailable).

We compared BOND against baseline methods implemented, configured with standard weakly informative base priors (e.g., diffuse normals or $\mathrm{Beta}(1,1)$).
Key hyperparameters included:
fixed EBW $\lambda\in\{0.25,0.5,0.75\}$;
power prior $\lambda=0.5$;
commensurate precision $\tau=1$;
robust MAP with vague weight $\varepsilon=0.2$ and informative components $\lambda\in\{0.25,1\}$;
elastic prior scale $=1$;
UIP $M=100$;
LEAP with $50\%$ prior exchangeability and a nonexchangeable variance inflation factor of $9$;
MEM with $50\%$ inclusion probability;
and TTP with a screening threshold $\alpha_{\mathrm{pool}}=0.1$.

\subsection{Detailed Numerical Results}
\label{app:subsec:detailed-numerical-results}

This subsection provides the exhaustive simulation outputs, supplementing the abbreviated summaries presented in Section~\ref{subsec:num-results}.
Figures~\ref{fig:app-type1-power-continuous-S0}--\ref{fig:app-type1-power-binary-S3} display the full trajectories of empirical type I error and power across the heterogeneity index $\gamma$ for all evaluated methods, covering both continuous and binary outcomes under both oracle and data-driven radius specifications.
Tables~\ref{tab:app-worst-cont-oracle}--\ref{tab:app-worst-bin-data} consolidate these curves by reporting the maximum type I error and minimum power over the evaluated $\gamma$ grid.
Note that the standard errors for the Monte Carlo estimates are approximately $0.0011$ for type I error and $\le 0.005$ for power.

\subsubsection{Operating Characteristics Under Heterogeneity}
\label{app:subsubsec:characteristics-hetero}
The comprehensive results starkly illustrate the vulnerability of standard borrowing techniques.
Methods enforcing a minimum degree of borrowing (e.g., fixed-discount rules, standard power priors, and naive pooling) invariably exhibit severe size distortions as the heterogeneity index $\gamma$ increases, with type I errors frequently approaching $1.0$ under the Covariate shift + effect modification scenario.
Dynamic borrowing methods designed to handle conflict (e.g., robust MAP, commensurate priors) manage to rein in type I error to varying degrees.
However, some of these methods can become excessively conservative in the Control drift scenario;
for instance, robust MAP suffers a severe collapse in power for continuous outcomes (with worst-case power dropping to near zero), whereas commensurate priors manage to retain moderate power.
In contrast, BOND reliably controls the maximum type I error near the nominal $0.025$ level (staying bounded below $\approx 0.029$) across all configurations while systematically averting power collapse.
It achieves this by adaptively reverting to the baseline Current-only performance in the most severe conflict scenarios.

\subsubsection{Calibrated Borrowing Profiles}
\label{app:subsubsec:calibrated-borrow-profile}
Figures~\ref{fig:app-lambda-continuous-S0}--\ref{fig:app-lambda-binary-S3} map the borrowing parameters $(\lambda_a^\ast)$ and the resulting effective weights $(w_a(\lambda_a^\ast))$ selected by BOND as a function of $\gamma$.
Under the oracle radii, BOND aggressively borrows (assigning near full weight) when true noncommensurability is minimal ($\rho_a \approx 0$). As the true discrepancy increases, these weights precipitously drop to zero.
The data-driven Wasserstein radii ($\hat{\rho}_a$) naturally introduce finite-sample variability, generally resulting in more conservative (larger) estimated radii even under true commensurability.
Consequently, the data-driven BOND specification borrows less aggressively at $\gamma=0$ compared to the oracle version.
As detailed in the continuous outcome tables, this trades a moderate reduction in maximum power (e.g., from $0.896$ under the oracle to $0.773$ under the data-driven approach) for strictly data-adaptive robustness and valid type I error control without relying on unobservable true parameters.

\begin{table}[tb]
\centering
\caption{
Worst-case operating characteristics over the heterogeneity grid $\gamma\in\{0,0.1,\dots,2\}$ for continuous outcomes under the oracle radius specification.
For each method and case, we report $\max_{\gamma}\widehat{\mathrm{TypeI}}(\gamma)$ and $\min_{\gamma}\widehat{\mathrm{Power}}(\gamma)$.
}
\label{tab:app-worst-cont-oracle}
\small
\renewcommand{\arraystretch}{1.1}
\setlength{\tabcolsep}{3pt}
\begin{tabular}{p{3.6cm}rrrrrr}
\toprule
& \multicolumn{2}{c}{Commensurate} & \multicolumn{2}{c}{Covariate shift + effect modification} & \multicolumn{2}{c}{Control drift (historical control-only)} \\
\cmidrule(lr){2-3}\cmidrule(lr){4-5}\cmidrule(lr){6-7}
Method & $\max_\gamma\widehat{\mathrm{TypeI}}$ & $\min_\gamma\widehat{\mathrm{Power}}$ & $\max_\gamma\widehat{\mathrm{TypeI}}$ & $\min_\gamma\widehat{\mathrm{Power}}$ & $\max_\gamma\widehat{\mathrm{TypeI}}$ & $\min_\gamma\widehat{\mathrm{Power}}$ \\
\midrule
Current-only
& $0.027$ & $0.401$
& $0.028$ & $0.331$
& $0.028$ & $0.400$ \\

Naive pooling
& $0.027$ & $0.896$
& $1.000$ & $0.825$
& $0.027$ & $0.000$ \\

Fixed $\lambda=0.25$
& $0.028$ & $0.736$
& $0.945$ & $0.646$
& $0.025$ & $0.000$ \\

Fixed $\lambda=0.50$
& $0.028$ & $0.857$
& $1.000$ & $0.779$
& $0.026$ & $0.000$ \\

Fixed $\lambda=0.75$
& $0.028$ & $0.888$
& $1.000$ & $0.818$
& $0.027$ & $0.000$ \\

Power prior ($\lambda=0.50$)
& $0.013$ & $0.767$
& $0.982$ & $0.664$
& $0.021$ & $0.000$ \\

Commensurate prior ($\tau=1.00$)
& $0.027$ & $0.406$
& $0.040$ & $0.345$
& $0.026$ & $0.357$ \\

Robust MAP ($\epsilon=0.20$)
& $0.012$ & $0.737$
& $0.082$ & $0.144$
& $0.028$ & $0.028$ \\

Elastic prior(scale=1.00)
& $0.019$ & $0.784$
& $0.094$ & $0.560$
& $0.024$ & $0.003$ \\

UIP ($M=100$)
& $0.012$ & $0.712$
& $0.953$ & $0.608$
& $0.017$ & $0.000$ \\

LEAP
& $0.011$ & $0.643$
& $0.052$ & $0.274$
& $0.018$ & $0.066$ \\

MEM
& $0.019$ & $0.583$
& $0.035$ & $0.285$
& $0.028$ & $0.283$ \\

BHMOI
& $0.029$ & $0.432$
& $0.031$ & $0.334$
& $0.028$ & $0.398$ \\

Nonparametric Bayes
& $0.027$ & $0.895$
& $0.189$ & $0.197$
& $0.028$ & $0.032$ \\

TTP
& $0.050$ & $0.820$
& $0.145$ & $0.331$
& $0.035$ & $0.383$ \\

BOND
& $0.028$ & $0.896$
& $0.029$ & $0.331$
& $0.028$ & $0.400$ \\

\bottomrule
\end{tabular}
\end{table}

\begin{table}[tb]
\centering
\caption{Worst-case operating characteristics over the heterogeneity grid $\gamma\in\{0,0.1,\dots,2\}$ for continuous outcomes under the data-driven (Wasserstein-based) radius specification with inflation multiplier $c=1.5$.}
\label{tab:app-worst-cont-data}
\small
\renewcommand{\arraystretch}{1.1}
\setlength{\tabcolsep}{3pt}
\begin{tabular}{p{3.6cm}rrrrrr}
\toprule
& \multicolumn{2}{c}{Commensurate} & \multicolumn{2}{c}{Covariate shift + effect modification} & \multicolumn{2}{c}{Control drift (historical control-only)} \\
\cmidrule(lr){2-3}\cmidrule(lr){4-5}\cmidrule(lr){6-7}
Method & $\max_\gamma\widehat{\mathrm{TypeI}}$ & $\min_\gamma\widehat{\mathrm{Power}}$ & $\max_\gamma\widehat{\mathrm{TypeI}}$ & $\min_\gamma\widehat{\mathrm{Power}}$ & $\max_\gamma\widehat{\mathrm{TypeI}}$ & $\min_\gamma\widehat{\mathrm{Power}}$ \\
\midrule
Current-only
& $0.029$ & $0.402$
& $0.027$ & $0.326$
& $0.028$ & $0.400$ \\

Naive pooling
& $0.028$ & $0.892$
& $1.000$ & $0.826$
& $0.026$ & $0.000$ \\

Fixed $\lambda=0.25$
& $0.029$ & $0.734$
& $0.949$ & $0.647$
& $0.026$ & $0.000$ \\

Fixed $\lambda=0.50$
& $0.028$ & $0.854$
& $1.000$ & $0.777$
& $0.026$ & $0.000$ \\

Fixed $\lambda=0.75$
& $0.029$ & $0.886$
& $1.000$ & $0.820$
& $0.026$ & $0.000$ \\

Power prior ($\lambda=0.50$)
& $0.013$ & $0.765$
& $0.984$ & $0.669$
& $0.022$ & $0.000$ \\

Commensurate prior ($\tau=1.00$)
& $0.028$ & $0.406$
& $0.042$ & $0.352$
& $0.027$ & $0.348$ \\

Robust MAP ($\epsilon=0.20$)
& $0.012$ & $0.736$
& $0.081$ & $0.145$
& $0.028$ & $0.026$ \\

Elastic prior (scale=1.00)
& $0.021$ & $0.784$ 
& $0.093$ & $0.551$
& $0.023$ & $0.003$ \\

UIP ($M=100$)
& $0.013$ & $0.710$
& $0.954$ & $0.611$
& $0.019$ & $0.000$ \\

LEAP
& $0.011$ & $0.639$
& $0.051$ & $0.269$
& $0.019$ & $0.065$ \\

MEM
& $0.012$ & $0.583$
& $0.036$ & $0.294$
& $0.027$ & $0.283$ \\

BHMOI
& $0.030$ & $0.431$
& $0.029$ & $0.328$
& $0.028$ & $0.401$ \\

Nonparametric Bayes
& $0.028$ & $0.892$
& $0.190$ & $0.196$
& $0.027$ & $0.028$ \\

TTP
& $0.051$ & $0.816$
& $0.143$ & $0.326$
& $0.035$ & $0.378$ \\

BOND
& $0.025$ & $0.773$
& $0.027$ & $0.326$
& $0.028$ & $0.400$ \\
\bottomrule
\end{tabular}
\end{table}

\begin{table}[tb]
\centering
\caption{
Worst-case operating characteristics over the heterogeneity grid $\gamma\in\{0,0.1,\dots,2\}$ for binary outcomes under the oracle radius specification.
}
\label{tab:app-worst-bin-oracle}
\small
\renewcommand{\arraystretch}{1.1}
\setlength{\tabcolsep}{3pt}
\begin{tabular}{p{3.6cm}rrrrrr}
\toprule
& \multicolumn{2}{c}{Commensurate} & \multicolumn{2}{c}{Covariate shift + effect modification} & \multicolumn{2}{c}{Control drift (historical control-only)} \\
\cmidrule(lr){2-3}\cmidrule(lr){4-5}\cmidrule(lr){6-7}
Method & $\max_\gamma\widehat{\mathrm{TypeI}}$ & $\min_\gamma\widehat{\mathrm{Power}}$ & $\max_\gamma\widehat{\mathrm{TypeI}}$ & $\min_\gamma\widehat{\mathrm{Power}}$ & $\max_\gamma\widehat{\mathrm{TypeI}}$ & $\min_\gamma\widehat{\mathrm{Power}}$ \\
\midrule
Current-only
& $0.028$ & $0.990$
& $0.028$ & $0.990$
& $0.029$ & $0.991$ \\

Naive pooling
& $0.028$ & $1.000$
& $0.842$ & $1.000$
& $0.022$ & $0.002$ \\

Fixed $\lambda=0.25$
& $0.029$ & $1.000$
& $0.236$ & $1.000$
& $0.022$ & $0.233$ \\

Fixed $\lambda=0.50$
& $0.029$ & $1.000$
& $0.561$ & $1.000$
& $0.022$ & $0.028$ \\

Fixed $\lambda=0.75$
& $0.029$ & $1.000$
& $0.754$ & $1.000$
& $0.022$ & $0.006$ \\

Power prior ($\lambda=0.50$)
& $0.012$ & $1.000$ 
& $0.306$ & $1.000$
& $0.017$ & $0.022$ \\

Commensurate prior ($\tau=1.00$)
& $0.024$ & $0.994$
& $0.025$ & $0.992$
& $0.022$ & $0.986$ \\

Robust MAP ($\epsilon=0.20$)
& $0.010$ & $1.000$
& $0.030$ & $0.958$
& $0.022$ & $0.947$ \\

Elastic prior (scale=1.00)
& $0.027$ & $1.000$
& $0.055$ & $0.987$
& $0.021$ & $0.963$ \\

UIP ($M=100$)
& $0.012$ & $1.000$
& $0.205$ & $1.000$
& $0.016$ & $0.304$ \\

LEAP
& $0.028$ & $0.991$
& $0.036$ & $0.991$
& $0.024$ & $0.991$ \\

MEM
& $0.012$ & $0.998$
& $0.037$ & $0.970$
& $0.022$ & $0.963$ \\

BHMOI
& $0.030$ & $0.992$
& $0.033$ & $0.990$
& $0.026$ & $0.991$ \\

Nonparametric Bayes
& $0.011$ & $0.998$
& $0.036$ & $0.974$
& $0.022$ & $0.971$ \\

TTP
& $0.052$ & $0.998$
& $0.118$ & $0.985$
& $0.032$ & $0.987$ \\

BOND
& $0.025$ & $1.000$
& $0.028$ & $0.990$
& $0.028$ & $0.991$ \\
\bottomrule
\end{tabular}
\end{table}

\begin{table}[tb]
\centering
\caption{
Worst-case operating characteristics over the heterogeneity grid $\gamma\in\{0,0.1,\dots,2\}$ for binary outcomes under the data-driven (Wasserstein-based) radius specification with inflation multiplier $c=1.5$.
}
\label{tab:app-worst-bin-data}
\small
\renewcommand{\arraystretch}{1.1}
\setlength{\tabcolsep}{3pt}
\begin{tabular}{p{3.6cm}rrrrrr}
\toprule
& \multicolumn{2}{c}{Commensurate} & \multicolumn{2}{c}{Covariate shift + effect modification} & \multicolumn{2}{c}{Control drift (historical control-only)} \\
\cmidrule(lr){2-3}\cmidrule(lr){4-5}\cmidrule(lr){6-7}
Method & $\max_\gamma\widehat{\mathrm{TypeI}}$ & $\min_\gamma\widehat{\mathrm{Power}}$ & $\max_\gamma\widehat{\mathrm{TypeI}}$ & $\min_\gamma\widehat{\mathrm{Power}}$ & $\max_\gamma\widehat{\mathrm{TypeI}}$ & $\min_\gamma\widehat{\mathrm{Power}}$ \\
\midrule
Current-only
& $0.029$ & $0.991$
& $0.027$ & $0.991$
& $0.028$ & $0.990$ \\

Naive pooling
& $0.028$ & $1.000$
& $0.838$ & $1.000$
& $0.021$ & $0.002$ \\

Fixed $\lambda=0.25$
& $0.028$ & $1.000$
& $0.232$ & $1.000$
& $0.022$ & $0.231$ \\

Fixed $\lambda=0.50$
& $0.029$ & $1.000$
& $0.555$ & $1.000$
& $0.021$ & $0.028$ \\

Fixed $\lambda=0.75$
& $0.028$ & $1.000$
& $0.746$ & $1.000$
& $0.021$ & $0.007$ \\

Power prior ($\lambda=0.50$)
& $0.012$ & $1.000$
& $0.302$ & $1.000$
& $0.017$ & $0.021$ \\

Commensurate prior ($\tau=1.00$)
& $0.024$ & $0.994$
& $0.024$ & $0.993$
& $0.022$ & $0.986$ \\

Robust MAP ($\epsilon=0.20$)
& $0.009$ & $1.000$
& $0.029$ & $0.958$
& $0.023$ & $0.944$ \\

Elastic prior (scale=1.00)
& $0.027$ & $1.000$
& $0.056$ & $0.988$
& $0.021$ & $0.965$ \\

UIP ($M=100$)
& $0.011$ & $1.000$
& $0.201$ & $1.000$
& $0.016$ & $0.307$ \\

LEAP
& $0.028$ & $0.991$
& $0.036$ & $0.991$
& $0.023$ & $0.990$ \\

MEM
& $0.012$ & $0.998$
& $0.036$ & $0.971$
& $0.023$ & $0.964$ \\

BHMOI 
& $0.031$ & $0.992$
& $0.032$ & $0.992$
& $0.025$ & $0.990$ \\

Nonparametric Bayes
& $0.011$ & $0.998$
& $0.034$ & $0.974$
& $0.023$ & $0.972$ \\

TTP
& $0.052$ & $0.998$
& $0.120$ & $0.988$
& $0.032$ & $0.987$ \\

BOND 
& $0.025$ & $1.000$
& $0.027$ & $0.992$
& $0.028$ & $0.990$ \\
\bottomrule
\end{tabular}
\end{table}

\begin{figure}[tb]
\centering
    \includegraphics[width=\columnwidth]{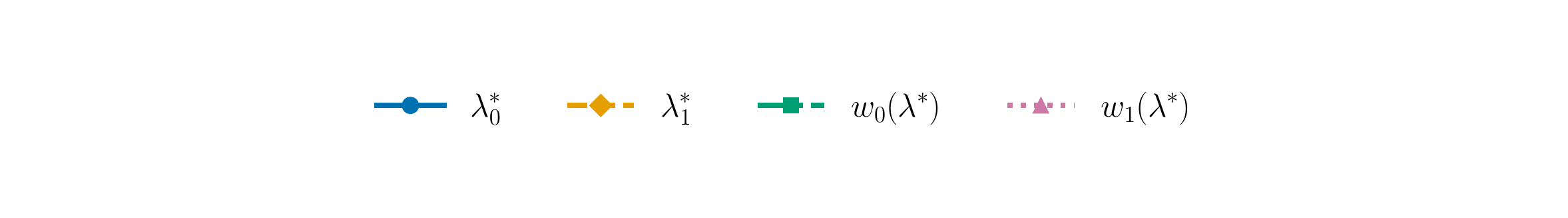}
    \begin{subfigure}[t]{0.49\linewidth}
    \centering
    \includegraphics[width=\linewidth]{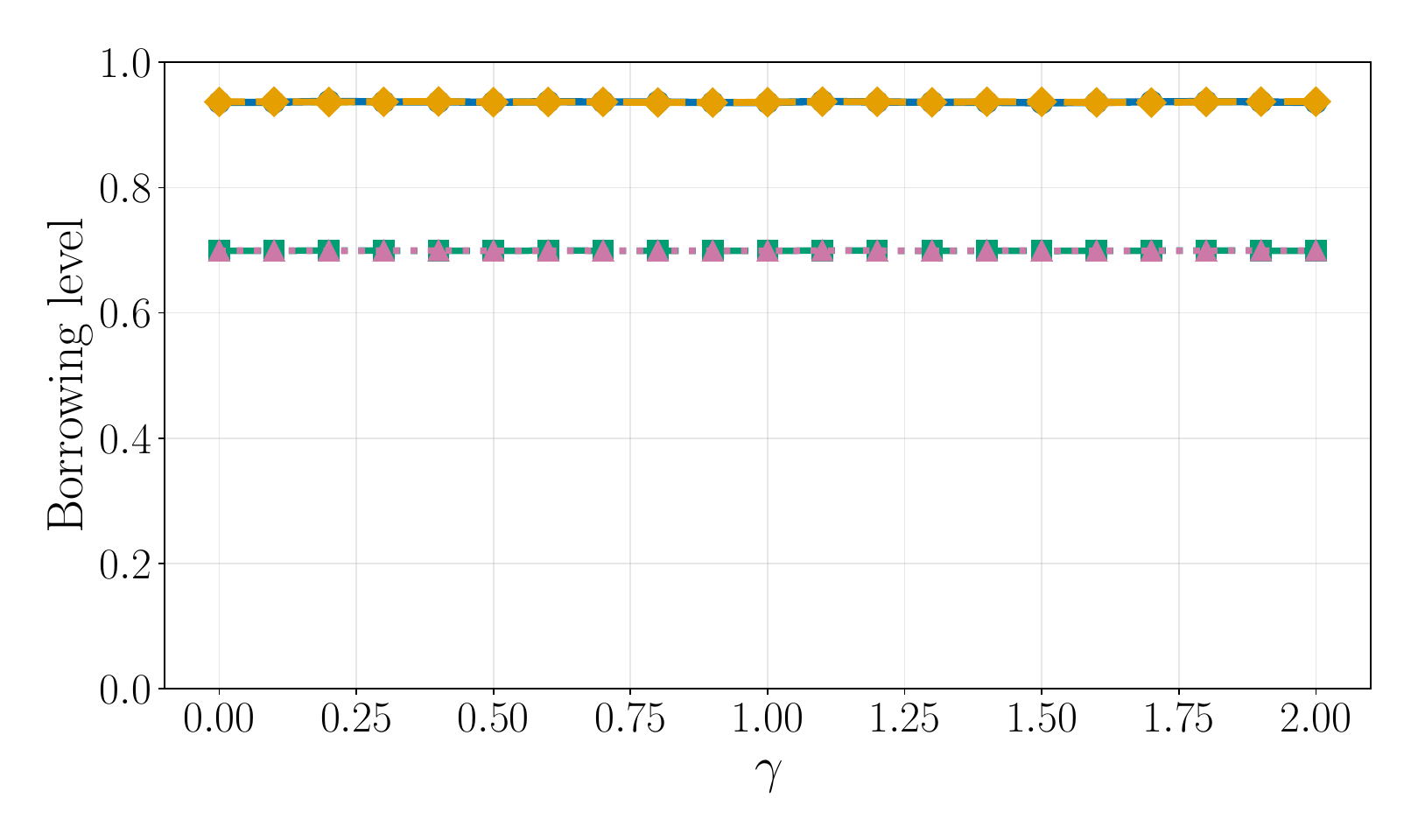}
    \caption{Oracle radii.}
    \end{subfigure}\hfill
    \begin{subfigure}[t]{0.49\linewidth}
    \centering
    \includegraphics[width=\linewidth]{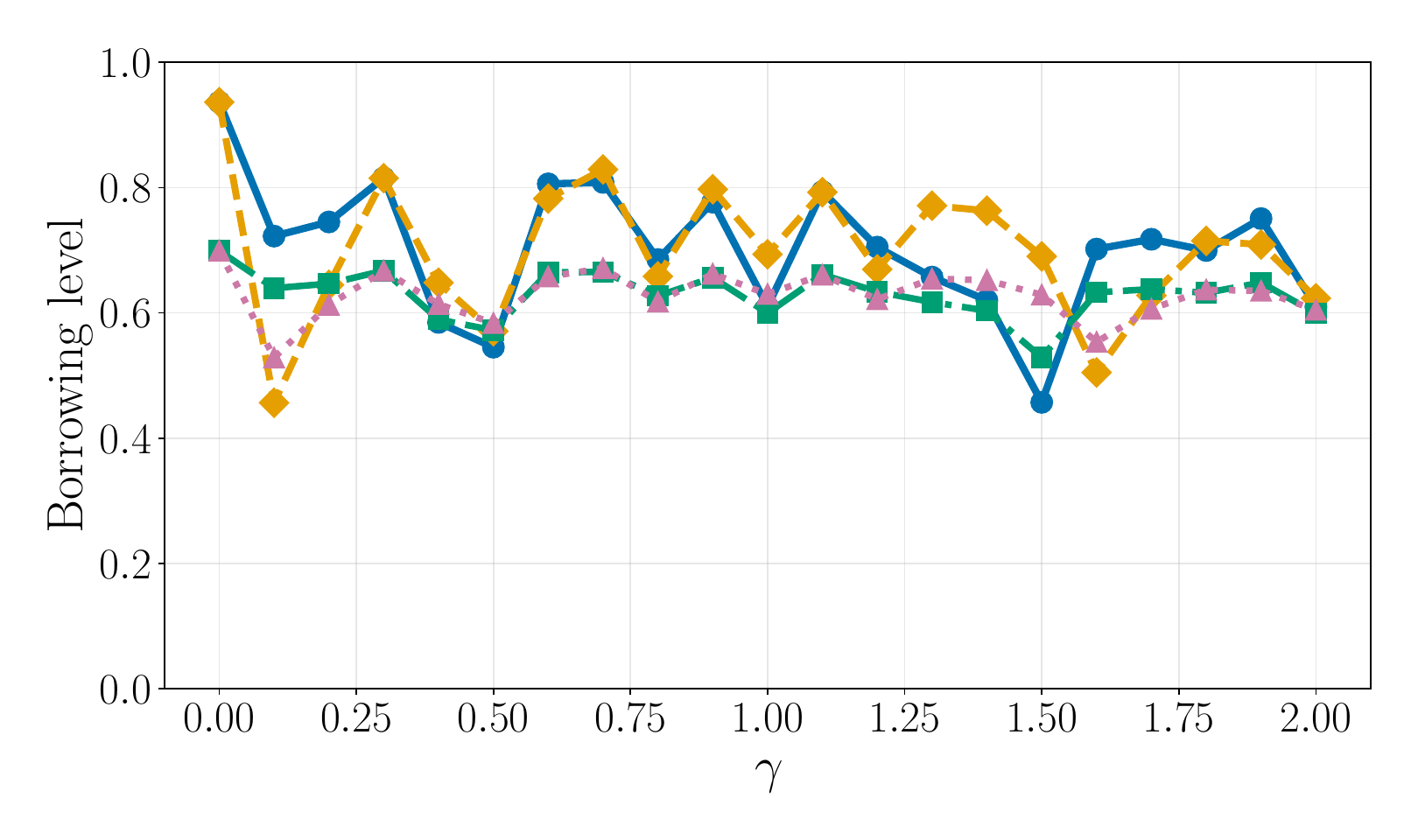}
    \caption{Wasserstein-based radii ($c=1.5$).}
    \end{subfigure}
\caption{
BOND calibrated borrowing levels versus $\gamma$ for continuous outcomes under Commensurate with $n_C=200$ and $n_H=500$.
Each panel reports the optimizer $\lambda_a^\ast$ and the induced effective weight $w_a(\lambda_a^\ast)$ for arm $a\in\{0,1\}$.
}
\label{fig:app-lambda-continuous-S0}
\end{figure}

\begin{figure}[tb]
\centering
    \includegraphics[width=\columnwidth]{figs/numerical/lambda_continuous_S2_data_c1p5_nH500_legend.pdf}
    \begin{subfigure}[t]{0.49\linewidth}
    \centering
    \includegraphics[width=\linewidth]{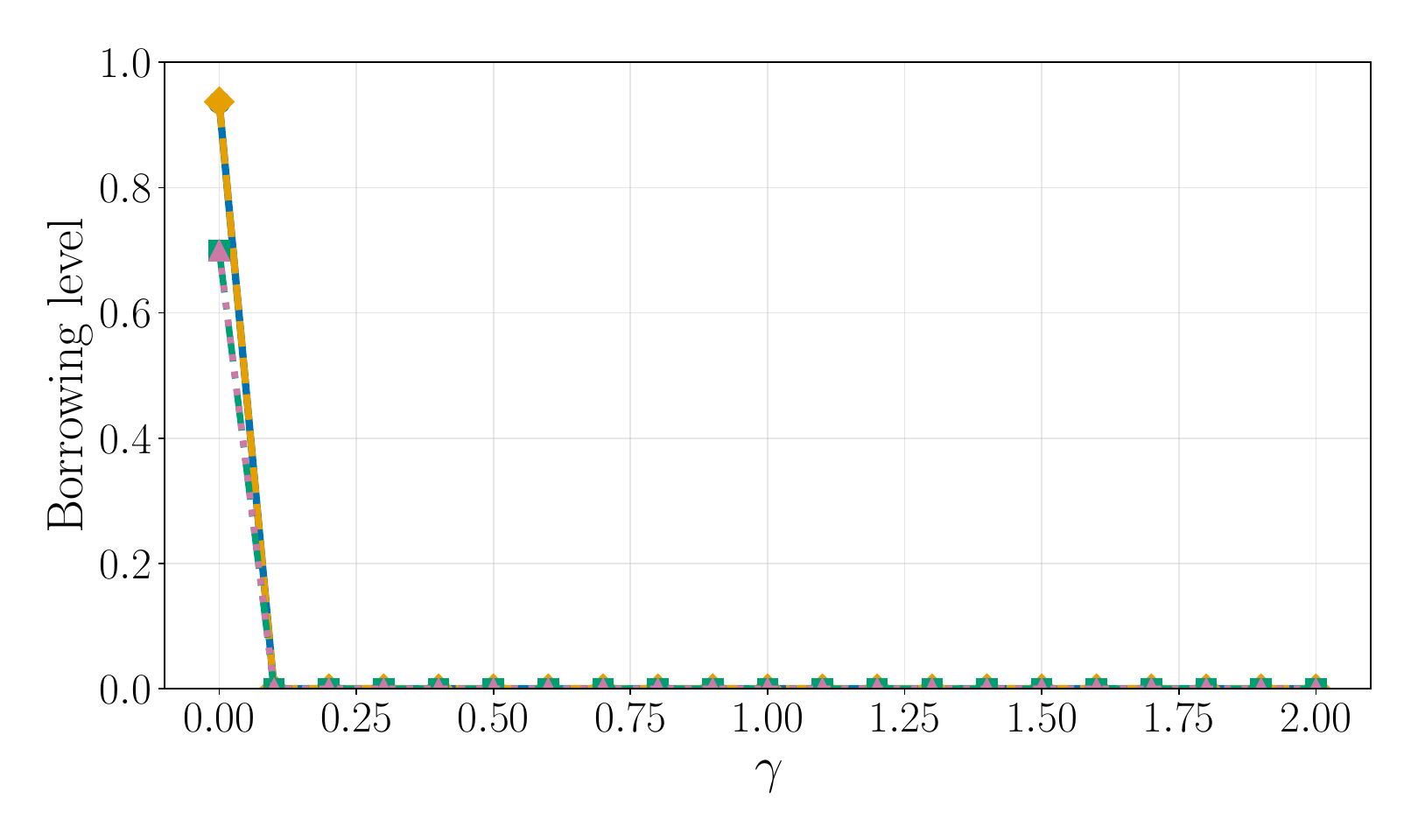}
    \caption{Oracle radii.}
    \end{subfigure}\hfill
    \begin{subfigure}[t]{0.49\linewidth}
    \centering
    \includegraphics[width=\linewidth]{figs/numerical/lambda_continuous_S2_data_c1p5_nH500.pdf}
    \caption{Wasserstein-based radii ($c=1.5$).}
    \end{subfigure}
\caption{
BOND calibrated borrowing levels versus $\gamma$ for continuous outcomes under Covariate shift + effect modification with $n_C=200$ and $n_H=500$.
Each panel reports the optimizer $\lambda_a^\ast$ and the induced effective weight $w_a(\lambda_a^\ast)$ for arm $a\in\{0,1\}$.
}
\label{fig:app-lambda-continuous-S2}
\end{figure}

\begin{figure}[tb]
\centering
    \includegraphics[width=\columnwidth]{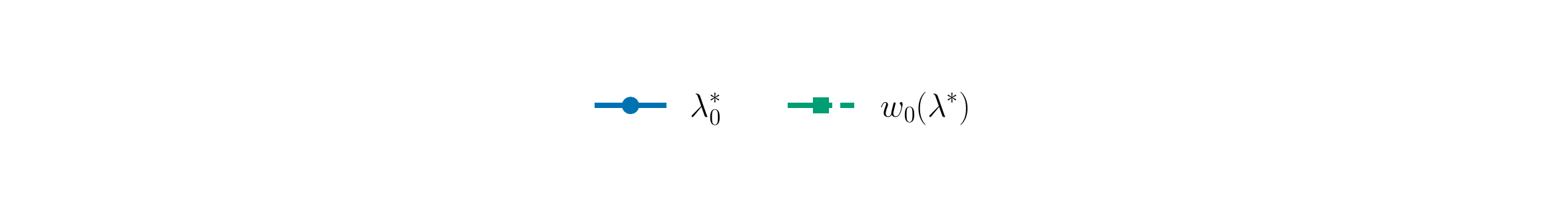}
    \begin{subfigure}[t]{0.49\linewidth}
    \centering
    \includegraphics[width=\linewidth]{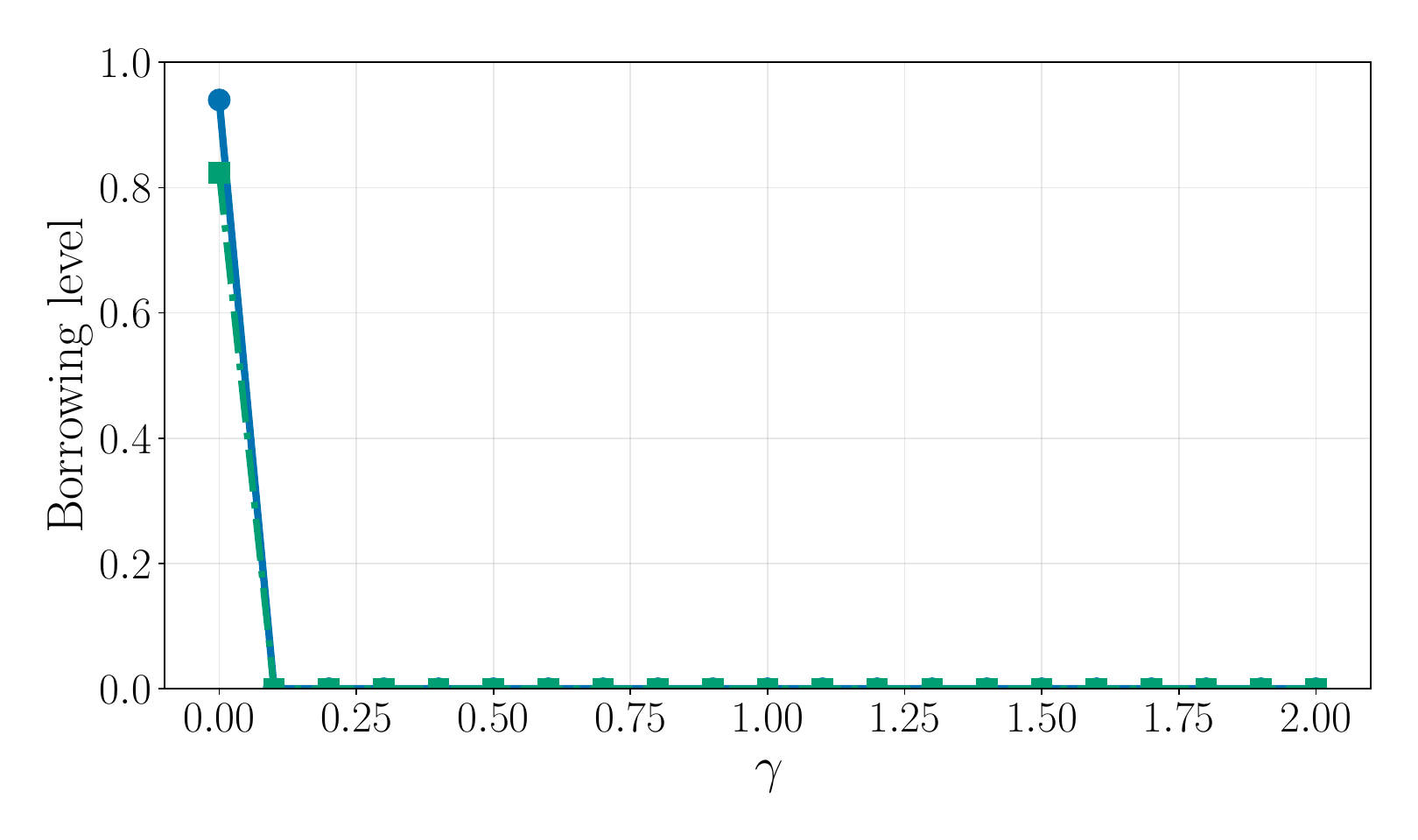}
    \caption{Oracle radii.}
    \end{subfigure}\hfill
    \begin{subfigure}[t]{0.49\linewidth}
    \centering
    \includegraphics[width=\linewidth]{figs/numerical/lambda_continuous_S3_data_c1p5_nH500.pdf}
    \caption{Wasserstein-based radii ($c=1.5$).}
    \end{subfigure}
\caption{
BOND calibrated borrowing levels versus $\gamma$ for continuous outcomes under Control drift (historical control-only) with $n_C=200$ and $n_H=500$.
Each panel reports the optimizer $\lambda_a^\ast$ and the induced effective weight $w_a(\lambda_a^\ast)$ for arm $a\in\{0,1\}$.
In the historical control-only case the historical treatment arm is unavailable, so $\lambda_1^\ast\equiv 0$ and only $\lambda_0^\ast$ is optimized.
}
\label{fig:app-lambda-continuous-S3}
\end{figure}

\begin{figure}[tb]
\centering
    \includegraphics[width=\columnwidth]{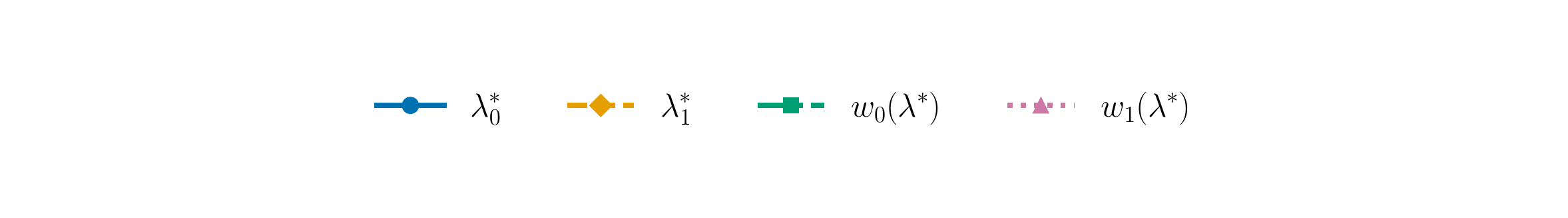}
    \begin{subfigure}[t]{0.49\linewidth}
    \centering
    \includegraphics[width=\linewidth]{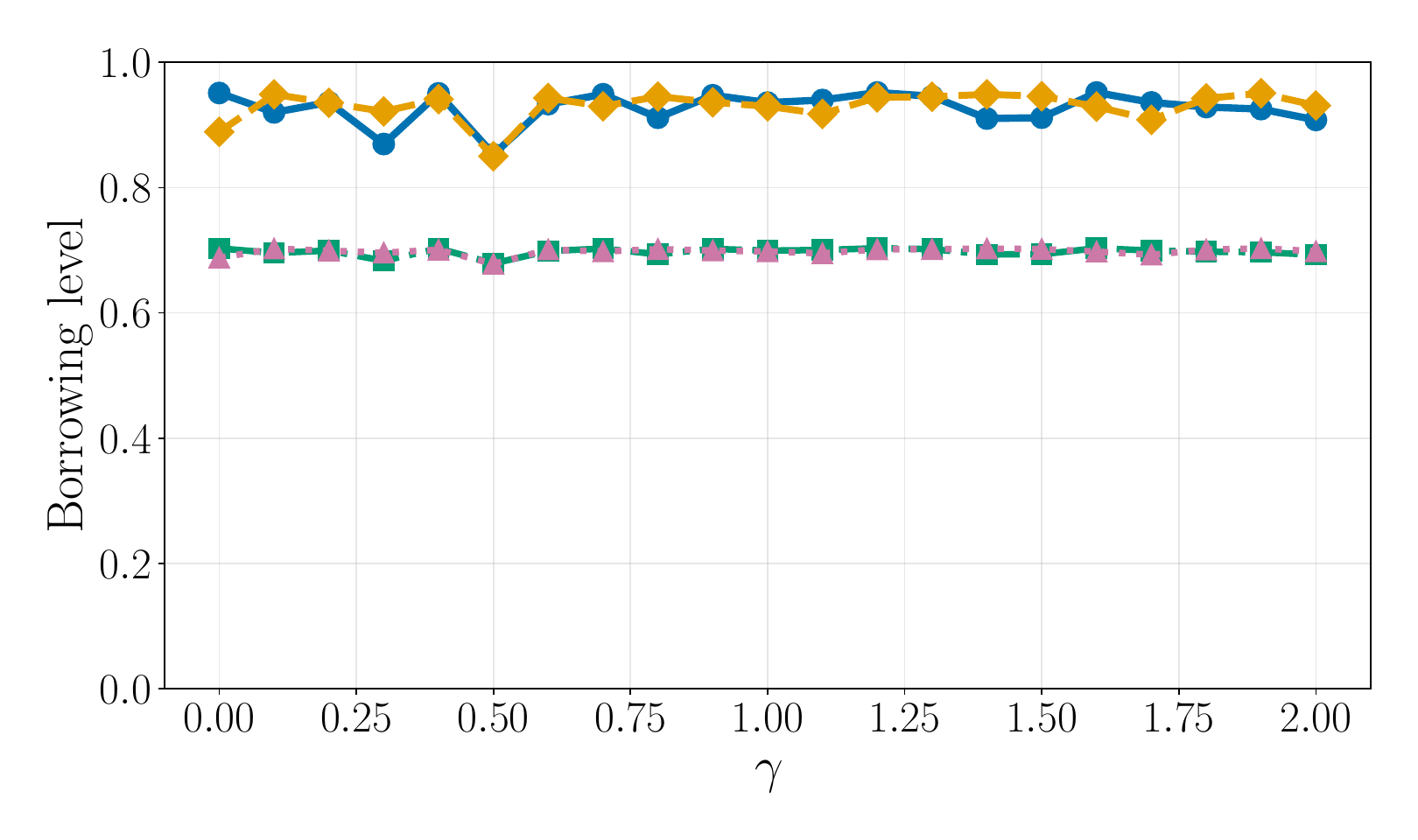}
    \caption{Oracle radii.}
    \end{subfigure}\hfill
    \begin{subfigure}[t]{0.49\linewidth}
    \centering
    \includegraphics[width=\linewidth]{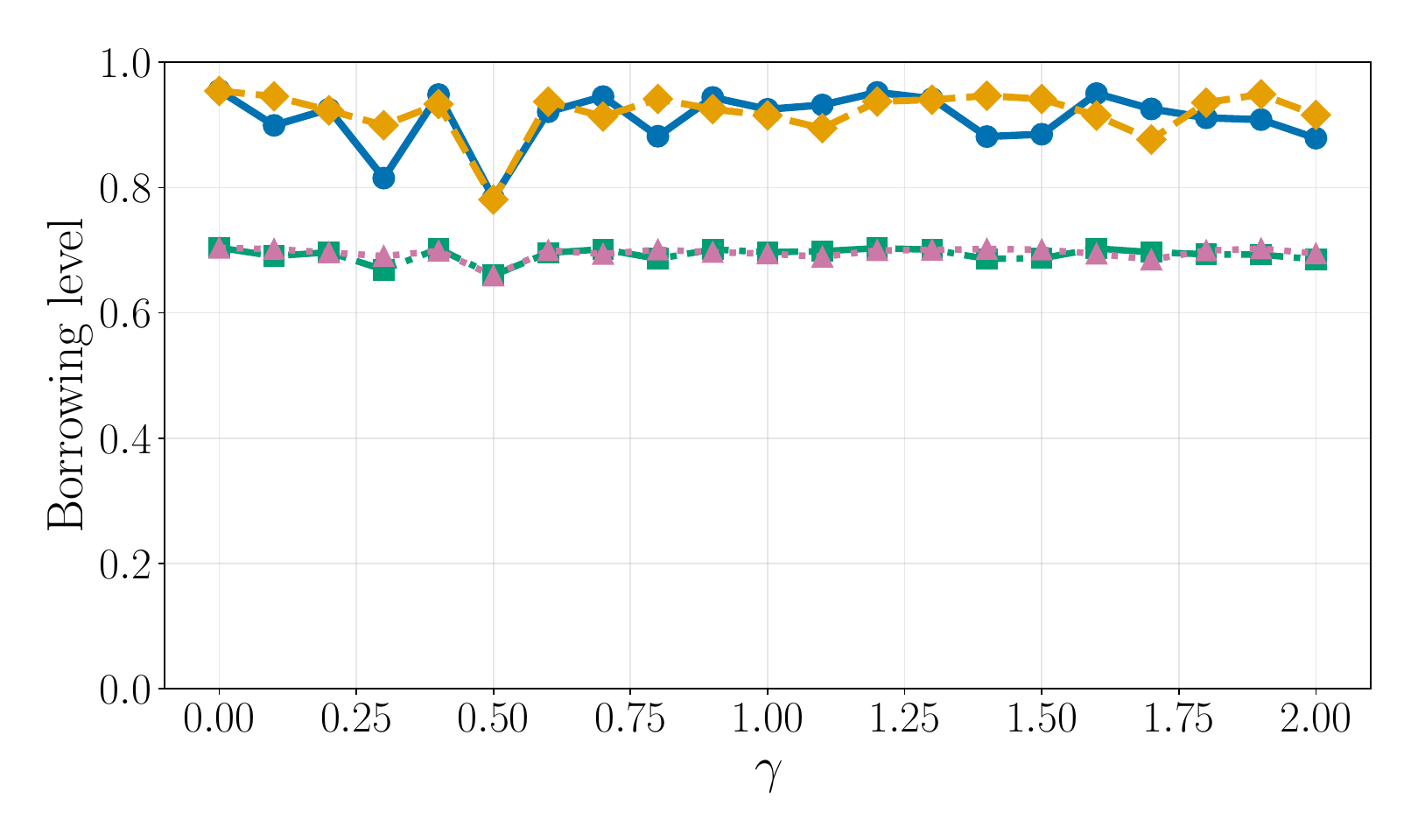}
    \caption{Wasserstein-based radii ($c=1.5$).}
    \end{subfigure}
\caption{
BOND calibrated borrowing levels versus $\gamma$ for binary outcomes under Commensurate with $n_C=200$ and $n_H=500$.
Each panel reports the optimizer $\lambda_a^\ast$ and the induced effective weight $w_a(\lambda_a^\ast)$ for arm $a\in\{0,1\}$.
}
\label{fig:app-lambda-binary-S0}
\end{figure}

\begin{figure}[tb]
\centering
    \includegraphics[width=\columnwidth]{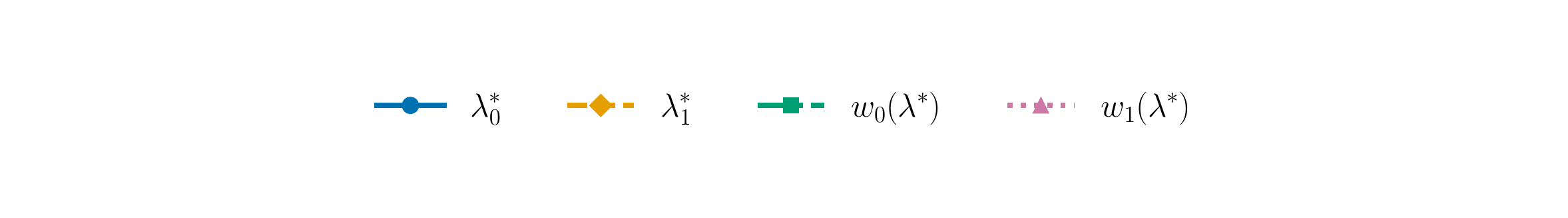}
    \begin{subfigure}[t]{0.49\linewidth}
    \centering
    \includegraphics[width=\linewidth]{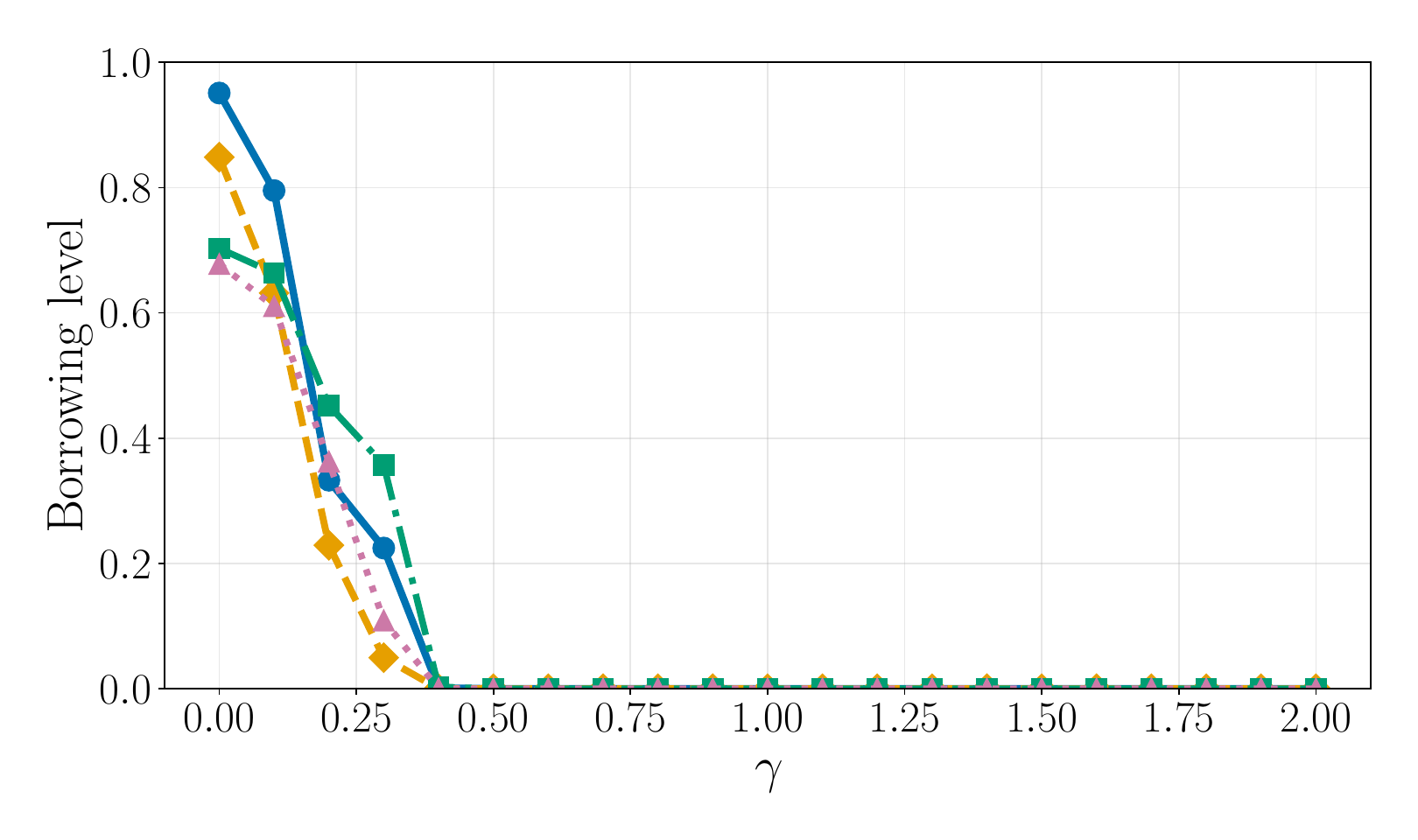}
    \caption{Oracle radii.}
    \end{subfigure}\hfill
    \begin{subfigure}[t]{0.49\linewidth}
    \centering
    \includegraphics[width=\linewidth]{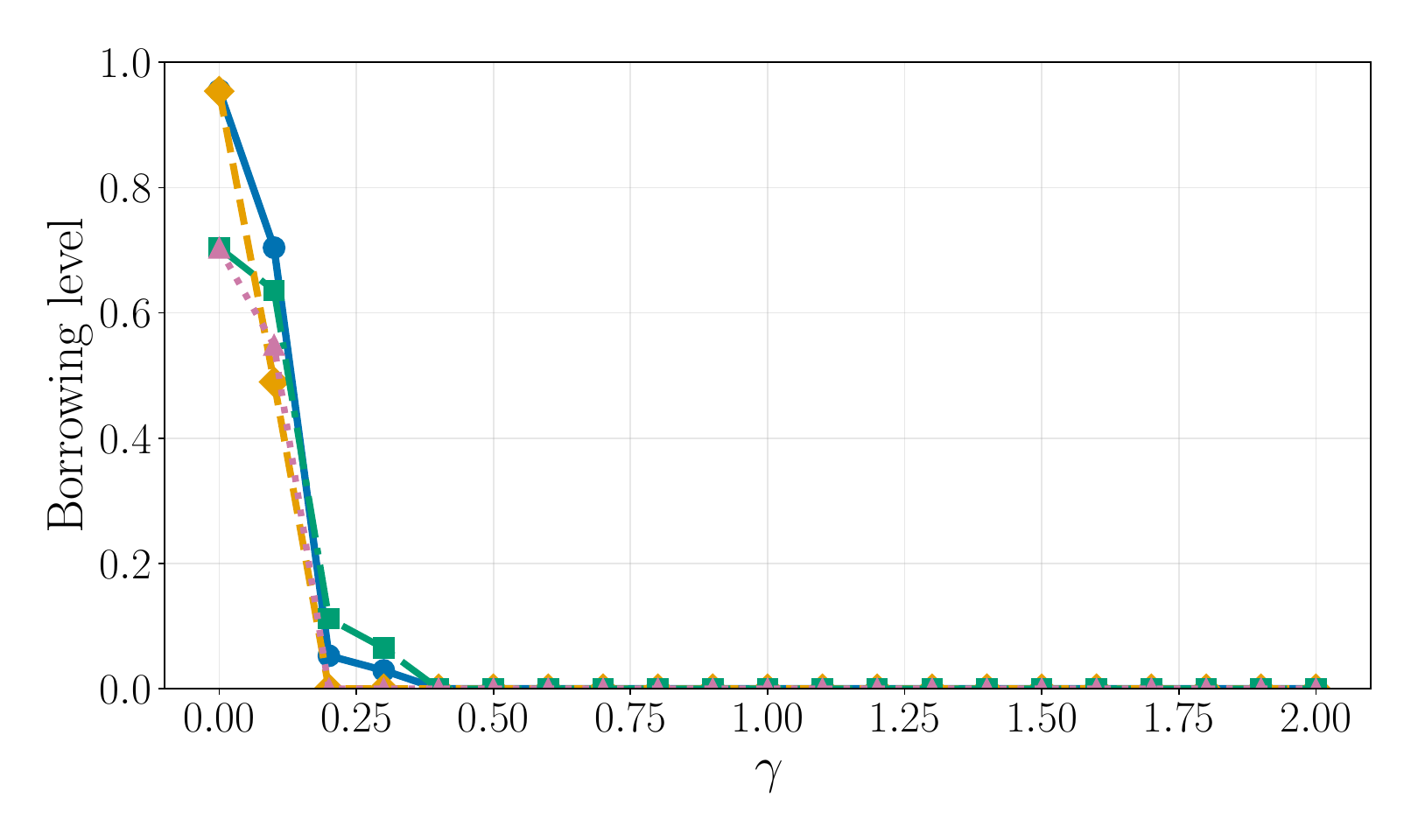}
    \caption{Wasserstein-based radii ($c=1.5$).}
    \end{subfigure}
\caption{
BOND calibrated borrowing levels versus $\gamma$ for binary outcomes under Covariate shift + effect modification with $n_C=200$ and $n_H=500$.
Each panel reports the optimizer $\lambda_a^\ast$ and the induced effective weight $w_a(\lambda_a^\ast)$ for arm $a\in\{0,1\}$.
}
\label{fig:app-lambda-binary-S2}
\end{figure}

\begin{figure}[tb]
\centering
    \includegraphics[width=\columnwidth]{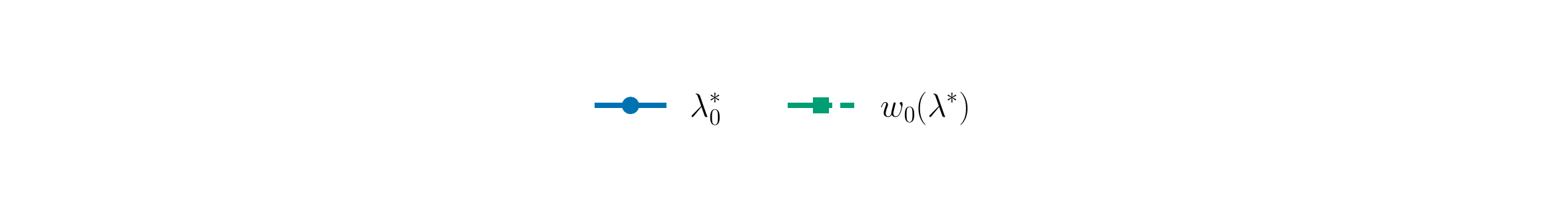}
    \begin{subfigure}[t]{0.49\linewidth}
    \centering
    \includegraphics[width=\linewidth]{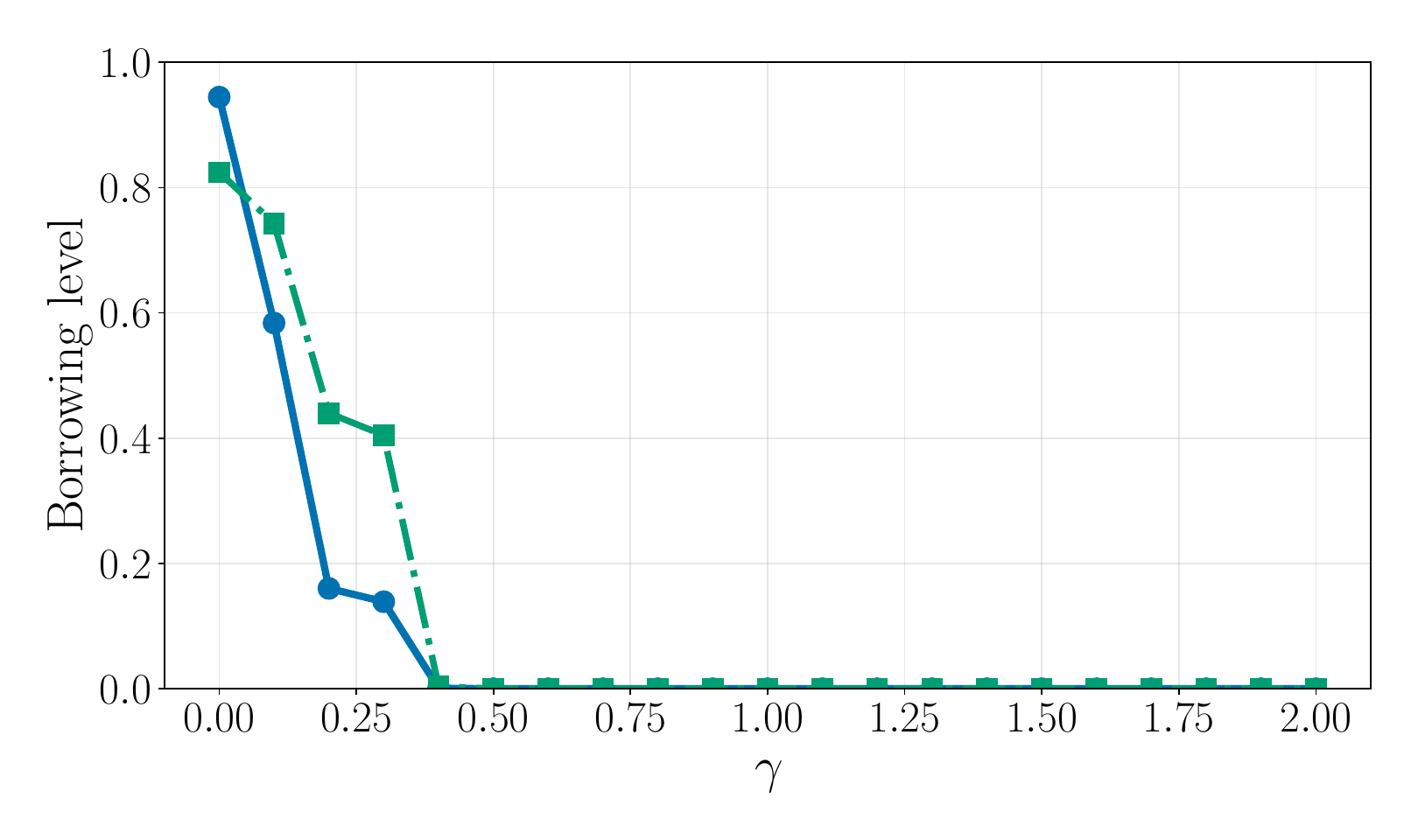}
    \caption{Oracle radii.}
    \end{subfigure}\hfill
    \begin{subfigure}[t]{0.49\linewidth}
    \centering
    \includegraphics[width=\linewidth]{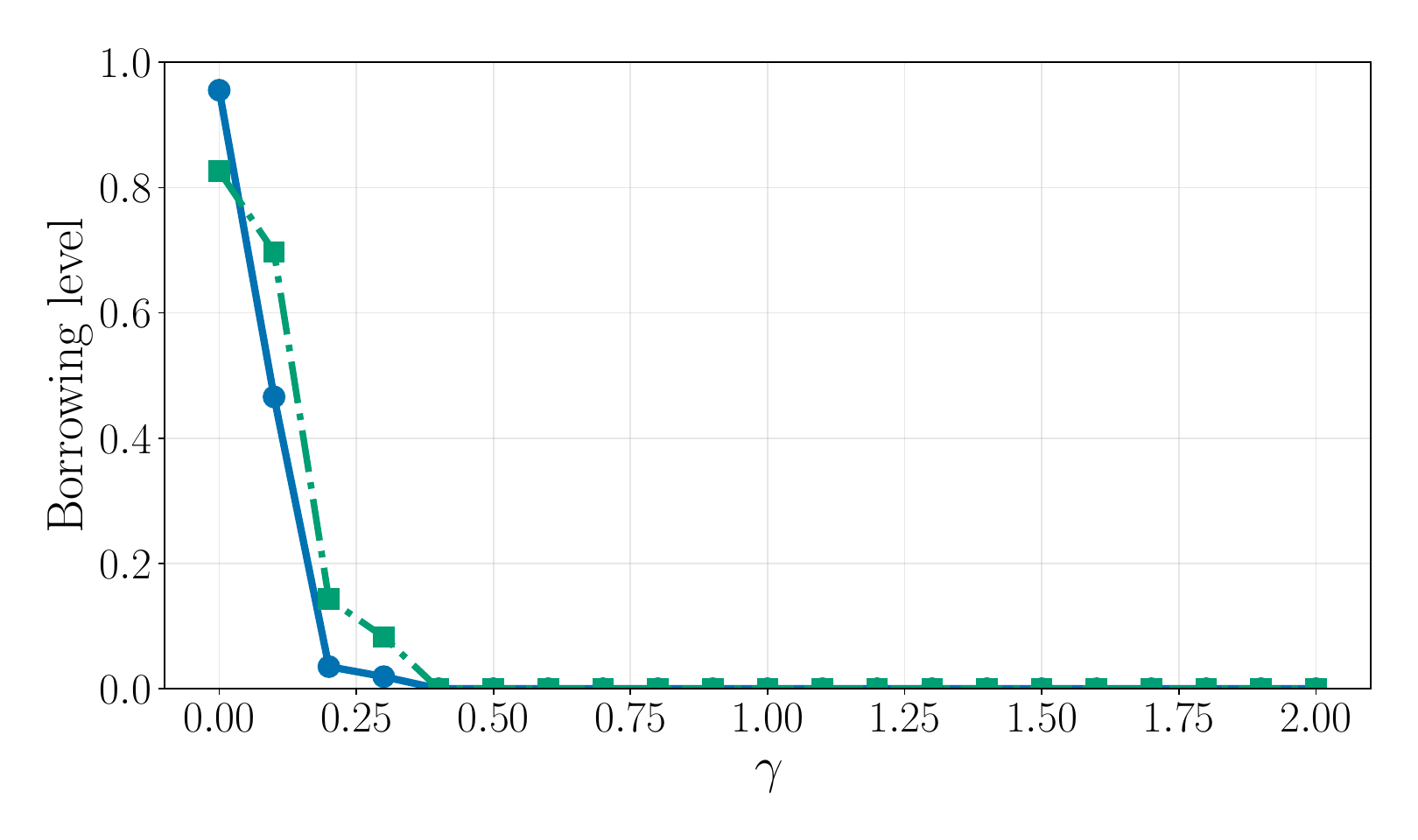}
    \caption{Wasserstein-based radii ($c=1.5$).}
    \end{subfigure}
\caption{
BOND calibrated borrowing levels versus $\gamma$ for binary outcomes under Control drift (historical control-only) with $n_C=200$ and $n_H=500$.
Each panel reports the optimizer $\lambda_a^\ast$ and the induced effective weight $w_a(\lambda_a^\ast)$ for arm $a\in\{0,1\}$.
In the historical control-only case the historical treatment arm is unavailable, so $\lambda_1^\ast\equiv 0$ and only $\lambda_0^\ast$ is optimized.
}
\label{fig:app-lambda-binary-S3}
\end{figure}

\begin{figure}[tb]
\centering
    \includegraphics[width=\columnwidth]{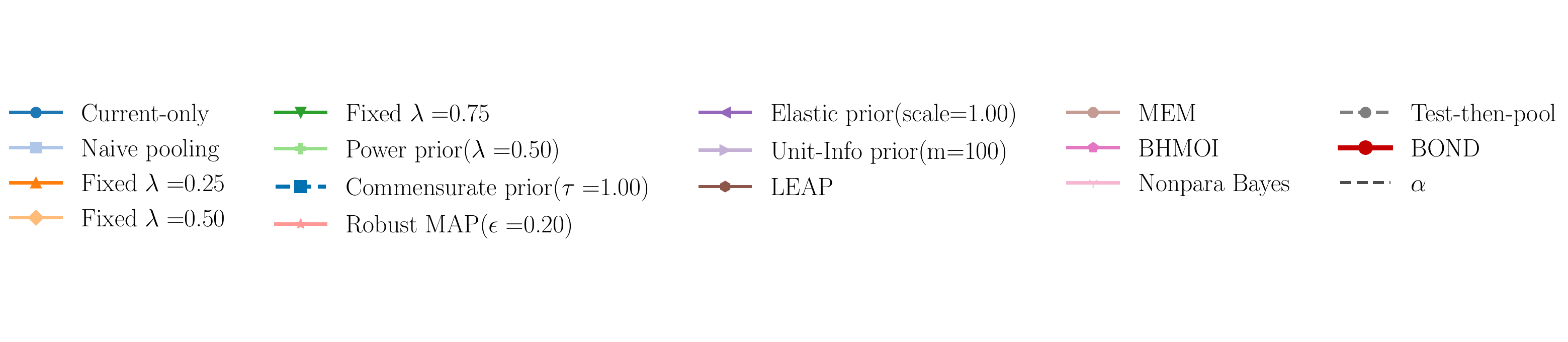}
    \begin{subfigure}[t]{0.49\linewidth}
    \centering
    \includegraphics[width=\linewidth]{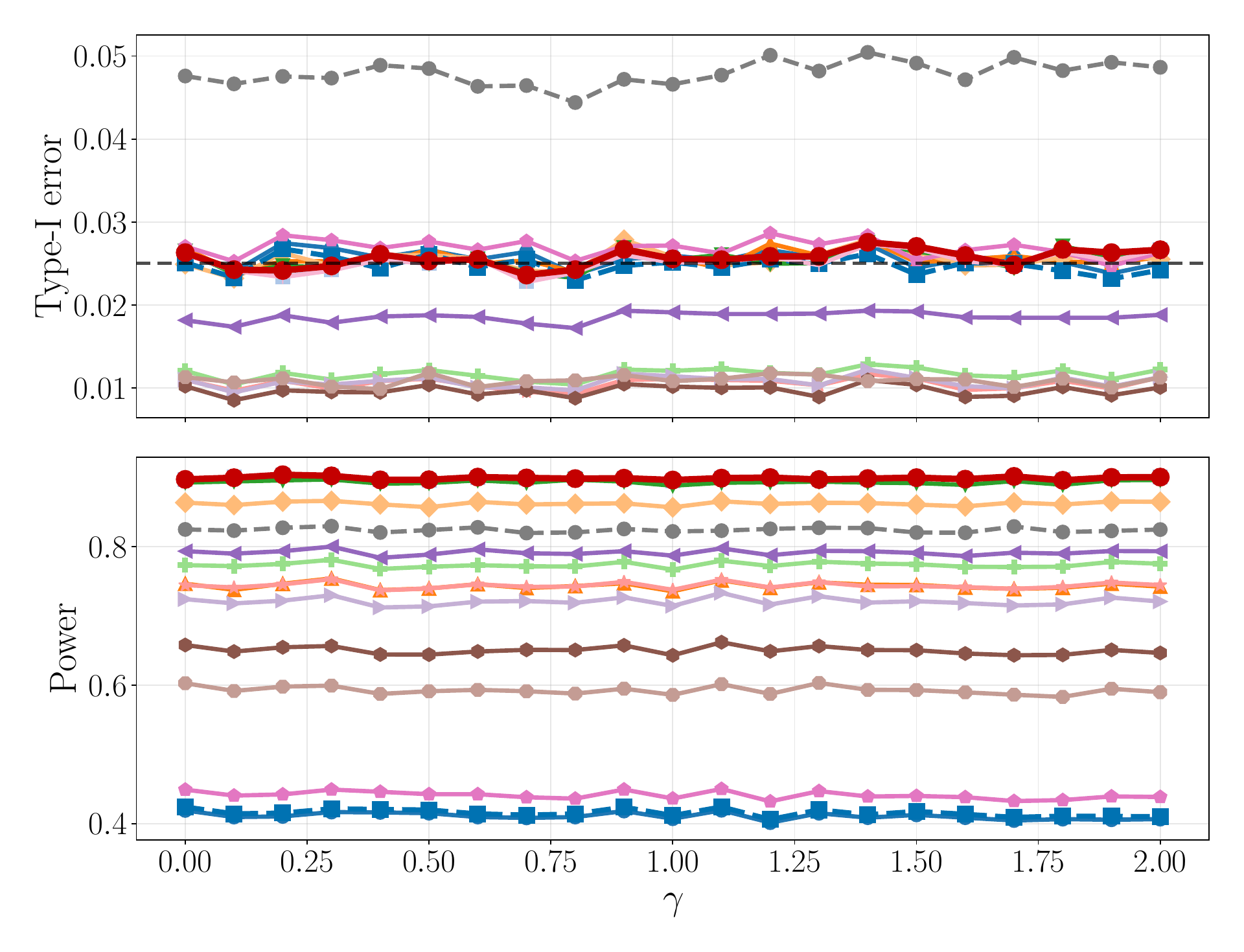}
    \caption{Oracle radii.}
    \end{subfigure}\hfill
    \begin{subfigure}[t]{0.49\linewidth}
    \centering
    \includegraphics[width=\linewidth]{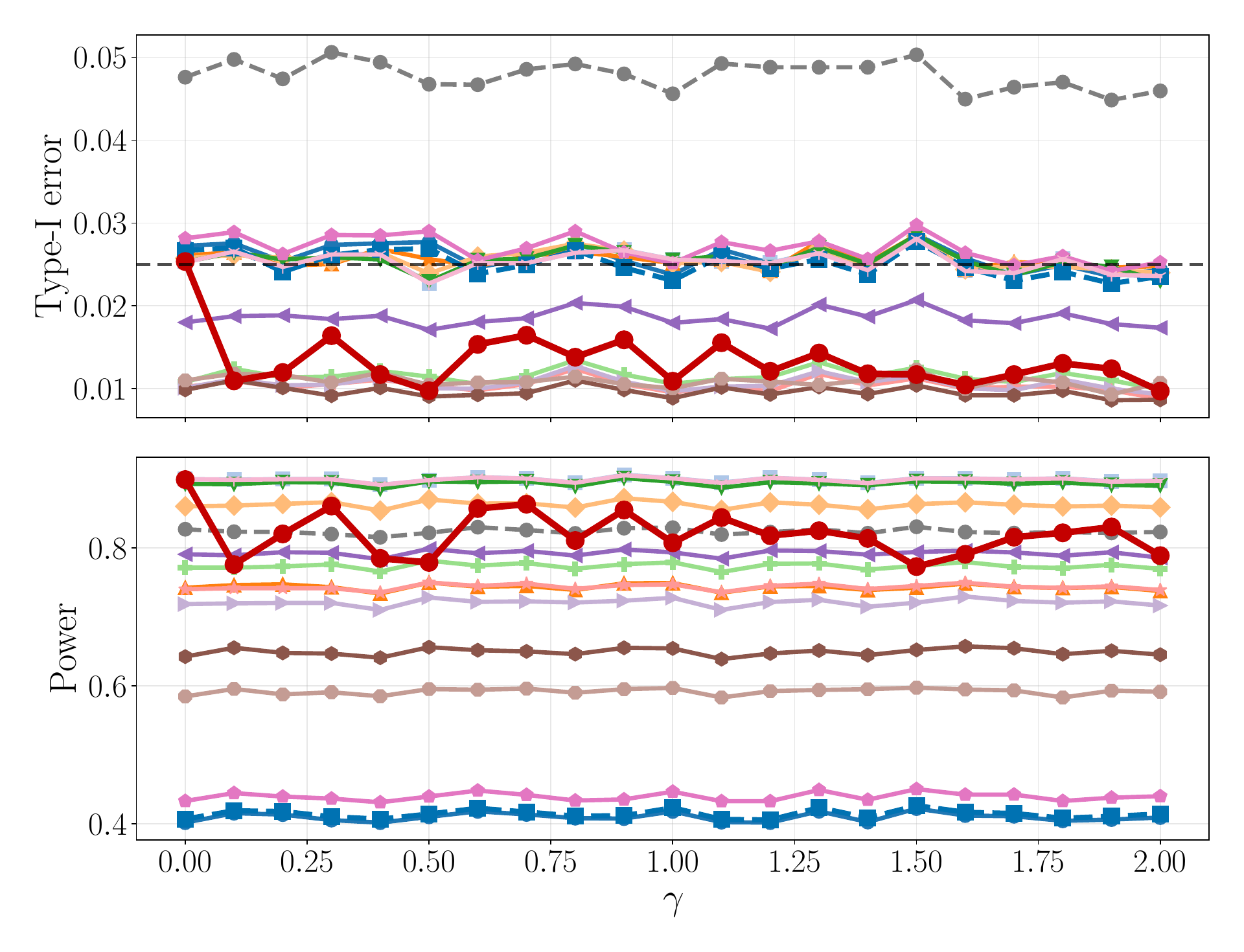}
    \caption{Wasserstein-based radii ($c=1.5$).}
    \end{subfigure}
\caption{
Type I error (top) and power (bottom) versus $\gamma$ for continuous outcomes under Commensurate with $n_C=200$ and $n_H=500$.
The horizontal reference line is at $\alpha=0.025$.
}
\label{fig:app-type1-power-continuous-S0}
\end{figure}

\begin{figure}[tb]
\centering
    \includegraphics[width=\columnwidth]{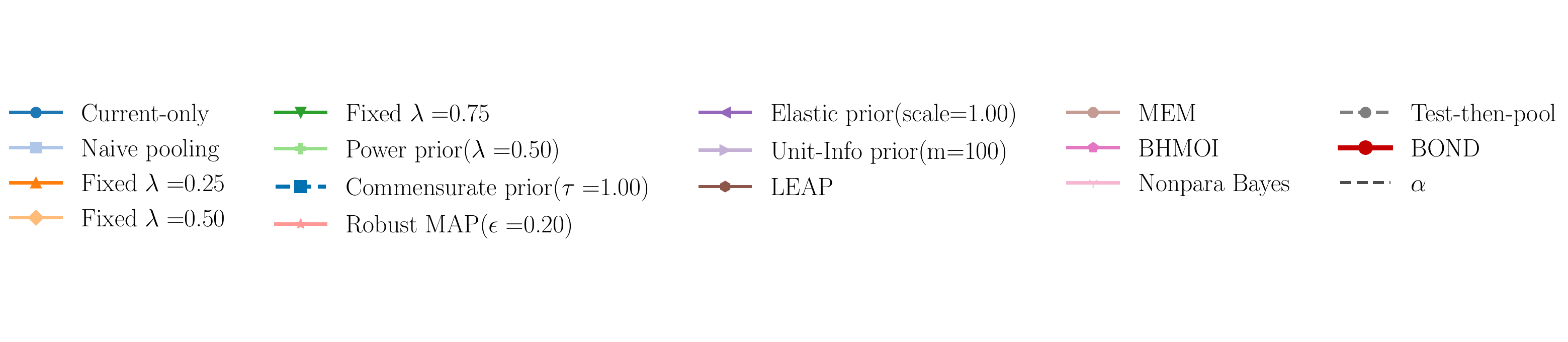}
    \begin{subfigure}[t]{0.49\linewidth}
    \centering
    \includegraphics[width=\linewidth]{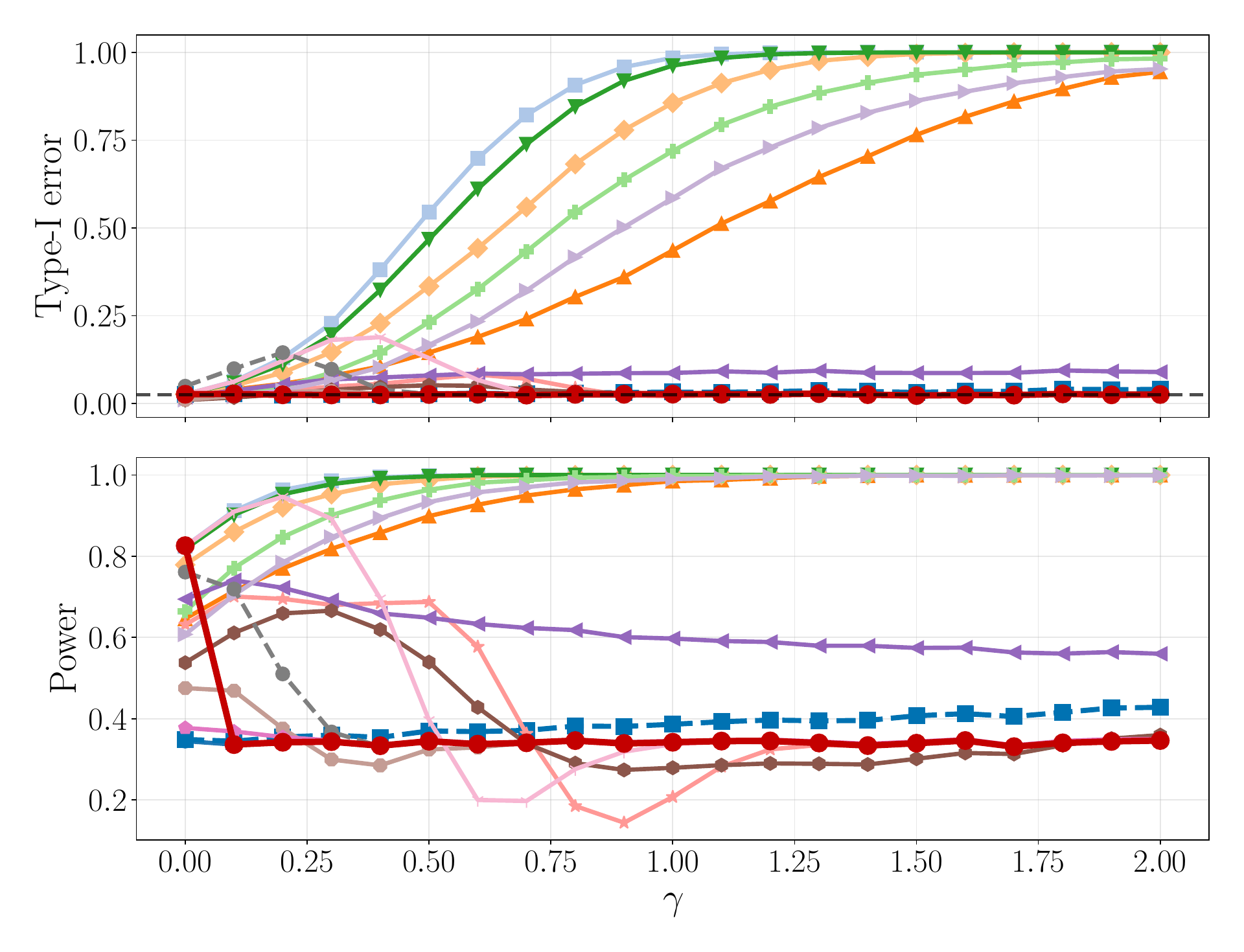}
    \caption{Oracle radii.}
    \end{subfigure}\hfill
    \begin{subfigure}[t]{0.49\linewidth}
    \centering
    \includegraphics[width=\linewidth]{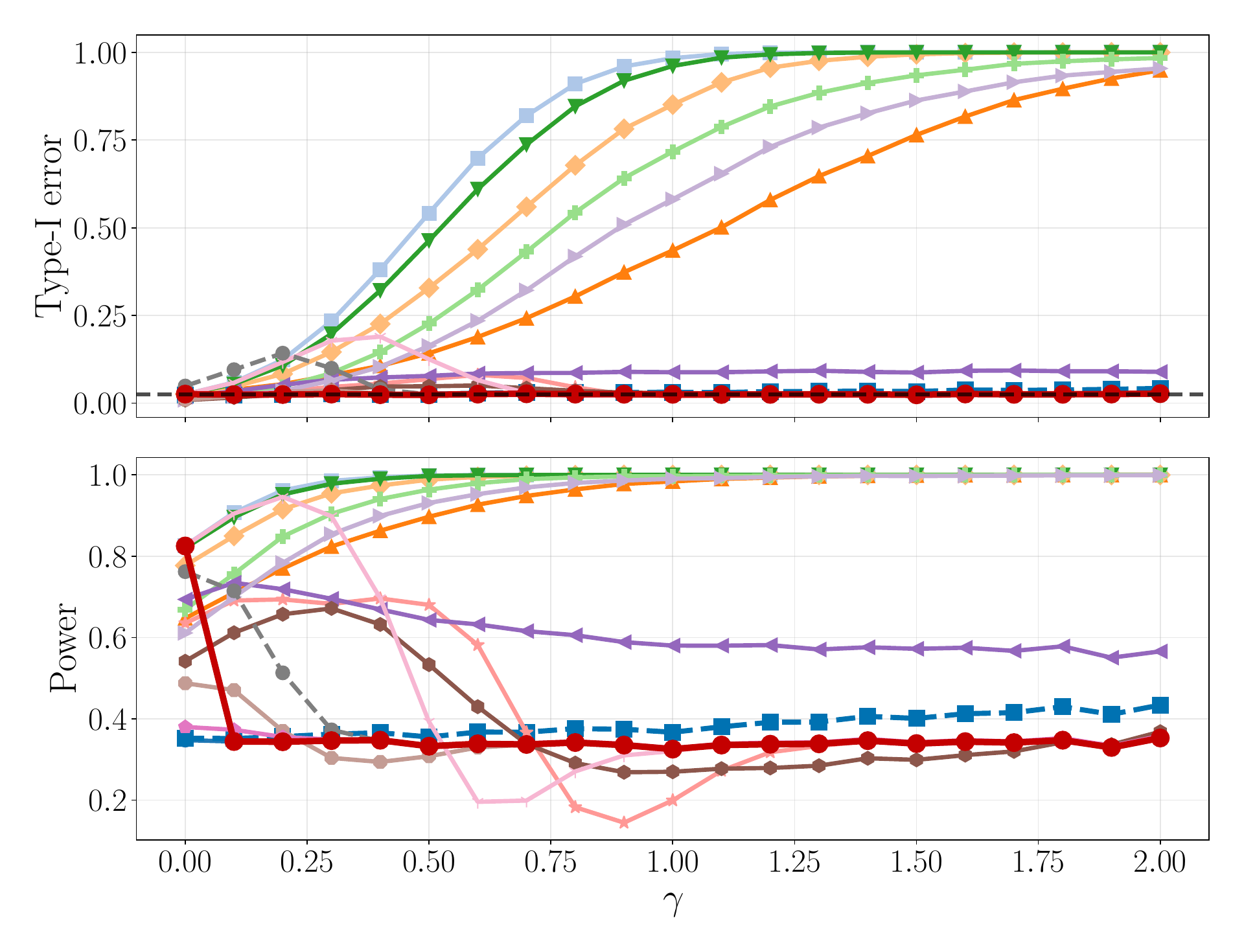}
    \caption{Wasserstein-based radii ($c=1.5$).}
    \end{subfigure}
\caption{
Type I error (top) and power (bottom) versus $\gamma$ for continuous outcomes under Covariate shift + effect modification with $n_C=200$ and $n_H=500$.
The horizontal reference line is at $\alpha=0.025$.
}
\label{fig:app-type1-power-continuous-S2}
\end{figure}

\begin{figure}[tb]
\centering
    \includegraphics[width=\columnwidth]{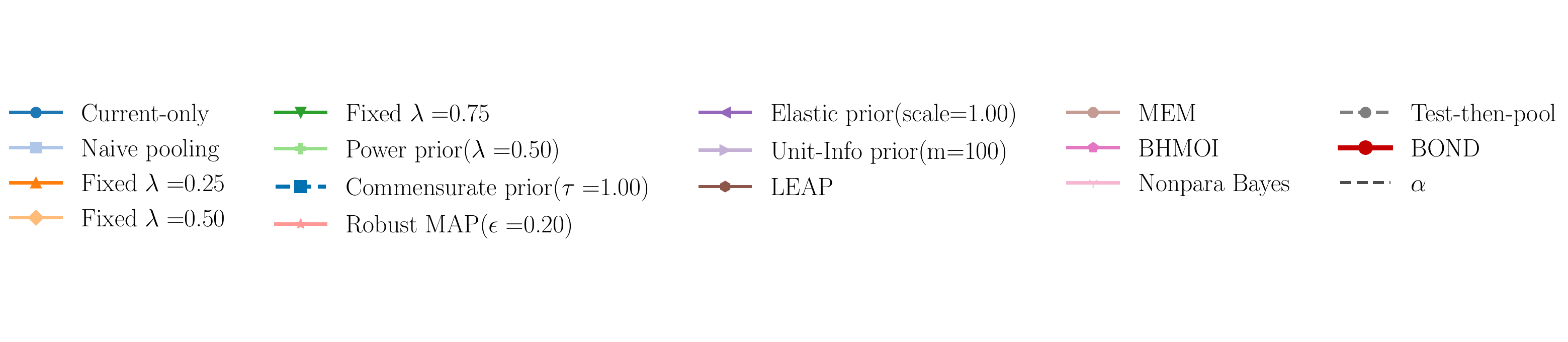}
    \begin{subfigure}[t]{0.49\linewidth}
    \centering
    \includegraphics[width=\linewidth]{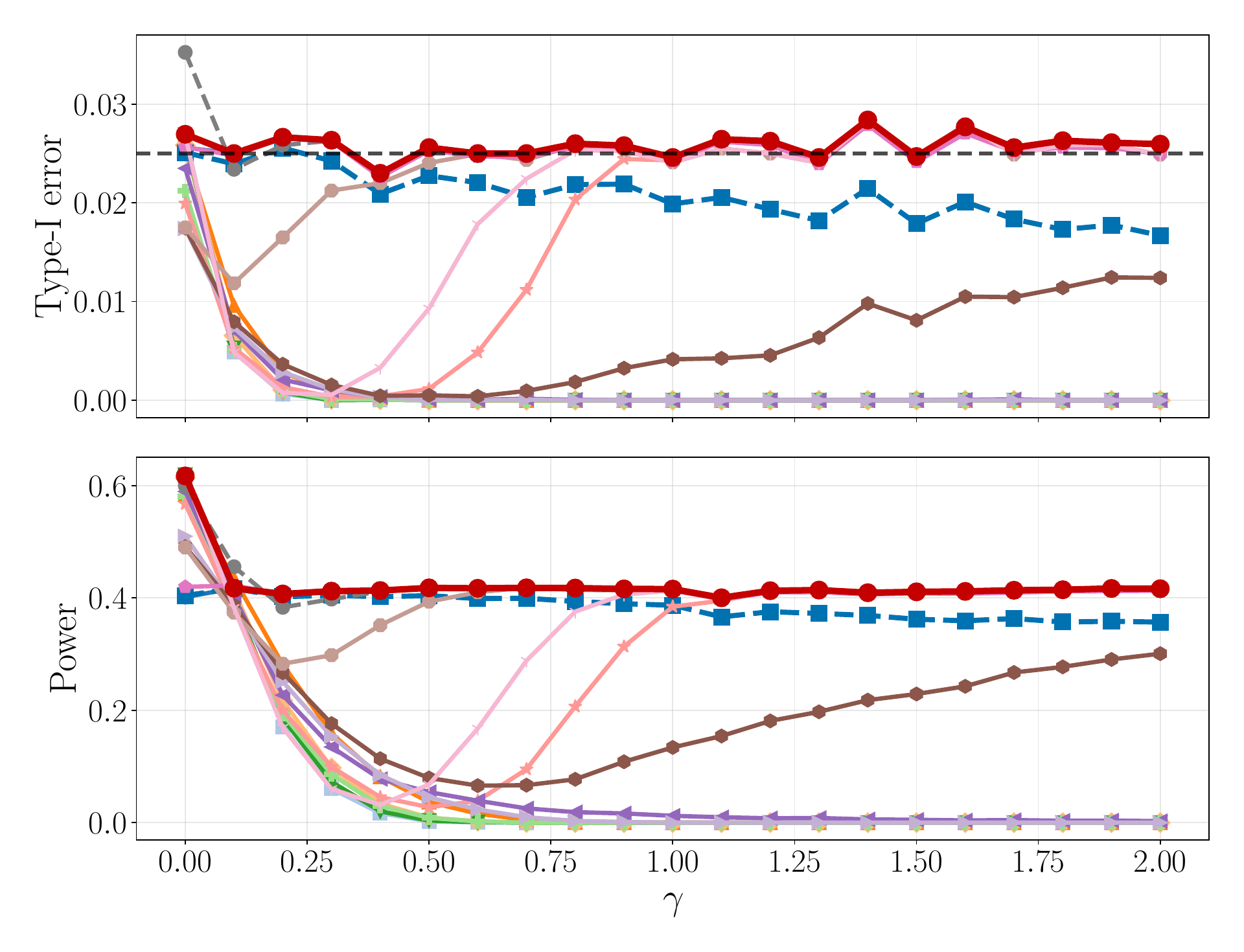}
    \caption{Oracle radii.}
    \end{subfigure}\hfill
    \begin{subfigure}[t]{0.49\linewidth}
    \centering
    \includegraphics[width=\linewidth]{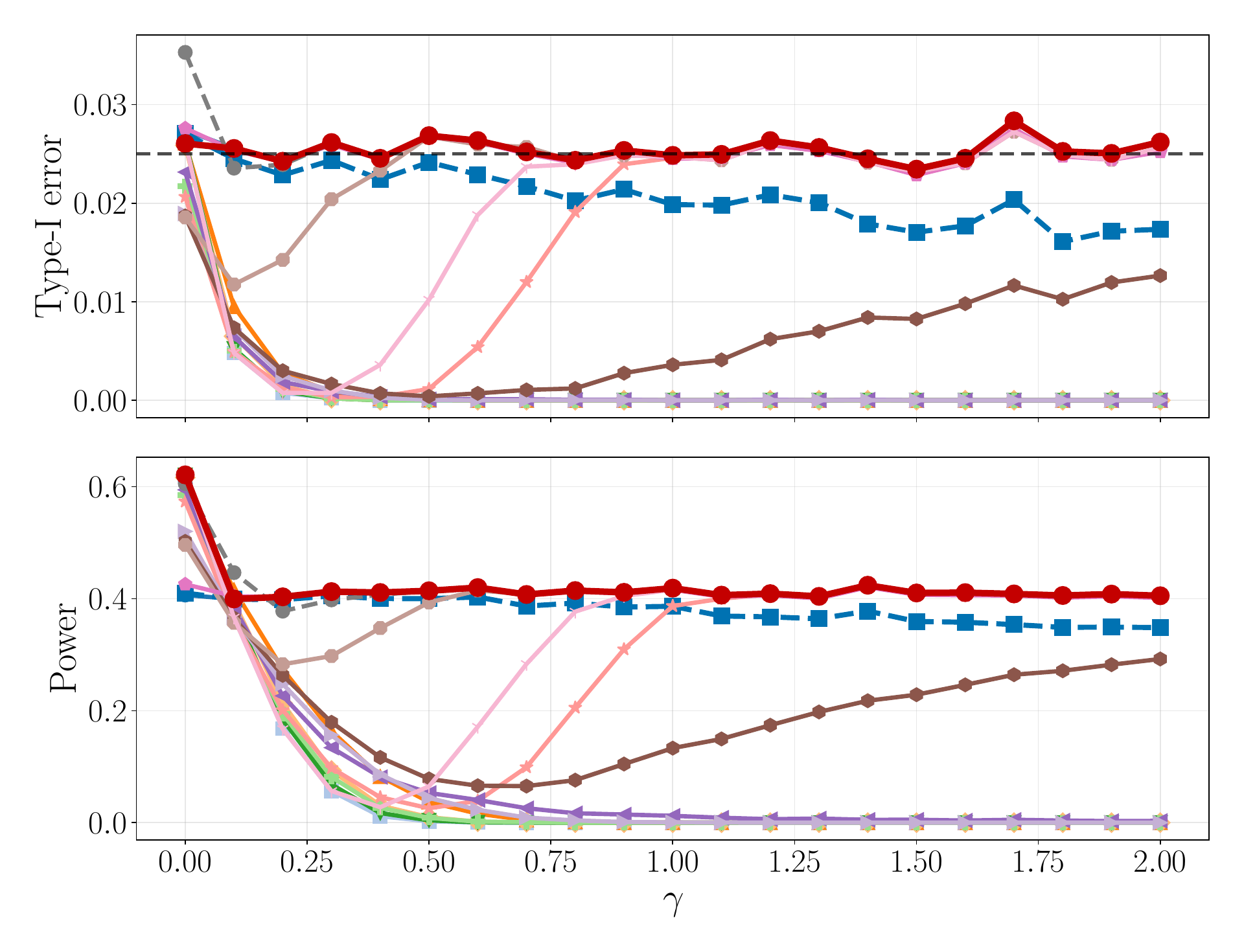}
    \caption{Wasserstein-based radii ($c=1.5$).}
    \end{subfigure}
\caption{
Type I error (top) and power (bottom) versus $\gamma$ for continuous outcomes under Control drift (historical control-only) with $n_C=200$ and $n_H=500$.
The horizontal reference line is at $\alpha=0.025$.
}
\label{fig:app-type1-power-continuous-S3}
\end{figure}

\begin{figure}[tb]
\centering
    \includegraphics[width=\columnwidth]{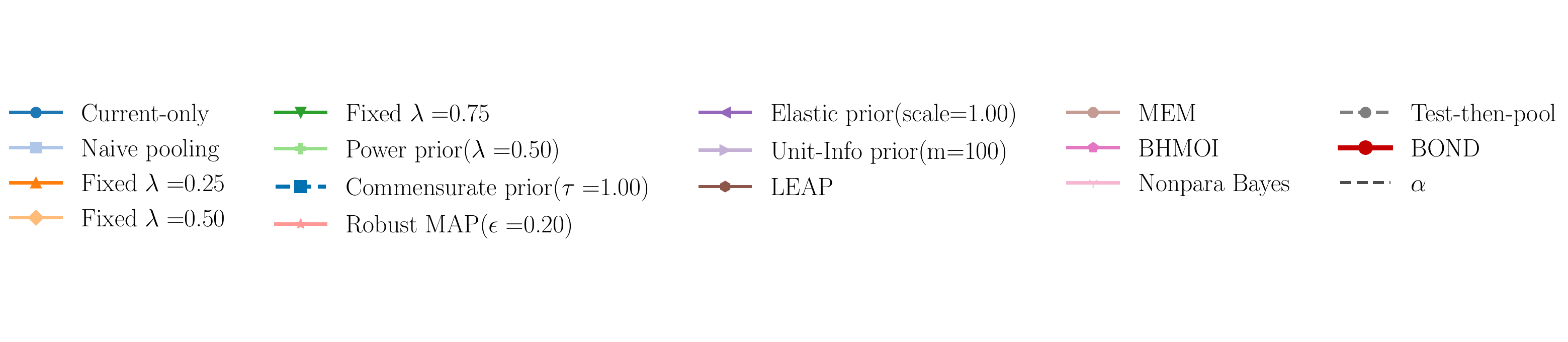}
    \begin{subfigure}[t]{0.49\linewidth}
    \centering
    \includegraphics[width=\linewidth]{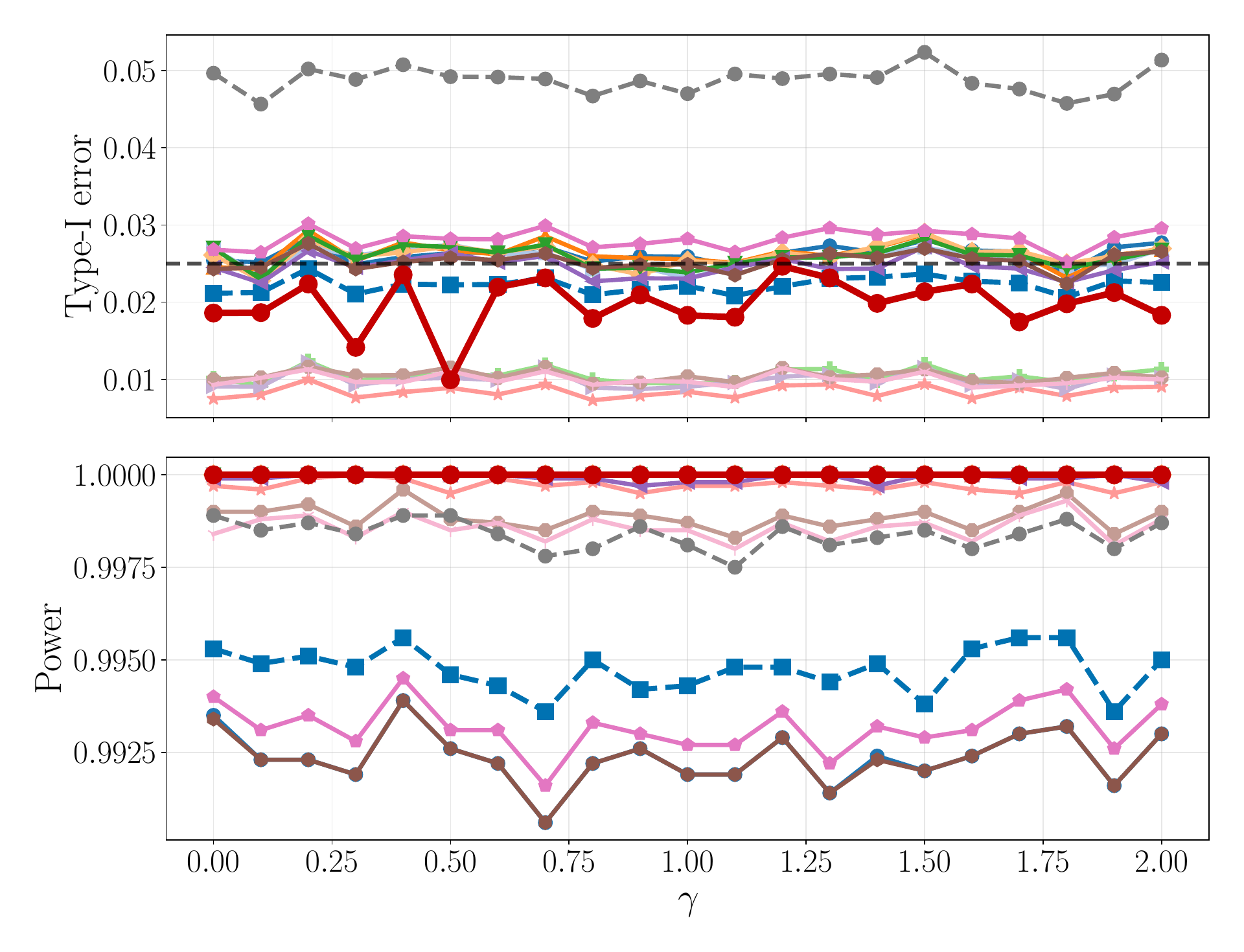}
    \caption{Oracle radii.}
    \end{subfigure}\hfill
    \begin{subfigure}[t]{0.49\linewidth}
    \centering
    \includegraphics[width=\linewidth]{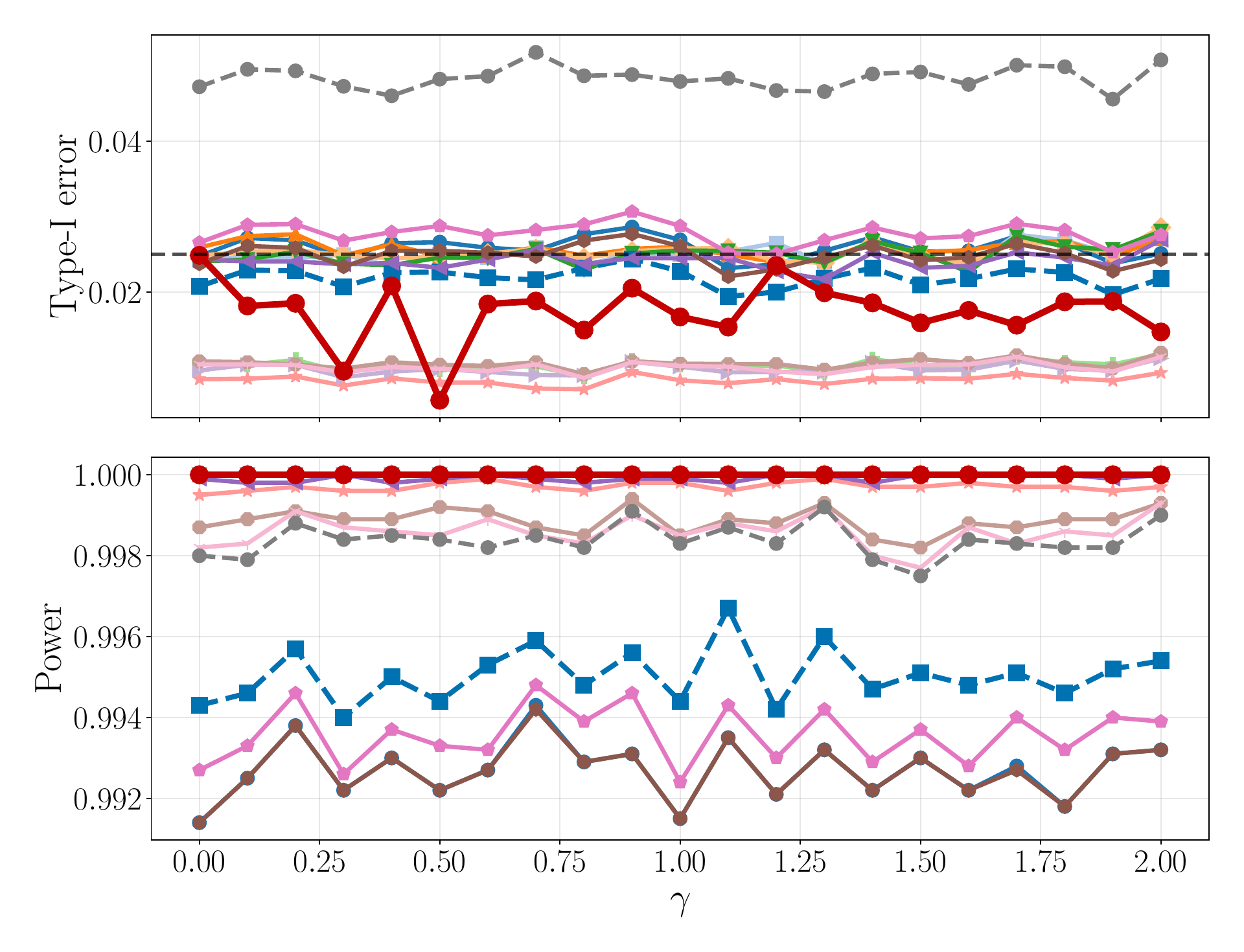}
    \caption{Wasserstein-based radii ($c=1.5$).}
    \end{subfigure}
\caption{
Type I error (top) and power (bottom) versus $\gamma$ for binary outcomes under Commensurate with $n_C=200$ and $n_H=500$.
The horizontal reference line is at $\alpha=0.025$.
}
\label{fig:app-type1-power-binary-S0}
\end{figure}

\begin{figure}[tb]
\centering
    \includegraphics[width=\columnwidth]{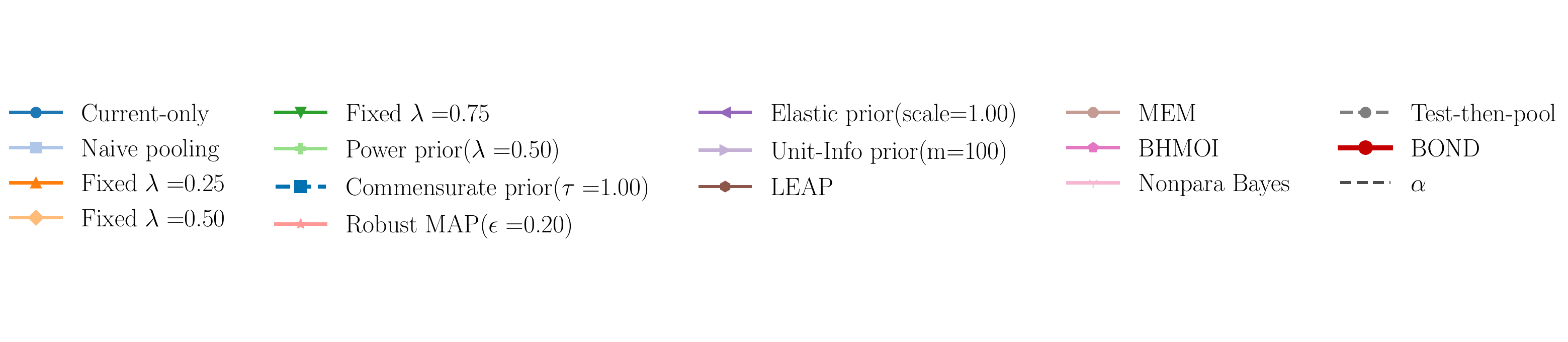}
    \begin{subfigure}[t]{0.49\linewidth}
    \centering
    \includegraphics[width=\linewidth]{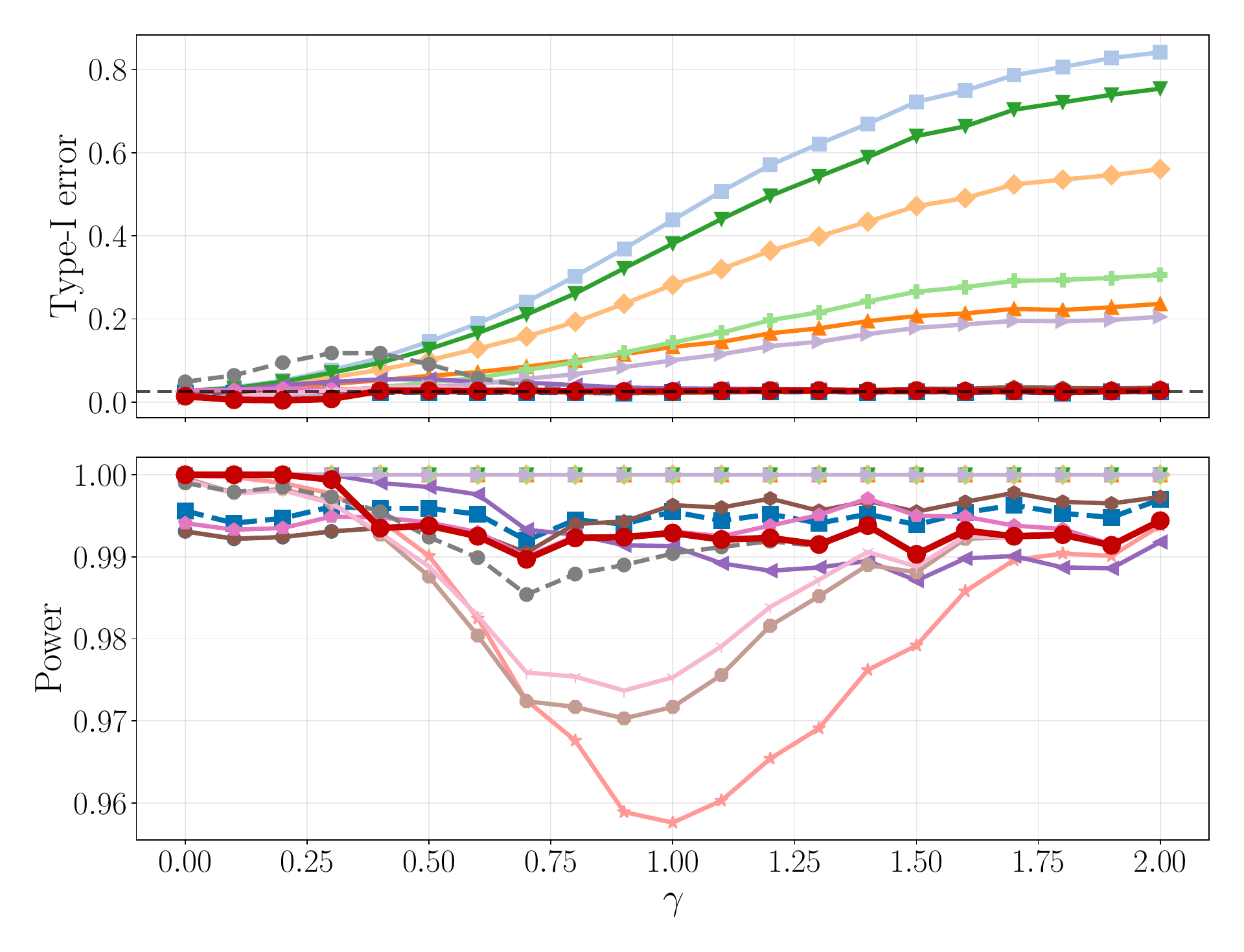}
    \caption{Oracle radii.}
    \end{subfigure}\hfill
    \begin{subfigure}[t]{0.49\linewidth}
    \centering
    \includegraphics[width=\linewidth]{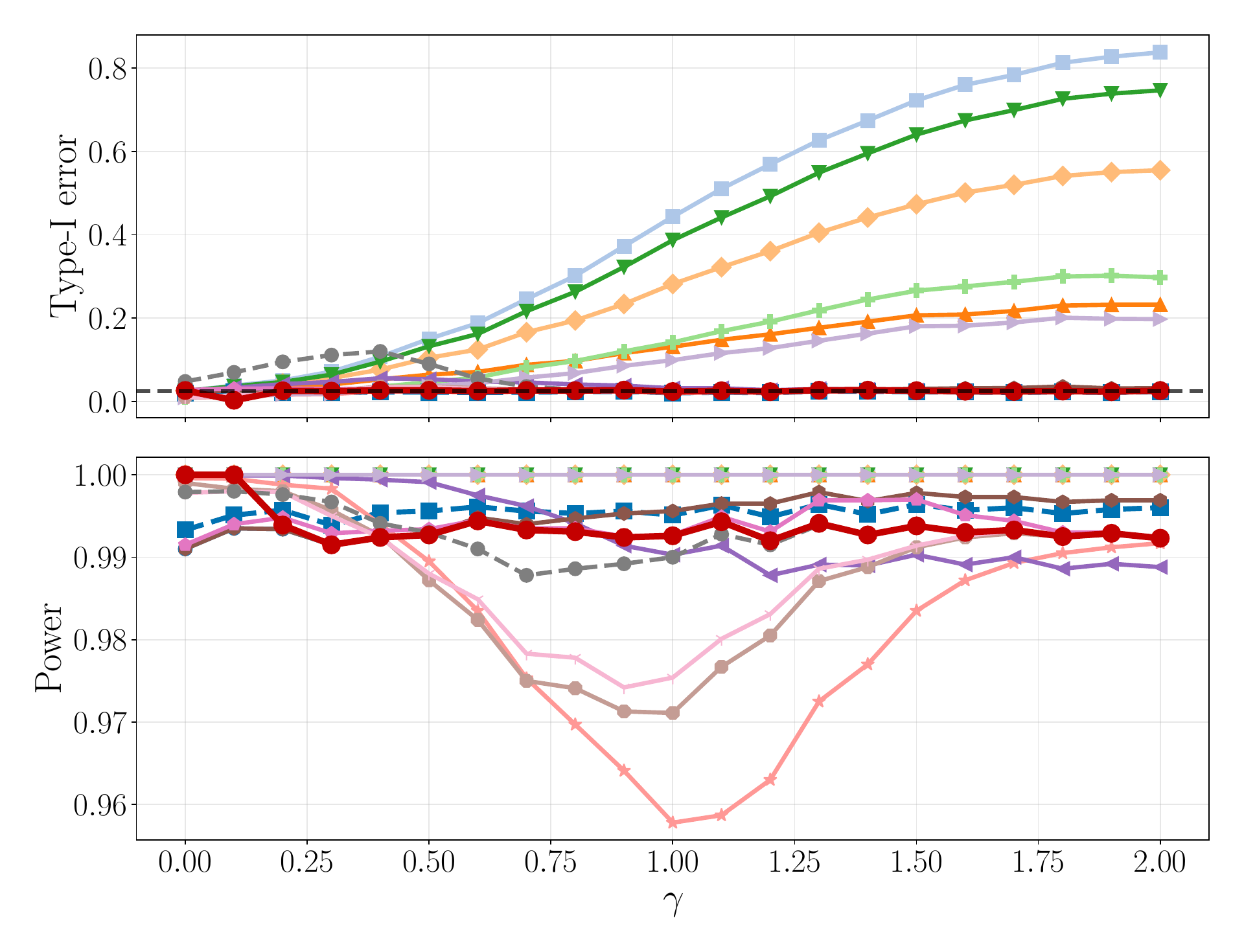}
    \caption{Wasserstein-based radii ($c=1.5$).}
    \end{subfigure}
\caption{
Type I error (top) and power (bottom) versus $\gamma$ for binary outcomes under Covariate shift + effect modification with $n_C=200$ and $n_H=500$.
The horizontal reference line is at $\alpha=0.025$.
}
\label{fig:app-type1-power-binary-S2}
\end{figure}

\begin{figure}[tb]
\centering
    \includegraphics[width=\columnwidth]{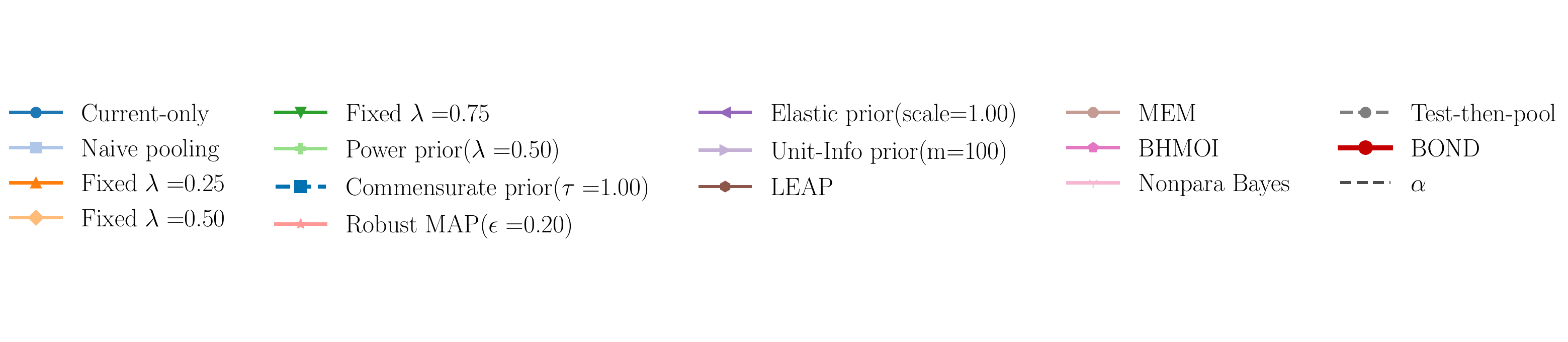}
    \begin{subfigure}[t]{0.49\linewidth}
    \centering
    \includegraphics[width=\linewidth]{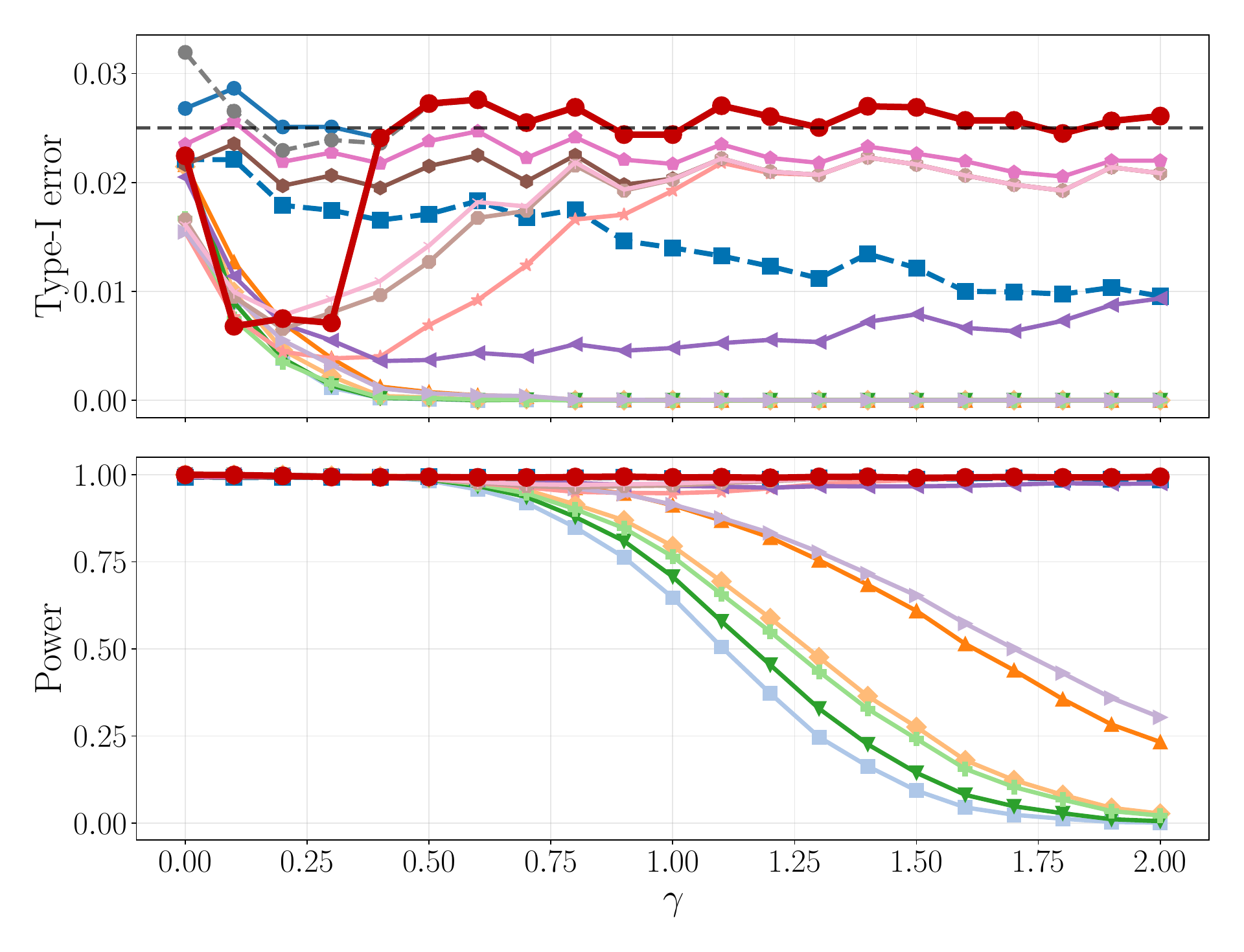}
    \caption{Oracle radii.}
    \end{subfigure}\hfill
    \begin{subfigure}[t]{0.49\linewidth}
    \centering
    \includegraphics[width=\linewidth]{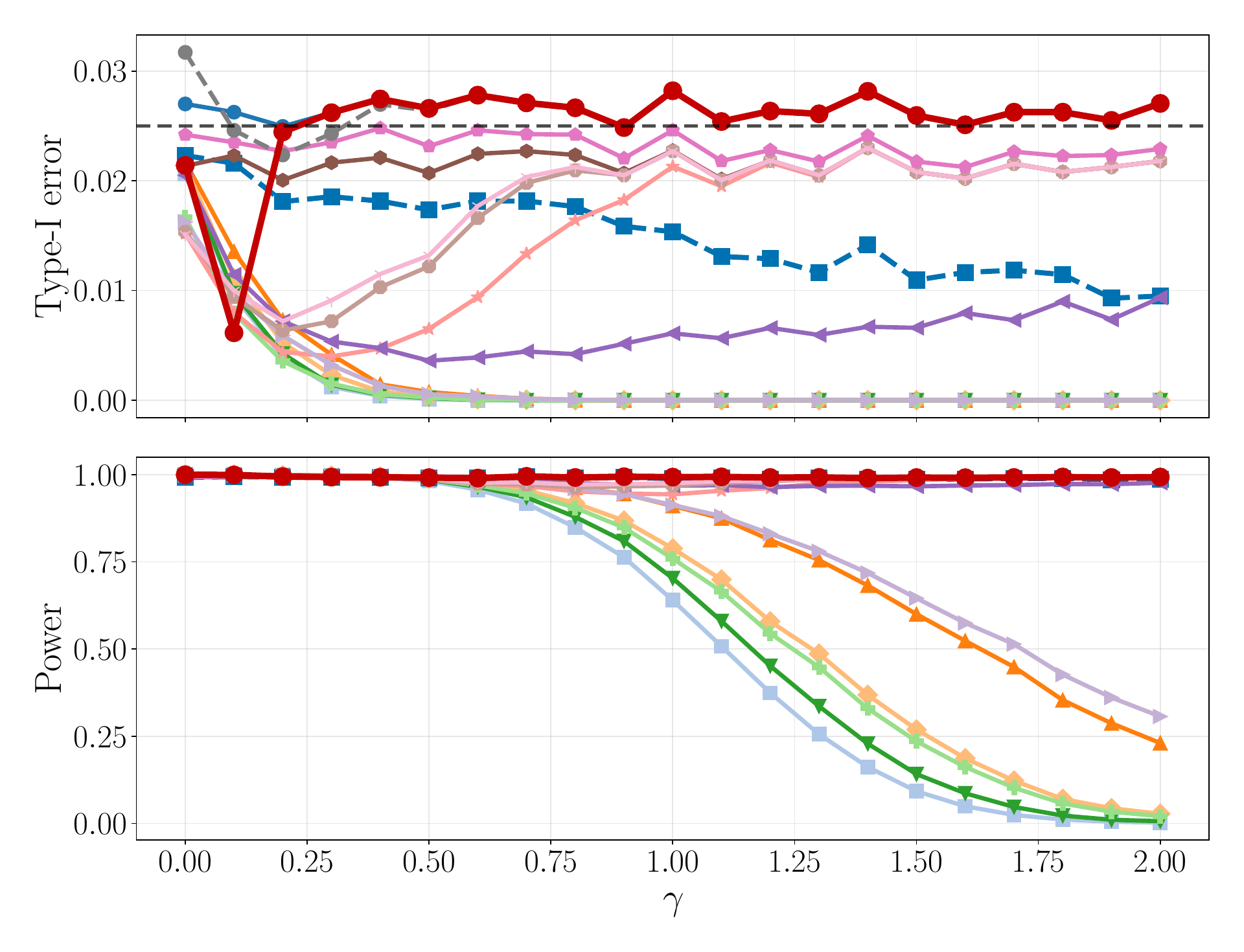}
    \caption{Wasserstein-based radii ($c=1.5$).}
    \end{subfigure}
\caption{
Type I error (top) and power (bottom) versus $\gamma$ for binary outcomes under Control drift (historical control-only) with $n_C=200$ and $n_H=500$.
The horizontal reference line is at $\alpha=0.025$.
}
\label{fig:app-type1-power-binary-S3}
\end{figure}
\section{Detailed Real-World Data Experiments}
\label{app:sec:additional-real-world}

This section provides supplementary implementation details and extended numerical results for the real-world application in mCRC presented in Section~\ref{sec:real-world-experiments}.

\subsection{Implementation and Sensitivity Considerations}
\label{app:subsec:rw-implementation-considerations}

\subsubsection{Sensitivity to the Wasserstein Radius and Data-Driven Proxies}
\label{app:subsubsec:rw-wasserstein-radius}

In practice, the Wasserstein radius $\rho_0$ explicitly quantifies the investigator's tolerance for unmeasured noncommensurability.
We treat $\rho_0$ as a primary sensitivity parameter, evaluating the robust estimates and calibrated weights across a clinically meaningful grid of maximal tolerable drift values.
When individual-level data are unavailable, a simple empirical proxy for binary outcomes can be constructed, for example, as $\hat{\rho}_0(c) = c|\bar{Y}_{H,0} - \bar{Y}_{C,0}|$.
The multiplier $c \ge 1$ allows for a conservative inflation to account for residual heterogeneity that might not be fully captured by the marginal ORR discrepancy.

\subsubsection{Sensitivity to the Target Alternative}
\label{app:subsubsec:rw-target-alternative}

The DRO calibration objective in \eqref{eq:kappa-hat} relies on a prespecified target alternative $\theta_1>0$.
In clinical settings, $\theta_1$ aligns with a minimum clinically important difference on the absolute ORR scale.
While our primary analysis fixes $\theta_1$, varying this parameter reveals a general trade-off:
larger values of $\theta_1$ tend to favor more aggressive borrowing (prioritizing variance reduction to capture the large effect), whereas smaller values yield more conservative borrowing behavior to strictly protect against bias.

\subsubsection{Two-Sided Inference}
\label{app:subsubsec:rw-two-sided}

Although evaluating ORR improvement is inherently a one-sided hypothesis, two-sided robust confidence intervals are required for comprehensive reporting.
Applying the framework from Appendix~\ref{app:sec:two-sided} to the control-only borrowing setting, the two-sided bias corrections simplify cleanly to:
\begin{equation*}
    \tilde{b}_+(\lambda_0)
    =
    w_0(\lambda_0)\min\{\rho_0,\bar{Y}_{C,0}\},
    \quad
    \tilde{b}_-(\lambda_0)
    =
    -w_0(\lambda_0)\min\{\rho_0,1-\bar{Y}_{C,0}\}
    .
\end{equation*}
These boundaries map directly into the robust interval defined in \eqref{app:eq:robust-ci-two-sided}.
Crucially, this interval dynamically widens as either the EBW $w_0(\lambda_0)$ or the tolerance radius $\rho_0$ increases, reflecting the epistemic uncertainty.

\subsection{Extended Real-World Results}
\label{app:subsec:appendix-realworld-results}

Table~\ref{tab:appendix-realworld-all-methods} provides the complete set of results for all evaluated baseline methods on the mCRC dataset.
The behavior perfectly aligns with the simulation findings:
dynamic borrowing priors that incorporate explicit conflict adaptation (e.g., TTP, robust MAP, MEM, BHMOI, Nonparametric Bayes) detect the substantial empirical discrepancy ($\bar{Y}_{H,0}=0.367$ vs. $\bar{Y}_{C,0}=0.128$) and correctly isolate the current trial, yielding estimates virtually identical to the current-only analysis ($\hat{\theta} \approx 0.155$).
Conversely, methods imposing fixed borrowing structures (e.g., Naive pooling, fixed $\lambda=0.75$) incur substantial attenuation of the estimated treatment effect, often failing to reject the null hypothesis at the $0.05$ one-sided level.

Table~\ref{tab:appendix-realworld-bond-sensitivity} details the specific sensitivity of BOND to the robustness radius $\rho$ (here representing $\rho_0$).
As the tolerance for bias ($\rho$) increases from $0$ to $0.05$, the optimization seamlessly reduces the effective historical weight $\lambda_0^{\mathrm{eff}}$ from $0.482$ to $0.020$, dropping the effective borrowed sample size $n_{\mathrm{hist}}^{\mathrm{eff}}$ from $294$ to just $12$.
The estimated effect correspondingly recovers from an attenuated $\hat{\theta}=0.065$ back to $\hat{\theta}=0.150$.
For $\rho \ge 0.10$, BOND recognizes the potential bias is too severe, assigns zero weight to the historical controls, and coincides perfectly with the current-only analysis.

Figures~\ref{fig:appendix-realworld-bond-weights} and~\ref{fig:appendix-realworld-bond-theta} illustrate this continuous transition.
They demonstrate how the calibrated borrowing levels and the robust effect estimate $\hat{\theta}$ smoothly recover toward the unconfounded current-only analysis as the procedure curtails borrowing to satisfy strict type I error constraints.
Because no historical treatment arm is available in this dataset, parameters $\lambda_1^*$ and $w_1$ remain identically $0$.

\begin{figure*}[tb]
\centering
\includegraphics[width=1.00\textwidth]{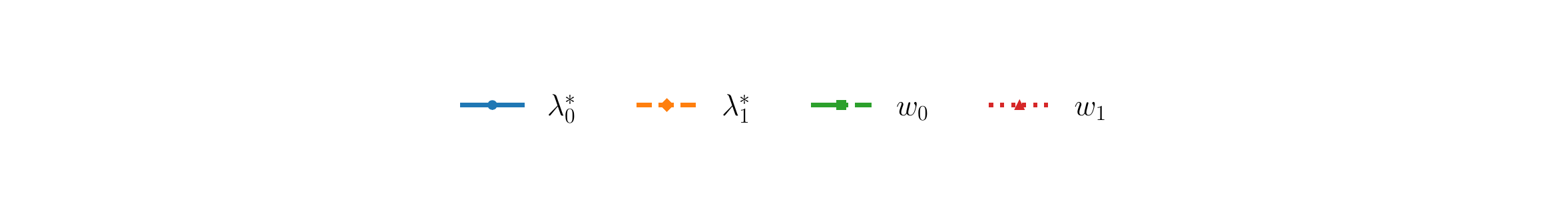}
\includegraphics[width=0.75\textwidth]{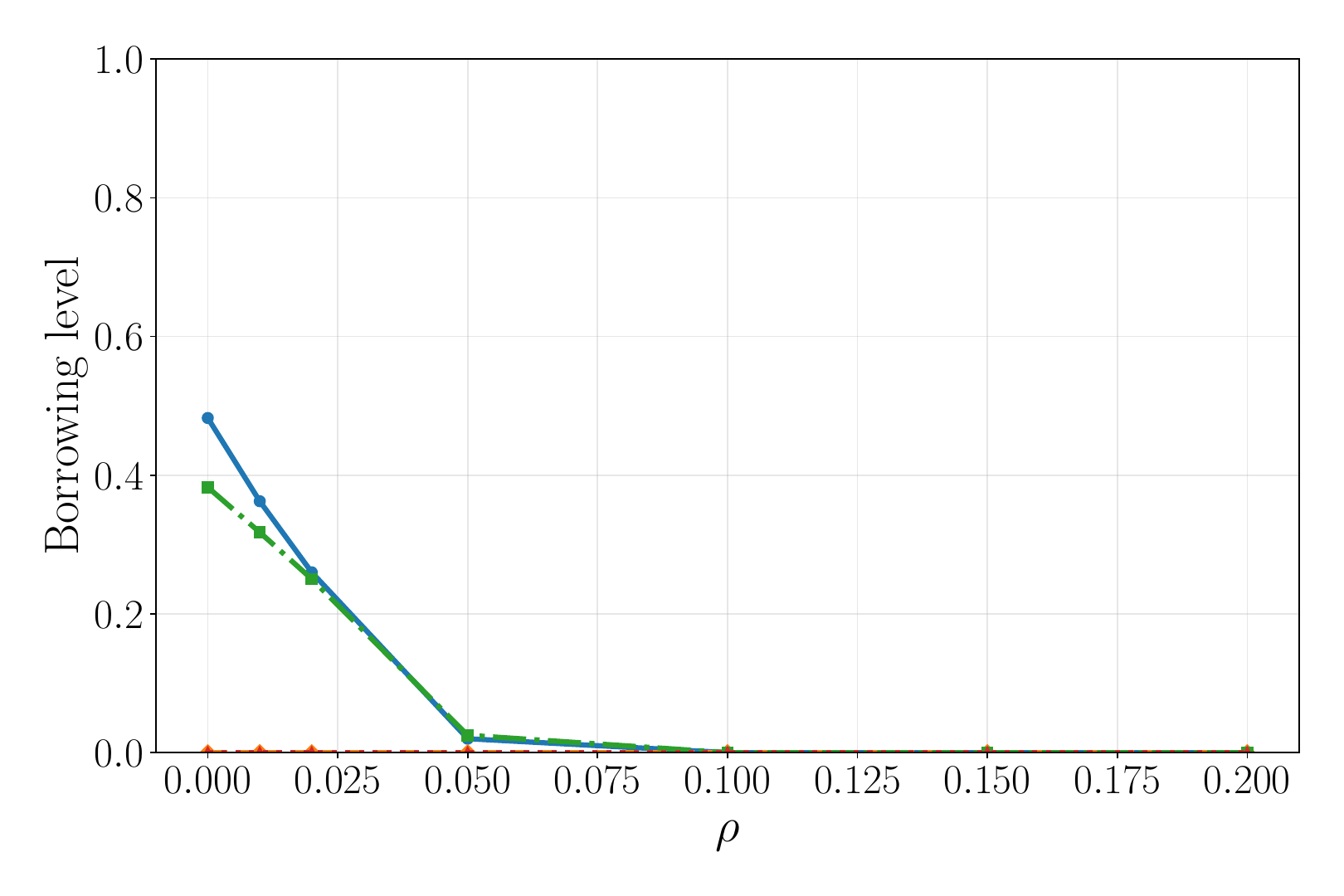}
\caption{
BOND borrowing levels versus robustness radius $\rho$.
We plot the optimal robustness-calibrated weights $\lambda_0^*$ (and $\lambda_1^*$) together with the induced effective borrowing levels $w_0$ (and $w_1$).
In this dataset, historical treatment is unavailable, hence $\lambda_1^*$ and $w_1$ remain at zero and the sensitivity is driven solely by control-side borrowing.}
\label{fig:appendix-realworld-bond-weights}
\end{figure*}

\begin{figure*}[tb]
\centering
\includegraphics[width=\textwidth]{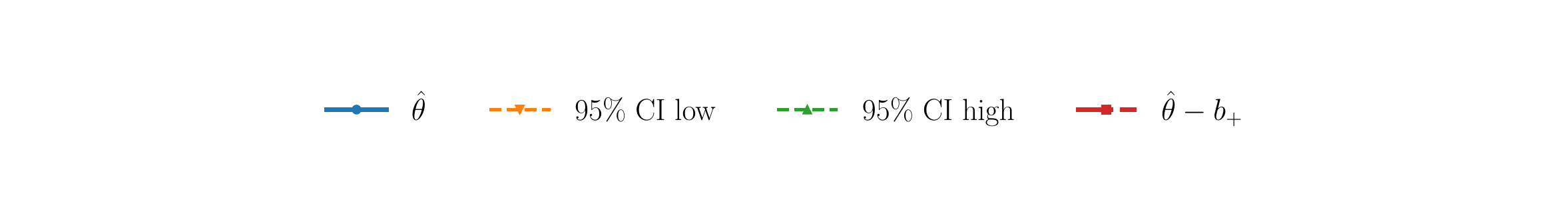}
\includegraphics[width=0.75\textwidth]{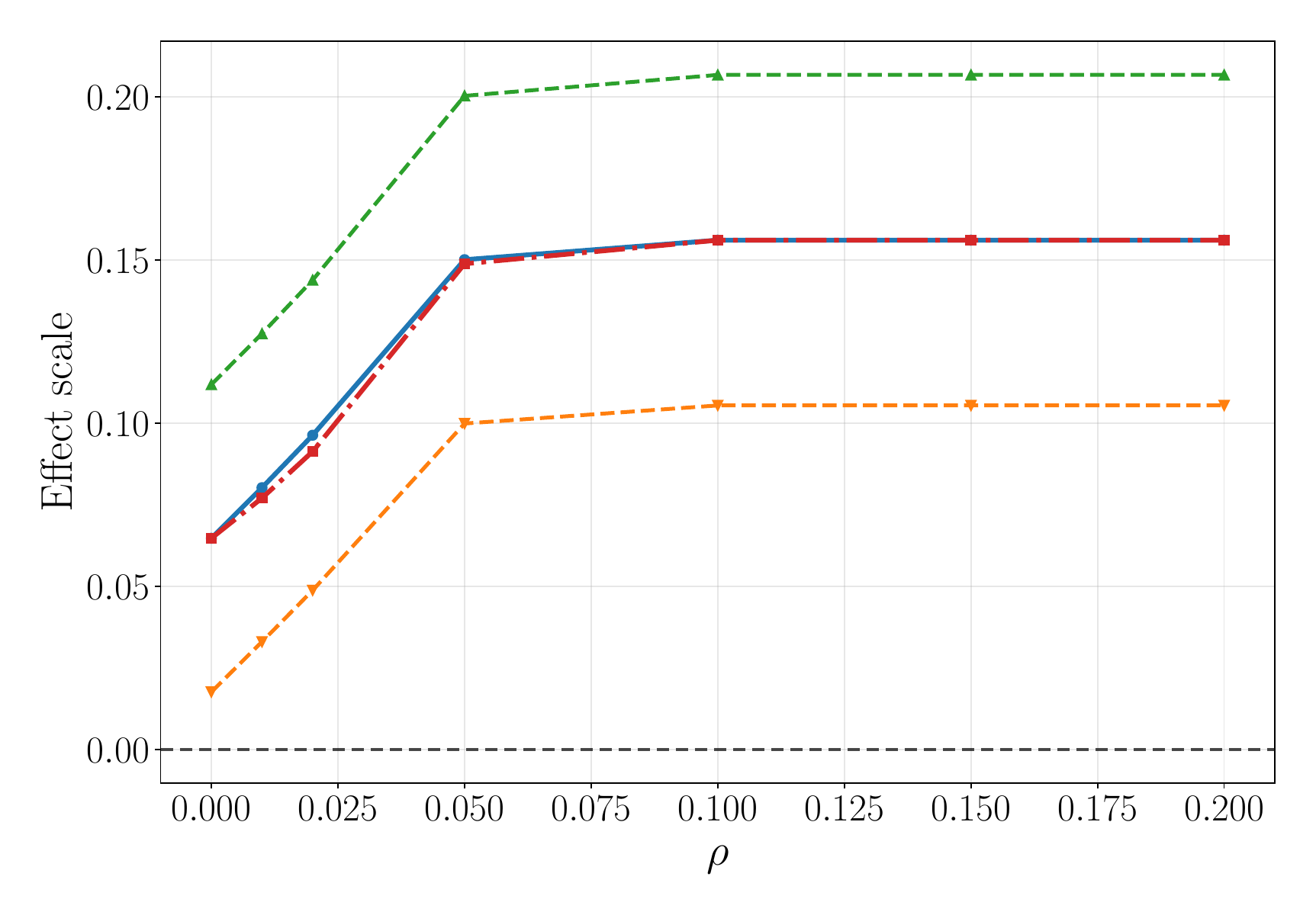}
\caption{
BOND effect estimate sensitivity versus $\rho$.
We show the standard estimate $\hat{\theta}$ with its 95\% CI, together with the robust bias-adjusted curve $\hat{\theta}-b_+$.
Increasing $\rho$ reduces borrowing and pulls the estimate back toward the current-only analysis.}
\label{fig:appendix-realworld-bond-theta}
\end{figure*}

\begin{table*}[tb]
\centering
\caption{Real-world ORR analysis:
full results for all methods.}
\label{tab:appendix-realworld-all-methods}
\setlength{\tabcolsep}{6pt}
\begin{tabular}{lrrrrr}
\toprule
Method & $\hat{\mu}_0$ & $\hat{\theta}$ & Width ratio & $n_{\mathrm{hist}}^{\mathrm{eff}}$ & $p$ \\
\midrule
Current-only
& $0.128$ & $0.156$ & $1.000$ & $0$ & $7.7\times 10^{-10}$ \\

Naive pooling
& $0.263$ & $0.022$ & $0.946$ & $610$ & $0.186$ \\

TTP
& $0.128$ & $0.156$ & $1.000$ & $0$ & $7.7\times 10^{-10}$ \\

Fixed $\lambda=0.25$
& $0.186$ & $0.098$ & $0.940$ & $152$ & $2.7\times 10^{-5}$ \\

Fixed $\lambda=0.5$
& $0.222$ & $0.063$ & $0.930$ & $305$ & $0.005$ \\

Fixed $\lambda=0.75$
& $0.246$ & $0.039$ & $0.936$ & $458$ & $0.054$ \\

Power prior ($\lambda=0.5$)
& $0.223$ & $0.062$ & $0.989$ & $305$ & $0.008$ \\

UIP ($M=100$)
& $0.170$ & $0.115$ & $1.007$ & $99$ & $5.0\times 10^{-6}$ \\

Elastic prior (scale=1)
& $0.132$ & $0.152$ & $1.001$ & $8$ & $2.0\times 10^{-9}$ \\

Robust MAP ($\epsilon=0.2$)
& $0.130$ & $0.155$ & $1.001$ & $0$ & $1.1\times 10^{-9}$ \\

Commensurate prior ($\tau=1$)
& $0.132$ & $0.152$ & $1.004$ & $4$ & $2.1\times 10^{-9}$ \\

MEM
& $0.130$ & $0.155$ & $1.001$ & $0$ & $1.1\times 10^{-9}$ \\

BHMOI
& $0.129$ & $0.155$ & $1.001$ & $1$ & $1.0\times 10^{-9}$ \\

Nonparametric Bayes
& $0.130$ & $0.155$ & $1.001$ & $0$ & $1.1\times 10^{-9}$ \\

LEAP
& $0.130$ & $0.154$ & $1.001$ & $313$ & $1.3\times 10^{-9}$ \\

BOND
& $0.220$ & $0.065$ & $0.930$ & $294$ & $0.004$ \\
\bottomrule
\end{tabular}
\end{table*}

\begin{table}[tb]
\centering
\caption{
BOND sensitivity to the robustness radius $\rho$.
We report the effective historical weight $\lambda_0^{\mathrm{eff}}$, effective borrowed historical sample size
$n_{\mathrm{hist}}^{\mathrm{eff}}$, the estimated control response $\hat{\mu}_0$, treatment effect $\hat{\theta}$,
relative interval width, and the robust one-sided $p$-value.}
\label{tab:appendix-realworld-bond-sensitivity}
\setlength{\tabcolsep}{5pt}
\begin{tabular}{rrrrrrr}
\toprule
$\rho$ & $\lambda_0^{\mathrm{eff}}$ & $n_{\mathrm{hist}}^{\mathrm{eff}}$ & $\hat{\mu}_0$ & $\hat{\theta}$ & Width ratio & $p_{\mathrm{rob}}$ \\
\midrule
$0$    & $0.482$ & $294$ & $0.220$ & $0.065$ & $0.930$ & $0.004$ \\
$0.01$ & $0.362$ & $221$ & $0.204$ & $0.080$ & $0.932$ & $0.001$ \\
$0.02$ & $0.260$ & $159$ & $0.188$ & $0.096$ & $0.939$ & $8.4\times 10^{-5}$ \\
$0.05$ & $0.020$ & $12$  & $0.134$ & $0.150$ & $0.991$ & $3.1\times 10^{-9}$ \\
$0.1$  & $0.000$ & $0$   & $0.128$ & $0.156$ & $1.000$ & $7.7\times 10^{-10}$ \\
$0.15$ & $0.000$ & $0$   & $0.128$ & $0.156$ & $1.000$ & $7.7\times 10^{-10}$ \\
$0.2$  & $0.000$ & $0$   & $0.128$ & $0.156$ & $1.000$ & $7.7\times 10^{-10}$ \\
\bottomrule
\end{tabular}
\end{table}

\end{document}